\newtheorem{definition}{Definition}
\newtheorem{example}{Example}
\def\ie{{\em i.e.,}\xspace}
\def\eg{{\em e.g.,}\xspace}
\def\st{{\em s.t.}\xspace}
\def\cf{{\em cf.}\xspace}
\def\wrt{{\em w.r.t.}\xspace}
\DeclareMathOperator*{\argmax}{\arg\,\max}
\definecolor{sthlmBlue}{RGB}{0,110,191} 
\definecolor{sthlmLightGreen}{RGB}{213,247,244} 
\definecolor{sthlmGreen}{RGB}{0,134,127} 
\definecolor{sthlmOrange}{RGB}{221,74,44} 
\definecolor{sthlmLightRed}{RGB}{254,222,237} 
\definecolor{sthlmRed}{RGB}{196,0,100} 
\definecolor{sthlmYellow}{RGB}{252,191,10} 
\title{On Convex Optimal Value Functions For POSGs}
\author{\name Rafael F. Cunha \\
\addr University of Groningen, Bernoulli Institute, The Netherlands \\
\email \href{mailto:r.f.cunha@rug.nl}{r.f.cunha@rug.nl} \\
\AND
\name Jacopo Castellini \\
\addr Univ Lyon, INSA Lyon, Inria, CITI, EA3720, 69621 Villeurbanne, France \\
\email \href{mailto:jacopo.castellini@insa-lyon.fr}{jacopo.castellini@insa-lyon.fr} \\
\AND
\name Johan Peralez \\
\addr Univ Lyon, INSA Lyon, Inria, CITI, EA3720, 69621 Villeurbanne, France \\
\email \href{mailto:johan.peralez@insa-lyon.fr}{johan.peralez@insa-lyon.fr} \\
\AND
\name Jilles S. Dibangoye \\
\addr University of Groningen, Bernoulli Institute, The Netherlands \\
\email \href{mailto:j.s.dibangoye@rug.nl}{j.s.dibangoye@rug.nl}}
\begin{document}
\maketitle

\begin{abstract}
Multi-agent planning and reinforcement learning can be challenging when agents cannot see the state of the world or communicate with each other due to communication costs, latency, or noise. Partially Observable Stochastic Games (POSGs) provide a mathematical framework for modelling such scenarios. This paper aims to improve the efficiency of planning and reinforcement learning algorithms for POSGs by identifying the underlying structure of optimal state-value functions. The approach involves reformulating the original game from the perspective of a trusted third party who plans on behalf of the agents simultaneously. From this viewpoint, the original POSGs can be viewed as Markov games where states are occupancy states, \ie posterior probability distributions over the hidden states of the world and the stream of actions and observations that agents have experienced so far. This study mainly proves that the optimal state-value function is a convex function of occupancy states expressed on an appropriate basis in all zero-sum, common-payoff, and Stackelberg POSGs.
\end{abstract}

\section{Introduction}
The Partially Observable Stochastic Game (POSG) framework is widely used for formalizing problems in multi-agent sequential decision-making \citep{bayesian,dynamic,multiagent}. This framework is designed to address situations where multiple agents interact with each other to control a dynamic environment and optimize their individual preferences despite stochasticity and sensing uncertainty. Recently, researchers have focused on the subclasses of zero-sum, common-payoff, and Stackelberg POSGs (\emph{st}-POSGs) \citep{economic,decentralized,security}. Various approaches have been developed to solve POSGs, with many relying on dynamic programming. These approaches aim to find a solution for the original game by recursively finding solutions for its subgames. However, often solutions of interest exist within a continuum, which requires leveraging uniform continuity properties of optimal state-value functions. Unfortunately, there is a lack of knowledge regarding the uniform continuity properties of optimal state-value functions, which can make it challenging to generalize knowledge from one partial solution to another, restricting the efficiency of existing techniques.

Recent advancements in studying optimal state-value functions for games have showcased innovative ways to generalize knowledge from past experiences. These approaches utilize uniform continuity properties, providing insights into the underlying structure of optimal state-value functions for specific games. The structural analysis of optimal state-value functions can be traced back to the seventies, with \cite{finite} demonstrating that the optimal state-value functions of finite-horizon Partially Observable Markov Decision Processes (POMDPs) are piecewise linear and convex functions of belief states. This discovery paved the way for a collection of efficient algorithms, including those developed by \citet{pomdp,pb,survey}. As such, many authors have since investigated the underlying structure of optimal value functions for subclasses of finite-horizon POSGs. For instance, \cite{framework} have shown that piecewise linearity and convexity of optimal state-value functions of belief states extends to Interactive Partially Observable Markov Decision Processes (I-POMDPs). \cite{rho} demonstrates that POMDPs with Lipschitz reward functions have Lipschitz-continuous near-optimal value functions. Similarly, \cite{heuristic,continuous} have proven that the optimal state-value functions of finite-horizon common-payoff POSGs (\emph{dec}-POMDPs) are piecewise linear and convex functions of occupancy states on the standard basis. In finite-horizon one-sided zero-sum POSGs (\emph{zs}-POSGs), \cite{stochastic,onesided} have demonstrated the convexity of the optimal state-value function over belief states. Moreover, \cite{structure} have established the linearity of optimal state-value functions of finite-horizon best-response problems in the space of distributions over the histories of an agent. Recently, \cite{zerosum} have complemented this linearity property by demonstrating the Lipchitz-continuity, which generalizes values between occupancy states for finite-horizon \emph{zs}-POSGs.

It is worth noting that uniform continuity properties are not all equal. For instance, the linearity property highlighted in \citet{structure} does not facilitate the generalization of values from one experience to another. Utilizing this property in isolation proves challenging. Undoubtedly, the Lipchitz-continuity property from \citet{zerosum} is useful in generalizing values across different experiences. However, it is worth noting that this approach can sometimes result in poor generalization. Generally, the stronger the uniform continuity property, the better. Unfortunately, there is no constructive approach for revealing the underlying structure of optimal state-value functions. Presently, exhibiting a uniform continuity property appears more like an art than science. This paper introduces a methodology for identifying the underlying structure of optimal state-value functions to address this bottleneck and enhance the ability to create efficient planning and reinforcement learning algorithms for finite-horizon POSGs. The approach involves reformulating the original game from the perspective of a trusted third party who plans on behalf of the agents simultaneously. From this viewpoint, the original finite-horizon POSGs can be viewed as finite-horizon Markov games where states are occupancy states, which are posterior probability distributions over the hidden states of the world and the stream of actions and observations that agents have experienced so far (\ie their private histories).

\begin{figure}[!ht]
\centering
\begin{tikzpicture}[scale=1.35]
\node[inner sep=0pt] (map) at (0,0) {\includegraphics[width=1\textwidth]{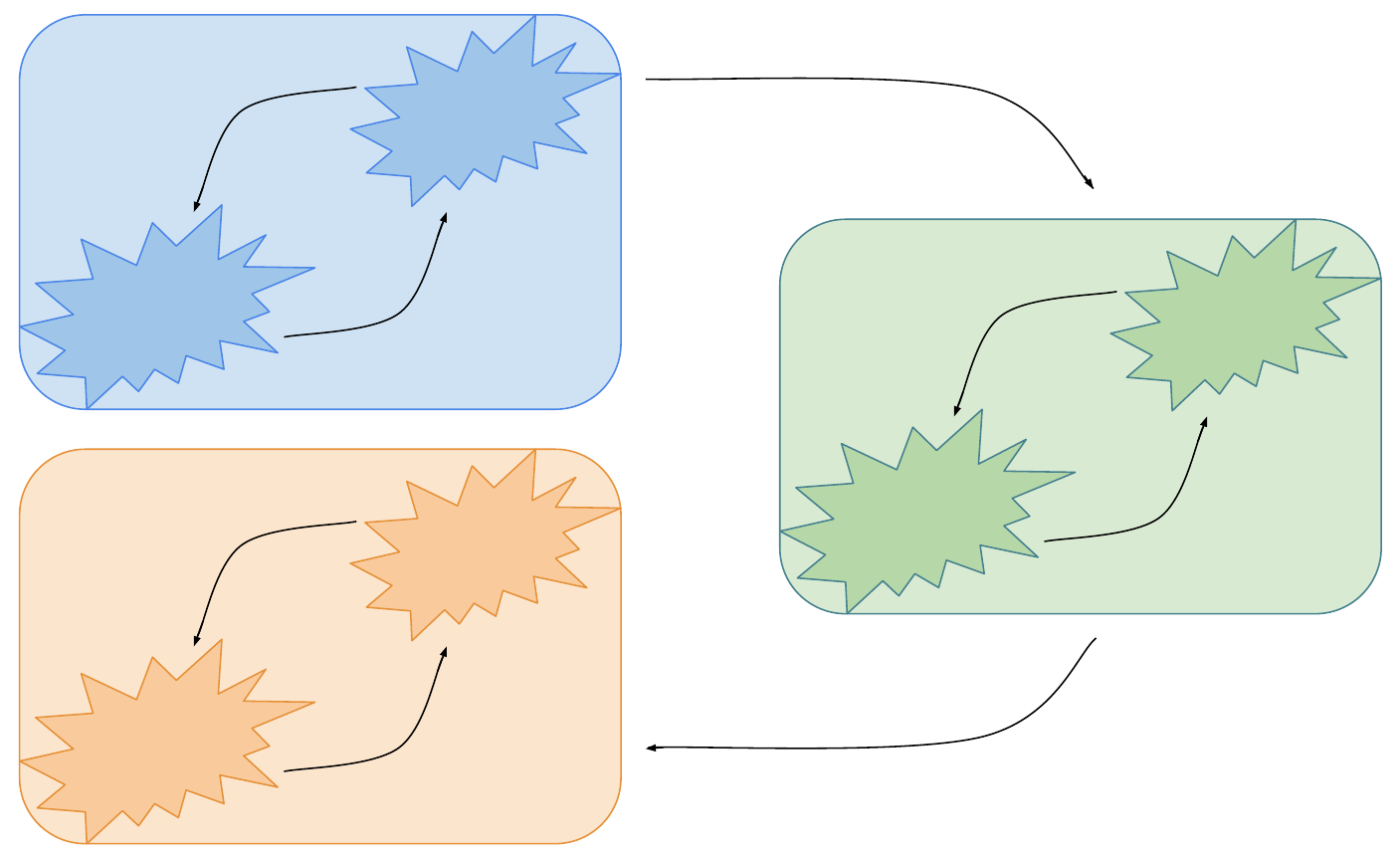}};
\node[scale=.55, text width = 3in, minimum height=1in] (node1) at (-3.7,2.95) {\textit{\bf Section \ref{sec:background}}
\begin{itemize}
\item[\textcolor{sthlmGreen}{\ding{51}}]{\scriptsize Original models}
\item[\textcolor{sthlmRed}{\ding{55}}]{\scriptsize Optimal decision-making}
\end{itemize}\par};
\node[scale=.55] (node1) at (-4.4,1.05) {\textbf{POSG} $M$};
\node[scale=.45] (node1) at (-4.4,.875) {\it (master game)};
\node[scale=.45, color=sthlmBlue] (node1) at (-4.2,1.8) {$\mathbf{1}$};
\node[scale=.45, color=sthlmBlue] (node1) at (-3.3,.615) {$\mathbf{1}$};
\node[scale=.55] (node1) at (-1.82,2.65) {\textbf{POMDP} $M(a_{0:}^{\neg i})$};
\node[scale=.45] (node1) at (-1.82,2.425) {\it (slave game)};
\node[scale=.45, color=sthlmBlue] (node1) at (-2.85,2.85) {$\mathbf{*}$};
\node[scale=.45, color=sthlmBlue] (node1) at (-1.95,1.65) {$\mathbf{*}$};
\node[scale=.55, text width = 3in, minimum height=1in] (node1) at (-3.7,-.55) {\textit{\bf Sections \ref{sec:connections} and \ref{sec:main:results}}
\begin{itemize}
\item[\textcolor{sthlmGreen}{\ding{51}}]{\scriptsize Optimal decision-making}
\item[\textcolor{sthlmGreen}{\ding{51}}]{\scriptsize Uniform continuity}
\end{itemize}\par};
\node[scale=.55] (node1) at (-4.4,-2.45) {\textbf{OMG} $M''$};
\node[scale=.45] (node1) at (-4.4,-2.625) {\it (master game)};
\node[scale=.45, color=sthlmOrange] (node1) at (-4.2,-1.7) {$\mathbf{1}$};
\node[scale=.45, color=sthlmOrange] (node1) at (-3.3,-2.915) {$\mathbf{1}$};
\node[scale=.55] (node1) at (-1.82,-.9) {\textbf{MDP} $M''(a_{0:}^{\neg i})$};
\node[scale=.45] (node1) at (-1.82,-1.075) {\it (slave game)};
\node[scale=.45, color=sthlmOrange] (node1) at (-2.85,-.655) {$\mathbf{*}$};
\node[scale=.45, color=sthlmOrange] (node1) at (-1.95,-1.85) {$\mathbf{*}$};
\node[scale=.55, text width = 3in, minimum height=1in] (node1) at (2.4,1.285) {\textit{\bf Section \ref{sec:reformulations}}
\begin{itemize}
\item[\textcolor{sthlmGreen}{\ding{51}}]{\scriptsize Optimal decision-making}
\item[\textcolor{sthlmRed}{\ding{55}}]{\scriptsize Uniform continuity}
\end{itemize}\par};
\node[scale=.55] (node1) at (1.8,-.6) {\textbf{PTMG} $M'$};
\node[scale=.45] (node1) at (1.8,-.775) {\it (master game)};
\node[scale=.45, color=sthlmGreen] (node1) at (1.95,.15) {$\mathbf{1}$};
\node[scale=.45, color=sthlmGreen] (node1) at (2.85,-1.05) {$\mathbf{1}$};
\node[scale=.55] (node1) at (4.35,0.95) {\textbf{MDP} $M'(a_{0:}^{\neg i})$};
\node[scale=.45] (node1) at (4.35,0.775) {\it (slave game)};
\node[scale=.45, color=sthlmGreen] (node1) at (3.325,1.2) {$\mathbf{*}$};
\node[scale=.45, color=sthlmGreen] (node1) at (4.18,0) {$\mathbf{*}$};
\node[scale=.55] (node1) at (0.6,2.65) {\it Central planner raw viewpoint};
\node[scale=.55] (node1) at (0.7,-2.4) {\it Central planner concise viewpoint};
\end{tikzpicture}
\caption{The paper employs a three-step transformation methodology to convert original games into appropriate representations for optimal decision-making. The methodology involves the use of several game-theoretic models, including Partially Observable Stochastic Game (POSG), Partially Observable Markov Decision Process (POMDP), Markov Decision Process (MDP), Plan-Time Markov Game (PTMG), and Occupancy-State Markov Game (OMG). \textbf{Best viewed in color}.}
\label{fig:three:step:method}
\end{figure}

\paragraph{Contributions.} This study provides a proof that the optimal state-value function for the best-response problem, specifically for a slave game, is piecewise linear and convex over occupancy states when expressed on the standard basis, and linear over occupancy states when expressed appropriately. Additionally, the study presents the uniform continuity properties of optimal state-value functions for certain well-known subclasses of finite-horizon POSGs, which are master games. Specifically, when expressed appropriately, the optimal state-value functions for two subclasses of POSGs, namely \textit{zs}-POSGs and \textit{st}-POSGs, are convex over occupancy states. These uniform continuity properties are stronger than the previous ones. Moreover, the study demonstrates that the optimal state-value functions of \textit{dec}-POMDPs are piecewise linear and convex over occupancy states when expressed on the standard basis---a property originally established by \citet{continuous}. We instead provide a simpler proof. The authors emphasize the tightness of these properties, which is evidenced by the fact that they match pre-existing results in narrower settings under further assumptions. The study highlights a methodology to reveal uniform continuity properties of optimal value functions of master games, which is perhaps the most significant takeaway of this paper. This methodology comprises a series of transformations that produce three models for master and slave games, each sufficient to solve the others optimally. However, the last master and slave game formulations make it possible to exhibit uniform continuity properties that were not possible with the previous ones, yet the sufficiency of the former required the introduction of the latter. The key to establishing uniform continuity properties for master games appears to be the ability to appropriately express the optimal value function of the master games with those of the slave games. The authors hope that these findings will serve as a foundation for more efficient planning and reinforcement learning algorithms to exploit the structural properties inherent in POSGs.

The current paper is structured as follows, with each section delving into a unique representation of master and slave games, as illustrated in Figure \ref{fig:three:step:method}. Section \ref{sec:background} presents a formal description of these games, along with the relevant solution concepts. It also introduces the notations used throughout the document, preliminary properties, and \citeauthor{bellman}'s equations for the state- and action-value functions under fixed joint policies. In Section \ref{sec:reformulations}, a reformulation of master and slave games from the perspective of a trusted third party (or central planner) who plans on behalf of the agents is presented, providing a framework for optimal decision-making at the planning phase and decentralized decision-making at the execution phase. Section \ref{sec:connections} examines the connections between master and slave games through summaries of the central planner's knowledge, namely occupancy states. Finally, the main results of this paper are provided in Section \ref{sec:main:results}. The convexity properties of the optimal state-value functions of all slave and master games are presented here. The paper concludes with a discussion on the potential impact of the current findings.

\section{Background}
\label{sec:background}
This section formally describes master and slave games and the solution concepts of interest. Additionally, this section introduces the notation used throughout the document, preliminary properties, and \citeauthor{bellman}'s equations for the state- and action-value functions.

\begin{figure}[!ht]
\centering
\begin{tikzpicture}[scale=1.35]
\def \agentCentral{(0,-3) ellipse (1cm and 0.9cm);
\node[inner sep=0pt] (russell) at (0,-3.25) {\includegraphics[width=.06125\textwidth]{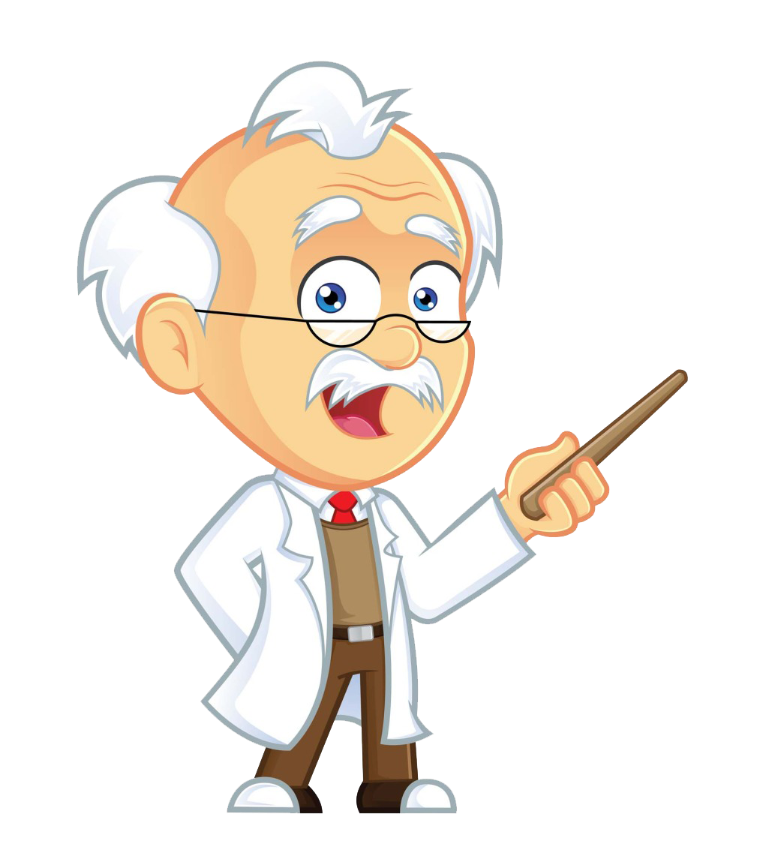}};}
\def \agentA{(-.5,3) ellipse (1cm and 0.9cm);
\node[inner sep=0pt] (calvin) at (-.5,2.7) {\includegraphics[width=.03725\textwidth]{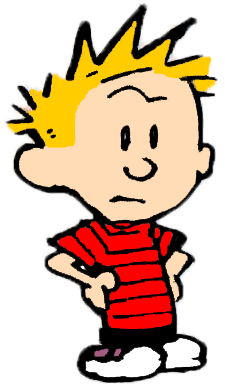}};}
\def \agentD{(-.5,0) ellipse (1cm and 0.9cm);
\node[inner sep=0pt] (susie) at (-.5,-.25) {\includegraphics[width=.03725\textwidth]{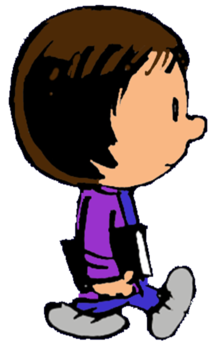}}; }
\draw[rounded corners, fill=white] (3.5,-2.5) rectangle +(3.5,5.5);
\node (agentCentral) at (-2,.5) {};
\node (agentA) at (-.25,2.25) {};
\node (agentD) at (-.25,-1) {};
\begin{scope}[scale=.5]
\def \agentA{(-1.5,4.5) ellipse (1cm and 0.9cm);
\node[inner sep=0pt] (calvin) at (-1.5,4.2) {\includegraphics[width=.075\textwidth]{figures/Calvin_Color.png}};}
\def \agentD{(-1.5,-2) ellipse (1cm and 0.9cm);
\node[inner sep=0pt] (susie) at (-1.5,-2.25) {\includegraphics[width=.075\textwidth]{figures/Susie.png}}; }
\draw[draw=white] \agentA;
\draw[draw=white] \agentD;
\node[inner sep=0pt] (hobbes) at (11.6,3) {\includegraphics[width=.075\textwidth]{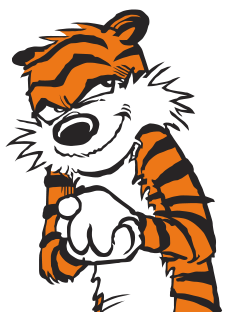}};
\node[inner sep=0pt] (door1) at (9,3) {\includegraphics[width=.075\textwidth]{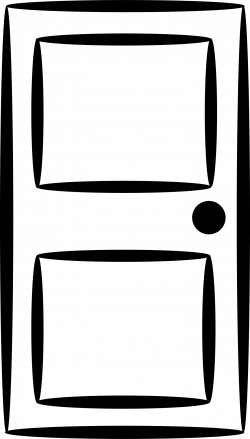}};
\node[inner sep=0pt] (door2) at (9,-2) {\includegraphics[width=.075\textwidth]{figures/Door.png}};
\node[inner sep=0pt] (treasure) at (11.75,-3) {\includegraphics[width=.075\textwidth]{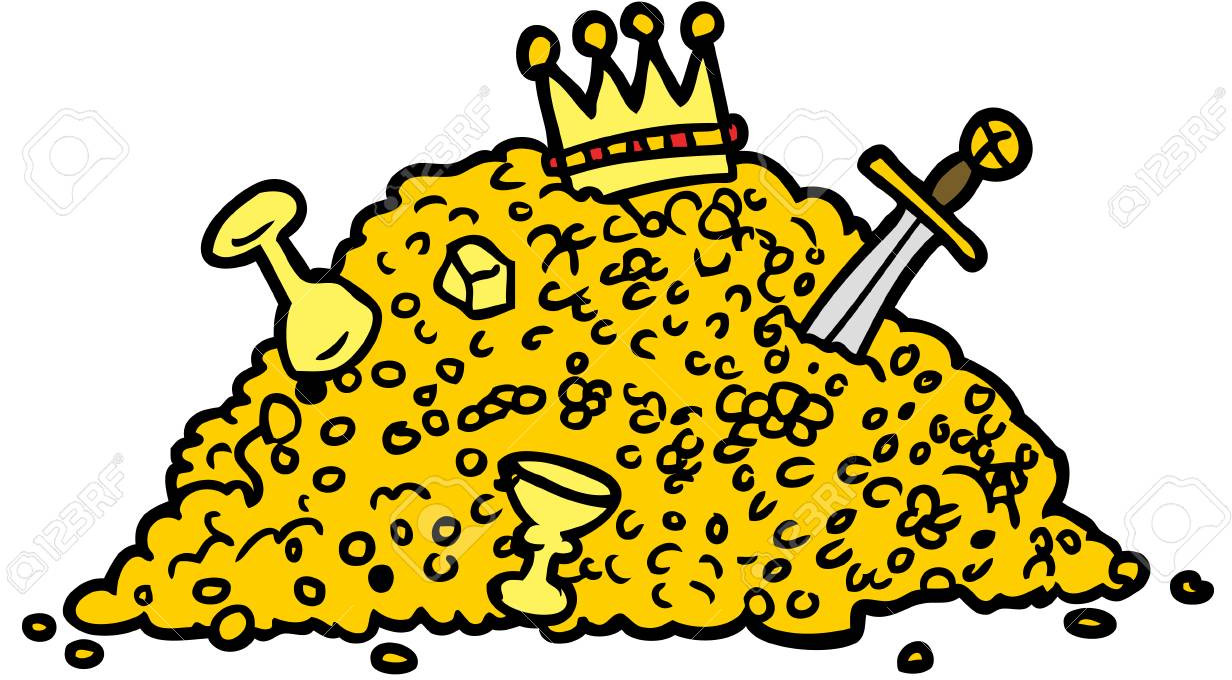}};
\end{scope}
\draw[->,-latex] (.5,-1) -- node[draw=none, fill=white,scale=.75,rotate=28] {decision 1} (2,-.25);
\draw[->,-latex] (2.75,-.25) -- node[draw=none, fill=white,scale=.75,rotate=28] {perception 1} (1.25,-1);
\draw[->,-latex] (.5,2) -- node[draw=none, fill=white,scale=.75,rotate=-28] {decision 2} (2,1.25);
\draw[->,-latex] (2.75,1.25) -- node[draw=none, fill=white,scale=.75,rotate=-28] {perception 2} (1.25,2);
\node[scale=1] at (-.75,-3) {\sc agents};
\node[scale=1] at (5.25,-3) {\sc environment};
\end{tikzpicture}
\caption{Illustration of agents and their environment for the tiger problem.}
\label{fig:agent:environment}
\end{figure}

\paragraph{Notation.}
This paper uses uppercase letter $Y$ to refer to random variables and lowercase letter $y$ to refer to their realizations. The corresponding domains are represented by calligraphic uppercase letter $\mathcal{Y}$. When we want to represent a vector containing all values of $y$ between two integers $t_0$ and $t_1$, we use the notation $y_{t_0:t_1}$. Similarly, $y^{i_0:i_1}$ represents a shorthand notation for the vector $(y^{i_0},\ldots,y^{i_1})$, where $i_0$ and $i_1$ are integers satisfying $i_0\leq i_1$. If $i_1<i_0$, we represent this as $y^{i_1:i_0} \doteq \emptyset$. However, when we want to represent the Cartesian product of all values of domains $\mathcal{Y}^{i_0},\ldots,\mathcal{Y}^{i_1}$ between two integers $i_0$ and $i_1$, we use the notation $\mathcal{Y}^{i_0:i_1} \doteq \mathcal{Y}^{i_0}\times \ldots \times \mathcal{Y}^{i_1}$. We use subscripts to represent control interval indexes and superscripts to index agents. The probability of an event is denoted by $\Pr\{\cdot\}$, and the expectation of a random variable is represented by $\mathbb{E}\{\cdot\}$. When dealing with a finite set $\mathcal{Y}$, we represent the cardinality of $\mathcal{Y}$ with $|\mathcal{Y}|$, and $\mathbb{N}_{\leq |\mathcal{Y}|}$ as a set of integers ranging from 0 to $|\mathcal{Y}|$. Additionally, $\mathcal{P}(\mathcal{Y})$ represents the $(|\mathcal{Y}|-1)$-dimensional real simplex. The set-theoretic support of a function $f\colon \mathcal{Y} \to \mathbb{R}$ is represented by $\mathtt{supp}(f)$, which is the set of points in $\mathcal{Y}$ where $f$ is non-zero. We use the short-hand notations $y^{i_0:}$ and $y^{:i_1}$ to represent the vector $y^{i_0:n}$ and $y^{0:i_1}$, respectively, where $i_0$ and $i_1$ are integers satisfying $i_0,i_1 \in \mathbb{N}_{\leq n}$. Finally, we use the Kronecker delta $\delta_{\cdot}(\cdot) \colon \mathcal{Y}_0\times \mathcal{Y}_1 \to \{0,1\}$ in our calculations.

\subsection{Master Games as POSGs}
This subsection will discuss finite-horizon POSGs and their solution concepts, including the \citeauthor{bellman}'s equations. To help readers better understand the concepts presented in this paper, we will use a simple toy master game called the tiger problem, illustrated in Figure \ref{fig:agent:environment}.

\begin{example}
In the tiger problem, there are two doors, one hiding a tiger named Hobbes, and the other hiding a pile of gold. Two agents, Calvin and Susie, aim to determine which door contains the treasure so that they can claim it as quickly as possible. The tiger can be on the left, referred to as $x_{\textsc{tl}}$, or on the right, referred to as $x_{\textsc{tr}}$. The agents can perform one of three actions: $u_{\textsc{ol}}$ (open left), $u_{\textsc{or}}$ (open right), and $u_{\textsc{l}}$ (listen). Even after opening a door, there are only two possible observations for each agent: to hear the tiger on the left, $z_{\textsc{hl}}$, or to hear the tiger on the right, $z_{\textsc{hr}}$. If both agents perform $u_{\textsc{l}}$, the state of the world remains unchanged. However, any other joint action results in a transition to either state with a probability of $0.5$, effectively resetting the problem. When the world is in state $x_{\textsc{tl}}$, the joint action $(u_{\textsc{l}},u_{\textsc{l}})$ results in observation $z_{\textsc{hl}}$ for any agent with a probability of 0.85 and observation $z_{\textsc{hr}}$ with a probability of 0.15. This applies similarly for state $x_{\textsc{tr}}$. Regardless of the state, all other joint actions result in either observation with a probability of 0.5.
\end{example}

\begin{figure}[!ht]
\centering
\begin{tikzpicture}[->,>=triangle 45,shorten >=2pt,auto,node distance=4cm,semithick]
\tikzstyle{every state}=[draw=black,text=black,,inner color= white,outer color= white,draw= black,text=black, drop shadow]
\tikzstyle{place}=[circle,thick,draw=sthlmBlue,fill=blue!20,minimum size=6mm]
\tikzstyle{placesplit}=[circle split,thick,draw=sthlmBlue,fill=sthlmBlue!20,minimum size=6mm]
\tikzstyle{red place}=[place,draw=sthlmRed,fill=sthlmRed!20]
\tikzstyle{green place}=[place,draw=sthlmGreen,fill=sthlmGreen!20]
\tikzstyle{green placesplit}=[placesplit,draw=sthlmGreen,fill=sthlmGreen!20]
\draw[rounded corners, sthlmRed, fill=sthlmRed!10] (-2.2,-1) rectangle (12,1);
\node[fill=white,text=black,draw=none,scale=.7] at (-.75,1.35) {$p^{u,z}_{x'}$};
\node[fill=white,text=black,draw=none,scale=.7] at (-.75,-1.35) {$p^{u,z}_{x'}$};
\node[fill=white,text=black,draw=none,scale=.7] at (3.5,-1.35) {$p^{u,z}_{x'}$};
\node[fill=white,text=black,draw=none,scale=.7] at (3.5,1.35) {$p^{u,z}_{x'}$};
\node[fill=white,text=black,draw=none,scale=.7] at (7.5,-1.35) {$p^{u,z}_{x'}$};
\node[fill=white,text=black,draw=none,scale=.7] at (7.5,1.35) {$p^{u,z}_{x'}$};
\node[initial,state,place] (S0) {State};
\node[state,place] (S1) [right of=S0] {State};
\node[state,place] (S2) [ right of=S1] {State};
\node (S3) [ right of=S2] {};
\node[state,green placesplit, scale=.61] (O0) [below of=S0,node distance=3.75cm] { Observation$^1$ \nodepart{lower} Reward$^1$};
\node[state,green placesplit, scale=.61] (O1) [below of=S1,node distance=3.75cm] { Observation$^1$ \nodepart{lower} Reward$^1$};
\node[state,green placesplit, scale=.61] (O2) [below of=S2,node distance=3.75cm] { Observation$^1$ \nodepart{lower} Reward$^1$};
\node[state,green placesplit, scale=.61] (O3) [above of=S0,node distance=3.75cm] { Observation$^2$ \nodepart{lower} Reward$^2$};
\node[state,green placesplit, scale=.61] (O4) [above of=S1,node distance=3.75cm] { Observation$^2$ \nodepart{lower} Reward$^2$};
\node[state,green placesplit, scale=.61] (O5) [above of=S2,node distance=3.75cm] { Observation$^2$ \nodepart{lower} Reward$^2$};
\node[state,red place] (A0) [below right of=O0,node distance=2cm] {\footnotesize Action$^1$};
\node[state,red place] (A1) [right of=A0] {\footnotesize Action$^1$};
\node (A2) [right of=A1] {};
\node[state,red place] (A3) [above right of=O3,node distance=2cm] {\footnotesize Action$^2$};
\node[state,red place] (A4) [right of=A3] {\footnotesize Action$^2$};
\node (A5) [right of=A4] {};
\node (Time) at (-1,-5) {Time};
\node (T0) [below of=A0,node distance=1.25cm] {$t$};
\node (T1) [below of=A1,node distance=1.25cm] {$t$+$1$};
\node (T2) [below of=A2,node distance=1.25cm] {$t$+$2$};
\node (N0) at (2.8,4.15) {};
\node (N1) at (2.8,-4.75) {};
\draw[-,dotted] (N0)-- (N1);
\draw[-,dotted] (-2,-1.1)--(12,-1.1);
\draw[-,dotted] (-2,1.1)--(12,1.1);
\node (N2) at (6.8,4.15) {};
\node (N3) at (6.8,-4.75) {};
\draw[-,dotted] (N2)--(N3);
\node[fill=sthlmRed!10,text=black,draw=none,scale=.7] at (11,-.75) {Hidden};
\node[fill=white,text=black,draw=none,scale=.7] at (10.75,-1.35) {Agent $1$'s viewpoint};
\node[fill=white,text=black,draw=none,scale=.7] at (10.75,1.35) {Agent $2$'s viewpoint};
\path (S0) edge node[midway, fill=sthlmRed!10,text=black,draw=none,scale=.7, above=-5pt] {$p^u_{x,x'}$} (S1)
edge node {} (O0)
edge node {} (O3)
(S1) edge node[midway, fill=sthlmRed!10,text=black,draw=none,scale=.7, above=-5pt] {$p^u_{x,x'}$} (S2)
edge node {} (O1)
edge node {} (O4)
(S2) edge node[midway, fill=sthlmRed!10,text=black,draw=none,scale=.7, above=-5pt] {$p^u_{x,x'}$} (S3)
edge node {} (O2)
edge node {} (O5)
(A0) edge [out=90, in=-180] node {} (O1)
edge [out=90, in=-155] node {} (S1)
(A1) edge [out=90, in=-180] node {} (O2)
edge [out=90, in=-155] node {} (S2)
(A3) edge [out=-90, in=-180] node {} (O4)
edge [out=-90, in=-205] node {} (S1)
(A4) edge [out=-90, in=-180] node {} (O5)
edge [out=-90, in=-205] node {} (S2);
\end{tikzpicture}
\caption{A graphical model of a two-agent, partially observable stochastic game.}
\label{graphical:model:zs:posg}
\end{figure}
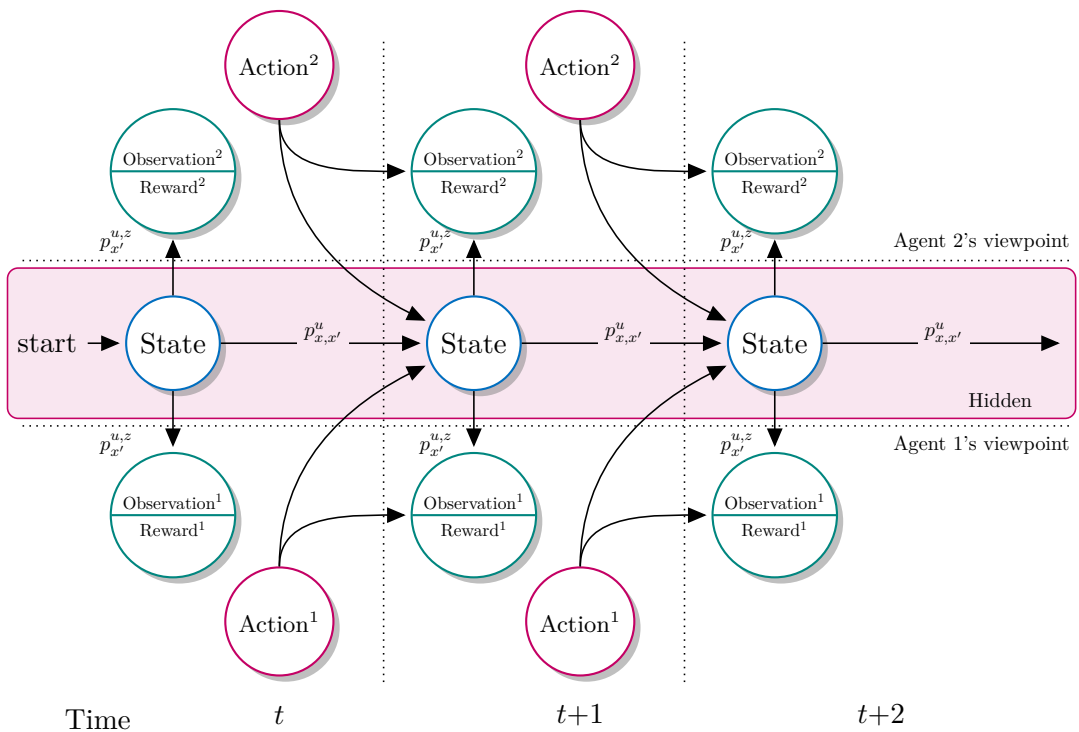

In a POSG, schematically depicted in Figure \ref{graphical:model:zs:posg}, agents begin with uncertain knowledge about the game's initial belief state, which is determined by the initial belief state $s_0$, \ie a distribution over hidden states. At each time step, the game is in a particular state $x$, and each agent $i$ can choose an action $u^i$ from their action space. The game then responds by transitioning to a new state $x'$, based on the actions of all agents $u$, and a transition probability distribution $(p^u_{x,x'})_{x,x'\in\mathcal{X},u\in\mathcal{U}}$. Additionally, each agent $i$ receives an immediate reward $r^i_{xu}$ and individual observation $z^i$, based on joint observation $z$ and an observation probability distribution $(p^{u,z}_{x'})_{x'\in\mathcal{X},u\in\mathcal{U},z\in\mathcal{Z}}$, providing them with potentially incomplete and noisy knowledge about the game's current state $x'$. This process continues until the horizon is exhausted.

\begin{definition}
A $n$-agent, partially observable stochastic game (POSG) is given by a tuple $M \doteq (\mathcal{X},\mathcal{U},\mathcal{Z},p,r)$. The set $\mathcal{X}$ represents a finite collection of hidden states, denoted as $x$. The finite set of joint actions is denoted as $\mathcal{U} = \mathcal{U}^{1:}$, and the actions taken by the agents are denoted as $u=u^{1:n}$. The finite set of observations is denoted as $\mathcal{Z} = \mathcal{Z}^{1:}$, and the observations made by the agents are denoted as $z=z^{1:n}$. The transition function $p_{x,x'}^{u,z}$ is given by $p^u_{x,x'} \cdot p^{u,z}_{x'}$, and specifies the probability of the process state being $x'$ and having observation $z$ after taking action $u$ in hidden state $x$. The reward function $r_{x,u}$ specifies the reward vector $r^{1:n}_{x,u}$ received by the agents after they take joint action $u$ in state $x$.
\end{definition}

\paragraph{Assumptions.} Throughout the paper, we assume that the process described by $M$ operates under the following conditions: the observation space of each agent $i$ is made up of a public subspace and a private subspace, denoted as $\mathcal{Z}^i = \mathcal{W} \times \tilde{\mathcal{Z}}^i$, \cf \citep{rethinking}. The public observations are accessible to all agents. Additionally, rewards are bounded on both sides, with a positive scalar $c$ such that $\|r^{\cdot}_{\cdot,\cdot}\|_\infty \leq c$. The planning horizon, referred to as $\ell$, is finite. However, if an infinite-horizon solution is required, any $\ell$-horizon solution where $\ell = \lceil \log_\gamma{({(1-\gamma)\epsilon}/{c})} \rceil$ can provide an $\epsilon$-close approximation to any infinite-horizon solution. This finding holds for a discount factor $\gamma\in [0, 1)$ and scalar $\epsilon>0$.

\begin{figure}[!ht]
\centering
\begin{tikzpicture}[scale=1.25]
\node (q_1) at (-3, -1.5) {$\varsigma_0^i$};
\node (q_11) at (-4.5, -3) {$\varsigma_1^{i,1}$};
\node (q_111) at (-5.25,-4.5) {};
\node (q_112) at (-3.75,-4.5) {};
\node (q_113) at (-4.2,-4) {};
\node (q_114) at (-4.8,-4) {};
\node (q_12) at (-1.5, -3) {$\varsigma_1^{i,|\mathcal{U}^i||\mathcal{Z}^i|}$};
\node (q_121) at (-2.25,-4.5) {};
\node (q_122) at (-0.75,-4.5) {};
\node (q_123) at (-1.2,-4) {};
\node (q_124) at (-1.8,-4) {};
\node (q_13) at (-3.3, -2.5) {};
\node (q_14) at (-2.7, -2.5) {};
\node[sthlmRed] at (-3,-5) {$\cdots$};
\node[sthlmRed] at (-3,-3) {$\cdots$};
\draw[-, very thin] (-5.25,-1.25) -- (-5.5,-1.25) -- node[fill=white, rotate=90]{\emph{depth} $\ell$} (-5.5,-5) -- (-5.25,-5);
\draw[-, very thin] (-5,-1.25) -- (-5,-1) -- node[fill=white]{\emph{branching} $|\mathcal{U}^i| |\mathcal{Z}^i|$} (-1,-1) -- (-1,-1.25);
\path[-, draw=black, thick, sthlmRed] (q_1) edge (q_11)
edge (q_12)
edge (q_13)
edge (q_14)
(q_11) edge (q_111)
edge (q_112)
edge (q_113)
edge (q_114)
(q_12) edge (q_121)
edge (q_122)
edge (q_123)
edge (q_124);
\end{tikzpicture}
\caption{A $\ell$-step policy tree captures a sequence of $\ell$ time steps, each of which can be conditioned on the outcome of previous decisions. Each node is labelled with the decision made if it is reached.}
\label{fig:policies}
\end{figure}
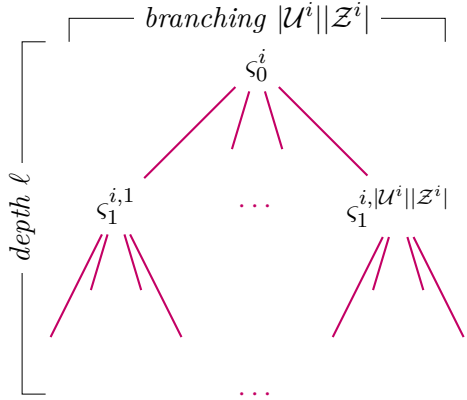

The goal of solving game $M$ is to identify a joint policy that satisfies a specific criterion based on the problem at hand. This involves determining a series of individual decision rules for each agent, denoted by ${a}^{1:n}_{0:\ell-1}$ or simply ${a}_{0:}$. Under a joint policy, every time step $t$, each agent $i$ follows a unique private decision rule ${a}^i_t\colon o^i_t \mapsto \varsigma^i_t$, where $\varsigma^i_t\in \mathcal{P}(\mathcal{U}^i)$, which is dependent on the $t$-step private history $o^i_t \doteq (u^i_{0:t-1},z^i_{1:t})$, with the first private history being $o^i_0 \doteq \emptyset$, as shown in Figure \ref{fig:policies}. Every joint policy $a_{0:}$ will have a performance index induced by state-value functions $\upsilon^{a_{0:}}_{M,\gamma,0:}$ that are defined as follows.

\begin{definition}
\label{def:state:value:fct:m}
For every agent $i$, the $t$-step state-value function $\upsilon^{i,a_{t:}}_{M,\gamma,t}\colon \mathcal{X}\times \mathcal{O}_t \to \mathbb{R}$ under the joint policy $a_{t:}$ is given by: for any arbitrary hidden state $x_t$ and joint history $o_t$,
\begin{align*}
\upsilon^{i,a_{t:}}_{M,\gamma,t}(x_t,o_t) &\doteq\textstyle \mathbb{E}\{\sum_{\tau=t}^{\ell-1} \gamma^{\tau-t} \cdot r^i_{X_\tau,U_\tau} | x_t,o_t,a_{t:}\},
\end{align*}
with boundary condition $\upsilon^{\cdot}_{M,\gamma,\ell}(\cdot) \doteq 0$.
\end{definition}

One can extend Definition \ref{def:state:value:fct:m} by abuse of notation to hold over posterior probability distributions over hidden states and histories as follows.

\begin{definition}
For every agent $i$, the $t$-step state-value function $\upsilon^{i,a_{t:}}_{M,\gamma,t}\colon \mathcal{P}(\mathcal{X}\times \mathcal{O}_t) \to \mathbb{R}$ under joint policy $a_{t:}$ is given by: for any distribution $s_t$,
\begin{align*}
\upsilon^{i,a_{t:}}_{M,\gamma,t}(s_t) &\doteq\textstyle \mathbb{E}\{\sum_{\tau=t}^{\ell-1} \gamma^{\tau-t} \cdot r^i_{X_\tau,U_\tau} | s_t,a_{t:}\},
\end{align*}
with boundary condition $\upsilon^{i,\cdot}_{M,\gamma,\ell}(\cdot) \doteq 0$.
\end{definition}

The performance index for joint policy $a_{0:}$ at initial belief state $s_0$ is given by $\upsilon^{a_{0:}}_{M,\gamma,0}(s_0)$. As discussed below, defining the \citeauthor{bellman}'s equations underlying state-value functions under a fixed joint policy will prove useful.

\begin{restatable}{lem}{lembellmanequationm}
\label{lem:bellman:equation:m}
The $t$-step state-value function $\upsilon^{i,a_{t:}}_{M,\gamma,t}\colon \mathcal{X}\times \mathcal{O}_t \to \mathbb{R}$ satisfies a recursion formula, known as \citeauthor{bellman}'s equations, \ie for any arbitrary hidden state $x_t$ and joint history $o_t$,
\begin{align}
\upsilon^{i,a_{t:}}_{M,\gamma,t} (x_t,o_t)&= \textstyle \sum_{u_t\in \mathcal{U}}~ a_t(u_t|o_t)\cdot q^{i,a_{t+1:}}_{M,\gamma,t}(x_t,o_t,u_t) \nonumber\\
q^{i,a_{t+1:}}_{M,\gamma,t} (x_t,o_t,u_t) &\doteq \textstyle r^i_{x_tu_t} + \gamma\sum_{x_{t+1}\in \mathcal{X}}\sum_{z\in \mathcal{Z}} p_{x_t,x_{t+1}}^{u_t,z_{t+1}} \cdot \upsilon^{i,a_{t+1:}}_{M,\gamma,t+1}(x_{t+1},(o_t,u_t,z_{t+1})),\label{eq:lem:bellman:equation:m}
\end{align}
with boundary condition $\upsilon^{i,\cdot}_{M,\gamma,\ell}(\cdot) = q^{i,\cdot}_{M,\gamma,\ell}(\cdot) \doteq 0$.
\end{restatable}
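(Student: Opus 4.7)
The plan is a direct derivation from the definition of $\upsilon^{i,a_{t:}}_{M,\gamma,t}$ by the standard ``peel off one step'' argument, using linearity of expectation, the tower law, and the transition/observation structure encoded in $p^{u_t,z_{t+1}}_{x_t,x_{t+1}} = p^{u_t}_{x_t,x_{t+1}}\cdot p^{u_t,z_{t+1}}_{x_{t+1}}$. First I would split the finite sum inside the expectation into the time-$t$ reward plus the tail,
\[
\upsilon^{i,a_{t:}}_{M,\gamma,t}(x_t,o_t) = \mathbb{E}\{r^i_{X_t,U_t}\mid x_t,o_t,a_{t:}\} + \gamma\cdot \mathbb{E}\Bigl\{\textstyle\sum_{\tau=t+1}^{\ell-1} \gamma^{\tau-(t+1)} r^i_{X_\tau,U_\tau}\,\Bigm|\, x_t,o_t,a_{t:}\Bigr\}.
\]

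Next I would evaluate each piece. For the immediate reward, conditioning on $X_t=x_t$ and $O_t=o_t$ the only randomness is $U_t\sim a_t(\cdot\mid o_t)$, so the first term becomes $\sum_{u_t} a_t(u_t\mid o_t)\cdot r^i_{x_t,u_t}$. For the tail, I would apply the tower law by conditioning additionally on $(U_t,X_{t+1},Z_{t+1})$. The conditional distribution of $(U_t,X_{t+1},Z_{t+1})$ given $(x_t,o_t,a_{t:})$ factors as $a_t(u_t\mid o_t)\cdot p^{u_t}_{x_t,x_{t+1}}\cdot p^{u_t,z_{t+1}}_{x_{t+1}} = a_t(u_t\mid o_t)\cdot p^{u_t,z_{t+1}}_{x_t,x_{t+1}}$. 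The key observation is that, given $X_{t+1}=x_{t+1}$ and the extended joint history $O_{t+1}=(o_t,u_t,z_{t+1})$, the inner conditional expectation is exactly $\upsilon^{i,a_{t+1:}}_{M,\gamma,t+1}(x_{t+1},(o_t,u_t,z_{t+1}))$ by Definition \ref{def:state:value:fct:m}. Here one uses that future rewards $r^i_{X_\tau,U_\tau}$ for $\tau\geq t+1$ depend on $(x_t,o_t,a_{t:})$ only through the pair $(x_{t+1},o_{t+1})$, \ie the Markov property of the state process and the fact that $a_{t+1:}$ conditions only on the future private histories, which are determined by $(o_{t+1},U_{t+1:},Z_{t+2:})$.

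Putting the two pieces together, factoring out the common $\sum_{u_t} a_t(u_t\mid o_t)$, and recognising the bracketed expression as the definition of $q^{i,a_{t+1:}}_{M,\gamma,t}(x_t,o_t,u_t)$ yields the claimed recursion. The boundary condition $\upsilon^{i,\cdot}_{M,\gamma,\ell}(\cdot)=q^{i,\cdot}_{M,\gamma,\ell}(\cdot)\doteq 0$ holds because the sum defining $\upsilon^{i,a_{\ell:}}_{M,\gamma,\ell}$ is empty.

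The main obstacle I expect is the bookkeeping step showing that conditioning on $(x_t,o_t,a_{t:},u_t,x_{t+1},z_{t+1})$ reduces to conditioning on $(x_{t+1},o_{t+1},a_{t+1:})$ for the tail expectation; this requires carefully invoking the conditional independence of future variables from $(x_t,o_t)$ given $(x_{t+1},o_{t+1})$ encoded in the graphical model of Figure \ref{graphical:model:zs:posg}, together with the measurability of each $a_\tau$ with respect to its private history. Once this Markov reduction is made explicit, the rest is rearrangement of sums.
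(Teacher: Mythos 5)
Your proposal is correct and follows essentially the same route as the paper's proof: split off the immediate reward, apply the tower law over $(U_t,X_{t+1},Z_{t+1})$, identify the inner expectation with $\upsilon^{i,a_{t+1:}}_{M,\gamma,t+1}$ via Definition \ref{def:state:value:fct:m}, and expand over joint actions (the paper merely dresses this up as an induction whose hypothesis is never substantively used). If anything, you are more explicit than the paper about the one delicate point — the conditional-independence reduction that lets the tail expectation depend on $(x_t,o_t)$ only through $(x_{t+1},o_{t+1})$ — which the paper's proof passes over silently.
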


\begin{proof}
The proof proceeds by induction. It is obviously true at the last time step $t=\ell-1$, \ie for every hidden state $x_t$, and joint history $o_t$,
\begin{align*}
\upsilon^{a_{\ell-1}}_{M,\gamma, \ell-1}(x_{\ell-1},o_{\ell-1}) &\doteq \mathbb{E}\{ r_{x_{\ell-1},U_{\ell-1}}| x_{\ell-1},o_{\ell-1}, a_{\ell-1} \}.
\end{align*}
Expanding the expectation over joint actions results in the following expression:
\begin{align*}
\upsilon^{a_{\ell-1}}_{M,\gamma, \ell-1}(x_{\ell-1},o_{\ell-1}) &= { \textstyle\sum_{u_{\ell-1}\in \mathcal{U}}~ a_{\ell-1}(u_{\ell-1}|o_{\ell-1})\cdot q^{\cdot}_{M,\gamma,\ell-1}(x_{\ell-1},o_{\ell-1}, u_{\ell-1})},
\end{align*}
where $q^{\cdot}_{M,\gamma,\ell-1}(x_{\ell-1},o_{\ell-1}, u_{\ell-1}) \doteq r_{x_{\ell-1}, u_{\ell-1}}$.
Suppose the induction hypothesis holds from time step $t+1$ onward. The recursive scheme holds for time step $t$, for hidden state $x_t$ and joint history $o_t$,
\begin{align*}
\upsilon^{a_{t:}}_{M,\gamma, t}(x_t,o_t)
&\doteq\textstyle { \mathbb{E}\left\{ \sum_{\tau=t}^{\ell-1}~\gamma^{\tau-t}\cdot r_{X_\tau,U_\tau} | x_t, o_t, a_{t:} \right\}}.
\end{align*}
Splitting the immediate reward and the future rewards leads to the following expression:
\begin{align*}
\upsilon^{a_{t:}}_{M,\gamma, t}(x_t,o_t)&=\textstyle {\mathbb{E}\left\{ r_{x_t,U_t} + \gamma \sum_{\tau=t+1}^{\ell-1}~\gamma^{\tau-t-1}\cdot r_{X_\tau,U_\tau} | x_t , o_t , a_{t:} \right\}}.
\end{align*}
Expanding the expectation of the future rewards over all histories at time step $t+1$ yields:
\begin{align*}
\upsilon^{a_{t:}}_{M,\gamma, t}(x_t,o_t)&=\textstyle \mathbb{E}\left\{r_{x_t,U_t} + \gamma \mathbb{E}{\left\{ \sum_{\tau=t+1}^{\ell-1}~\gamma^{\tau-t-1}\cdot r_{X_\tau,U_\tau}| X_{t+1}, (o_t,U_t,Z_{t+1}), a_{t+1:} \right\}} | x_t , o_t, a_t\right\}.
\end{align*}
The inner expectation is the state-value function of hidden states and histories at time step $t+1$, by Definition \ref{def:state:value:fct:m} of $\upsilon^{a_{t+1:}}_{M,\gamma, t+1}$, \ie
\begin{align*}
\upsilon^{a_{t:}}_{M,\gamma, t}(x_t,o_t)&=\textstyle \mathbb{E}\{ r_{x_t,U_t} + \gamma \upsilon^{a_{t+1:}}_{M,\gamma, t+1}(X_{t+1},(o_t,U_t,Z_{t+1})) | x_t, o_t, a_t \}.
\end{align*}
The application of definition of $q^{a_{t+1:}}_{M,\gamma, t}$ in \eqref{eq:lem:bellman:equation:m} produces the following expression:
\begin{align*}
\upsilon^{a_{t:}}_{M,\gamma, t}(x_t,o_t)&=\textstyle \mathbb{E}\{q^{a_{t+1:}}_{M,\gamma, t}(x_t,o_t,U_t) | x_t, o_t, a_t \}.
\end{align*}
Finally, the expansion of the expectation over joint actions produces the target expression:
\begin{align*}
\upsilon^{a_{t:}}_{M,\gamma, t}(x_t,o_t)&=\textstyle \sum_{u_t\in \mathcal{U}} a_t(u_t|o_t)\cdot q^{a_{t+1:}}_{M,\gamma, t}(x_t,o_t,u_t).
\end{align*}
Hence, the proof holds for any arbitrary time step $t$. Which ends the proof.
\end{proof}

\subsection{Solution Concepts}
After learning how to calculate the state- and action-value functions for a fixed joint policy, we also need to determine how to identify a suitable policy for each agent based on state-value functions. Unfortunately, no universal definition of good policies can be applied to all multi-agent sequential decision-making problems. Instead, this paper proposes solution concepts that refine the notion of a best-response policy, including but not limited to Nash Equilibrium, optimal joint policy, or strong Stackelberg joint policy. To better understand the concept of best response, consider the scenario where an agent knows how other agents will act. In this case, the multi-agent sequential decision-making problem becomes simple, and the agent is left with a single-agent problem of selecting the utility-maximizing policy based on the other agents' fixed policies, \ie a best-response policy.

\begin{definition}
A best-response policy of agent $i$ to joint policy $a^{\neg i}_{0:}\in \mathcal{A}_{0:}^{\neg i}$ of the other agents is a policy $a^i_{0:}\in \mathcal{A}_{0:}^i$ satisfying $\upsilon^{i,(a^i_{0:}, a^{\neg i}_{0:})}_{M,\gamma,0}(s_0) \geq \upsilon^{i,( \underline{a}^i_{0:}, a^{\neg i}_{0:})}_{M,\gamma,0}(s_0)$, for all $ \underline{a}^i_{0:}\in \mathcal{A}_{0:}^i$. Set $\mathcal{A}_{0:}^i(a^{\neg i}_{0:})$ of best-responses \wrt $a^{\neg i}_{0:}$ is made of all the possible randomized policies and is thus infinite except in degenerate cases where a unique (deterministic) best-response, \ie
\begin{align*}
\mathcal{A}_{0:}^i(a^{\neg i}_{0:})&
\doteq
\{
a^i_{0:} \in \mathcal{A}_{0:}^i |
\upsilon^{i,( a^i_{0:}, a^{\neg i}_{0:})}_{M,\gamma,0}(s_0) \geq \upsilon^{i,(\underline{a}^i_{0:}, a^{\neg i}_{0:})}_{M,\gamma,0}(s_0), \forall \underline{a}^i_{0:} \in \mathcal{A}_{0:}^i
\}.
\end{align*}
\end{definition}

The properties of best-response policies are numerous and significant. Firstly, they yield the same performance index for a specific joint policy $a^{\neg i}_{0:}$ of the other agents. Consequently, deterministic (or pure) policies are just as effective as randomized policies. Additionally, if a randomized policy is a best-response, every deterministic policy within its support must also be a best response. Perfect recall ensures that these properties hold, regardless of whether the best-response is deterministic, mixed, or stochastic.

Up to this point, it is assumed that an agent knows other agents' policies. However, an agent may not know how other agents will act. Nevertheless, it is possible to utilize the concept of best-response to define all solution concepts of interest in this paper. A notable example of such concepts is the most central notion in non-cooperative game theory, namely the Nash Equilibrium. A formal definition follows.

\begin{definition}
A joint policy $a_{0:}\in \mathcal{A}_{0:}$ is a Nash Equilibrium if the policy of any agent is a best response to those of the other agents, \ie $a^i_{0:}\in \mathcal{A}_{0:}^i(a^{\neg i}_{0:})$ for every agent $i$.
\end{definition}

A Nash Equilibrium exists in any game with a finite set of agents and a finite deterministic policy space per agent. A Nash equilibrium refers to a stable joint policy where no agent is willing to deviate from its policy, provided they know the policies other agents will follow. Multiple Nash equilibria can exist, all of which are self-enforcing if agents adhere to such behaviour. Each agent's best interest is to stick with its policy. However, it is important to note that these equilibria may not be optimal for the agents, meaning they may not provide optimal values for any agent and can often offer significantly different values. All solution concepts considered onward are refinements of Nash Equilibria. Optimization problems determining the best-response policies are slave games for a master game. To solve a master game, it is necessary to address a multitude of slave games, which may potentially be infinite in number. This is due to numerous Nash equilibria involving mutually beneficial best-response policies.

In this paper, we will be placing a particular emphasis on three significant subclasses of POSGs: zero-sum POSG (\textit{zs-}POSG), common-payoff POSG (\textit{dec-}POMDP), and Stackelberg POSG (\textit{st-}POSG). Our primary objective is to exhibit the underlying structure of their optimal value functions.

\begin{definition}
A two-player POSG $M$ is said to be a \textit{zs-}POSG if the sum of the rewards of both players equals zero, \ie $r^1 = - r^2$. 
\end{definition}

Optimally solving \textit{zs-}POSG $M$ aims at finding a Nash equilibrium. Interestingly, every Nash equilibrium of a \textit{zs-}POSG $M$ will yield the same performance index at the initial belief state $s_0$: $\upsilon_{M,\gamma,0}^{i,*}(s_0) \doteq\textstyle \max_{a^i_{0:}\in \mathcal{A}_{0:}^{i}} \min_{a^{\neg i}_{0:}\in \mathcal{A}_{0:}^{\neg i}} \upsilon_{M,\gamma,0}^{i,a_{0:}}(s_0)$.

\begin{definition}
A $n$-player POSG $M$ is a \textit{dec}-POMDP if all players have the same rewards, \ie $r^1 = \ldots = r^n$.
\end{definition}

Optimally solving \textit{dec-}POMDP $M$ aims at finding an \emph{optimal joint policy}, \ie a Nash equilibrium whose performance index at the initial belief state $s_0$ is: $\upsilon_{M,\gamma,0}^{*}(s_0) \doteq\textstyle \max_{a^{1:}_{0:}\in \mathcal{A}_{0:}^{1:}} \upsilon_{M,\gamma,0}^{a_{0:}}(s_0)$.

\begin{definition}
A two-player POSG $M$ is a \textit{st-}POSG if (the leader) player $1$ publicly announces its policy to (the follower) player $2$ beforehand. 
\end{definition}

Optimally solving \textit{st-}POSG $M$ aims at finding a \emph{strong Stackelberg equilibrium}, \ie a Nash equilibrium whose performance index is: $\upsilon_{M,\gamma,0}^{1,*}(s_0) \doteq\textstyle \max_{a^1_{0:}\in \mathcal{A}_{0:}^1} \max_{a^2_{0:}\in \mathcal{A}_{0:}^2(a^1_{0:})} \upsilon_{M,\gamma,0}^{1,a_{0:}}(s_0)$. It is worth noting that solving a POSG optimally, and in particular solving \textit{zs-}POSG, \textit{dec-}POMDP, and \textit{st-}POSG optimally, aims at finding a Nash equilibrium.

\subsection{Slave Games As POMDPs}
In this subsection, we will formally introduce the concept of slave games as a mean to determine a best-response policy for an agent given a master game expressed as a POSG alongside the joint policy that the other agents follow. We will also delve into the expression of slave games as POMDPs and their corresponding solution concepts, including \citeauthor{bellman}'s optimality equations. To aid comprehension, we will utilize a particular scenario of the tiger problem as an example, depicted in Figure \ref{fig:agent:environment:slave}.

\begin{figure}[!ht]
\centering
\begin{tikzpicture}[scale=1.35]
\def \agentCentral{(0,-3) ellipse (1cm and 0.9cm);
\node[inner sep=0pt] (russell) at (0,-3.25) {\includegraphics[width=.06125\textwidth]{figures/TrustedThirdParty.png}};}
\def \agentA{(-.5,3) ellipse (1cm and 0.9cm);
\node[inner sep=0pt] (calvin) at (-.5,2.7) {\includegraphics[width=.03725\textwidth]{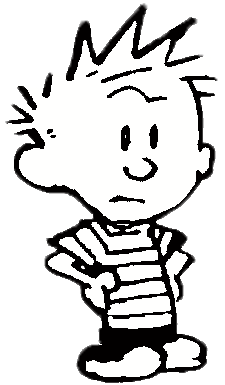}};}
\def \agentD{(-.5,0) ellipse (1cm and 0.9cm);
\node[inner sep=0pt] (susie) at (-.5,-.25) {\includegraphics[width=.03725\textwidth]{figures/Susie.png}};}
\draw[rounded corners, fill=white] (3.5,-2.5) rectangle +(3.5,5.5);
\node (agentCentral) at (-2,.5) {};
\node (agentA) at (-.25,2.25) {};
\node (agentD) at (-.25,-1) {};
\begin{scope}[scale=.5]
\def \agentA{(-1.5,4.5) ellipse (1cm and 0.9cm);
\node[inner sep=0pt] (calvin) at (-1.5,4.2) {\includegraphics[width=.075\textwidth]{figures/Calvin_BW.png}};}
\def \agentD{(-1.5,-2) ellipse (1cm and 0.9cm);
\node[inner sep=0pt] (susie) at (-1.5,-2.25) {\includegraphics[width=.075\textwidth]{figures/Susie.png}}; }
\draw[draw=white] \agentA;
\node[draw=sthlmYellow, fill=yellow!20, rectangle callout, text centered, text width=3cm, text=gray, scale=.7] at (-3,6) {\textcolor{gray}{My policy is \textcolor{sthlmBlue}{$a^{\neg i}_{0:}$}}};
\draw[draw=white] \agentD;
\node[draw=sthlmYellow, fill=yellow!20, rectangle callout, text centered, text width=3cm, text=gray, scale=.7] at (-2.85,-0.25) {\textcolor{gray}{I know \textcolor{sthlmBlue}{$a^{\neg i}_{0:}$} is Calvin's policy}};
\node[inner sep=0pt] (hobbes) at (11.6,3) {\includegraphics[width=.075\textwidth]{figures/Hobbes.png}};
\node[inner sep=0pt] (door1) at (9,3) {\includegraphics[width=.075\textwidth]{figures/Door.png}};
\node[inner sep=0pt] (door2) at (9,-2) {\includegraphics[width=.075\textwidth]{figures/Door.png}};
\node[inner sep=0pt] (treasure) at (11.75,-3) {\includegraphics[width=.075\textwidth]{figures/Treasure.jpg}};
\end{scope}
\draw[->,-latex] (.5,-1) -- node[draw=none, fill=white,scale=.75,rotate=28] {decision 1} (2,-.25);
\draw[->,-latex] (2.75,-.25) -- node[draw=none, fill=white,scale=.75,rotate=28] {perception 1} (1.25,-1);
\draw[->,-latex] (.5,2) -- node[draw=none, fill=white,scale=.75,rotate=-28] {decision 2} (2,1.25);
\draw[->,-latex] (2.75,1.25) -- node[draw=none, fill=white,scale=.75,rotate=-28] {perception 2} (1.25,2);
\node[scale=1] at (-.75,-3) {\sc agents};
\node[scale=1] at (5.25,-3) {\sc environment};
\end{tikzpicture}
\caption{Illustration of agents and their environment for the tiger problem where Calvin acts according to a public and fixed policy. }
\label{fig:agent:environment:slave}
\end{figure}

\begin{example}
In this particular scenario of the tiger problem, Calvin follows a predetermined policy $a_{0:}^{\neg i}$, while Susie is aware of this. Nevertheless, Susie must still identify the correct door to claim the treasure as quickly as possible, \cf Figure \ref{fig:agent:environment:slave}.
\end{example}

\begin{definition}
Consider the master game $M$. A slave game $M(a^{\neg i}_{0:})$ is characterized by the tuple $(\mathcal{X}, \mathcal{O}, \mathcal{U}^i,\mathcal{Z}^i, p, r^i)$. Here, $\mathcal{X}$ represents a finite space of hidden states, $\mathcal{O}$ denotes the space of hidden joint histories, $\mathcal{U}^i$ is the finite action space, and $\mathcal{Z}^i$ represents the finite observation space. All other aspects of $M(a^{\neg i}_{0:})$ are consistent with those in $M$.
\end{definition}

Let us consider master game $M$. Within $M$, there exists a slave game denoted by $M(a^{\neg i}_{0:})$. Agent $i$ is part of a POMDP, as depicted in Figure \ref{graphical:model:slave:problem}. At each time step, the state of the world is determined by the underlying state in $M$ and the joint history of all agents. However, agent $i$ can only access its private history, not the underlying state nor the histories of the other agents. Regardless, agent $i$ can still utilize their history to make informed decisions, which results in an immediate reward signal and a new observation. This process repeats in a new underlying state and joint history at each successive time step until the game concludes.

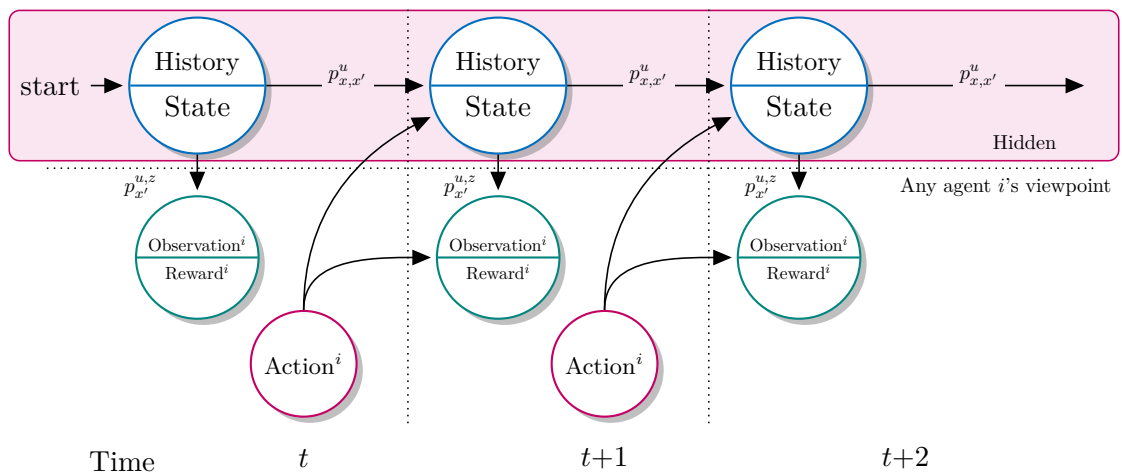
\begin{figure}[!ht]
\centering
\begin{tikzpicture}[->,>=triangle 45,shorten >=2pt,auto,node distance=4cm,semithick]
\tikzstyle{every state}=[draw=black,text=black,,inner color= white,outer color= white,draw= black,text=black, drop shadow]
\tikzstyle{place}=[circle,thick,draw=sthlmBlue,fill=blue!20,minimum size=6mm]
\tikzstyle{placesplit}=[circle split,thick,draw=sthlmBlue,fill=sthlmBlue!20,minimum size=6mm]
\tikzstyle{red place}=[place,draw=sthlmRed,fill=sthlmRed!20]
\tikzstyle{green place}=[place,draw=sthlmGreen,fill=sthlmGreen!20]
\tikzstyle{green placesplit}=[placesplit,draw=sthlmGreen,fill=sthlmGreen!20]
\draw[rounded corners, sthlmRed, fill=sthlmRed!10] (-2.5,-1) rectangle (12.25,1);
\node[fill=white,text=black,draw=none,scale=.7] at (-.75,-1.35) {$p^{u,z}_{x'}$};
\node[fill=white,text=black,draw=none,scale=.7] at (3.5,-1.35) {$p^{u,z}_{x'}$};
\node[fill=white,text=black,draw=none,scale=.7] at (7.5,-1.35) {$p^{u,z}_{x'}$};
\node[initial,state,placesplit] (S0) {\small History \nodepart{lower} State};
\node[state,placesplit] (S1) [right of=S0] {\small History \nodepart{lower} State};
\node[state,placesplit] (S2) [ right of=S1] {\small History \nodepart{lower} State};
\node (S3) [ right of=S2] {};
\node[state,green placesplit, scale=.61] (O0) [below of=S0,node distance=3.75cm] { Observation$^i$ \nodepart{lower} Reward$^i$};
\node[state,green placesplit, scale=.61] (O1) [below of=S1,node distance=3.75cm] { Observation$^i$ \nodepart{lower} Reward$^i$};
\node[state,green placesplit, scale=.61] (O2) [below of=S2,node distance=3.75cm] { Observation$^i$ \nodepart{lower} Reward$^i$};
\node[state,red place] (A0) [below right of=O0,node distance=2cm] {\footnotesize Action$^i$};
\node[state,red place] (A1) [right of=A0] {\footnotesize Action$^i$};
\node (A2) [right of=A1] {};
\node (Time) at (-1,-5) {Time};
\node (T0) [below of=A0,node distance=1.25cm] {$t$};
\node (T1) [below of=A1,node distance=1.25cm] {$t$+$1$};
\node (T2) [below of=A2,node distance=1.25cm] {$t$+$2$};
\node (N0) at (2.8,1.15) {};
\node (N1) at (2.8,-4.75) {};
\draw[-,dotted] (N0)-- (N1);
\draw[-,dotted] (-2,-1.1)--(12,-1.1);
\node (N2) at (6.8,1.15) {};
\node (N3) at (6.8,-4.75) {};
\draw[-,dotted] (N2)--(N3);
\node[fill=sthlmRed!10,text=black,draw=none,scale=.7] at (11,-.75) {Hidden};
\node[fill=white,text=black,draw=none,scale=.7] at (10.75,-1.35) {Any agent $i$'s viewpoint};
\path (S0) edge node[midway, fill=sthlmRed!10,text=black,draw=none,scale=.7, above=-5pt] {$p^{u}_{x,x'}$} (S1)
edge node {} (O0)
(S1) edge node[midway, fill=sthlmRed!10,text=black,draw=none,scale=.7, above=-5pt] {$p^{u}_{x,x'}$} (S2)
edge node {} (O1)
(S2) edge node[midway, fill=sthlmRed!10,text=black,draw=none,scale=.7, above=-5pt] {$p^{u}_{x,x'}$} (S3)
edge node {} (O2)
(A0) edge [out=90, in=-180] node {} (O1)
edge [out=90, in=-155] node {} (S1)
(A1) edge [out=90, in=-180] node {} (O2)
edge [out=90, in=-155] node {} (S2);
\end{tikzpicture}
\caption{The graphical model of a slave game as a partially observable Markov decision process from any agent $i$'s viewpoint. Blue circles represent the states of the problem, including the state of the underlying environment and the joint histories. Green circles denote both private observations and rewards agent $i$ received.Finally, red circles are private actions that agent $i$ took.}
\label{graphical:model:slave:problem}
\end{figure}

The goal of solving the slave game $M(a^{\neg i}_{0:})$ is to find a best-response policy $a^i_{0:}$ for agent $i$. A best-response policy will maximize the expected $\gamma$-discounted cumulative reward, starting from the initial belief state $s_0$, providing the other agents act according to $a^{\neg i}_{0:}$. Every best-response policy will have the same performance index at the initial belief state $s_0$.

\begin{definition}
\label{def:best:response:vf:optimal}
Consider a slave game $M(a^{\neg i}_{0:})$ \wrt the master game $M$. The $t$-step optimal value function $\upsilon_{M(a^{\neg i}_{0:}),\gamma,t}^i\colon \mathcal{O}^i_t \to \mathbb{R}$ of $M(a^{\neg i}_{0:})$ is given by: for any arbitrary private history $o^i_t$,
\begin{align*}
\upsilon_{M(a^{\neg i}_{0:}),\gamma,t}^i(o^i_t) &\doteq\textstyle \max_{a^i_{t:} \in \mathcal{A}^i_{t:}}~ \mathbb{E}\{ \textstyle{\sum_{\tau=t}^{\ell-1}}~\gamma^{\tau-t} \cdot r^i_{X_\tau,U_\tau} | o^i_t, a^i_{t:},a^{\neg i}_{0:} \}.
\end{align*} 
with boundary condition $\upsilon_{M(a^{\neg i}_{0:}),\gamma,\ell}^i(\cdot) \doteq 0$.
\end{definition}

One can extend Definition \ref{def:best:response:vf:optimal} by abuse of notation to hold over posterior probability distributions over private histories of agent $i$, namely marginal occupancy states, as follows.

\begin{definition}
\label{def:best:response:vf:optimal:extended}
Consider a slave game $M(a^{\neg i}_{0:})$ \wrt the master game $M$. The $t$-step optimal value function $\upsilon_{M(a^{\neg i}_{0:}),\gamma,t}^i\colon\mathcal{P}( \mathcal{O}^i_t )\to \mathbb{R}$ of any best-response policy is given by: for any arbitrary marginal occupancy state $s^{\mathtt{m},i}_t$,
\begin{align*}
\upsilon_{M(a^{\neg i}_{0:}),\gamma,t}^i(s^{\mathtt{m},i}_t) &\doteq\textstyle \max_{a^i_{t:} \in \mathcal{A}^i_{t:}}~ \mathbb{E}\{ \textstyle{\sum_{\tau=t}^{\ell-1}}~\gamma^{\tau-t} \cdot r^i_{X_\tau,U_\tau} | s^{\mathtt{m},i}_t, a^i_{t:},a^{\neg i}_{0:} \}.
\end{align*}
with boundary condition $\upsilon_{M(a^{\neg i}_{0:}),\gamma,\ell}^i(\cdot) \doteq 0$.
\end{definition}

When dealing with a slave game formulated as a POMDP, it is important to note that the agent only has access to partial information about the game state. Therefore, relying on the state to reason in such a process is not feasible. Enumerating and evaluating all $\ell$-step policy alternatives is computationally expensive. Instead, the best-response policy can be computed based on the private history of that agent.

\begin{restatable}{lem}{lembellmanhistorypomdp}
\label{lem:bellman:history:pomdp}
The $t$-step optimal state-value function $\upsilon_{M(a^{\neg i}_{0:}),\gamma,t}^i\colon \mathcal{O}^i_t\to \mathbb{R}$ of a slave game $M(a^{\neg i}_{0:})$ satisfies a recursion formula, known as \citeauthor{bellman}'s optimality equations, \ie for any arbitrary private history $o^i_t$, we have:
\begin{align*}
\upsilon_{M(a^{\neg i}_{0:}),\gamma,t}^i(o^i_t) &= \textstyle \max_{u^i_t\in \mathcal{U}^i}~q_{M(a^{\neg i}_{0:}),\gamma,t}^i(o^i_t,u^i_t),\\
q_{M(a^{\neg i}_{0:}),\gamma,t}^i(o^i_t,u^i_t) &\doteq \mathbb{E}\{ r^i_{X_t,U_t} + \gamma \upsilon_{M(a^{\neg i}_{0:}),\gamma,t+1}^i(\langle o^i_t,u^i_t,Z^i_{t+1}\rangle) | o^i_t,u^i_t,a^{\neg i}_{0:}\}
\end{align*}
with boundary condition $\upsilon_{M(a^{\neg i}_{0:}),\gamma,\ell}^i(\cdot) = q_{M(a^{\neg i}_{0:}),\gamma,t}^i(\cdot,\cdot) \doteq 0$.
\end{restatable}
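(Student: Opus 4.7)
The plan is to prove the statement by backward induction on $t$, paralleling the structure of the proof of Lemma \ref{lem:bellman:equation:m} but with two additional ingredients: a simplex-vertex argument that reduces maximization over stochastic decision rules to maximization over deterministic actions, and an interchange-of-max-and-expectation step that exploits perfect recall. At the base case $t=\ell-1$, the definition of $\upsilon_{M(a^{\neg i}_{0:}),\gamma,\ell-1}^i$ collapses to $\max_{a^i_{\ell-1}} \mathbb{E}\{r^i_{X_{\ell-1},U_{\ell-1}} \mid o^i_{\ell-1}, a^i_{\ell-1}, a^{\neg i}_{0:}\}$. Because this expectation is linear in the probability vector $a^i_{\ell-1}(\cdot\mid o^i_{\ell-1})$, the maximum over the simplex $\mathcal{P}(\mathcal{U}^i)$ is attained at a vertex, so it equals $\max_{u^i_{\ell-1}\in\mathcal{U}^i} \mathbb{E}\{r^i_{X_{\ell-1},u^i_{\ell-1}} \mid o^i_{\ell-1}, a^{\neg i}_{0:}\}$, which is precisely $\max_{u^i_{\ell-1}} q_{M(a^{\neg i}_{0:}),\gamma,\ell-1}^i(o^i_{\ell-1},u^i_{\ell-1})$ under the boundary condition.

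For the inductive step, assume the recursion holds at $t+1$ for every private history. Starting from Definition \ref{def:best:response:vf:optimal}, I would split the cumulative reward into the immediate term $r^i_{X_t,U_t}$ and the discounted tail $\gamma\sum_{\tau=t+1}^{\ell-1}\gamma^{\tau-t-1}r^i_{X_\tau,U_\tau}$, then condition on $(U^i_t, Z^i_{t+1})$ to expose the next private history $\langle o^i_t, U^i_t, Z^i_{t+1}\rangle$. Decomposing $\max_{a^i_{t:}} = \max_{a^i_t}\max_{a^i_{t+1:}}$ and applying the induction hypothesis identifies the inner maximized tail with $\upsilon_{M(a^{\neg i}_{0:}),\gamma,t+1}^i(\langle o^i_t, u^i_t, Z^i_{t+1}\rangle)$. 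A final simplex-vertex argument on the outer max over $a^i_t$ replaces it with $\max_{u^i_t\in\mathcal{U}^i}$, matching the definition of $q_{M(a^{\neg i}_{0:}),\gamma,t}^i(o^i_t,u^i_t)$ and completing the induction.

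The main obstacle is justifying the interchange $\max_{a^i_{t+1:}} \mathbb{E}\{\,\cdot\,\mid o^i_t,u^i_t,\ldots\} = \mathbb{E}\{\max_{a^i_{t+1:}} \,\cdot\,\mid o^i_t,u^i_t,\ldots\}$. This is where perfect recall is essential: decision rules at time $\tau\geq t+1$ are functions of private histories $o^i_\tau$, and since distinct realizations of $Z^i_{t+1}$ produce disjoint future histories, the optimizer can choose the best continuation \emph{independently} for each realization. Equivalently, the space of future policies factorizes across next observations, so the supremum passes inside the expectation without loss. Once this step is granted, the rest of the derivation is bookkeeping analogous to the expectation-splitting manipulations already carried out for Lemma \ref{lem:bellman:equation:m}, together with the observation that deterministic actions suffice at each step—a point that could also be invoked as a consequence of the linearity of value in the decision rule.
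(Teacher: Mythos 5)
Your proof is correct and follows essentially the same route as the paper's: split the immediate reward from the tail, decompose the maximization over continuation policies into the root action and the per-observation subtrees, pass the inner maximum through the expectation (valid because the subtrees attached to distinct next private histories are chosen independently), and identify the result with the optimal value at $t+1$ via Definition \ref{def:best:response:vf:optimal}. The only difference is that you explicitly justify two steps the paper leaves implicit --- the reduction from stochastic decision rules to deterministic actions via linearity over the simplex $\mathcal{P}(\mathcal{U}^i)$, and the interchange of maximization and expectation via perfect recall --- which is a tightening of the same argument rather than a different one.
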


\begin{proof}
The proof starts by defining the best-response policy from time step $t$ onward, providing joint policy of the other agents $a^{\neg i}_{0:}\in \mathcal{A}^{\neg i}_{0:}$, \cf Definition \ref{def:best:response:vf:optimal}:
\begin{align*}
\upsilon_{M(a^{\neg i}_{0:}),\gamma,t}^i(o^i_t) &\doteq\textstyle \max_{a^i_{t:} \in \mathcal{A}^i_{t:}}~ \mathbb{E}\{ \textstyle{\sum_{\tau=t}^{\ell-1}}~\gamma^{\tau-t} \cdot r^i_{X_\tau,U_\tau} | o^i_t, a^i_{t:},a^{\neg i}_{0:} \}.
\end{align*}
Let $\upsilon_{M(a^{\neg i}_{0:}),\gamma,t}^{i,a^i_{t:}}\colon o^i_t \mapsto \mathbb{E}\{ \textstyle{\sum_{\tau=t}^{\ell-1}}~\gamma^{\tau-t} \cdot r^i_{X_\tau,U_\tau} | o^i_t, a^i_{t:},a^{\neg i}_{0:} \}$ be a $t$-step value function under policy $a^i_{t:}$ in the slave game $M(a^{\neg i}_{0:})$. Re-arranging terms by decoupling the immediate reward from the future rewards leads to the following expression:
\begin{align*}
\upsilon_{M(a^{\neg i}_{0:}),\gamma,t}^i(o^i_t)
&= \textstyle \max_{(u^i_t,a^{\neg i}_{t+1:}) \in \mathcal{A}^i_{t:}}~ \mathbb{E}\{ r^i_{X_t,U_t} + \gamma \upsilon^{i,a^i_{t+1:}}_{M(a^{\neg i}_{0:}),\gamma,t+1}(\langle o^i_t,u^i_t,Z^i_{t+1}\rangle) | o^i_t, (u^i_t,a^i_{t+1:},a^{\neg i}_{0:}) \}.
\end{align*}
It is worth noticing that the policy $(u^i_t,a^i_{t+1:})$ rooted at history $o^i_t$ can be written recursively as a tree $\langle u^i_t, (a^{i,z^i}_{t+1:})_{z^i\in \mathcal{Z}^i}\rangle$ with initial action being $u^i_t$ and the tree after seeing private observation $z^i$ being $a^{i,z^i}_{t+1:}$. Consequently, replacing $(u^i_t,a^i_{t+1:})$ by $\langle u^i_t, (a^{i,z^i}_{t+1:})_{z^i\in \mathcal{Z}^i}\rangle$ results in the following expression:
\begin{align*}
\upsilon_{M(a^{\neg i}_{0:}),\gamma,t}^i(o^i_t)
&= \textstyle \max_{\langle u^i_t, (a^{i,z^i}_{t+1:})_{z^i\in \mathcal{Z}^i}\rangle \in \mathcal{A}^i_{t:}}~ \mathbb{E}\{ r^i_{X_t,U_t} + \gamma \upsilon^{i,a^{i,Z^i_{t+1}}_{t+1:}}_{M(a^{\neg i}_{0:}),\gamma,t+1}(o^i_t,u^i_t,Z^i_{t+1}) | o^i_t, (u^i_t,a^i_{t+1:},a^{\neg i}_{0:}) \}.
\end{align*}
One can split the $\max$ operator into two parts, the first one over private actions $\max_{u^i_t\in \mathcal{U}^i}$ and the other one over the future decision rules $\max_{(a^{i,z^i}_{t+1:})_{z^i\in \mathcal{Z}^i}\in \mathcal{A}^i_{t+1:}}$, as follows:
\begin{align*}
\upsilon_{M(a^{\neg i}_{0:}),\gamma,t}^i(o^i_t) &= \textstyle \max_{u^i_t \in \mathcal{U}^i} \mathbb{E}\{ r^i_{X_t,U_t} + \gamma \max_{a^{i,Z_{t+1}^i}_{t+1:} \in \mathcal{A}^i_{t+1:}}~ \upsilon^{i,a^{i,Z_{t+1}^i}_{t+1:}}_{M(a^{\neg i}_{0:}),\gamma,t+1}(\langle o^i_t,u^i_t,Z^i_{t+1}\rangle) | o^i_t, (u^i_t,a^{\neg i}_{0:}) \}.
\end{align*}
By Definition \ref{def:best:response:vf:optimal}, we know that the following holds:
\begin{align*}
\upsilon^i_{M(a^{\neg i}_{0:}),\gamma,t+1}((o^i_t,u^i_t,z^i_{t+1}))& \doteq\textstyle \max_{a^{i,z_{t+1}^i}_{t+1:} \in \mathcal{A}^i_{t+1:}}~ \upsilon^{i,a^{i,z_{t+1}^i}_{t+1:}}_{M(a^{\neg i}_{0:}),\gamma,t+1}(\langle o^i_t,u^i_t,z^i_{t+1}\rangle),
\end{align*}
which leads us to the following expression:
\begin{align*}
\upsilon_{M(a^{\neg i}_{0:}),\gamma,t}^i(o^i_t) &= \textstyle \max_{u^i_t \in \mathcal{U}^i} \mathbb{E}\{ r^i_{X_t,U_t} + \gamma \upsilon^i_{M(a^{\neg i}_{0:}),\gamma,t+1}(\langle o^i_t,u^i_t,Z^i_{t+1}\rangle) | o^i_t, (u^i_t,a^{\neg i}_{0:}) \}.
\end{align*}
Which ends the proof.
\end{proof}

The tabular nature of state- and action-value functions in a slave game results in linearity over marginal occupancy states.

\begin{restatable}{cor}{corlinearoptimalslave}
\label{cor:linear:optimal:slave}
The $t$-step optimal state-value function $\upsilon_{M(a^{\neg i}_{0:}),\gamma,t}^i\colon\mathcal{P}( \mathcal{O}^i_t )\to \mathbb{R}$ is linear over posterior probability distributions of private histories, \ie for any arbitrary marginal occupancy state $s^{\mathtt{m},i}_t$,
\begin{align*}
\upsilon_{M(a^{\neg i}_{0:}),\gamma,t}^i(s^{\mathtt{m},i}_t) &= \textstyle \sum_{o^i_t\in \mathcal{O}_t^i}~s^{\mathtt{m},i}_t(o^i_t)\cdot \upsilon_{M(a^{\neg i}_{0:}),\gamma,t}^i(o^i_t).
\end{align*}
Similarly, the $t$-step optimal action-value function $q_{M(a^{\neg i}_{0:}),\gamma,t}^i\colon \mathcal{P}(\mathcal{O}^i_t ) \times \mathcal{A}^i \to \mathbb{R}$ is linear over posterior probability distributions of private histories, \ie for any arbitrary marginal occupancy state $s^{\mathtt{m},i}_t$ and private decision rule $a^i_t$,
\begin{align*}
q_{M(a^{\neg i}_{0:}),\gamma,t}^i(s^{\mathtt{m},i}_t,a^i_t) &=\textstyle \sum_{o^i_t\in \mathcal{O}_t^i}~s^{\mathtt{m},i}_t(o^i_t)\sum_{u^i_t\in \mathcal{U}^i}~a^i_t(u^i_t| o^i_t) \cdot q_{M(a^{\neg i}_{0:}),\gamma,t}^i(o^i_t,u^i_t).
\end{align*}
with boundary condition $\upsilon_{M(a^{\neg i}_{0:}),\gamma,\ell}^i(\cdot) = q_{M(a^{\neg i}_{0:}),\gamma,\ell}^i(\cdot,\cdot) \doteq 0$.
\end{restatable}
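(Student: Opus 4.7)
The plan is to derive both identities directly from Definition~\ref{def:best:response:vf:optimal:extended}, using the law of total expectation together with the fact that, for a fixed $a^{\neg i}_{0:}$, a policy $a^i_{t:}$ is a tree whose subtree rooted at a history $o^i_t$ completely determines (and only determines) the conditional expectation given $o^i_t$. I will handle the state-value identity first and then deduce the action-value identity as an immediate consequence of Lemma~\ref{lem:bellman:history:pomdp}.

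For the first identity, I would start from
\begin{align*}
\upsilon_{M(a^{\neg i}_{0:}),\gamma,t}^i(s^{\mathtt{m},i}_t) &= \textstyle \max_{a^i_{t:}\in \mathcal{A}^i_{t:}}~ \mathbb{E}\bigl\{ \sum_{\tau=t}^{\ell-1}\gamma^{\tau-t}\cdot r^i_{X_\tau,U_\tau} \,\big|\, s^{\mathtt{m},i}_t, a^i_{t:}, a^{\neg i}_{0:}\bigr\},
\end{align*}
and condition on the random private history $O^i_t$, which is distributed according to $s^{\mathtt{m},i}_t$. The tower property rewrites the inner expectation as $\sum_{o^i_t} s^{\mathtt{m},i}_t(o^i_t)\cdot \upsilon^{i,a^i_{t:}}_{M(a^{\neg i}_{0:}),\gamma,t}(o^i_t)$, where $\upsilon^{i,a^i_{t:}}_{M(a^{\neg i}_{0:}),\gamma,t}(o^i_t)$ denotes the conditional expectation given $o^i_t$ under policy $a^i_{t:}$. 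I would then note that a policy tree $a^i_{t:}$ decomposes as a tuple $(a^{i,o^i_t}_{t:})_{o^i_t \in \mathcal{O}^i_t}$ of subtrees, one per private history reachable at time $t$, and that the conditional expectation given $o^i_t$ depends only on the subtree $a^{i,o^i_t}_{t:}$. Because the histories at time $t$ partition the outcome space and the corresponding subtrees vary independently over $\mathcal{A}^i_{t:}$, the max and the sum can be exchanged, yielding $\sum_{o^i_t} s^{\mathtt{m},i}_t(o^i_t)\cdot \max_{a^{i,o^i_t}_{t:}}\upsilon^{i,a^{i,o^i_t}_{t:}}_{M(a^{\neg i}_{0:}),\gamma,t}(o^i_t)$; recognising each maximum as $\upsilon^i_{M(a^{\neg i}_{0:}),\gamma,t}(o^i_t)$ via Definition~\ref{def:best:response:vf:optimal} closes the argument.

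For the action-value identity, I would use Lemma~\ref{lem:bellman:history:pomdp}: combining $q^i_{M(a^{\neg i}_{0:}),\gamma,t}(o^i_t,u^i_t)$ with the state-value identity just established and the tabular definition of a decision rule $a^i_t$ as a conditional distribution over $\mathcal{U}^i$ given $o^i_t$, the expectation $\mathbb{E}\{q^i_{M(a^{\neg i}_{0:}),\gamma,t}(O^i_t,U^i_t)\mid s^{\mathtt{m},i}_t,a^i_t\}$ factors as the announced double sum. Alternatively, the same argument as above applies directly to the definition of $q^i_{M(a^{\neg i}_{0:}),\gamma,t}(s^{\mathtt{m},i}_t,a^i_t)$, since once $u^i_t$ is sampled from $a^i_t(\cdot\mid o^i_t)$ the remaining maximisation is over subtrees indexed by $(o^i_t,u^i_t,z^i_{t+1})$. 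The boundary condition is immediate from the zero terminal value in Definitions~\ref{def:best:response:vf:optimal} and~\ref{def:best:response:vf:optimal:extended}.

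The only delicate step is the swap of $\max$ and $\sum_{o^i_t}$ in the first identity; I expect this to be the main obstacle because it relies on perfect recall (the agent conditions on its full private history, so different histories at time $t$ correspond to disjoint subtrees that can be optimised independently). I would make this explicit by formally parametrising $\mathcal{A}^i_{t:}$ as the Cartesian product $\prod_{o^i_t\in \mathcal{O}^i_t}\mathcal{A}^{i,o^i_t}_{t:}$ of per-history subtrees, after which the exchange of $\max$ and the nonnegatively weighted sum is a standard fact about separable maximisation.
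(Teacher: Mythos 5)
Your proof is correct and follows the same route the paper intends: the paper's own proof of this corollary is a one-line appeal to Lemma \ref{lem:bellman:history:pomdp}, and your argument supplies exactly the details that appeal leaves implicit, namely the tower-property decomposition of the expectation over $O^i_t\sim s^{\mathtt{m},i}_t$ and the interchange of $\max$ with the nonnegatively weighted sum, justified by the factorisation $\mathcal{A}^i_{t:}\cong\prod_{o^i_t\in\mathcal{O}^i_t}\mathcal{A}^{i,o^i_t}_{t:}$ under perfect recall. Nothing to fix.
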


\begin{proof}
The proof follows directly from that in Lemma \ref{lem:bellman:history:pomdp}.
\end{proof}

The optimal state- and action-value functions in slave games are shown to be linear across probability distributions of private histories in Corollary \ref{cor:linear:optimal:slave}. However, it is important to note that this finding cannot be directly applied to the optimal value function of the master game. This is because the master game requires optimal value function formulations for all the slave games, encompassing all agents' histories. Efforts have been made to apply Corollary \ref{cor:linear:optimal:slave} in showing the uniform continuity properties of optimal value functions of master games. A notable instance is incorporating a weak convexity property for the optimal value functions of \textit{zs-}POSGs by \citet{structure}. To gain a more comprehensive understanding of this, it is essential to define some basilar concepts.

\begin{definition}
Let $s_t\in \mathcal{P}(\mathcal{X}\times \mathcal{O}_t)$ be a posterior probability distribution over hidden states and joint histories, namely an occupancy state, and define the corresponding factors:
\begin{itemize}
\item marginal occupancy state $s_t^{\mathtt{m},i}$, \ie for any private history $o^i_t$, $$s_t^{\mathtt{m},i}(o^i_t) = \textstyle \sum_{x_t\in \mathcal{X}}\sum_{o^{\neg i}_t\in \mathcal{O}^{\neg i}_t} s_t(x_t, o^i_t,o^{\neg i}_t),$$ 
\item conditional occupancy state $s_t^{\mathtt{c},i}$, \ie for any hidden state $x_t$ and joint history $\langle o^i_t,o^{\neg i}_t\rangle$, $$s_t^{\mathtt{c},i}(x_t,o^{\neg i}_t|o^i_t) = s_t(x_t,o^i_t,o^{\neg i}_t)/ s_t^{\mathtt{m},i}(o^i_t).$$
\end{itemize}
We shall use short-hand notation $s_t \doteq s_t^{\mathtt{m},i}\odot s_t^{\mathtt{c},i}$ to represent occupancy state $s_t$ given by: for any arbitrary hidden state $x_t$ and joint history $o_t$, $s_t(x_t,o_t) = s_t^{\mathtt{m},i}(o^i_t)\cdot s_t^{\mathtt{c},i}(x_t,o^{\neg i}_t|o^i_t)$.
\end{definition}

Intuitively, the $t$-step conditional occupancy state $s_t^{\mathtt{c},i}$ is a summary of the suffix joint policy $a^{\neg i}_{:t-1}$ together with the initial belief $s_0$. Consequently, a slave game $M(a_{0:}^{\neg i})$ can be expressed as a sequence of slave subgames, one slave subgame $M(s_t^{\mathtt{c},i}, a^{\neg i}_{t:})$ for each time step $t$. So, solving slave game $M(a_{0:}^{\neg i})$ relies on solving sequence of slave subgames $\langle M(a_{0:}^{\neg i}), M(s_{1}^{\mathtt{c},i}, a^{\neg i}_{1:}), \ldots, M(s_{\ell-1}^{\mathtt{c},i}, a^{\neg i}_{\ell-1:}) \rangle$ through \citeauthor{bellman}'s principle of optimality. We are now ready to state the convexity properties of interest.

\begin{restatable}{thm}{thmconvexhistory}[\citeauthor{structure}]
\label{thm:convex:history}
Consider the \emph{zs}-POSG $M$. The $t$-step optimal state-value function $\upsilon_{M,\gamma,t}^{i,*}\colon \mathcal{P}(\mathcal{X}\times\mathcal{O}_t) \to \mathbb{R}$ is convex over marginal occupancy states for a fixed conditional occupancy state, \ie for any occupancy state $s_t \doteq s_t^{\mathtt{m},\neg i}\odot s_t^{\mathtt{c},\neg i}$,

\begin{align*}
\upsilon_{M,\gamma,t}^{i,*}(s_t) &= \textstyle \max_{a^{i}_{t:}\in \mathcal{A}_{t:}^{i}} \upsilon^i_{M(s_t^{\mathtt{c},\neg i},a^{i}_{t:}),\gamma,t}(s^{\mathtt{m},\neg i}_t),
\end{align*}
where $\upsilon^i_{M(s_t^{\mathtt{c},\neg i},a^{i}_{t:}),\gamma,t} \doteq -\upsilon^{\neg i}_{M(s_t^{\mathtt{c},\neg i},a^{i}_{t:}),\gamma,t}$ and $M(s_t^{\mathtt{c},\neg i},a^{i}_{t:})$ describes the slave subgame given the conditional occupancy state $s_t^{\mathtt{c},\neg i}$ and private decision rule $a^{i}_{t:}$.
\end{restatable}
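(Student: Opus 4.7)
The plan is to unfold the optimal state-value function of the \emph{zs}-POSG via its minimax characterization, and then identify the inner maximization as the optimal value of a slave subgame, whose linearity in the marginal occupancy state (by Corollary \ref{cor:linear:optimal:slave}) immediately yields convexity of the resulting supremum.

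First, I would start from the minimax definition of $\upsilon^{i,*}_{M,\gamma,t}$, applied at the occupancy state $s_t$,
\[ \upsilon^{i,*}_{M,\gamma,t}(s_t) \;=\; \max_{a^{i}_{t:} \in \mathcal{A}^i_{t:}}\; \min_{a^{\neg i}_{t:} \in \mathcal{A}^{\neg i}_{t:}} \upsilon^{i,a_{t:}}_{M,\gamma,t}(s_t). \]
Because $r^{i} = -r^{\neg i}$ in a \emph{zs}-POSG, one has $\upsilon^{i,a_{t:}}_{M,\gamma,t} = -\upsilon^{\neg i,a_{t:}}_{M,\gamma,t}$, so the inner minimum over $a^{\neg i}_{t:}$ rewrites as an outer maximum with a sign flip,
\[ \upsilon^{i,*}_{M,\gamma,t}(s_t) \;=\; \max_{a^{i}_{t:}} \Bigl[\,{-\max_{a^{\neg i}_{t:}} \upsilon^{\neg i,a_{t:}}_{M,\gamma,t}(s_t)}\,\Bigr]. \]

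Second, I would identify the inner maximum over $a^{\neg i}_{t:}$ as agent $\neg i$'s best-response problem against the fixed policy $a^i_{t:}$, starting from $s_t$. Once $a^i_{t:}$ is fixed, agent $\neg i$'s remaining decision problem is a POMDP, namely the slave subgame $M(s_t^{\mathtt{c},\neg i}, a^i_{t:})$, whose initial uncertainty is encoded by the factorization $s_t = s_t^{\mathtt{m},\neg i} \odot s_t^{\mathtt{c},\neg i}$: the conditional $s_t^{\mathtt{c},\neg i}$ fixes the generative model given each of agent $\neg i$'s private histories, while the marginal $s_t^{\mathtt{m},\neg i}$ is agent $\neg i$'s prior over its own histories. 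Therefore,
\[ \max_{a^{\neg i}_{t:}} \upsilon^{\neg i,a_{t:}}_{M,\gamma,t}(s_t) \;=\; \upsilon^{\neg i}_{M(s_t^{\mathtt{c},\neg i}, a^i_{t:}), \gamma, t}\bigl(s_t^{\mathtt{m},\neg i}\bigr). \]
Substituting back and setting $\upsilon^i_{M(s_t^{\mathtt{c},\neg i}, a^i_{t:}), \gamma, t} \doteq -\upsilon^{\neg i}_{M(s_t^{\mathtt{c},\neg i}, a^i_{t:}), \gamma, t}$ delivers the stated identity. Convexity then follows: by Corollary \ref{cor:linear:optimal:slave} each map $s_t^{\mathtt{m},\neg i} \mapsto \upsilon^i_{M(s_t^{\mathtt{c},\neg i}, a^i_{t:}), \gamma, t}(s_t^{\mathtt{m},\neg i})$ is linear, and the pointwise supremum over $a^i_{t:}$ of a family of linear functions is convex (in fact piecewise linear, since it suffices to restrict to the finite set of deterministic best responses, as noted after the best-response definition).

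The main obstacle is the second step: one must carefully justify that, conditional on $a^i_{t:}$ and $s_t^{\mathtt{c},\neg i}$, the multi-agent expectation collapses to the single-agent POMDP value of the slave subgame evaluated at $s_t^{\mathtt{m},\neg i}$. This rests on the clean factorization $s_t(x_t, o^i_t, o^{\neg i}_t) = s_t^{\mathtt{m},\neg i}(o^{\neg i}_t)\cdot s_t^{\mathtt{c},\neg i}(x_t, o^i_t \mid o^{\neg i}_t)$, together with absorbing the fixed policy $a^i_{t:}$ into the transition, observation, and reward models seen by agent $\neg i$—thereby reproducing exactly the POMDP dynamics of $M(s_t^{\mathtt{c},\neg i}, a^i_{t:})$ with initial belief $s_t^{\mathtt{m},\neg i}$. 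Once this reduction is established, the rest of the argument is linear algebra applied to Corollary \ref{cor:linear:optimal:slave}.
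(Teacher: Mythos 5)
Your proof is correct. The paper states Theorem \ref{thm:convex:history} as a cited result of \citeauthor{structure} without supplying its own proof, but your argument — unfolding the maximin, flipping the inner minimum via the zero-sum identity $\upsilon^{i,a_{t:}}_{M,\gamma,t} = -\upsilon^{\neg i,a_{t:}}_{M,\gamma,t}$, recognizing the inner optimization as the slave subgame's optimal value evaluated at the marginal occupancy state, and concluding convexity from the pointwise maximum of the linear functions guaranteed by Corollary \ref{cor:linear:optimal:slave} — is exactly the template the paper itself uses to prove its strengthened analogue, Theorem \ref{cor:convex:optimal:value:fct:master:zerosum}.
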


Theorem \ref{thm:convex:history} stipulates that the optimal value functions for \textit{zs-}POSGs are convex functions of marginal occupancy states, given a fixed conditional occupancy state. However, this property alone does not facilitate value generalization across different occupancy states. To address this challenge, \citet{zerosum} integrated Lipschitz continuity and the convexity property in Theorem \ref{thm:convex:history}, thus allowing for the generalization across occupancy states.

\begin{restatable}{thm}{thmconvexlipschitzhistory}[\citeauthor{zerosum}]
\label{thm:convex:lipschitz:history}
Consider a \emph{zs}-POSG $M$. The $t$-step optimal state-value function $\upsilon_{M,\gamma,t}^{i,*}\colon \mathcal{P}(\mathcal{X}\times \mathcal{O}_t)\to \mathbb{R}$ is $\kappa_t$-Lipschitz continuous over occupancy states. There exists a collection $\mathcal{V}^i_{t}$ of linear functions across marginal occupancy states such that for any occupancy state $s_t \doteq s_t^{\mathtt{m},\neg i}\odot s_t^{\mathtt{c},\neg i}$, and vector $\upsilon^{\neg i}_{M(\tilde{s}_t^{\mathtt{c},\neg i},a^{ i}_{t:}),\gamma,t} \in \mathcal{V}^i_t$ associated to conditional occupancy state $\tilde{s}_t^{\mathtt{c},\neg i}$, the following holds:
\begin{align*}
\upsilon_{M,\gamma,t}^{i,*}(s_t) &\leq \upsilon^i_{M(\tilde{s}_t^{\mathtt{c},\neg i},a^{i}_{t:}),\gamma,t}(s^{\mathtt{m},\neg i}_t) + \kappa_t \|s_t - s_t^{\mathtt{m},\neg i}\odot \tilde{s}_t^{\mathtt{c},\neg i}\|,
\end{align*}
where $\upsilon^i_{M(\tilde{s}_t^{\mathtt{c},\neg i},a^{i}_{t:}),\gamma,t} \doteq -\upsilon^{\neg i}_{M(\tilde{s}_t^{\mathtt{c},\neg i},a^{i}_{t:}),\gamma,t}$, and $M( s_t^{\mathtt{c},\neg i},a^{i}_{t:})$ describes the slave subgame given conditional occupancy state $s_t^{\mathtt{c},\neg i}$ and private decision rule $a^{i}_{t:}$, and $\kappa_t \doteq \frac{1-\gamma^{\ell-t}}{1-\gamma} c$ is the Lipschitz constant at time step $t$.
\end{restatable}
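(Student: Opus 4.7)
The plan is to combine two ingredients: (i) Lipschitz continuity of $\upsilon^{i,*}_{M,\gamma,t}$ over occupancy states with constant $\kappa_t$, and (ii) the convex decomposition of $\upsilon^{i,*}_{M,\gamma,t}$ along fibers of fixed conditional occupancy supplied by Theorem~\ref{thm:convex:history}.

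For step (i), I would first note that, for any fixed joint policy $a_{t:}$, Definition~\ref{def:state:value:fct:m} together with Lemma~\ref{lem:bellman:equation:m} gives the linear form
\[
\upsilon^{i,a_{t:}}_{M,\gamma,t}(s_t) = \sum_{x_t,o_t} s_t(x_t,o_t)\cdot \upsilon^{i,a_{t:}}_{M,\gamma,t}(x_t,o_t),
\]
with coefficients bounded in magnitude by $c\sum_{\tau=0}^{\ell-t-1}\gamma^{\tau}=\kappa_t$, since $\|r^{\cdot}_{\cdot,\cdot}\|_\infty \leq c$. Hence every $\upsilon^{i,a_{t:}}_{M,\gamma,t}$ is $\kappa_t$-Lipschitz over occupancy states. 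Writing the Nash value as
\[
\upsilon^{i,*}_{M,\gamma,t}(s_t) = \max_{a^i_{t:}}\min_{a^{\neg i}_{t:}} \upsilon^{i,(a^i_{t:},a^{\neg i}_{t:})}_{M,\gamma,t}(s_t),
\]
a routine sup--inf argument shows that the max--min over a family of $\kappa_t$-Lipschitz functions remains $\kappa_t$-Lipschitz.

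Next, I would construct $\mathcal{V}^i_t$ explicitly. For every pair $(\tilde{s}^{\mathtt{c},\neg i}, s^{\mathtt{m},\neg i})$ pick an optimal response $\tilde{a}^i_{t:} \in \argmax_{a^i_{t:}} \upsilon^i_{M(\tilde{s}^{\mathtt{c},\neg i},a^i_{t:}),\gamma,t}(s^{\mathtt{m},\neg i})$---which exists by Theorem~\ref{thm:convex:history}---and place the associated vector $\upsilon^{\neg i}_{M(\tilde{s}^{\mathtt{c},\neg i},\tilde{a}^i_{t:}),\gamma,t}$ into $\mathcal{V}^i_t$. By Corollary~\ref{cor:linear:optimal:slave}, every such vector is linear over marginal occupancy states.

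The bound then follows in two lines. Given any $s_t$ and any $\tilde{s}^{\mathtt{c},\neg i}$, set $\hat{s}_t \doteq s^{\mathtt{m},\neg i}_t\odot \tilde{s}^{\mathtt{c},\neg i}$. Step (i) delivers
\[
\upsilon^{i,*}_{M,\gamma,t}(s_t) \leq \upsilon^{i,*}_{M,\gamma,t}(\hat{s}_t) + \kappa_t\,\|s_t-\hat{s}_t\|,
\]
while Theorem~\ref{thm:convex:history} combined with the selection of $\tilde{a}^i_{t:}$ at $(\tilde{s}^{\mathtt{c},\neg i}, s^{\mathtt{m},\neg i}_t)$ gives $\upsilon^{i,*}_{M,\gamma,t}(\hat{s}_t) = \upsilon^i_{M(\tilde{s}^{\mathtt{c},\neg i},\tilde{a}^i_{t:}),\gamma,t}(s^{\mathtt{m},\neg i}_t)$. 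Substituting yields the target inequality.

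The main obstacle I anticipate is step (i): one must argue carefully that neither the outer $\max$ over agent $i$'s (possibly randomized) policies nor the inner $\min$ over $\neg i$'s policies enlarges the Lipschitz modulus. The argument itself is routine sup--inf manipulation, but relies on the uniformity of the $\kappa_t$ bound across the entire (infinite) joint policy family and requires a careful triangle-inequality step. Once Lipschitzness is secured, the remainder of the proof is essentially a rewriting via Theorem~\ref{thm:convex:history} and Corollary~\ref{cor:linear:optimal:slave}.
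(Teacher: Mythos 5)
A preliminary remark: the paper never proves this statement---it is imported as background from \citeauthor{zerosum}, and both it and Theorem~\ref{thm:convex:history} appear without proof---so there is no in-house argument to compare yours against. On its own terms, your two ingredients are the natural ones, and the Lipschitz half is sound: under any fixed joint policy the value is linear in $s_t$ with coefficients bounded by $c\sum_{\tau=0}^{\ell-t-1}\gamma^{\tau}=\kappa_t$, hence $\kappa_t$-Lipschitz in $\|\cdot\|_1$, and taking $\max$--$\min$ over a uniformly $\kappa_t$-Lipschitz family preserves the modulus (compactness of the policy simplices also gives you the argmax you invoke). Your final two-line combination then correctly proves the bound \emph{for the vector you selected at the pair $(\tilde{s}^{\mathtt{c},\neg i}, s^{\mathtt{m},\neg i}_t)$}.

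The gap is in the quantifier, and it is not merely cosmetic. The statement asserts the inequality for \emph{any} vector in $\mathcal{V}^i_t$ associated with $\tilde{s}^{\mathtt{c},\neg i}$, whereas your construction stores one vector per pair $(\tilde{s}^{\mathtt{c},\neg i}, s^{\mathtt{m},\neg i})$ and your derivation only covers the vector matched to the query's own marginal. Worse, with the vectors of Theorem~\ref{thm:convex:history} the universally quantified claim cannot hold: those vectors are \emph{minorants} of $\upsilon^{i,*}_{M,\gamma,t}$ on the fiber of $\tilde{s}^{\mathtt{c},\neg i}$ (the optimal value is their pointwise maximum over $a^i_{t:}$), so choosing $s_t$ on that fiber annihilates the penalty term and the claimed inequality would force every stored vector to attain the maximum at every marginal on the fiber---false whenever the maximizing $a^i_{t:}$ varies with $s^{\mathtt{m},\neg i}$. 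To obtain the theorem as literally stated one needs, for each stored conditional, a linear \emph{majorant} of $\upsilon^{i,*}_{M,\gamma,t}$ on its entire fiber; in the cited work such majorants come from the dual decomposition (by minimax, $\upsilon^{i,*}_{M,\gamma,t}$ restricted to a fiber is a \emph{minimum} of linear best-response values of agent $i$ against fixed opponent policies, hence concave, and each such linear piece upper-bounds the whole fiber), after which a single triangle-inequality step of the kind you wrote finishes the proof. So you have proved a correct and useful existential variant, but not the statement as quantified, and the missing idea is the majorant (concave/dual) decomposition rather than the minorant (convex) one.
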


Using Theorem \ref{thm:convex:lipschitz:history} allows the generalization of values across various occupancy states by harmonizing local convexity and Lipschitz continuity properties. Nonetheless, it is worth noting that the Lipschitz constant can result in imprecise approximations, which may ultimately impede algorithmic efficiency.

\section{Problem Reformulations}
\label{sec:reformulations}
The existing literature on master and slave games primarily focuses on an agent-centric perspective. However, this approach poses significant challenges in terms of knowledge transfer from Markov games to POSGs. In particular, the agent-centric perspective is not always geared to ease optimal decision-making. As such, adopting a more generalized approach to knowledge transfer is necessary to address this limitation. This section examines the dynamics of master and slave games from the perspective of a trusted third party acting on behalf of all involved agents simultaneously. To accomplish this goal, we shall introduce two reformulations: plan-time Markov games, which use the history available at planning time to the trusted third party about master or slave games, and occupancy Markov games, which leverage sufficient statistics to represent the history available to the trusted third party. We shall demonstrate that both reformulations maintain the ability to solve the original master or slave games optimally.

\begin{figure}[!ht]
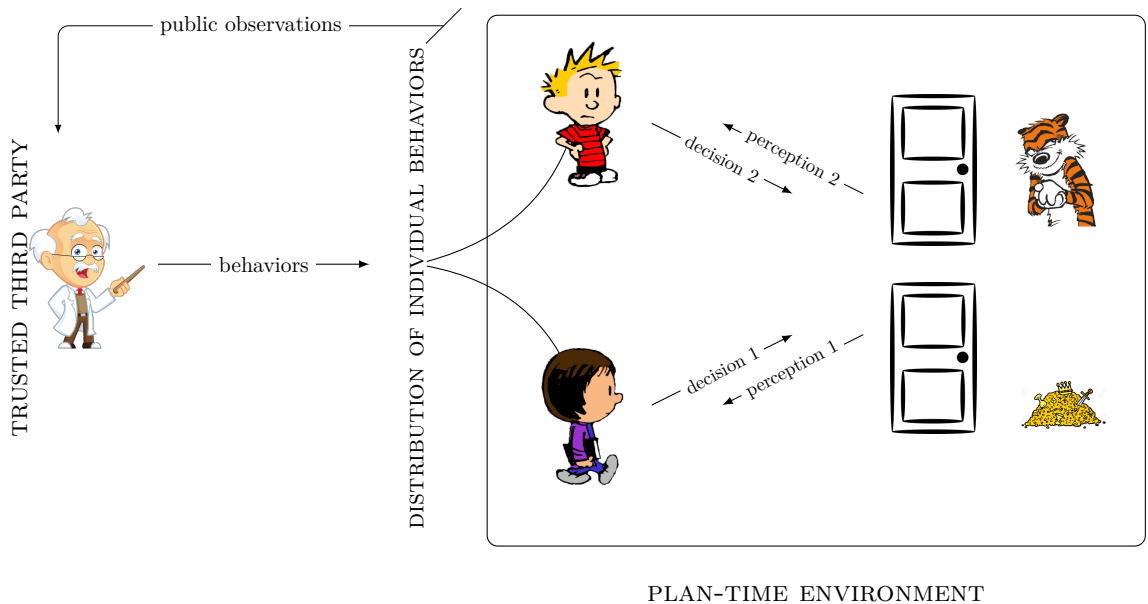

\centering
\begin{tikzpicture}[scale=1.25]
\draw[very thin, color=black] (-1.9,2.85) -- (-1.525, 3.225);
\draw[->, -latex,rounded corners,very thin, color=black] (-1.7125,3.0375) --node[scale=.75, fill=white] {public observations} (-5.8125, 3.0375) -- (-5.8125, 1.8875);
\draw[rounded corners, fill=white] (-1.25,-2.5) rectangle +(7,5.65);
\node (agentCentral) at (-2,.5) {};
\node (agentA) at (-.25,2.25) {};
\node (agentD) at (-.25,-1) {};
\path[->, -latex] (agentCentral) edge [bend right] (agentA)
edge [bend left] (agentD);
\begin{scope}[scale=.5]
\def \agentCentral{(-11,1) ellipse (1cm and 0.9cm);
\node[scale=1, rotate=90] at (-12.45,0.5) {\textcolor{black}{\sc trusted third party}};
\node[inner sep=0pt] (russell) at (-11,.75) {\includegraphics[width=.125\textwidth]{figures/TrustedThirdParty.png}};}
\def \agentA{(-.5,4.5) ellipse (1cm and 0.9cm);
\node[inner sep=0pt] (calvin) at (-.5,4.2) {\includegraphics[width=.075\textwidth]{figures/Calvin_Color.png}};}
\def \agentD{(-.5,-2) ellipse (1cm and 0.9cm);
\node[inner sep=0pt] (susie) at (-.5,-2.25) {\includegraphics[width=.075\textwidth]{figures/Susie.png}};}
\draw[draw=white] \agentCentral;
\node (prof) at (-11,2.3) {};
\draw[draw=white] \agentA;
\draw[draw=white] \agentD;
\node[inner sep=0pt] (hobbes) at (9.6,3) {\includegraphics[width=.075\textwidth]{figures/Hobbes.png}};
\node[inner sep=0pt] (door1) at (7,3) {\includegraphics[width=.075\textwidth]{figures/Door.png}};
\node[inner sep=0pt] (door2) at (7,-1) {\includegraphics[width=.075\textwidth]{figures/Door.png}};
\node[inner sep=0pt] (treasure) at (9.75,-2) {\includegraphics[width=.075\textwidth]{figures/Treasure.jpg}};
\end{scope}
\draw[->,-latex, color=black] (-4.75,.5) -- node[draw=none, fill=white, scale=.75] {behaviors} (-2.5,.5);
\node[scale=1] at (2.25,-3) {\sc plan-time environment};
\draw[->,-latex] (.5,-1) -- node[draw=none, fill=white,scale=.65,rotate=28] {decision 1} (2,-.25);
\draw[->,-latex] (2.75,-.25) -- node[draw=none, fill=white,scale=.65,rotate=28] {perception 1} (1.25,-1);
\draw[->,-latex] (.5,2) -- node[draw=none, fill=white,scale=.65,rotate=-28] {decision 2} (2,1.25);
\draw[->,-latex] (2.75,1.25) -- node[draw=none, fill=white,scale=.65,rotate=-28] {perception 2} (1.25,2);
\node[scale=.9, rotate=90] at (-2,.25) {\sc distribution of individual behaviors};
\end{tikzpicture}
\caption{A plan-time environment for the tiger problem from the perspective of a trusted third party. While each agent knows its decision and perception, which may differ from one agent to the other, the trusted third party has only access to public observations about the plan-time environment.}
\label{fig:multi:single:agent:environment}
\end{figure}

\subsection{Master Game Reformulations}
In this subsection, we will discuss agents delegating their decision-making authority to a trusted third party, also known as a central planner. Within this setting, we shall distinguish between the planning and execution phases. During the planning phase, the central planner is given the power to make decisions on behalf of the agents by using publicly available information about the environment, agents, decisions, and perceptions. It is important to note that the central planner may have information at the planning time that is typically unknown to all agents at the execution stage. At the planning time, the central planner provides the agents with decision rules to follow at each time step, but each agent may only have access to their private decision rules. However, it is important to note that the central planner may not have access to agent histories or decisions at the execution stage. From the central planner viewpoint, the plan-time environment includes the original environment and the agents interacting with it, as shown in Figure \ref{fig:multi:single:agent:environment}. The central planner remains external to the plan-time environment.

Games can be viewed as interactions between agents and their environment or, alternatively, as interactions between a central planner and a plan-time environment, resulting in what is known as a \emph{plan-time Markov game}. However, this alternate perspective does not impact how agents behave in the original game. Both approaches to the game can yield similar solutions under mild conditions. For example, if agents disclose all of their confidential information at every stage, both views are equivalent. Nonetheless, agents may possess divergent and conflicting information about the game at each stage. In plan-time Markov games, a central planner executes a joint decision rule on behalf of all agents and receives public observations available to all agents involved in the game. Figure \ref{fig:ck:game:nf} illustrates the process. The initial belief state, history of public observations, and joint policy are integral components of plan-time histories.

\begin{figure}[!ht]
\centering
\begin{tikzpicture}[->,-latex,auto,node distance=3.75cm,semithick, square/.style={regular polygon,regular polygon sides=4}]
\tikzstyle{every state}=[draw=black,text=black,inner color= white,outer color= white,draw= black,text=black, drop shadow]
\tikzstyle{place}=[thick,draw=sthlmBlue,fill=blue!20,minimum size=12mm, opacity=.5]
\tikzstyle{red place}=[square,place,draw=sthlmRed,fill=sthlmLightRed,minimum size=18mm]
\tikzstyle{green place}=[diamond,place,draw=sthlmGreen,fill=sthlmLightGreen,minimum size=15mm]
\node[fill=white, scale=.75] (T) at (-3.5,-3.75) {};
\node[fill=white, scale=.75] (T0) at (0,-3.75) {$0$};
\node[fill=white, scale=.75] (T1) [right of=T0,node distance=3.75cm, fill=white] {$t$};
\node[fill=white, scale=.75] (T2) [right of=T1,node distance=3.75cm] {$t+1$};
\node[fill=white, scale=.75] (T3) [right of=T2,fill=white, node distance=3.75cm] {$t+2$};
\node[fill=white, scale=.75] (T4) [right of=T3,fill=white, node distance=2.5cm] {$\ldots$};
\draw[->,-latex,dashed,very thin, color=black,anchor=mid] (T3) -- (T4);
\draw[->,-latex, very thin, color=black, anchor=mid] (T2) -- (T3);
\draw[->,-latex, very thin, color=black, anchor=mid] (T1) -- (T2);
\draw[->,-latex,dashed,very thin, color=black,anchor=mid] (T0) -- (T1);
\draw[->,-latex,rounded corners,very thin, color=black,anchor=mid] (T) node[fill=white, scale=.75]{Time} -- (T0);
\node[state,place, scale=.75] (S0) {$y_0$};
\node[state,place, scale=.75] (S1) [right of=S0] {$y_t$};
\node[state,place, scale=.75] (S2) [ right of=S1] {$y_{t+1}$};
\node[state,place, scale=.75] (S3) [ right of=S2] {$y_{t+2}$};
\node[, scale=.75] (S4) [ right of=S3,node distance=2.5cm] {$\cdots$};
\path[very thin] (S0) edge[dashed] node[midway,text=black,draw=none,scale=.7, above=-5pt] {} (S1)
(S1) edge node[midway,text=black,draw=none,scale=.7, above=-5pt] {} (S2)
(S2) edge node[midway,text=black,draw=none,scale=.7, above=-5pt] {} (S3)
(S3) edge[dashed] (S4);
\node[, scale=.75] (A_SLACK) [below of=S0,node distance=2.6cm] {};
\node[state,red place, scale=.65] (A0) [below right of=A_SLACK,node distance=2cm] {$a_{0}$};
\node[state,red place, scale=.65] (A1) [right of=A0,node distance=4.25cm] {$a_t$};
\node[state,red place, scale=.65] (A2) [right of=A1,node distance=4.25cm] {$a_{t+1}$};
\node[, scale=.75] (A3) [right of=A2,node distance=5.5cm] {};
\path[very thin] (A0) edge [out=90, in=-155, dashed] node[scale=.75] {$w_t$} (S1)
(A1) edge [out=90, in=-155] node[scale=.75] {$w_{t+1}$} (S2)
(A2) edge [out=90, in=-155] node[scale=.75] {$w_{t+2}$ } (S3);
\node[state,green place, scale=.75] (O0) [below of=S0,node distance=2cm] {$r^{1:n}_0$};
\node[state,green place, scale=.75] (O1) [below of=S1,node distance=2cm] {$r^{1:n}_t$};
\node[state,green place, scale=.75] (O2) [below of=S2,node distance= 2cm] {$r^{1:n}_{t+1}$};
\node[state,green place, scale=.75] (O3) [below of=S3,node distance= 2cm] {$r^{1:n}_{t+2}$};
\node[, scale=.75] (O4) [ right of=O3,node distance=2.5cm] {$\cdots$};
\node[inner sep=0pt] (prof) at (-3.35,-1.5) {\includegraphics[width=.1\textwidth]{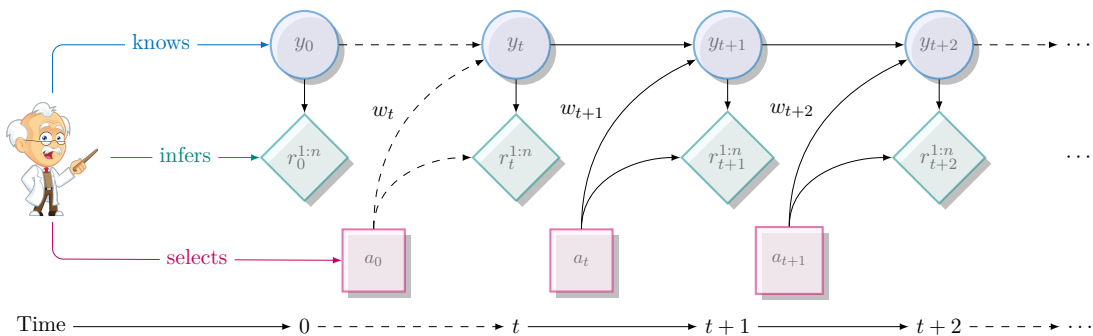}};
\draw[->,-latex,rounded corners,very thin, color=sthlmBlue,anchor=mid] (prof) -- (-3.35,0) --node[draw=none, fill=white, scale=.75] {knows} (S0);
\draw[->,-latex,rounded corners,very thin, color=sthlmGreen,anchor=mid] (prof) --node[draw=none, fill=white, scale=.75] {infers} (O0);
\draw[->,-latex,rounded corners,very thin, color=sthlmRed,anchor=mid] (prof) -- (-3.35,-2.85) --node[draw=none, fill=white, scale=.75] {selects} (A0);
\path[very thin] (S0) edge node {} (O0)
(S1) edge node {} (O1)
(A0) edge[out=90, in=-180, dashed] node {} (O1)
(S2) edge node {} (O2)
(A1) edge[out=90, in=-180] node {} (O2)
(S3) edge node {} (O3)
(A2) edge[out=90, in=-180] node {} (O3);
\end{tikzpicture}
\caption{Influence diagram for a plan-time Markov game.}
\label{fig:ck:game:nf}
\end{figure}

\begin{definition}
A plan-time history for the central planner at time step $t$ includes the initial belief state and the joint policy of all agents up to time step $t-1$, and the history of public observation up to time step $t$, \ie $y_t \doteq \langle s_0,a_{0:t-1}, w_{1:t} \rangle$, with boundary condition $y_0 \doteq s_0$.
\end{definition}

The plan-time histories describe a simultaneous-moves stochastic Markov game, \ie a plan-time Markov game. During the execution phase, the agents do not have access to the plan-time history, the exception being public observations. Consequently, since the central planner makes decisions on behalf of the agents, these decisions cannot rely on the plan-time history. To maintain consistency with the original master game, the plan-time Markov game should not be interpreted outside the context in which it was introduced.

\begin{definition}
\label{def:ckmg}
A $n$-agent, simultaneous-move, plan-time Markov game \wrt~ $M$ is given by a tuple $M' \doteq \langle \mathcal{Y}, \mathcal{A}, \mathcal{W}, \rho_{M'}, \omega_{M'}, r_{M'}\rangle$, where $\mathcal{Y}$ is the space of all plan-time histories; $\mathcal{A}$ is the space of actions describing joint decision rules; $\mathcal{W}$ is the space of public observations; $\rho_{M'}\colon \mathcal{Y} \times \mathcal{A} \times \mathcal{W} \to \mathcal{Y}$ is the deterministic transition rule, where $\rho_{M'}(y_t,a_t,w_{t+1}) \doteq (y_t,a_t,w_{t+1})$; $\omega_{M'}\colon \mathcal{Y} \times \mathcal{A} \times \mathcal{W}\to [0,1]$ is the public observation model, where $\omega_{M'}(w_{t+1}|y_t,a_t) \doteq \Pr\{w_{t+1}|y_t,a_t\}$; and $r_{M'}\colon \mathcal{Y} \times \mathcal{A} \to \mathbb{R}^n$ describes the reward function---\ie $r^i_{M'}(y_t,a_t) \doteq {\textstyle\sum_{x_t\in \mathcal{X}}\sum_{o_t\in \mathcal{O}_t}}~ \Pr\{x_t,o_t|y_t\} \sum_{u_t\in \mathcal{U}} r^i_{x_tu_t} \cdot a_t(u_t|o_t)$.
\end{definition}

Definition \ref{def:ckmg} provides a central-planner viewpoint about the master game $M$. It differs from that of \citet{delayed,sufficient,continuous} by allowing the explicit utilization of public observations. Doing so shall later prove useful to connect our general results to narrower settings. Please note that $M'$ is a Markov game in which the agents cannot observe the states. This means that the solution for this game involves a sequence of actions conditional on public observations, where the state of the game has no impact on the actions taken. The main objective of solving game $M'$ is to find a sequence of actions $a^{1:}_{0:}$ conditional on public observations, also known as a joint policy $a_{0:}$. This policy is determined by a performance index represented by state-value functions. To better understand the relationship between games $M$ and $M'$, we introduce the state-value function under a joint policy $a_{0:}$ when expressed in the plan-time game $M'$.

\begin{definition}
The $t$-step state-value function $\upsilon^{i,a_{t:}}_{M',\gamma,t}\colon \mathcal{Y}_t\to \mathbb{R}$ of agent $i$ under joint policy $a_{t:}$ is given by: for any arbitrary plan-time history $y_t$,
\begin{align*}
\upsilon^{i,a_{t:}}_{M',\gamma,t}(y_t) &=\textstyle \mathbb{E}\{ \sum_{\tau=t}^{\ell-1} \gamma^{\tau-t}\cdot r^i_{M'}(Y_\tau,a_\tau) | y_t,a_{t:}\},
\end{align*}
with boundary condition $\upsilon^{i,\cdot}_{M',\gamma,\ell}(\cdot) \doteq 0$.
\end{definition}

The optimal joint policy for the original master game $M$ can be found by searching for an optimal joint policy in the surrogate master game $M'$.

\begin{restatable}{pro}{prosufficiencyckh}
\label{pro:sufficiency:ckh}
In a master game $M$ and the corresponding plan-time Markov game $M'$, if a joint policy $a_{0:}$ has the optimal performance index in $M'$ at the initial belief state $s_0$, it will also have the same optimal performance index in $M$ at the initial belief state $s_0$.
\end{restatable}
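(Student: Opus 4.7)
The plan is to establish the stronger claim that for every joint policy $a_{0:}$ the two performance indices coincide at the initial belief state, namely
\begin{equation*}
\upsilon^{i,a_{0:}}_{M',\gamma,0}(s_0) \;=\; \upsilon^{i,a_{0:}}_{M,\gamma,0}(s_0).
\end{equation*}
Since $y_0 \doteq s_0$ by definition, the claim of the proposition then follows immediately: any $a_{0:}$ that maximizes the left-hand side over $\mathcal{A}_{0:}$ also maximizes the right-hand side, so optimality transfers from $M'$ to $M$ without loss.

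The first step is to unpack the reward $r^i_{M'}$ of Definition \ref{def:ckmg}. Rewriting the defining sum as
\begin{equation*}
r^i_{M'}(y_t,a_t) \;=\; \mathbb{E}\bigl\{ r^i_{X_t,U_t} \,\bigm|\, y_t, a_t \bigr\},
\end{equation*}
where the expectation integrates out the hidden state $x_t$, the joint history $o_t$, and the random joint action $u_t \sim a_t(\cdot \mid o_t)$, identifies the $M'$-reward as the conditional expectation, given the plan-time history, of the original $M$-reward. The only substantive ingredient here is that $y_t = \langle s_0, a_{0:t-1}, w_{1:t}\rangle$ determines a well-defined posterior $\Pr\{x_t, o_t \mid y_t\}$, which is obtained by a standard Bayesian filtering recursion from $s_0$ using the $M$-kernels $p^u_{x,x'}$ and $p^{u,z}_{x'}$.

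Next, I would prove by backward induction on $t \in \{0,1,\ldots,\ell\}$ that
\begin{equation*}
\upsilon^{i,a_{t:}}_{M',\gamma,t}(y_t) \;=\; \mathbb{E}\bigl\{ \textstyle\sum_{\tau=t}^{\ell-1}\gamma^{\tau-t}\,r^i_{X_\tau,U_\tau} \,\bigm|\, y_t, a_{0:t-1}, a_{t:} \bigr\}.
\end{equation*}
The base case $t = \ell$ is the shared boundary condition. For the inductive step, I would split the sum into its current term and its discounted tail, replace $r^i_{M'}(Y_\tau,a_\tau)$ by its conditional-expectation form from the previous step, invoke the tower property with respect to $Y_{t+1}$, and use the fact that $Y_{t+1} = (y_t,a_t,W_{t+1})$ where the conditional law of the public observation $W_{t+1}$ under $\omega_{M'}$ agrees by construction with its conditional law in $M$ given $(y_t, a_t)$. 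This reduces the tail expectation to $\upsilon^{i,a_{t+1:}}_{M',\gamma,t+1}(Y_{t+1})$ and matches \citeauthor{bellman}'s equations from Lemma \ref{lem:bellman:equation:m}.

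Evaluating at $t=0$, $y_0 = s_0$ collapses both sides to a common expression, yielding the desired equality. The main obstacle is the bookkeeping around the hidden joint history $o_t$: one must verify carefully that the Bayesian posterior $\Pr\{x_t,o_t\mid y_t\}$ used implicitly by $r^i_{M'}$ is consistent with the natural joint distribution over $(X_t,O_t)$ induced in $M$ under the prefix $a_{0:t-1}$ conditioned on the observed public stream $w_{1:t}$. This is a routine but notation-heavy filtering argument that exploits the factorization $p^{u,z}_{x'}$ of the observation model into a public and a private component, ensuring that conditioning on $w_{1:t}$ does not destroy the Markov structure needed to propagate the posterior forward in $t$.
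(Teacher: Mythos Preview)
Your proposal is correct and rests on the same key idea as the paper's proof: establish $\upsilon^{i,a_{0:}}_{M',\gamma,0}(y_0)=\upsilon^{i,a_{0:}}_{M,\gamma,0}(s_0)$ for \emph{every} joint policy, after which the transfer of optimality (and indeed of Nash equilibria) is immediate.

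The execution differs slightly. The paper works directly at $t=0$: it expands $\upsilon^{a_{0:}}_{M',\gamma,0}(y_0)=\mathbb{E}\{\sum_t\gamma^t r_{M'}(Y_t,a_t)\mid y_0,a_{0:}\}$, substitutes the definition of $r_{M'}$ as a conditional expectation of $r_{X_t,U_t}$, and then collapses everything in one shot by linearity of expectation and the tower property, without any induction. You instead run a backward induction on $t$, proving the stronger intermediate identity $\upsilon^{i,a_{t:}}_{M',\gamma,t}(y_t)=\mathbb{E}\{\sum_{\tau\ge t}\gamma^{\tau-t}r^i_{X_\tau,U_\tau}\mid y_t,a_{0:}\}$ at every stage. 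That buys you a time-indexed equality (useful if one later needs it for subgames), at the cost of a little extra bookkeeping; the paper's direct argument is shorter but yields only the $t=0$ statement. Your invocation of Lemma~\ref{lem:bellman:equation:m} is not strictly needed---the tower property alone suffices---and the ``main obstacle'' you flag about posterior consistency is handled in the paper simply by the observation that $r_{M'}(y_t,a_t)$ is by definition $\mathbb{E}\{r_{X_t,U_t}\mid y_t,a_t\}$, so no separate filtering verification is required.
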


\begin{proof}
First, the solution of interest for both games lies in joint policies. Second, regardless of solution concepts of interest, the state-value functions under a joint policy characterize a good solution. Let $a_{0:}$ be a Nash equilibrium of the plan-time game $M'$, \ie for any agent $i$, $a^i_{0:} \in \mathcal{A}^i_{0:}(a^{\neg i}_{0:})$,
\begin{align*}
\mathcal{A}^i_{0:}(a^{\neg i}_{0:})&\doteq \{a^i_{0:} | a^i_{0:} \in \mathcal{A}^i_{0:} \colon \upsilon^{i,(a^i_{0:},a^{\neg i}_{0:})}_{M',\gamma,0}(y_0) \geq \upsilon^{i,(\underline{a}^i_{0:},a^{\neg i}_{0:})}_{M',\gamma,0}(y_0), \forall \underline{a}^i_{0:} \in \mathcal{A}^i_{0:}\}.
\end{align*}
Let us show that it is also a Nash equilibrium for the master game $M$. To do so, it only suffices to show that $\upsilon^{a_{0:}}_{M',\gamma,0}(y_0) = \upsilon^{a_{0:}}_{M,\gamma,0}(s_0)$ for any arbitrary joint policy $a_{0:}$.
\begin{align}
\upsilon^{a_{0:}}_{M',\gamma,0}(y_0) &\doteq\textstyle \mathbb{E}\{ \sum_{t=0}^{\ell-1} \gamma^t\cdot r_{M'}(Y_t,a_t)|y_0,a_{0:}\}.
\label{eq:pro:sufficiency:ckh:0}
\end{align}
The injection of definition of $r_{M'}$, \cf Definition \ref{def:ckmg}, into \eqref{eq:pro:sufficiency:ckh:0}, results in the following:
\begin{align*}
\upsilon^{a_{0:}}_{M',\gamma,0}(y_0)
&= \textstyle \sum_{t=0}^{\ell-1} \gamma^t\mathbb{E}\left\{ \sum_{x_t\in \mathcal{X}}\sum_{o_t\in \mathcal{O}_t} \Pr\{x_t,o_t| y_t\} \sum_{u_t\in \mathcal{U}} r_{x_t,u_t} \cdot a_t(u_t|o_t)|y_0,a_{0:}\right\}.
\end{align*}
Replace the expression into the bracket with an expectation leads us to the following
\begin{align*}
\upsilon^{a_{0:}}_{M',\gamma,0}(y_0) &= \textstyle \mathbb{E}\left\{ \sum_{t=0}^{\ell-1} \gamma^t \mathbb{E}_{(x_t,o_t)\sim \Pr\{\cdot| Y_t\}, u_t\sim a_t(\cdot|o_t)}\{ r_{x_tu_t} \}|y_0,a_{0:}\right\}.
\end{align*}
Exploiting the linearity of the expectation gives us the following expression
\begin{align*}
\upsilon^{a_{0:}}_{M',\gamma,0}(y_0) &= \textstyle \mathbb{E}_{(x_{0:\ell-1},o_{0:\ell-1},u_{0:\ell-1})\sim \Pr\{\cdot|y_0,a_{0:}\}}\{ \sum_{t=0}^{\ell-1} \gamma^t \cdot r_{x_tu_t} \}\\
&= \textstyle\mathbb{E}\{ \sum_{t=0}^{\ell-1} \gamma^t \cdot r_{X_t,U_t} | y_0,a_{0:}\}\\
&=\textstyle \mathbb{E}\{ \sum_{t=0}^{\ell-1} \gamma^t \cdot r_{X_t,U_t} | s_0,a_{0:}\},&\text{(since $y_0 \doteq s_0$)}\\
&\doteq \upsilon^{a_{0:}}_{M,\gamma,0}(s_0).
\end{align*}
The expression presented above demonstrates that for any arbitrary joint policy $a_{0:}$, both $M'$ and $M$ deliver the same evaluation, that is to say, $\upsilon^{a_{0:}}_{M',\gamma,0}(y_0) = \upsilon^{a_{0:}}_{M,\gamma,0}(s_0)$. As the state-value functions characterize a Nash equilibrium, the fact that $M'$ and $M$ provide the same evaluation for a fixed joint policy indicates that if a joint policy $a_{0:}$ is a Nash equilibrium for $M'$, it is also a Nash equilibrium for $M$. This argument can similarly apply to refinements of Nash equilibria, such as subgame perfect Nash equilibria or optimal joint policies. This concludes the proof.
\end{proof}

According to Proposition \ref{pro:sufficiency:ckh}, solving the plan-time Markov game $M'$ is the same as solving the master game $M$. However, keeping track of the plan-time histories in $M'$ can be hard as they constantly grow with each time step. To simplify things, we can summarize history using the concept of occupancy state, \cf Figure \ref{fig:occupancy:state}.

\begin{definition}
\label{def:occupancy:state}
A $t$-step occupancy state $s_t$ is a posterior probability distribution over hidden states and joint histories conditional upon a $t$-step plan-time history $y_t$, \ie for any arbitrary hidden state $x_t$, joint history $o_t$,
\begin{align*}
s_t(x_t,o_t) &\doteq \Pr\{x_t,o_t |y_t\}.
\end{align*}
\end{definition}

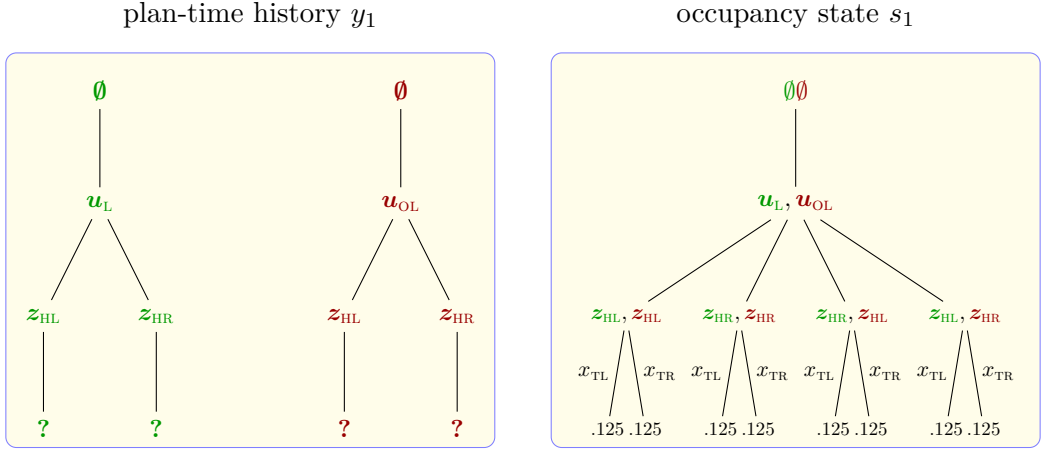
\begin{figure}[!htbp]
\centering
\begin{tikzpicture}[scale=1]
\draw[fill=yellow!20, opacity=.5, rounded corners, draw=blue] (-5.5,-4.75) rectangle (1,.5);
\draw[fill=yellow!20, opacity=.5, rounded corners, draw=blue] (1.75,-4.75) rectangle (8.25,.5);
\node[scale=1] at (-2.25,1) {plan-time history $y_1$};
\node[scale=1] at (5,1) {occupancy state $s_1$};
[sibling distance=2cm,-, thick]
\node[scale=.85] at (-4.25,0) {$\textcolor{green!60!black}{\boldsymbol{\emptyset}}$}
child {node[scale=.85] {$\textcolor{green!60!black}{\boldsymbol{u_\textsc{l}}}$}
child {node[scale=.85] {$\textcolor{green!60!black}{\boldsymbol{z_\textsc{hl}}}$}
child {node[scale=.85] {$\textcolor{green!60!black}{\boldsymbol{?}}$}}}
child {node[scale=.85] { $\textcolor{green!60!black}{\boldsymbol{z_\textsc{hr}}}$}
child {node[scale=.85] {$\textcolor{green!60!black}{\boldsymbol{?}}$}}}};
\node[scale=.85] at (-.25,0) {$\textcolor{red!60!black}{\boldsymbol{\emptyset}}$}
child {node[scale=.85] {$\textcolor{red!60!black}{\boldsymbol{u_\textsc{ol}}}$}
child {node[scale=.85] {$\textcolor{red!60!black}{\boldsymbol{z_\textsc{hl}}}$}
child {node[scale=.85] {$\textcolor{red!60!black}{\boldsymbol{?}}$}}}
child {node[scale=.85] { $\textcolor{red!60!black}{\boldsymbol{z_\textsc{hr}}}$}
child {node[scale=.85] {$\textcolor{red!60!black}{\boldsymbol{?}}$}}}};
\node[scale=.85] (n1) at (5,0) {$\textcolor{green!60!black}{\emptyset}\textcolor{red!60!black}{\emptyset}$}
child {node[scale=.85] (n2) {$\textcolor{green!60!black}{\boldsymbol{u_\textsc{l}}},\textcolor{red!60!black}{\boldsymbol{u_\textsc{ol}}}$}
child {node[scale=.75] {$\textcolor{green!60!black}{\boldsymbol{z_\textsc{hl}}},\textcolor{red!60!black}{\boldsymbol{z_\textsc{hl}}}$}
[sibling distance=.5cm,-]
child {node[scale=.65] {.125} edge from parent node[left, scale=.75] {$x_\textsc{tl}$}}
child {node[scale=.65] {.125} edge from parent node[right, scale=.75] {$x_\textsc{tr}$}}}
child {node[scale=.75] { $\textcolor{green!60!black}{\boldsymbol{z_\textsc{hr}}},\textcolor{red!60!black}{\boldsymbol{z_\textsc{hr}}}$}
[sibling distance=.5cm,-]
child {node[scale=.65] {.125} edge from parent node[left, scale=.75] {$x_\textsc{tl}$}}
child {node[scale=.65] {.125} edge from parent node[right, scale=.75] {$x_\textsc{tr}$}}}
child {node[scale=.75] { $\textcolor{green!60!black}{\boldsymbol{z_\textsc{hr}}},\textcolor{red!60!black}{\boldsymbol{z_\textsc{hl}}}$}
[sibling distance=.5cm,-]
child {node[scale=.65] {.125} edge from parent node[left, scale=.75] {$x_\textsc{tl}$}}
child {node[scale=.65] {.125} edge from parent node[right, scale=.75] {$x_\textsc{tr}$}}}
child {node[scale=.75] { $\textcolor{green!60!black}{\boldsymbol{z_\textsc{hl}}},\textcolor{red!60!black}{\boldsymbol{z_\textsc{hr}}}$}
[sibling distance=.5cm,-]
child {node[scale=.65] {.125} edge from parent node[left, scale=.75] {$x_\textsc{tl}$}}
child {node[scale=.65] (n4) {.125} edge from parent node[right, scale=.75] {$x_\textsc{tr}$}}}};
\end{tikzpicture}
\caption{Illustration of a plan-time history $y_1$ and its corresponding occupancy state $s_1$ in case of no public observations.}
\label{fig:occupancy:state}
\end{figure}

The probability distributions over hidden states and joint histories lay in the space $\mathcal{P}(\mathcal{X}\times \mathcal{O})$, which includes the space of reachable occupancy states $\mathcal{S}$. This convex space indicates that combining occupancy states by blending joint policies can produce a new occupancy state. These occupancy states describe a Markov game called the occupancy-state Markov game. In this game, the central planner simultaneously makes decisions for all agents and receives public observations as feedback, as shown in Figure \ref{fig:occupancy:markov:game:nf}.

\begin{definition}
\label{def:omg}
A $n$-agent, simultaneous-move, occupancy-state Markov game \wrt~ $M'$ (resp. $M$) is given by a tuple $M'' \doteq \langle \mathcal{S}, \mathcal{A}, \mathcal{W}, \rho_{M''}, \omega_{M''}, p_{M''}, r_{M''}\rangle$, where $\mathcal{S}$ is the occupancy-state space, occupancy states being conditional probability distributions over hidden states and joint histories; $\mathcal{A}$ is the space of actions describing joint decision rules; $\mathcal{W}$ is the space of public observations available to all agents involved in the game. Mappings $\rho_{M''}$, $\omega_{M''}$, $p_{M''}$, and $r_{M''}$ are described as follows:
\begin{itemize}
\item The deterministic transition rule $\rho_{M''}\colon \mathcal{S} \times \mathcal{A} \times \mathcal{W} \to \mathcal{S}$ prescribes the next occupancy state $s_{t+1} \doteq \rho_{M''}(s_t,a_t,w_{t+1})$ upon taking decision rule $a_t$ in occupancy state $s_t$ and receiving public observation $w_{t+1}$---\ie for any arbitrary hidden state $x_{t+1}$, joint history $o_t$, joint action $u_t$, and joint (private) observation $\tilde{z}_{t+1}$,
\begin{align*}
s_{t+1}(x_{t+1}, (o_t,u_t, \tilde{z}_{t+1},w_{t+1}))~\propto~ {\textstyle \sum_{x_t\in \mathcal{X}}}~ s_t(x_t,o_t) \cdot p_{x_t,x_{t+1}}^{u_t,\langle \tilde{z}_{t+1},w_{t+1}\rangle} \cdot a_t(u_t|o_t).
\end{align*}
\item The observation probability model $\omega_{M''}\colon \mathcal{S} \times \mathcal{A} \times \mathcal{W} \to [0,1]$ prescribes the probability of receiving public observation $w_{t+1}$ upon taking decision rule $a_t$ in occupancy state $s_t$---\ie
\begin{align*}
\omega_{M''}(w_{t+1}|s_t,a_t) &\doteq \sum_{x_t\in \mathcal{X}}\sum_{o_t\in \mathcal{O}_t}\sum_{u_t\in \mathcal{U}} \sum_{x_{t+1}\in \mathcal{X}} \sum_{ \tilde{z}_{t+1}\in \tilde{\mathcal{Z}}}
p_{x_t,x_{t+1}}^{u_t,\langle \tilde{z}_{t+1},w_{t+1}\rangle} \cdot a_t(u_t|o_t) \cdot s_t(x_t,o_t).
\end{align*}
\item The transition probability model $p_{M''} \colon \mathcal{S} \times \mathcal{A} \times \mathcal{S} \to [0,1]$ defines the probability of reaching next occupancy state $s_{t+1}$ upon taking decision rule $a_t$ in occupancy state $s_t$---\ie
\begin{align*}
p_{M''}(s_t,a_t,s_{t+1}) &\doteq {\textstyle \sum_{w\in \mathcal{W}}}~ \omega_{M''}(w_{t+1}|s_t,a_t)\cdot \delta_{\rho_{M''}(s_t,a_t,w_{t+1})}(s_{t+1}).
\end{align*}
\item The reward model $r_{M''}\colon \mathcal{S} \times \mathcal{A} \to \mathbb{R}^n$ describes the linear reward function upon taking decision rule $a_t$ in occupancy state $s_t$---\ie
\begin{align*}
r^i_{M''}(s_t,a_t) \doteq \textstyle\sum_{x_t\in \mathcal{X}}\sum_{o_t\in \mathcal{O}_t} s_t(x_t, o_t)\sum_{u_t\in \mathcal{U}} a_t(u_t|o_t) \cdot r^i_{x_t,u_t}.
\end{align*}
\end{itemize}
\end{definition}

\begin{figure}[!ht]
\centering
\begin{tikzpicture}[->,-latex,auto,node distance=3.75cm,semithick, square/.style={regular polygon,regular polygon sides=4}]
\tikzstyle{every state}=[draw=black,text=black,inner color= white,outer color= white,draw= black,text=black, drop shadow]
\tikzstyle{place}=[thick,draw=sthlmBlue,fill=blue!20,minimum size=12mm, opacity=.5]
\tikzstyle{red place}=[square,place,draw=sthlmRed,fill=sthlmLightRed,minimum size=18mm]
\tikzstyle{green place}=[diamond,place,draw=sthlmGreen,fill=sthlmLightGreen,minimum size=15mm]
\node[fill=white, scale=.75] (T) at (-3.5,-3.75) {};
\node[fill=white, scale=.75] (T0) at (0,-3.75) {$0$};
\node[fill=white, scale=.75] (T1) [right of=T0,node distance=3.75cm, fill=white] {$t$};
\node[fill=white, scale=.75] (T2) [right of=T1,node distance=3.75cm] {$t+1$};
\node[fill=white, scale=.75] (T3) [right of=T2,fill=white, node distance=3.75cm] {$t+2$};
\node[fill=white, scale=.75] (T4) [right of=T3,fill=white, node distance=2.5cm] {$\ldots$};
\draw[->,-latex,dashed,very thin, color=black,anchor=mid] (T3) -- (T4);
\draw[->,-latex, very thin, color=black, anchor=mid] (T2) -- (T3);
\draw[->,-latex, very thin, color=black, anchor=mid] (T1) -- (T2);
\draw[->,-latex,dashed,very thin, color=black,anchor=mid] (T0) -- (T1);
\draw[->,-latex,rounded corners,very thin, color=black,anchor=mid] (T) node[fill=white, scale=.75]{Time} -- (T0);
\node[state,place, scale=.75] (S0) {$s_0$};
\node[state,place, scale=.75] (S1) [right of=S0] {$s_t$};
\node[state,place, scale=.75] (S2) [ right of=S1] {$s_{t+1}$};
\node[state,place, scale=.75] (S3) [ right of=S2] {$s_{t+2}$};
\node[, scale=.75] (S4) [ right of=S3,node distance=2.5cm] {$\cdots$};
\path[very thin] (S0) edge[dashed] node[midway,text=black,draw=none,scale=.7, above=-5pt] {} (S1)
(S1) edge node[midway,text=black,draw=none,scale=.7, above=-5pt] {} (S2)
(S2) edge node[midway,text=black,draw=none,scale=.7, above=-5pt] {} (S3)
(S3) edge[dashed] (S4);
\node[, scale=.75] (A_SLACK) [below of=S0,node distance=2.6cm] {};
\node[state,red place, scale=.65] (A0) [below right of=A_SLACK,node distance=2cm] {$a_{0}$};
\node[state,red place, scale=.65] (A1) [right of=A0,node distance=4.25cm] {$a_t$};
\node[state,red place, scale=.65] (A2) [right of=A1,node distance=4.25cm] {$a_{t+1}$};
\node[, scale=.75] (A3) [right of=A2,node distance=5.5cm] {};
\path[very thin] (A0) edge [out=90, in=-155, dashed] node[scale=.75] {$w_t$} (S1)
(A1) edge [out=90, in=-155] node[scale=.75] {$w_{t+1}$} (S2)
(A2) edge [out=90, in=-155] node[scale=.75] {$w_{t+2}$ } (S3);
\node[state,green place, scale=.75] (O0) [below of=S0,node distance=2cm] {$r^{1:n}_0$};
\node[state,green place, scale=.75] (O1) [below of=S1,node distance=2cm] {$r^{1:n}_t$};
\node[state,green place, scale=.75] (O2) [below of=S2,node distance= 2cm] {$r^{1:n}_{t+1}$};
\node[state,green place, scale=.75] (O3) [below of=S3,node distance= 2cm] {$r^{1:n}_{t+2}$};
\node[, scale=.75] (O4) [ right of=O3,node distance=2.5cm] {$\cdots$};
\node[inner sep=0pt] (prof) at (-3.35,-1.5) {\includegraphics[width=.1\textwidth]{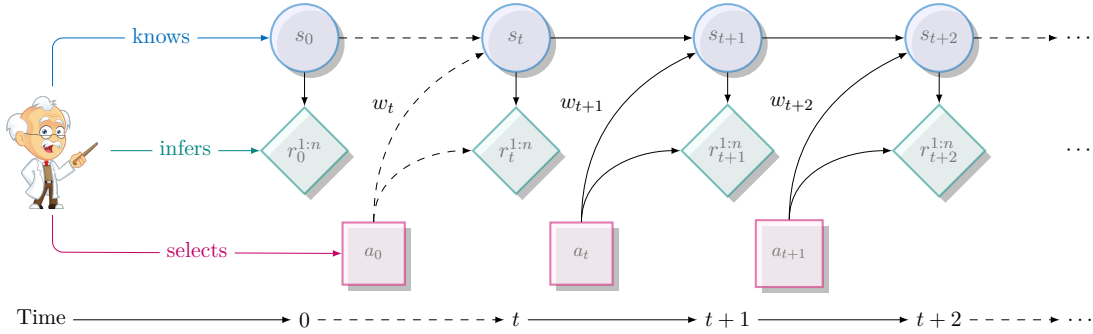}};
\draw[->,-latex,rounded corners,very thin, color=sthlmBlue,anchor=mid] (prof) -- (-3.35,0) --node[draw=none, fill=white, scale=.75] {knows} (S0);
\draw[->,-latex,rounded corners,very thin, color=sthlmGreen,anchor=mid] (prof) --node[draw=none, fill=white, scale=.75] {infers} (O0);
\draw[->,-latex,rounded corners,very thin, color=sthlmRed,anchor=mid] (prof) -- (-3.35,-2.85) --node[draw=none, fill=white, scale=.75] {selects} (A0);
\path[very thin] (S0) edge node {} (O0)
(S1) edge node {} (O1)
(A0) edge[out=90, in=-180, dashed] node {} (O1)
(S2) edge node {} (O2)
(A1) edge[out=90, in=-180] node {} (O2)
(S3) edge node {} (O3)
(A2) edge[out=90, in=-180] node {} (O3);
\end{tikzpicture}
\caption{Influence diagram for an occupancy-state Markov game.}
\label{fig:occupancy:markov:game:nf}
\end{figure}

The occupancy-state Markov game falls under the category of Markov games. Since the public observations have been made explicit, the reformulation leads to a stochastic rather than a deterministic process, as shown in \citet{continuous}. The aim of solving $M''$ is to find a joint policy similar to the ones used in both $M$ and $M'$. However, in this game, occupancy states replace plan-time histories, which can lead to overlooking crucial information, such as the joint policy. Doing so may not be the best approach for optimal decision-making. The substitution of occupancy states to plan-time histories requires guarantees that the former are sufficient statistics of the latter concerning three aspects: (i) predicting the current expected reward, (ii) predicting the next public observations, and (iii) estimating the next occupancy state. With these properties in hand, one can formally define a Markov game. Formal proofs of these properties follow.

\begin{restatable}{lem}{lemrewardpredictionsoc}
\label{lem:reward:prediction:soc}
The $t$-step occupancy state, defined as $s_t \doteq \Pr\{X_t,O_t|y_t\}$, is a sufficient statistic of the plan-time history $y_t$ for an accurate prediction of the expected reward that will be received immediately after executing the decision rule $a_t$, \ie $r_{M'}(y_t,a_t) = r_{M''}(s_t,a_t)$.
\end{restatable}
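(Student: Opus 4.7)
The plan is to prove the equality $r_{M'}(y_t,a_t) = r_{M''}(s_t,a_t)$ by direct substitution, since both reward functions are defined as the expectation of the immediate reward $r^i_{x_t u_t}$ under the distribution induced by $(y_t,a_t)$ and $(s_t,a_t)$ respectively, and the occupancy state $s_t$ is by Definition \ref{def:occupancy:state} precisely $\Pr\{X_t,O_t \mid y_t\}$. So the obstacle is essentially notational, not mathematical; the result is a coordinate check between the three definitions \ref{def:ckmg}, \ref{def:occupancy:state}, and \ref{def:omg}.

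Concretely, I would start from the definition of $r^i_{M'}(y_t,a_t)$ in Definition \ref{def:ckmg}, namely
\[
r^i_{M'}(y_t,a_t) \;\doteq\; \sum_{x_t\in\mathcal{X}}\sum_{o_t\in\mathcal{O}_t} \Pr\{x_t,o_t \mid y_t\}\sum_{u_t\in\mathcal{U}} r^i_{x_t u_t}\cdot a_t(u_t\mid o_t).
\]
Then I would invoke Definition \ref{def:occupancy:state} to replace $\Pr\{x_t,o_t \mid y_t\}$ with $s_t(x_t,o_t)$, yielding
\[
r^i_{M'}(y_t,a_t) \;=\; \sum_{x_t\in\mathcal{X}}\sum_{o_t\in\mathcal{O}_t} s_t(x_t,o_t)\sum_{u_t\in\mathcal{U}} a_t(u_t\mid o_t)\cdot r^i_{x_t u_t},
\]
which is exactly $r^i_{M''}(s_t,a_t)$ as defined in Definition \ref{def:omg}. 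Doing this componentwise for every agent $i$ establishes the claimed vector equality $r_{M'}(y_t,a_t) = r_{M''}(s_t,a_t)$, and in particular shows that the reward depends on $y_t$ only through the occupancy state $s_t$, which is the sufficient-statistic claim.

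The main (and only) subtle point worth remarking on is why the substitution is legitimate: the plan-time history $y_t = \langle s_0, a_{0:t-1}, w_{1:t}\rangle$ contains exactly the information needed to compute the posterior $\Pr\{x_t, o_t \mid y_t\}$ over hidden states and joint histories (via the transition and observation kernels of $M$), and this posterior is by definition $s_t$. Hence no information beyond $s_t$ is used in $r_{M'}(y_t,a_t)$, and the two reward maps coincide. I expect no technical obstacle; the lemma is a definitional identity and serves as the first of the three sufficient-statistic properties (reward, next public observation, next occupancy state) required to justify the occupancy-state Markov game $M''$ in Definition \ref{def:omg}.
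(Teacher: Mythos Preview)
Your proof is correct and follows essentially the same route as the paper: both reduce $r_{M'}(y_t,a_t)$ to the sum $\sum_{x_t,o_t} s_t(x_t,o_t)\sum_{u_t} a_t(u_t\mid o_t)\,r^i_{x_t u_t}$ and then recognize this as $r_{M''}(s_t,a_t)$. The only cosmetic difference is that the paper starts from the expectation $\mathbb{E}\{r_{X_t,U_t}\mid y_t,a_t\}$ and expands the joint $\Pr\{x_t,o_t,u_t\mid y_t,a_t\}$ into conditionals before identifying $s_t$, whereas you begin directly from the already-expanded form in Definition~\ref{def:ckmg}; your version is slightly more economical.
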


\begin{proof}
The proof starts with the definition of the expected reward conditional on plan-time history $y_t$ and decision rule $a_t$, \ie
\begin{align*}
r_{M'}(y_t,a_t) &\doteq \mathbb{E}\{ r_{X_t,U_t} | y_t, a_t \}.
\end{align*}
The expansion over hidden states, joint histories and joint actions leads to the following expression:
\begin{align*}
r_{M'}(y_t,a_t) &= \textstyle\sum_{x_t\in \mathcal{X}}\sum_{o_t\in \mathcal{O}_t}\sum_{u_t\in \mathcal{U}} \Pr\{x_t, o_t, u_t| y_t, a_t \} \cdot r_{x_t,u_t}.
\end{align*}
The decomposition of the joint probability as the product of conditional probabilities gives the following expression:
\begin{align*}
r_{M'}(y_t,a_t) &= \textstyle\sum_{x_t\in \mathcal{X}}\sum_{o_t\in \mathcal{O}_t}\sum_{u_t\in \mathcal{U}} \Pr\{x_t| y_t, a_t, o_t, u_t \}\Pr\{u_t| y_t, a_t, o_t \} \cdot r_{x_t,u_t}\cdot\Pr\{o_t| y_t \}.
\end{align*}
By removing terms that do not influence conditional probabilities, we obtain the following expression:
\begin{align*}
r_{M'}(y_t,a_t) &= \textstyle\sum_{x_t\in \mathcal{X}}\sum_{o_t\in \mathcal{O}_t}\sum_{u_t\in \mathcal{U}} \Pr\{x_t| y_t, o_t \} \Pr\{u_t| a_t, o_t \} \cdot r_{x_t,u_t}\cdot \Pr\{o_t| y_t \}.
\end{align*}
Finally re-arranging terms, noticing that $s_t(x_t, o_t) \doteq \Pr\{x,o|y_t\} =\Pr\{x_t| y_t, o_t \}\cdot \Pr\{o_t| y_t \}$ and $a_t(u_t|o_t) \doteq \Pr\{u_t| a_t, o_t \}$, we have that
\begin{align*}
r_{M'}(y_t,a_t) &= \textstyle\sum_{x_t\in \mathcal{X}}\sum_{o_t\in \mathcal{O}_t}\sum_{u_t\in \mathcal{U}} a_t(u_t|o_t) \cdot r_{x_t,u_t} \cdot s_t(x_t, o_t).
\end{align*}
By the definition of $r_{M''}(s_t,a_t)$, from Definition \ref{def:omg}, the following holds:
\begin{align*}
r_{M'}(y_t,a_t) &=r_{M''}(s_t,a_t).
\end{align*}
Which ends the proof.
\end{proof}

\begin{restatable}{lem}{lemobservationpredictionsoc}
\label{lem:observation:prediction:soc}
The $t$-step occupancy state, defined as $s_t \doteq \Pr\{X_t,O_t|y_t\}$, is a sufficient statistic of the plan-time history $y_t$ for predicting of the next public observation $w_{t+1}$ following the executing of the decision rule $a_t$, \ie $\omega_{M'}(w_{t+1}|y_t,a_t) = \omega_{M''}(w_{t+1}|s_t,a_t)$.
\end{restatable}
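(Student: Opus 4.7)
The plan is to mirror the structure of the proof of Lemma \ref{lem:reward:prediction:soc}, since the two statements have the same flavor: both assert that $s_t$ is a sufficient statistic of $y_t$ for some quantity computed one step ahead. The only additional wrinkle here is that the observation probability involves a marginalization over the next hidden state $x_{t+1}$ and the private component $\tilde{z}_{t+1}$ of the joint observation, so the decomposition will use the transition-observation kernel $p^{u_t,\langle \tilde{z}_{t+1},w_{t+1}\rangle}_{x_t,x_{t+1}}$ rather than the reward $r_{x_t,u_t}$.

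First I would unroll the definition $\omega_{M'}(w_{t+1}|y_t,a_t) \doteq \Pr\{w_{t+1}|y_t,a_t\}$ by introducing a sum over the latent variables that generate $w_{t+1}$ one step from $y_t$ under $a_t$, namely the current hidden state $x_t$, the current joint history $o_t$, the current joint action $u_t$, the next hidden state $x_{t+1}$, and the private observation $\tilde{z}_{t+1}$:
\begin{align*}
\omega_{M'}(w_{t+1}|y_t,a_t) &= \textstyle \sum_{x_t,o_t,u_t,x_{t+1},\tilde{z}_{t+1}} \Pr\{x_t,o_t,u_t,x_{t+1},\tilde{z}_{t+1},w_{t+1}\mid y_t,a_t\}.
\end{align*}
Then I would chain-rule the joint probability in the natural causal order, so that each factor depends only on its parents in the influence diagram of Figure \ref{fig:ck:game:nf}. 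Dropping the conditioning variables that are rendered irrelevant by the graphical model, this yields $\Pr\{o_t|y_t\}\cdot \Pr\{x_t|y_t,o_t\}\cdot a_t(u_t|o_t)\cdot p^{u_t,\langle \tilde{z}_{t+1},w_{t+1}\rangle}_{x_t,x_{t+1}}$.

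Next I would collapse $\Pr\{o_t|y_t\}\cdot \Pr\{x_t|y_t,o_t\}$ into $s_t(x_t,o_t)$ using Definition \ref{def:occupancy:state}, and then marginalize out $x_{t+1}$ (summed inside the transition factor) to arrive at
\begin{align*}
\omega_{M'}(w_{t+1}|y_t,a_t) &= \textstyle \sum_{x_t,o_t,u_t,x_{t+1},\tilde{z}_{t+1}} s_t(x_t,o_t)\cdot a_t(u_t|o_t)\cdot p^{u_t,\langle \tilde{z}_{t+1},w_{t+1}\rangle}_{x_t,x_{t+1}},
\end{align*}
which is exactly the expression defining $\omega_{M''}(w_{t+1}|s_t,a_t)$ in Definition \ref{def:omg}. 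The conclusion $\omega_{M'}(w_{t+1}|y_t,a_t)=\omega_{M''}(w_{t+1}|s_t,a_t)$ then follows.

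The argument contains no real obstacle: every step is a mechanical application of the law of total probability and the Markov/observation independence properties of the POSG, and the key structural insight, that $\Pr\{x_t,o_t|y_t\}=s_t(x_t,o_t)$, is simply the definition of an occupancy state. The only point requiring care is justifying the independencies that let me discard conditioning on $y_t$ when predicting $x_{t+1}$ and $\tilde{z}_{t+1}$ from $(x_t,u_t)$; this follows directly from the POSG's dynamics, since the transition-observation kernel depends on the past only through $(x_t,u_t)$.
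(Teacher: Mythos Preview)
Your proposal is correct and follows essentially the same route as the paper: expand $\Pr\{w_{t+1}\mid y_t,a_t\}$ over $(x_t,o_t,u_t,x_{t+1},\tilde{z}_{t+1})$, factor via the chain rule, drop conditioning variables using the POSG dynamics, collapse $\Pr\{o_t\mid y_t\}\Pr\{x_t\mid y_t,o_t\}$ into $s_t(x_t,o_t)$, and recognize the result as Definition~\ref{def:omg}'s $\omega_{M''}(w_{t+1}\mid s_t,a_t)$. The paper's proof is slightly more verbose in naming each conditional factor, but the argument is the same.
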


\begin{proof}
The proof starts with the definition of the probability of receiving public observation $w_{t+1}$ upon taking decision rule $a_t$ in plan-time history $y_t$, \ie
\begin{align*}
\omega_{M'}(w_{t+1} | y_t, a_t) &\doteq \Pr\{w_{t+1} | y_t, a_t \}.
\end{align*}
The expansion of the conditional probability $\Pr\{w_{t+1} | y_t, a_t \}$ over hidden states $x_t, x_{t+1}$, joint histories $o_t$, and joint actions $u_t$ and joint private observations $\tilde{z}_{t+1}$, result in the following expression:
\begin{align*}
\omega_{M'}(w_{t+1} | y_t, a_t) &= \sum_{x_t\in \mathcal{X}}\sum_{o_t\in \mathcal{O}_t}\sum_{u_t\in \mathcal{U}} \sum_{ x_{t+1}\in \mathcal{X}} \sum_{ \tilde{z}_{t+1}\in \tilde{\mathcal{Z}}} \Pr\{x_t, o_t, u_t, x_{t+1}, \tilde{z}_{t+1}, w_{t+1} | y_t, a_t \}.
\end{align*}
The decomposition of the joint probability as a product of conditional probabilities yields the following result:
\begin{align*}
\omega_{M'}(w_{t+1} | y_t, a_t) &= \sum_{x_t\in \mathcal{X}}\sum_{o_t\in \mathcal{O}_t}\sum_{u_t\in \mathcal{U}} \sum_{ x_{t+1}\in \mathcal{X}} \sum_{ \tilde{z}_{t+1}\in \tilde{\mathcal{Z}}}
\Pr\{x_{t+1}, \tilde{z}_{t+1}, w_{t+1} | y_t, a_t, x_t, o_t, u_t \} \\
& \quad \cdot \Pr\{x_t | y_t, a_t, u_t, o_t \} \cdot \Pr\{u_t | y_t, a_t, o_t \} \cdot
\Pr\{o_t | y_t, a_t \}.
\end{align*}
The first factor is the game's dynamics; the second factor is the probability of the hidden states given plan-time history and joint history; the third factor is the probability of joint actions given decision rule and joint history, and the last factor is the probability of joint histories given the plan-time history, \ie
\begin{align*}
\omega_{M'}(w_{t+1} | y_t, a_t) &= \sum_{x_t\in \mathcal{X}}\sum_{o_t\in \mathcal{O}_t}\sum_{u_t\in \mathcal{U}} \sum_{ x_{t+1}\in \mathcal{X}} \sum_{ \tilde{z}_{t+1}\in \tilde{\mathcal{Z}}}
\Pr\{x_{t+1}, \tilde{z}_{t+1}, w_{t+1} | x_t, u_t \} \\
& \quad \cdot \Pr\{x_t | y_t, o_t \} \cdot \Pr\{u_t | a_t, o_t \} \cdot
\Pr\{o_t | y_t \}.
\end{align*}
Re-arranging and re-naming terms yield the following expression
\begin{align*}
\omega_{M'}(w_{t+1} | y_t, a_t) &= \sum_{x_t\in \mathcal{X}}\sum_{o_t\in \mathcal{O}_t}\sum_{u_t\in \mathcal{U}} \sum_{ x_{t+1}\in \mathcal{X}} \sum_{ \tilde{z}_{t+1}\in \tilde{\mathcal{Z}}}
p_{x_t,x_{t+1}}^{u_t,\langle \tilde{z}_{t+1}, w_{t+1}\rangle} \cdot a_t(u_t | o_t ) \cdot s_t(x_t ,o_t ).
\end{align*}
Consequently, the following equality holds by Definition \ref{def:omg} for $\omega_{M''}(w_{t+1} |s_t, a_t)$:
\begin{align*}
\omega_{M'}(w_{t+1} | y_t, a_t) &= \omega_{M''}(w_{t+1} |s_t, a_t).
\end{align*}
Which ends the proof.
\end{proof}

\begin{restatable}{lem}{lemsocpredictionsoc}
\label{lem:soc:prediction:soc}
The $t$-step occupancy state, defined as $s_t \doteq \Pr\{X_t,O_t|y_t\}$, is a sufficient statistic of the plan-time history $y_t$ for predicting the subsequent occupancy state following the executing of the decision rule $a_t$ and the reception of public observation $w_{t+1}$, \ie $\rho_{M'}(y_t,a_t,w_{t+1}) = \rho_{M''}(s_t,a_t,w_{t+1})$.
\end{restatable}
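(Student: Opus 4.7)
The statement asserts the equality of two objects that, when unpacked, both denote the same next occupancy state viewed as a probability distribution. The plan is to show that the posterior $s_{t+1}(x_{t+1},o_{t+1}) \doteq \Pr\{x_{t+1}, o_{t+1}\mid y_t,a_t,w_{t+1}\}$ computed from the plan-time history depends on $y_t$ only through $s_t$, and moreover matches the recursive formula given in Definition \ref{def:omg}. Since $\rho_{M'}(y_t,a_t,w_{t+1}) \doteq (y_t,a_t,w_{t+1}) = y_{t+1}$ by Definition \ref{def:ckmg}, the object $\rho_{M'}(y_t,a_t,w_{t+1})$ is identified (via Definition \ref{def:occupancy:state}) with the distribution $\Pr\{X_{t+1}, O_{t+1}\mid y_{t+1}\}$, so it suffices to evaluate this distribution on an arbitrary pair $(x_{t+1}, (o_t,u_t,\tilde{z}_{t+1},w_{t+1}))$ and compare to the right-hand side.

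The key steps will mirror the decompositions already carried out in the proofs of Lemmas \ref{lem:reward:prediction:soc} and \ref{lem:observation:prediction:soc}. First, I would apply Bayes' rule to write
\begin{align*}
\Pr\{x_{t+1}, o_t, u_t, \tilde{z}_{t+1} \mid y_t,a_t,w_{t+1}\} = \frac{\Pr\{x_{t+1}, o_t, u_t, \tilde{z}_{t+1}, w_{t+1}\mid y_t,a_t\}}{\Pr\{w_{t+1}\mid y_t,a_t\}}.
\end{align*}
Second, I would chain-expand the numerator as
$\Pr\{x_{t+1},\tilde{z}_{t+1},w_{t+1}\mid x_t,u_t\}\cdot\Pr\{u_t\mid a_t,o_t\}\cdot\Pr\{x_t,o_t\mid y_t\}$, summed over $x_t$, after removing irrelevant conditioning variables exactly as done in the previous two lemmas. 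Renaming the factors as $p_{x_t,x_{t+1}}^{u_t,\langle \tilde{z}_{t+1},w_{t+1}\rangle}$, $a_t(u_t\mid o_t)$ and $s_t(x_t,o_t)$ then shows the numerator equals $\sum_{x_t} s_t(x_t,o_t)\cdot p_{x_t,x_{t+1}}^{u_t,\langle \tilde{z}_{t+1},w_{t+1}\rangle}\cdot a_t(u_t\mid o_t)$, which is exactly the unnormalized mass appearing in the proportionality in Definition \ref{def:omg}. Third, the denominator is $\omega_{M'}(w_{t+1}\mid y_t,a_t)$, which by Lemma \ref{lem:observation:prediction:soc} coincides with $\omega_{M''}(w_{t+1}\mid s_t,a_t)$, i.e.\ with the normalizing constant of the $\rho_{M''}$ recursion. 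Combining the numerator and denominator yields the same distribution as $\rho_{M''}(s_t,a_t,w_{t+1})$.

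The main obstacle is mostly bookkeeping rather than substance: one must carefully justify that $w_{t+1}$ can be moved from the conditioning side to the event side via Bayes' rule, and that the same conditional independence assertions used implicitly in Lemmas \ref{lem:reward:prediction:soc} and \ref{lem:observation:prediction:soc} (namely that, given $o_t$ and $y_t$, the hidden state $x_t$ is independent of the remaining plan-time history, and that the one-step dynamics depends only on $(x_t,u_t)$) remain valid here when one extra step of the process is involved. Once those independences are invoked, the identification with $\rho_{M''}$ is purely notational.
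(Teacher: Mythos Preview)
Your proposal is correct and follows essentially the same approach as the paper's proof: both apply Bayes' rule to isolate the normalizer $\Pr\{w_{t+1}\mid y_t,a_t\}$, invoke Lemma~\ref{lem:observation:prediction:soc} to identify it with $\omega_{M''}(w_{t+1}\mid s_t,a_t)$, expand the numerator over $x_t$ into a product of conditional probabilities, and recognize the result as the update rule in Definition~\ref{def:omg}. The only cosmetic difference is ordering---the paper handles the denominator first and then the numerator, whereas you do the reverse---but the substance is identical.
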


\begin{proof}
In demonstrating the sufficiency of the occupancy states, we also derive the rule for updating occupancy states from one time step to the next. For the process defined by $M'$, the only information that the central planner obtains during the plan time $t$ is the fact that a particular decision rule $a_t$ has been followed, and public observation $w_{t+1}$ has been received at the end of the plan time. If $(y_t, a_t,w_{t+1})$ denotes the total data available to the central planner at plan time $t+1$, then we can write:
\begin{align*}
y_{t+1} &\doteq (y_t,a_t,w_{t+1}),
\end{align*}
$y_{t+1}$ represents the plan-time history before time step $t+1$ plus the additional information that a particular decision rule was recorded and public observation occurred. By Definition \ref{def:occupancy:state}, occupancy states are written as follows: for every hidden state $x_{t+1}$, joint history $o_t$, joint action $u_t$, and joint observation $\langle \tilde{z}_{t+1}, w_{t+1}\rangle$,
\begin{align}
s_{t+1}(x_{t+1}, (o_t,u_t,\langle \tilde{z}_{t+1}, w_{t+1}\rangle)) &\doteq \Pr\{ x_{t+1}, (o_t,u_t,\langle \tilde{z}_{t+1}, w_{t+1}\rangle) | y_{t+1}\}.
\label{eq:lem:soc:prediction:soc:1}
\end{align}
The application of Bayes' rule and the substitution of $y_{t+1}$ by $(y_t,a_t,w_{t+1})$ in \eqref{eq:lem:soc:prediction:soc:1} yield
\begin{align}
s_{t+1}(x_{t+1}, (o_t,u_t,\langle \tilde{z}_{t+1}, w_{t+1}\rangle)) &= \frac{ \Pr\{ x_{t+1}, o_t,u_t, \tilde{z}_{t+1}, w_{t+1}|y_t,a_t\}}{\Pr\{w_{t+1}|y_t,a_t\}}.
\label{eq:lem:soc:prediction:soc:2}
\end{align}
From Lemma \ref{lem:observation:prediction:soc}, we know that the denominator of \eqref{eq:lem:soc:prediction:soc:2} is given $\omega_{M'}(w_{t+1} | y_t, a_t) = \omega_{M''}(w_{t+1} |s_t, a_t)$. The expansion of the joint probability in the numerator of \eqref{eq:lem:soc:prediction:soc:2} as a product of conditional probabilities yields
\begin{align*}
s_{t+1}(x_{t+1}, (o_t,u_t,\langle \tilde{z}_{t+1}, w_{t+1}\rangle)) &~\propto \sum_{x_t\in \mathcal{X}} \Pr\{ x_t, x_{t+1}, o_t,u_t, \tilde{z}_{t+1}, w_{t+1}|y_t,a_t\}.
\end{align*}
The expansion of the joint probability as a product of conditional probabilities gives us the following expression:
\begin{align*}
s_{t+1}(x_{t+1}, (o_t,u_t,\langle \tilde{z}_{t+1}, w_{t+1}\rangle)) &~\propto \sum_{x_t\in \mathcal{X}} \Pr\{ x_{t+1}, \tilde{z}_{t+1}, w_{t+1}|x_t, o_t,u_t, y_t,a_t\} \nonumber\\
&\quad \cdot \Pr\{x_t|y_t,a_t, o_t,u_t \} \cdot
\Pr\{u_t|y_t,a_t, o_t \} \cdot
\Pr\{o_t|y_t,a_t \}
\end{align*}
Re-arranging terms
\begin{align*}
s_{t+1}(x_{t+1}, (o_t,u_t,\langle \tilde{z}_{t+1}, w_{t+1}\rangle)) &~\propto \sum_{x_t\in \mathcal{X}} p_{x_t,x_{t+1}}^{u_t, \tilde{z}_{t+1}, w_{t+1}} \cdot a_t(u_t| o_t) \cdot s_t(x_t,o_t).
\end{align*}
Consequently, the following equality holds by Definition \ref{def:omg} for $\rho_{M''}(s_t, a_t, w_{t+1})$:
\begin{align*}
s_{t+1} &= \rho_{M''}(s_t, a_t, w_{t+1}).
\end{align*}
Which ends the proof.
\end{proof}

We introduce state-value functions for a fixed joint policy under $M''$ to determine if occupancy states are sufficient. These state-value functions compute the expected cumulative $\gamma$-discounted reward from the initial belief state $s_0$ onward. They are represented as $\upsilon_{M'',\gamma,0}^{a_{0:}}(s_0) \doteq \mathbb{E}\{ \textstyle{\sum_{t=0}^{\ell-1}}~\gamma^t \cdot r_{M''}(S_t,a_t) | s_0, a_{0:} \}$.

\begin{definition}
\label{def:state:value:fct:m:os}
The $t$-step state-value function $\upsilon^{i,a_{t:}}_{M'',\gamma,t}\colon \mathcal{S}_t \to \mathbb{R}$ under the joint policy $a_{t:}$ is given by: for any occupancy state $s_t$,
\begin{align*}
\upsilon^{i,a_{t:}}_{M'',\gamma,t}(s_t) &\doteq\textstyle \mathbb{E} \left\{\sum_{\tau=t}^{\ell-1} \gamma^{\tau-t} \cdot r^i_{M''}(S_\tau,a_\tau) | s_t, a_{t:} \right\},
\end{align*}
with boundary condition $\upsilon^{i,\cdot}_{M,\gamma,\ell}(\cdot) \doteq 0$.
\end{definition}

According to the following lemma, the state-value functions satisfy \citeauthor{bellman}'s equations for occupancy states at every time step.

\begin{restatable}{lem}{lembellmanstatevaluefctmos}
The $t$-step state-value function $\upsilon^{i,a_{t:}}_{M'',\gamma,t}\colon \mathcal{S}_t\to \mathbb{R}$ under the joint policy $a_{0:}$ satisfies \citeauthor{bellman}'s equations: for any occupancy state $s_t$,
\begin{align*}
\upsilon^{i,a_{t:}}_{M'',\gamma,t}(s_t) &= q^{i,a_{t+1:}}_{M'',\gamma,t}(s_t,a_t),\\
q^{i,a_{t+1:}}_{M'',\gamma,t}(s_t,a_t) &\doteq \textstyle r^i_{M''}(s_t,a_t) + \gamma \sum_{s_{t+1}\in \mathcal{S}_{t+1}} p_{M''}(s_t,a_t,s_{t+1})\cdot \upsilon^{i,a_{t+1:}}_{M'',\gamma,t+1}(s_{t+1}),
\end{align*}
with boundary condition $\upsilon^{i,\cdot}_{M'',\gamma,\ell}(\cdot) \doteq 0$.
\end{restatable}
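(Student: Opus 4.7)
The plan is to mirror the proof of Lemma~\ref{lem:bellman:equation:m}, proceeding by backward induction on the time step $t$, but now exploiting the Markov structure of $M''$ that has just been established via Lemmas~\ref{lem:reward:prediction:soc}, \ref{lem:observation:prediction:soc}, and~\ref{lem:soc:prediction:soc}. These three results give exactly what is needed: the immediate reward depends on $(y_t,a_t)$ only through $s_t$, the probability of the next public observation depends on $(y_t,a_t)$ only through $s_t$, and the next occupancy state is a deterministic function of $(s_t,a_t,w_{t+1})$. Hence $\{S_t\}_{t\ge 0}$ is a controlled Markov chain whose one-step transition kernel is $p_{M''}(\cdot,\cdot,\cdot)$.

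The base case $t=\ell-1$ is immediate from the boundary condition: Definition~\ref{def:state:value:fct:m:os} reduces $\upsilon^{i,a_{\ell-1}}_{M'',\gamma,\ell-1}(s_{\ell-1})$ to $\mathbb{E}\{r^i_{M''}(S_{\ell-1},a_{\ell-1})\mid s_{\ell-1},a_{\ell-1}\}=r^i_{M''}(s_{\ell-1},a_{\ell-1})$, which matches $q^{i,\cdot}_{M'',\gamma,\ell-1}(s_{\ell-1},a_{\ell-1})$ since the future-rewards sum is empty. For the inductive step, assume the claim at time step $t+1$ and start from
\begin{align*}
\upsilon^{i,a_{t:}}_{M'',\gamma,t}(s_t) &\doteq \mathbb{E}\left\{\textstyle\sum_{\tau=t}^{\ell-1}\gamma^{\tau-t}\cdot r^i_{M''}(S_\tau,a_\tau)\,\middle|\, s_t,a_{t:}\right\}.
\end{align*}
Split the immediate reward from the discounted tail and apply the tower property conditioning on $S_{t+1}$:
\begin{align*}
\upsilon^{i,a_{t:}}_{M'',\gamma,t}(s_t) &= r^i_{M''}(s_t,a_t) + \gamma\, \mathbb{E}\left\{ \mathbb{E}\left[\textstyle\sum_{\tau=t+1}^{\ell-1}\gamma^{\tau-t-1}\cdot r^i_{M''}(S_\tau,a_\tau)\,\middle|\, S_{t+1},a_{t+1:}\right]\,\middle|\, s_t,a_t\right\}.
\end{align*}
The inner expectation is precisely $\upsilon^{i,a_{t+1:}}_{M'',\gamma,t+1}(S_{t+1})$ by Definition~\ref{def:state:value:fct:m:os}, so marginalizing over $S_{t+1}$ against the kernel $p_{M''}$ yields the desired expression for $q^{i,a_{t+1:}}_{M'',\gamma,t}(s_t,a_t)$, and the recursion collapses to $\upsilon^{i,a_{t:}}_{M'',\gamma,t}(s_t)=q^{i,a_{t+1:}}_{M'',\gamma,t}(s_t,a_t)$ because the joint decision rule $a_t$ is fixed (not randomized over) at the master-game level.

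The only subtlety, and the main obstacle, is justifying the tower-property step: one needs that conditioning the future stream on $(s_t,a_t)$ is equivalent, up to integrating over $S_{t+1}$, to conditioning on $S_{t+1}$ alone. This is where the three sufficiency lemmas do all the work, since they collectively show that $S_{t+1}$ is a sufficient statistic of the entire past $(y_t,a_t,W_{t+1})$ for every subsequent reward and transition. Once this Markov property is invoked, marginalizing $S_{t+1}$ via $p_{M''}(s_t,a_t,s_{t+1})=\sum_{w}\omega_{M''}(w\mid s_t,a_t)\,\delta_{\rho_{M''}(s_t,a_t,w)}(s_{t+1})$ is routine, and the induction closes.
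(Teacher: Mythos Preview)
Your proof is correct and follows essentially the same approach as the paper's: split off the immediate reward (which is deterministic given $s_t,a_t$), apply the tower property over $S_{t+1}$, and identify the inner conditional expectation as $\upsilon^{i,a_{t+1:}}_{M'',\gamma,t+1}(S_{t+1})$ via Definition~\ref{def:state:value:fct:m:os}. Two minor points: the induction framing is superfluous since you never actually use the inductive hypothesis (the tail is the next-step value function \emph{by definition}, not by hypothesis), and Lemmas~\ref{lem:reward:prediction:soc}--\ref{lem:soc:prediction:soc} are not needed here---the Markov property of $\{S_t\}$ under $p_{M''}$ is built into Definition~\ref{def:omg} itself, whereas those sufficiency lemmas serve the separate purpose of tying $M''$ back to $M'$.
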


\begin{proof}
The proof follows directly from the definition of the state-value functions under joint policy $a_{0:}$ over the occupancy states: for time step $t$, occupancy state $s_t$,
\begin{align*}
\upsilon^{a_{t:}}_{M'',\gamma,t}(s_t) &\doteq\textstyle \mathbb{E}\left\{ \sum_{\tau=t}^{\ell-1} \gamma^{\tau-t} \cdot r_{M''}(S_\tau,a_\tau) | s_t, a_{t:} \right\},
\end{align*}
Exploiting the fact that $r_{M''}(s_t,a_t)$ is not a random variable, we obtain the following expression:
\begin{align*}
\upsilon^{a_{t:}}_{M'',\gamma,t}(s_t)
&=\textstyle r_{M''}(s_t,a_t) + \gamma \mathbb{E}\left\{ \sum_{\tau=t+1}^{\ell-1} \gamma^{\tau-t-1} \cdot r_{M''}(S_\tau,a_\tau) | s_t, a_{t:} \right\}.
\end{align*}
The expansion of the expectation over all next occupancy states (through public observations) gives the following expression:
\begin{align*}
\upsilon^{a_{t:}}_{M'',\gamma,t}(s_t)
&=\textstyle r_{M''}(s_t,a_t) + \gamma \mathbb{E}\left\{\mathbb{E}\{ \sum_{\tau=t+1}^{\ell-1} \gamma^{\tau-t-1} \cdot r_{M''}(S_\tau,a_\tau) | s_{t+1}, a_{t+1:} \}|s_t, a_t\right\}.
\end{align*}
The injection of Definition \ref{def:state:value:fct:m:os} of $\upsilon^{a_{t+1:}}_{M'',\gamma,t+1}(s_{t+1})$ instead of the inner expectation gives the following expression:
\begin{align*}
\upsilon^{a_{t:}}_{M'',\gamma,t}(s_t)
&=\textstyle r_{M''}(s_t,a_t) + \gamma \mathbb{E}_{s_{t+1}\sim p_{M''}(s_t,a_t,\cdot)}\left\{\upsilon^{a_{t+1:}}_{M'',\gamma,t+1}(s_{t+1})|s_t, a_t\right\}.
\end{align*}
By definition of the expectation, we know that the expression becomes:
\begin{align*}
&\overset{\text{Definition \ref{def:state:value:fct:m:os}}}{=} \textstyle r_{M''}(s_t,a_t) + \gamma \sum_{s_{t+1}\in \mathcal{S}_{t+1}}p_{M''}(s_t,a_t,s_{t+1}) \cdot \upsilon^{a_{t+1:}}_{M'',\gamma,t+1}(s_{t+1})\\
&\doteq q^{a_{t+1:}}_{M'',\gamma,t}(s_t,a_t).
\end{align*}
Which ends the proof.
\end{proof}

It is important to recognize that state- and action-value functions in the master game $M$ under a fixed joint policy, which have a tabular nature, inherently display uniform continuity characteristics. The following explanation highlights the simple yet significant linear properties associated with these functions, providing a fixed joint policy. It is worth noting that these uniform continuity properties apply to occupancy states.

\begin{restatable}{lem}{lemlinearityjointpolicy}
\label{lem:linearity:joint:policy}
The $t$-step state-value function $\upsilon^{i,a_{t:}}_{M'',\gamma,t}\colon \mathcal{S}_t \to \mathbb{R}$ under joint policy $a_{t:}$ is linear over occupancy states: for any occupancy state $s_t$,
\begin{align*}
\upsilon^{i,a_{t:}}_{M'',\gamma,t}(s_t) &= \textstyle\sum_{x_t\in \mathcal{X},o_t\in \mathcal{O}_t} s_t(x_t,o_t)\cdot \upsilon^{i,a_{t:}}_{M,\gamma,t}(x_t,o_t).
\end{align*}
Similarly, the $t$-step action-value function $q^{i,a_{t+1:}}_{M'',t}\colon \mathcal{S}_t\times \mathcal{A}_t \to \mathbb{R}$ under the joint policy $a_{t+1:}$ is linear over occupancy states and joint decision rules: for any occupancy state $s_t\in \mathcal{S}_t$ and joint decision rule $a_t$
\begin{align*}
q^{i,a_{t+1:}}_{M'',\gamma,t}(s_t,a_t) &= \textstyle\sum_{x_t\in \mathcal{X},o_t\in \mathcal{O}_t} s_t(x_t,o_t)\sum_{u_t\in \mathcal{U}}a_t(u_t|o_t)\cdot q^{i,a_{t+1:}}_{M,\gamma,t}(x_t,o_t,u_t).
\end{align*}
\end{restatable}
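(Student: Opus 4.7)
The plan is to prove both equalities simultaneously by backward induction on the time step $t$. The base case at $t = \ell$ is immediate from the boundary conditions $\upsilon^{i,\cdot}_{M,\gamma,\ell}(\cdot) = \upsilon^{i,\cdot}_{M'',\gamma,\ell}(\cdot) = 0$. For the inductive step, assume the linearity holds at time step $t+1$; I would first establish the action-value equality, and then derive the state-value equality from it by noting that by the Bellman equation for $M''$ one has $\upsilon^{i,a_{t:}}_{M'',\gamma,t}(s_t) = q^{i,a_{t+1:}}_{M'',\gamma,t}(s_t,a_t)$ and that summing $a_t(u_t|o_t) \cdot q^{i,a_{t+1:}}_{M,\gamma,t}(x_t,o_t,u_t)$ over $u_t$ recovers $\upsilon^{i,a_{t:}}_{M,\gamma,t}(x_t,o_t)$ by Lemma \ref{lem:bellman:equation:m}.

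Starting from Bellman's equation for $M''$, I would write
\[
q^{i,a_{t+1:}}_{M'',\gamma,t}(s_t,a_t) = r^i_{M''}(s_t,a_t) + \gamma \sum_{s_{t+1}\in \mathcal{S}_{t+1}} p_{M''}(s_t,a_t,s_{t+1})\cdot \upsilon^{i,a_{t+1:}}_{M'',\gamma,t+1}(s_{t+1}).
\]
The immediate reward piece is linear in $s_t$ directly by the definition of $r^i_{M''}$ in Definition \ref{def:omg} (equivalently, via Lemma \ref{lem:reward:prediction:soc}), yielding $\sum_{x_t,o_t} s_t(x_t,o_t) \sum_{u_t} a_t(u_t|o_t) \cdot r^i_{x_t,u_t}$. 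For the future-reward term, I would apply the induction hypothesis to rewrite $\upsilon^{i,a_{t+1:}}_{M'',\gamma,t+1}(s_{t+1})$ as $\sum_{x_{t+1},o_{t+1}} s_{t+1}(x_{t+1},o_{t+1}) \cdot \upsilon^{i,a_{t+1:}}_{M,\gamma,t+1}(x_{t+1},o_{t+1})$.

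The key calculation then uses $p_{M''}(s_t,a_t,s_{t+1}) = \sum_{w_{t+1}} \omega_{M''}(w_{t+1}|s_t,a_t) \cdot \delta_{\rho_{M''}(s_t,a_t,w_{t+1})}(s_{t+1})$ from Definition \ref{def:omg}. Collapsing the sum over $s_{t+1}$ through the Kronecker delta and substituting the explicit form of $\rho_{M''}$ gives
\[
\omega_{M''}(w_{t+1}|s_t,a_t) \cdot s_{t+1}(x_{t+1},o_{t+1}) = \sum_{x_t\in \mathcal{X}} s_t(x_t,o_t) \cdot a_t(u_t|o_t) \cdot p_{x_t,x_{t+1}}^{u_t,\langle \tilde z_{t+1},w_{t+1}\rangle},
\]
whenever $o_{t+1} = (o_t,u_t,\langle \tilde z_{t+1},w_{t+1}\rangle)$, and zero otherwise (the normalizing $\omega_{M''}$ factor cancels exactly the proportionality in $\rho_{M''}$). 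Re-indexing the sum over $o_{t+1}$ as a sum over $(o_t,u_t,\tilde z_{t+1},w_{t+1})$ and then collapsing $\sum_{\tilde z_{t+1},w_{t+1}}$ back into $\sum_{z_{t+1}}$, the future-reward term becomes $\gamma \sum_{x_t,o_t} s_t(x_t,o_t) \sum_{u_t} a_t(u_t|o_t) \sum_{x_{t+1},z_{t+1}} p_{x_t,x_{t+1}}^{u_t,z_{t+1}} \cdot \upsilon^{i,a_{t+1:}}_{M,\gamma,t+1}(x_{t+1},(o_t,u_t,z_{t+1}))$. Combining this with the immediate reward factors the whole expression as $\sum_{x_t,o_t} s_t(x_t,o_t) \sum_{u_t} a_t(u_t|o_t) \cdot q^{i,a_{t+1:}}_{M,\gamma,t}(x_t,o_t,u_t)$ by Lemma \ref{lem:bellman:equation:m}, which is the desired action-value identity.

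The main obstacle is the bookkeeping required to align the transition probability $p_{x_t,x_{t+1}}^{u_t,z_{t+1}}$ in $M$, which uses the full joint observation $z_{t+1} = \langle \tilde z_{t+1}, w_{t+1}\rangle$, with the $M''$ dynamics, where the public part $w_{t+1}$ drives the deterministic occupancy-state transition $\rho_{M''}$ and the private part $\tilde z_{t+1}$ is folded into the next occupancy state itself. The cancellation between the proportionality constant hidden in $\rho_{M''}$ and the observation model $\omega_{M''}$ is what makes the linear factorization go through, and handling this cleanly (rather than the induction itself, which is essentially routine given Lemmas \ref{lem:bellman:equation:m}, \ref{lem:reward:prediction:soc}, \ref{lem:observation:prediction:soc}, and \ref{lem:soc:prediction:soc}) is the only real delicate point in the argument.
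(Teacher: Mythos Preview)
Your proposal is correct and follows essentially the same backward-induction strategy as the paper: both start from the Bellman recursion for $M''$, invoke the inductive hypothesis on $\upsilon^{i,a_{t+1:}}_{M'',\gamma,t+1}$, and then unwind the definitions of $r_{M''}$, $p_{M''}$, $\omega_{M''}$, and $\rho_{M''}$ to recover the $M$-level recursion from Lemma~\ref{lem:bellman:equation:m}. If anything, you are more explicit than the paper about the cancellation of the normalizing constant in $\rho_{M''}$ against $\omega_{M''}$, which the paper simply absorbs into the phrase ``re-arranging terms.''
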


\begin{proof}
The proof starts with the definition of the state- and action-value functions for a fixed joint policy at time step $t$ and occupancy state $s_t$,
\begin{align*}
\upsilon^{a_{t:}}_{M'',\gamma,t}(s_t) &\doteq \textstyle \mathbb{E}\{\sum_{\tau=t}^{\ell-1} \gamma^{\tau-t}\cdot r_{M''}(S_\tau,a_\tau) | s_t, a_{t:}\}.
\end{align*}
It is worth noticing that the dynamics of occupancy states are stochastic, which makes it possible to rewrite $\upsilon^{a_{t:}}_{M'',\gamma,t}(s_t)$ as follows:
\begin{align*}
\upsilon^{a_{t:}}_{M'',\gamma,t}(s_t) &= \textstyle q^{a_{t+1:}}_{M'',\gamma,t}(s_t,a_t)\\
q^{a_{t+1:}}_{M'',\gamma,t}(s_t,a_t) &=\textstyle r_{M''}(s_t,a_t) + \gamma \sum_{s_{t+1}\in \mathcal{S}_{t+1}} p_{M''}(s_t,a_t,s_{t+1})\cdot \upsilon^{a_{t+1:}}_{M'',\gamma,t+1}(s_{t+1}).
\end{align*}
The property naturally holds when the horizon is exhausted, \ie $\upsilon^{\cdot}_{M'',\gamma,\ell}(\cdot) = q^{\cdot}_{M'',\gamma,\ell}(\cdot,\cdot) \doteq 0$. Suppose it also holds at time step $t+1$ onward, \ie
\begin{align*}
\upsilon^{a_{t+1:}}_{M'',\gamma,t+1}(s_{t+1}) &=\textstyle \sum_{x_t\in \mathcal{X}}\sum_{o_t\in \mathcal{O}_t} s_{t+1}(x_t,o_t)\cdot \upsilon^{a_{t+1:}}_{M,\gamma,t+1}(x_t,o_t)\\
q^{a_{t+2:}}_{M'',\gamma,t+1}(s_{t+1},a_{t+1}) &=\textstyle\sum_{x_t\in \mathcal{X}}\sum_{o_t\in \mathcal{O}_t} s_{t+1}(x_t,o_t)\sum_{u_t\in \mathcal{U}} a_{t+1}(u_t|o_t)\cdot q^{a_{t+2:}}_{M,\gamma,t+1}(x_t,o_t,u_t).
\end{align*}
Then, it follows that
\begin{align*}
\upsilon^{a_{t:}}_{M'',\gamma,t}(s_t) &= \textstyle q^{a_{t+1:}}_{M,\gamma,t}(s_t,a_t)\\
q^{a_{t+1:}}_{M'',\gamma,t}(s_t,a_t)&=\textstyle \sum_{x_t\in \mathcal{X}}\sum_{o_t\in \mathcal{O}_t}s_t(x_t,o_t)\sum_{u_t\in \mathcal{U}} a_t(u_t|o_t) \cdot q^{a_{t+1:}}_{M,\gamma,t}(x_t,o_t,u_t)\\
q^{a_{t+1:}}_{M,\gamma,t}(x_t,o_t,u_t)&=\textstyle r_{x_t,u_t} + \gamma \sum_{x_{t+1}\in \mathcal{X},z_{t+1}\in \mathcal{Z}}p_{x_tx_{t+1}}^{u_tz_{t+1}}\cdot \upsilon^{a_{t+1:}}_{M,\gamma,t+1}(x_{t+1},(o_t,u_t,z_{t+1})) .
\end{align*}
Re-arranging terms then yield the target results, \ie
\begin{align*}
\upsilon^{a_{t:}}_{M'',\gamma,t}(s_t) &= \textstyle \sum_{x_t\in \mathcal{X}}\sum_{o_t\in \mathcal{O}_t} s_t(x_t,o_t)\cdot \upsilon^{a_{t:}}_{M,\gamma,t}(x_t,o_t)\\
q^{a_{t+1:}}_{M'',\gamma,t}(s_t,a_t)&=\textstyle\sum_{x_t\in \mathcal{X}}\sum_{o_t\in \mathcal{O}_t} s_t(x_t,o_t)\sum_{u_t\in \mathcal{U}} a_t(u_t|o_t)\cdot q^{a_{t+1:}}_{M,\gamma,t}(x_t,o_t,u_t).
\end{align*}
Hence, the property holds for any arbitrary time step. Which ends the proof.
\end{proof}

Lemma \ref{lem:linearity:joint:policy} highlights that when a fixed joint policy is utilized, the state- and action-value functions can be elegantly expressed as linear functions of occupancy states across all types of master games that are represented as partially observable stochastic games. This linearity property is a fundamental aspect that contributes greatly to proving the convexity property of optimal value functions for specific master games, including the common-reward partially observable stochastic games family (\emph{dec}-POMDP).

The conditions for the sufficiency of occupancy states in optimally solving master games as occupancy-state Markov games are presented in the following proposition.

\begin{restatable}{pro}{prosufficiencyoccupancystate}
\label{pro:sufficiency:occupancy:state}
Let $M$ be a master game, $M'$ be the corresponding plan-time Markov game, and $M''$ the associated occupancy-state Markov game. If we let $a_{0:}$ be a solution of $M''$, then $a_{0:}$ is also a solution of $M$ and $M'$.
\end{restatable}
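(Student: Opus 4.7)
The plan is to reduce this proposition to Proposition \ref{pro:sufficiency:ckh} by establishing the intermediate equality $\upsilon^{a_{0:}}_{M'',\gamma,0}(s_0) = \upsilon^{a_{0:}}_{M',\gamma,0}(y_0)$ for every joint policy $a_{0:}$. Once this holds, since Proposition \ref{pro:sufficiency:ckh} already gives $\upsilon^{a_{0:}}_{M',\gamma,0}(y_0) = \upsilon^{a_{0:}}_{M,\gamma,0}(s_0)$, the three master games evaluate every fixed joint policy identically at the initial condition, and any solution concept characterized by such evaluations (Nash equilibrium and its refinements, including optimal joint policies and strong Stackelberg equilibria) transfers verbatim from $M''$ to $M'$ and $M$.

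First I would show the intermediate equality by backward induction on the time step $t$, using the Bellman equations for $\upsilon^{a_{t:}}_{M',\gamma,t}$ and $\upsilon^{a_{t:}}_{M'',\gamma,t}$. The boundary case $t = \ell$ is trivial since both value functions are zero. For the inductive step, unfold one step of the Bellman equation on each side: the immediate-reward terms match by Lemma \ref{lem:reward:prediction:soc}, which gives $r_{M'}(y_t,a_t) = r_{M''}(s_t,a_t)$. The one-step look-ahead terms expand as an expectation over public observations $w_{t+1}$ in $M'$ (with probability $\omega_{M'}(w_{t+1}\mid y_t,a_t)$, deterministically producing $y_{t+1} = \rho_{M'}(y_t,a_t,w_{t+1})$) and, in $M''$, as an expectation over next occupancy states $s_{t+1}$ (with probability $p_{M''}(s_t,a_t,s_{t+1})$). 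By Lemma \ref{lem:observation:prediction:soc} the observation probabilities coincide, and by Lemma \ref{lem:soc:prediction:soc} the corresponding $s_{t+1}$ equals $\rho_{M''}(s_t,a_t,w_{t+1})$, which is precisely the occupancy state associated with the updated plan-time history $y_{t+1}$. The induction hypothesis then identifies the two continuation values, closing the step.

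Next I would combine this with Proposition \ref{pro:sufficiency:ckh} to chain the equalities $\upsilon^{a_{0:}}_{M'',\gamma,0}(s_0) = \upsilon^{a_{0:}}_{M',\gamma,0}(y_0) = \upsilon^{a_{0:}}_{M,\gamma,0}(s_0)$ for every $a_{0:}$. Because the best-response set of any agent $i$ to a fixed $a^{\neg i}_{0:}$ depends only on the ordering of these values, the best-response sets in $M$, $M'$ and $M''$ coincide. Hence any solution concept expressed as a (possibly refined) Nash equilibrium---including the $\max\min$ Nash of \textit{zs}-POSGs, the optimal joint policy of \textit{dec}-POMDPs, and the strong Stackelberg equilibrium of \textit{st}-POSGs---is preserved across the three formulations. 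In particular, if $a_{0:}$ solves $M''$, it solves $M'$ and $M$.

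The main obstacle I anticipate is notational bookkeeping at the inductive step: the $M'$ side sums over public observations with deterministic history updates, whereas the $M''$ side sums over next occupancy states, and one must argue that these two parameterizations of the expectation coincide via the bijection $w_{t+1} \mapsto \rho_{M''}(s_t,a_t,w_{t+1})$ (collapsing the Kronecker delta in the definition of $p_{M''}$). Care is also needed to state the result uniformly for \emph{every} agent's value function so that it applies to any solution concept, not only to scalar-valued optimality, but the argument is componentwise identical since the equalities above hold for each agent $i$ independently.
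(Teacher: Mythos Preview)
Your proposal is correct and follows essentially the same approach as the paper: reduce to Proposition~\ref{pro:sufficiency:ckh} for the $M'\to M$ direction, then establish $\upsilon^{a_{t:}}_{M',\gamma,t}(y_t)=\upsilon^{a_{t:}}_{M'',\gamma,t}(s_t)$ for every fixed joint policy by invoking Lemmas~\ref{lem:reward:prediction:soc}, \ref{lem:observation:prediction:soc}, and~\ref{lem:soc:prediction:soc}. The only cosmetic difference is that the paper applies these three lemmas directly to the cumulative-reward expectation rather than unrolling the Bellman recursion step by step, but the content is the same.
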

\begin{proof}
Proposition \ref{pro:sufficiency:ckh} demonstrates that a solution for $M'$ is also valid for $M$. The only thing left to prove is that if $a_{0:}$ is a solution for $M''$, then it is also a solution for $M'$. Initially, we need to show that the state-value function for any particular time step $t$ under a fixed joint policy is dependent on the plan-time history $y_t$ only through the corresponding occupancy state $s_t$, which is defined as $(\Pr\{x_t,o_t| y_t\})_{x_t\in \mathcal{X},o_t\in \mathcal{O}_t}$. For any arbitrary fixed joint policy $a_{t:}$,
\begin{align*}
\upsilon^{a_{t:}}_{M',\gamma,t}(y_t) &\doteq \textstyle \mathbb{E}\{\sum_{\tau=t}^{\ell-1} \gamma^{\tau-t} \cdot r_{M'}(Y_\tau,a_\tau)|y_t,a_{t:}\}.
\end{align*}
The application of Lemma \ref{lem:reward:prediction:soc}, Lemma \ref{lem:observation:prediction:soc}, and Lemma \ref{lem:soc:prediction:soc}, lead to the following expression:
\begin{align*}
\upsilon^{a_{t:}}_{M',\gamma,t}(y_t) &=\textstyle \mathbb{E}\{\sum_{\tau=t}^{\ell-1} \gamma^{\tau-t} \cdot r_{M''}(S_\tau,a_\tau)|s_t,a_{t:} \},\\
&\doteq \upsilon^{a_{t:}}_{M'',\gamma,t}(s_t).
\end{align*}
Which ends the proof.
\end{proof}

Please note that Propositions \ref{pro:sufficiency:ckh} and \ref{pro:sufficiency:occupancy:state} do not provide a practical solution to master games. Rather, they show that finding a solution to either reformulated version will result in a valid solution for the original master game. It is crucial to remember that the solution for these reformulated games does not depend on their states, as they are non-observable. Therefore, it is not advisable to use state-dependent solutions when solving these reformulations, as is often done in fully observable settings.

\begin{figure}[!ht]
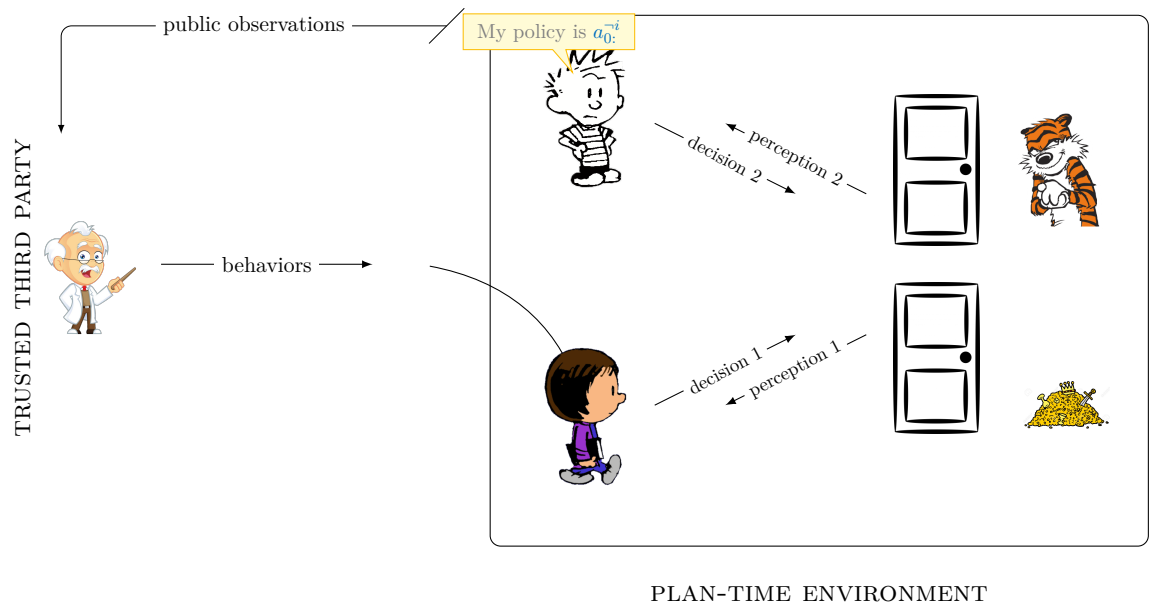

\centering
\begin{tikzpicture}[scale=1.25]
\draw[very thin, color=black] (-1.9,2.85) -- (-1.525, 3.225);
\draw[->, -latex,rounded corners,very thin, color=black] (-1.7125,3.0375) --node[scale=.75, fill=white] {public observations} (-5.8125, 3.0375) -- (-5.8125, 1.8875);
\draw[rounded corners, fill=white] (-1.25,-2.5) rectangle +(7,5.65);
\node (agentCentral) at (-2,.5) {};
\node (agentA) at (-.25,2.25) {};
\node (agentD) at (-.25,-1) {};
\path[->, -latex] (agentCentral)
edge [bend left] (agentD);
\begin{scope}[scale=.5]
\def \agentCentral{(-11,1) ellipse (1cm and 0.9cm);
\node[scale=1, rotate=90] at (-12.45,0.5) {\textcolor{black}{\sc trusted third party}};
\node[inner sep=0pt] (russell) at (-11,.75) {\includegraphics[width=.1\textwidth]{figures/TrustedThirdParty.png}};}
\def \agentA{(-.5,4.5) ellipse (1cm and 0.9cm);
\node[inner sep=0pt] (calvin) at (-.5,4.2) {\includegraphics[width=.075\textwidth]{figures/Calvin_BW.png}};}
\def \agentD{(-.5,-2) ellipse (1cm and 0.9cm);
\node[inner sep=0pt] (susie) at (-.5,-2.25) {\includegraphics[width=.075\textwidth]{figures/Susie.png}};}
\draw[draw=white] \agentCentral;
\node (prof) at (-11,2.3) {};
\draw[draw=white] \agentA;
\draw[draw=white] \agentD;
\node[draw=sthlmYellow, fill=yellow!20, rectangle callout, text centered, text width=3cm, text=gray, scale=.7] at (-1.25,5.95) {\textcolor{gray}{My policy is \textcolor{sthlmBlue}{$a^{\neg i}_{0:}$}}};
\node[inner sep=0pt] (hobbes) at (9.6,3) {\includegraphics[width=.075\textwidth]{figures/Hobbes.png}};
\node[inner sep=0pt] (door1) at (7,3) {\includegraphics[width=.075\textwidth]{figures/Door.png}};
\node[inner sep=0pt] (door2) at (7,-1) {\includegraphics[width=.075\textwidth]{figures/Door.png}};
\node[inner sep=0pt] (treasure) at (9.75,-2) {\includegraphics[width=.075\textwidth]{figures/Treasure.jpg}};
\end{scope}
\draw[->,-latex, color=black] (-4.75,.5) -- node[draw=none, fill=white, scale=.75] {behaviors} (-2.5,.5);
\node[scale=1] at (2.25,-3) {\sc plan-time environment};
\draw[->,-latex] (.5,-1) -- node[draw=none, fill=white,scale=.65,rotate=28] {decision 1} (2,-.25);
\draw[->,-latex] (2.75,-.25) -- node[draw=none, fill=white,scale=.65,rotate=28] {perception 1} (1.25,-1);
\draw[->,-latex] (.5,2) -- node[draw=none, fill=white,scale=.65,rotate=-28] {decision 2} (2,1.25);
\draw[->,-latex] (2.75,1.25) -- node[draw=none, fill=white,scale=.65,rotate=-28] {perception 2} (1.25,2);
\end{tikzpicture}
\caption{A plan-time environment for the tiger problem from the perspective of the trusted third party, now reasoning only on behalf of Suzie given that Calvin's policy is plan-time. He knows Suzie's decisions and perceptions, but when it comes to Calvin, he has only access to public observations, including Calvin's policy and the plan-time environment.}
\label{fig:multi:single:agent:environment:private}
\end{figure}

\subsection{Slave Game Reformulations}
This subsection presents a new perspective on the slave games, viewed from the point of view of a trusted third party. As depicted in Figure \ref{fig:multi:single:agent:environment:private}, the slave games were originally formulated as a single-agent decision-making problem, assuming that other agents had a fixed joint policy. However, this approach led to weak uniform continuity properties, as discussed in Theorems \ref{thm:convex:history} and \ref{thm:convex:lipschitz:history}, due to the use of different viewpoints to describe master and slave games, making it difficult to blend their value functions. This subsection aims to provide a unified perspective of both master and slave games from the point of view of a trusted third party, making it easier to identify properties of the slave games that can be applied to master games.

The central planner can think for one agent when considering a trusted third-party viewpoint. The other agents have a predetermined collective strategy, and this information is known to the central planner and the agent for whom it makes decisions. The central planner then gives that agent a set of rules to follow at every time step. In this scenario, the central planner can receive information about the action taken and the observations made by the specific agent it decides for. The private plan-time environment, viewed from the trusted third party's perspective, includes the original game environment and all interacting agents. However, unlike the master game's plan-time environment, the other agents have a predetermined joint policy, and the central planner can only see the private history of the agent they are reasoning for, not the private histories of the other agents. This reformulation turns the slave games into the private plan-time Markov game. Still, it does not affect how the single agent, the central planner, is reasoning to act in the original environment. Additionally, we will prove that both games, although not equivalent, can lead to equivalent solutions under mild conditions. The central planner collects data at the end of each time step, known as the private plan-time history. This includes the initial belief state, the private history of the specific agent taken into account, and the collective policy of all other agents.

\begin{definition}
For an agent $i$, a private plan-time history at time step $t$ includes the initial belief state, its individual history and the joint policy of all other agents except agent $i$, up to time step $t-1$, \ie $y_t^i \doteq \langle s_0,o^i_t, a^{\neg i}_{:t-1}\rangle$, with boundary condition $y_0^i \doteq s_0$.
\end{definition}

Private plan-time histories describe a special process called private plan-time Markov game that is illustrated in Figure \ref{fig:private:ck:game:nf}. A formal description of this game follows.

\begin{definition}
\label{def:private:plantime:slave:game}
Consider the master game $M$ and its surrogate $M'$. Let $a^{\neg i}_{0:}$ be a joint policy of all agents except agent $i$. Private plan-time Markov game $M'(a^{\neg i}_{0:})$ is characterized by the tuple $\langle \mathcal{Y}^i, \mathcal{U}^i, \mathcal{Z}^i, \rho^i_{M'}, \omega^i_{M'}, r^i_{M'}\rangle$. Here, $\mathcal{Y}^i$ represents a space of private plan-time histories, $\mathcal{U}^i$ is the finite action space, and $\mathcal{Z}^i$ is the finite observation space. Mappings $\rho^i_{M'}$, $\omega^i_{M'}$, and $r^i_{M'}$ are formally defined as follows:
\begin{itemize}
\item Mapping $\rho^i_{M'}\colon \mathcal{Y}^i\times \mathcal{U}^i\times \mathcal{Z}^i \to \mathcal{Y}^i$ describes a transition rule over private plan-time histories, \ie for any private plan-time history $y^i_t$, action $u_t^i$, and observation $z^i_{t+1}$, the next private plan-time history is given by: $\rho^i_{M'}(y_t^i, u_t^i, z_{t+1}^i)\doteq (y_t^i,a^{\neg i}_t, u^i_t,z^i_{t+1})$.
\item Mapping $\omega^i_{M'}\colon \mathcal{Y}^i\times \mathcal{U}^i\times \mathcal{Z}^i \to [0,1]$ is the observation rule, \ie for any private plan-time history $y^i_t$, action $u_t^i$, and observation $z^i_{t+1}$, the observation probability is given by: $\omega^i_{M'}(z_{t+1}^i|y_t^i, u_t^i) \doteq \Pr\{z^i_{t+1}|y_t^i, u_t^i,a^{\neg i}_t\}$.
\item Mapping $r^i_{M'}\colon \mathcal{Y}^i\times \mathcal{U}^i \to \mathbb{R}$ defines the reward model, \ie for any private plan-time history $y^i_t$, action $u_t^i$, the reward is given by: $r^i_{M'}( y_t^i, u_t^i)\doteq \mathbb{E}\{r^i_{X_t,U_t}|y_t^i, u_t^i,a^{\neg i}_t\}$.
\end{itemize}
\end{definition}

\begin{figure}[!ht]
\centering
\begin{tikzpicture}[->,-latex,auto,node distance=3.75cm,semithick, square/.style={regular polygon,regular polygon sides=4}]
\tikzstyle{every state}=[draw=black,text=black,inner color= white,outer color= white,draw= black,text=black, drop shadow]
\tikzstyle{place}=[thick,draw=sthlmBlue,fill=blue!20,minimum size=12mm, opacity=.5]
\tikzstyle{red place}=[square,place,draw=sthlmRed,fill=sthlmLightRed,minimum size=18mm]
\tikzstyle{green place}=[diamond,place,draw=sthlmGreen,fill=sthlmLightGreen,minimum size=15mm]
\node[fill=white, scale=.75] (T) at (-3.5,-3.75) {};
\node[fill=white, scale=.75] (T0) at (0,-3.75) {$0$};
\node[fill=white, scale=.75] (T1) [right of=T0,node distance=3.75cm, fill=white] {$t$};
\node[fill=white, scale=.75] (T2) [right of=T1,node distance=3.75cm] {$t+1$};
\node[fill=white, scale=.75] (T3) [right of=T2,fill=white, node distance=3.75cm] {$t+2$};
\node[fill=white, scale=.75] (T4) [right of=T3,fill=white, node distance=2.5cm] {$\ldots$};
\draw[->,-latex,dashed,very thin, color=black,anchor=mid] (T3) -- (T4);
\draw[->,-latex, very thin, color=black, anchor=mid] (T2) -- (T3);
\draw[->,-latex, very thin, color=black, anchor=mid] (T1) -- (T2);
\draw[->,-latex,dashed,very thin, color=black,anchor=mid] (T0) -- (T1);
\draw[->,-latex,rounded corners,very thin, color=black,anchor=mid] (T) node[fill=white, scale=.75]{Time} -- (T0);
\node[state,place, scale=.75] (S0) {$y^i_0$};
\node[state,place, scale=.75] (S1) [right of=S0] {$y^i_t$};
\node[state,place, scale=.75] (S2) [ right of=S1] {$y^i_{t+1}$};
\node[state,place, scale=.75] (S3) [ right of=S2] {$y^i_{t+2}$};
\node[, scale=.75] (S4) [ right of=S3,node distance=2.5cm] {$\cdots$};
\path[very thin] (S0) edge[dashed] node[midway,text=black,draw=none,scale=.7, above=-5pt] {} (S1)
(S1) edge node[midway,text=black,draw=none,scale=.7, above=-5pt] {} (S2)
(S2) edge node[midway,text=black,draw=none,scale=.7, above=-5pt] {} (S3)
(S3) edge[dashed] (S4);
\node[, scale=.75] (A_SLACK) [below of=S0,node distance=2.6cm] {};
\node[state,red place, scale=.65] (A0) [below right of=A_SLACK,node distance=2cm] {$u^i_{0}$};
\node[state,red place, scale=.65] (A1) [right of=A0,node distance=4.25cm] {$u^i_t$};
\node[state,red place, scale=.65] (A2) [right of=A1,node distance=4.25cm] {$u^i_{t+1}$};
\node[, scale=.75] (A3) [right of=A2,node distance=5.5cm] {};
\path[very thin] (A0) edge [out=90, in=-155, dashed] node[scale=.75] {$z^i_t$} (S1)
(A1) edge [out=90, in=-155] node[scale=.75] {$z^i_{t+1}$} (S2)
(A2) edge [out=90, in=-155] node[scale=.75] {$z^i_{t+2}$} (S3);
\node[state,green place, scale=.75] (O0) [below of=S0,node distance=2cm] {$r^i_0$};
\node[state,green place, scale=.75] (O1) [below of=S1,node distance=2cm] {$r^i_t$};
\node[state,green place, scale=.75] (O2) [below of=S2,node distance= 2cm] {$r^i_{t+1}$};
\node[state,green place, scale=.75] (O3) [below of=S3,node distance= 2cm] {$r^i_{t+2}$};
\node[, scale=.75] (O4) [ right of=O3,node distance=2.5cm] {$\cdots$};
\node[inner sep=0pt] (prof) at (-3.35,-1.5) {\includegraphics[width=.1\textwidth]{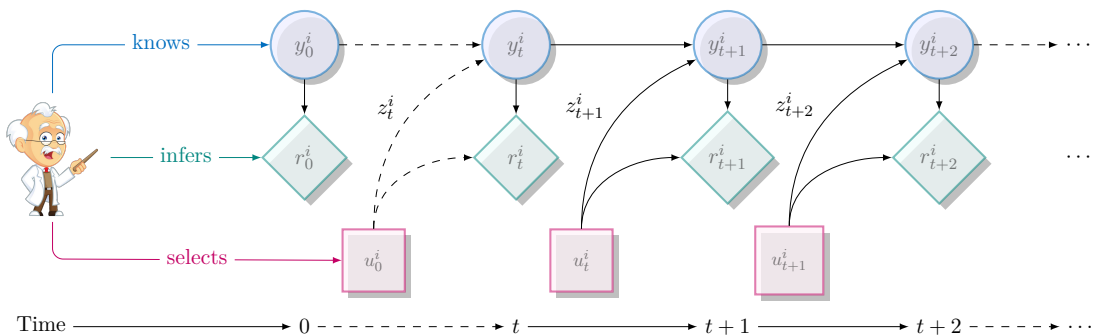}};
\draw[->,-latex,rounded corners,very thin, color=sthlmBlue,anchor=mid] (prof) -- (-3.35,0) --node[draw=none, fill=white, scale=.75] {knows} (S0);
\draw[->,-latex,rounded corners,very thin, color=sthlmGreen,anchor=mid] (prof) --node[draw=none, fill=white, scale=.75] {infers} (O0);
\draw[->,-latex,rounded corners,very thin, color=sthlmRed,anchor=mid] (prof) -- (-3.35,-2.85) --node[draw=none, fill=white, scale=.75] {selects} (A0);
\path[very thin] (S0) edge node {} (O0)
(S1) edge node {} (O1)
(A0) edge[out=90, in=-180, dashed] node {} (O1)
(S2) edge node {} (O2)
(A1) edge[out=90, in=-180] node {} (O2)
(S3) edge node {} (O3)
(A2) edge[out=90, in=-180] node {} (O3);
\end{tikzpicture}
\caption{Influence diagram for a private plan-time Markov game.}
\label{fig:private:ck:game:nf}
\end{figure}

Upon careful examination of private plan-time Markov games, it becomes apparent that they can be constructed using slave games. This removes the need for the explicit creation of the former, allowing for the generation of its components as required. Slave games are linked to partially observable Markov decision processes, while private plan-time Markov games have a more direct connection. However, it is important to note that private plan-time Markov games were introduced as surrogate games for the original slave games, and their objectives remain the same. Solving a private plan-time Markov game aims to determine the agent's best-response policy, denoted as $a^i_{0:}$. Interestingly, a one-to-one mapping exists between private plan-time histories in these games and private histories in slave games, \ie $$\varphi^i_{M'}\colon \left\{ \begin{array}{lcl} \mathcal{Y}^i &\to& \mathcal{O}^i \\ y^i_t\doteq (s_0,o^i_t,a^{\neg i}_{:t-1}) &\mapsto& o^i_t\end{array} \right.$$ This partially explains why best-response policies achieve identical performance indices in both settings. It is worth noticing that because a private plan-time Markov game $M'(a^{\neg i}_{0:})$ is a Markov game, one can actually optimally solve it by breaking it into a sequence of subgames $\langle M'(a^{\neg i}_{0:}), M'(a^{\neg i}_{1:}), M'(a^{\neg i}_{2:}), \ldots, M'(a^{\neg i}_{\ell-1:})\rangle$ solved recursively.

\begin{definition}
\label{def:state:value:fct:pckmg}
Consider a private plan-time Markov game $M'(a^{\neg i}_{0:})$ \wrt the slave game $M(a^{\neg i}_{0:})$. The $t$-step optimal value function $\upsilon^i_{M'(a^{\neg i}_{t:}),\gamma, t}\colon \mathcal{Y}^i_t \to \mathbb{R}$ of a private plan-time Markov subgame $M'(a^{\neg i}_{t:})$ is given by: for any private plan-time history $y^i_t$,
\begin{align*}
\upsilon^i_{M'(a^{\neg i}_{t:}),\gamma, t}(y_t^i) &\doteq\textstyle \max_{a^i_{t:}\in \mathcal{A}^i_{t:}} \mathbb{E}\{\sum_{\tau=t}^{\ell-1}\gamma^{\tau-t}\cdot r^i_{X_\tau U_\tau}| y^i_t, a_{t:}\}.
\end{align*}
\end{definition}

The optimal state-value functions of private plan-time Markov game $M'(a^{\neg i}_{0:})$ exhibit a special property: they are also solutions of the \citeauthor{bellman}'s optimality equations.

\begin{restatable}{lem}{lembellmanoptimalityeqn}
The $t$-step optimal state-value function $\upsilon^i_{M'(a^{\neg i}_{t:}),\gamma, t}\colon \mathcal{Y}^i_t\to \mathbb{R}$ is the solution of \citeauthor{bellman}'s optimality equations, \ie for any arbitrary private plan-time history $y^i_t$,
\begin{align*}
\upsilon^i_{M'(a^{\neg i}_{t:}),\gamma, t}(y_t^i) &= \textstyle \max_{u^i_t\in \mathcal{U}^i}~q^i_{M'(a^{\neg i}_{t:}),\gamma, t}(y_t^i,u^i_t),\\
q^i_{M'(a^{\neg i}_{t:}),\gamma, t}(y_t^i,u^i_t) &\doteq r^i_{M'}(y_t^i,u^i_t) + \gamma \!\!\!\sum_{z^i_{t+1}\in \mathcal{Z}^i}\!\!\! \omega^i_{M'}(z^i_{t+1}|y_t^i,u^i_t)\cdot \upsilon^i_{M'(a^{\neg i}_{t+1:}),\gamma, t+1}(\rho^i_{M'}(y_t^i,u^i_t,z^i_{t+1})),
\end{align*}
with boundary condition $\upsilon^i_{M',\gamma, \ell}(\cdot) = q^i_{M',\gamma, \ell}(\cdot) \doteq 0$.
\end{restatable}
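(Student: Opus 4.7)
The plan is to proceed by backward induction on $t$, mirroring almost verbatim the proof of Lemma \ref{lem:bellman:history:pomdp}. The base case at $t = \ell-1$ is immediate: from Definition \ref{def:state:value:fct:pckmg}, the cumulative reward collapses to $\mathbb{E}\{r^i_{X_{\ell-1}, U_{\ell-1}} \mid y^i_{\ell-1}, a_{\ell-1}\}$, and since a decision rule at the last stage is just a choice of action, the maximization over $a^i_{\ell-1:} \in \mathcal{A}^i_{\ell-1:}$ reduces to $\max_{u^i_{\ell-1} \in \mathcal{U}^i} r^i_{M'}(y^i_{\ell-1}, u^i_{\ell-1})$, which matches the target formula with $\upsilon^i_{M',\gamma,\ell}(\cdot) \doteq 0$.

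For the inductive step, assuming the recursion holds from time step $t+1$ onward, I would start from Definition \ref{def:state:value:fct:pckmg} and decouple the immediate reward from the future rewards,
\begin{align*}
\upsilon^i_{M'(a^{\neg i}_{t:}),\gamma, t}(y_t^i) = \max_{a^i_{t:}\in \mathcal{A}^i_{t:}} \mathbb{E}\left\{ r^i_{X_t,U_t} + \gamma \sum_{\tau=t+1}^{\ell-1}\gamma^{\tau-t-1}\cdot r^i_{X_\tau U_\tau} \,\Big|\, y^i_t, a_{t:}\right\},
\end{align*}
then represent $a^i_{t:}$ as a tree $\langle u^i_t, (a^{i,z^i}_{t+1:})_{z^i\in \mathcal{Z}^i}\rangle$ rooted at the private action $u^i_t$ and indexed by the next observation. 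This lets me split the outer maximum into $\max_{u^i_t \in \mathcal{U}^i}$ and an inner maximum over the observation-indexed continuations $(a^{i,z^i}_{t+1:})_{z^i \in \mathcal{Z}^i}$. Pushing the inner maximum inside the expectation over $Z^i_{t+1}$ and invoking the induction hypothesis recovers $\upsilon^i_{M'(a^{\neg i}_{t+1:}),\gamma, t+1}(\rho^i_{M'}(y^i_t, u^i_t, Z^i_{t+1}))$. Finally, expanding the remaining expectation via Definition \ref{def:private:plantime:slave:game} of $r^i_{M'}$ and $\omega^i_{M'}$ gives exactly $\max_{u^i_t} q^i_{M'(a^{\neg i}_{t:}),\gamma, t}(y^i_t, u^i_t)$, as required.

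The main technical step—and the only place the argument is not purely symbolic—is the interchange of the inner $\max_{(a^{i,z^i}_{t+1:})_{z^i}}$ with the expectation $\mathbb{E}_{Z^i_{t+1}}$. This is legitimate because the continuation policies are indexed by $z^i$, so one may choose each $a^{i,z^i}_{t+1:}$ independently to maximize the corresponding conditional value; this is the standard perfect-recall argument already used implicitly in Lemma \ref{lem:bellman:history:pomdp}. The deterministic nature of $\rho^i_{M'}$ (so that $Y^i_{t+1} = \rho^i_{M'}(y^i_t, u^i_t, Z^i_{t+1})$ is a function of $Z^i_{t+1}$ alone given $(y^i_t, u^i_t)$) is what ensures the inductive hypothesis applies cleanly to each branch, and no measurability issue arises since $\mathcal{Z}^i$ is finite. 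Once this exchange is justified, the rest is bookkeeping.
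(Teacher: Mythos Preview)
Your proposal is correct and follows essentially the same route as the paper's proof: both decouple the immediate reward, represent $a^i_{t:}$ as a tree $\langle u^i_t,(a^{i,z^i}_{t+1:})_{z^i}\rangle$, split the $\max$, push the inner $\max$ through the expectation over $Z^i_{t+1}$, and then unpack $r^i_{M'}$ and $\omega^i_{M'}$ via Definition~\ref{def:private:plantime:slave:game}. The only cosmetic difference is that you frame it as backward induction with an explicit base case, whereas the paper simply invokes Definition~\ref{def:state:value:fct:pckmg} at step $t+1$ directly (no induction hypothesis is actually needed, since the $(t{+}1)$-step optimal value is defined, not assumed); your added justification of the $\max$/expectation interchange is a nice touch the paper leaves implicit.
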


\begin{proof}
The proof follows directly from the definition of optimal state-value functions for private plan-time Markov game $M'(a^{\neg i}_{t:})$, \cf Definition \ref{def:state:value:fct:pckmg}, \ie
\begin{align*}
\upsilon^i_{M'(a^{\neg i}_{t:}),\gamma, t}(y^i_t) &\textstyle \doteq \max_{a^i_{t:} \in \mathcal{A}^i_{t:}}~ \mathbb{E}\{ \textstyle{\sum_{\tau=t}^{\ell-1}}~\gamma^{\tau-t} \cdot r^i_{X_\tau,U_\tau} | y^i_t, a_{t:} \}.
\end{align*}
Let $\upsilon_{M'(a^{\neg i}_{t:}),\gamma, t}^{i,a^i_{t:}}\colon y^i_t \mapsto \mathbb{E}\{ \textstyle{\sum_{\tau=t}^{\ell-1}}~\gamma^{\tau-t} \cdot r^i_{X_\tau,U_\tau} | y^i_t, a_{t:} \}$ be a $t$-step value function under the policy $a^i_{t:}$ in the slave game $M'(a^{\neg i}_{t:})$. Re-arranging terms by decoupling the immediate reward from the future rewards leads successively to the following expressions:
\begin{align*}
\upsilon^i_{M'(a^{\neg i}_{t:}),\gamma, t}(y^i_t)
&= \max_{(u^i_t,a^i_{t+1:}) \in \mathcal{A}^i_{t:}}~ \mathbb{E}\{ r^i_{X_t,U_t} + \gamma \upsilon^i_{M'(a^{\neg i}_{t+1:}),\gamma, t+1}(\rho^i_{M'}(y^i_t,u^i_t,Z^i_{t+1})) | y^i_t, u^i_t,a^{\neg i}_t, a_{t+1:} \},
\end{align*}
where $\upsilon^{i,a_{t+1:}^i}_{M'(a^{\neg i}_{t+1:}),\gamma, t+1}(\rho^i_{M'}(y^i_t,u^i_t,z^i_{t+1})) \doteq \mathbb{E}\{\sum_{\tau=t+1}^{\ell-1}~\gamma^{\tau-t-1} \cdot r^i_{X_\tau,U_\tau} |\rho^i_{M'}(y^i_t,u^i_t,z^i_{t+1}), a_{t+1:}\}$. It is worth noticing that policy $(u^i_t,a^i_{t+1:})$ rooted at history $o^i_t$ can be written recursively as a tree $\langle u^i_t, (a^{i,z^i}_{t+1:})_{z^i\in \mathcal{Z}^i}\rangle$ with initial action being $u^i_t$ and the subtree after seeing private observation $z^i$ being $a^{i,z^i}_{t+1:}$. Consequently, replacing $(u^i_t,a^i_{t+1:})$ by $\langle u^i_t, (a^{i,z^i}_{t+1:})_{z^i\in \mathcal{Z}^i}\rangle$ results in:
\begin{align*}
\upsilon^i_{M'(a^{\neg i}_{t:}),\gamma, t}(y^i_t)
&= \!\!\! \max_{\langle u^i_t, (a^{i,z^i}_{t+1:})_{z^i\in \mathcal{Z}^i}\rangle \in \mathcal{A}^i_{t:}}\!\!\!\mathbb{E}\{ r^i_{X_t,U_t} + \gamma \upsilon^{i,a^{i,z^i_{t+1}}_{t+1:}}_{M'(a^{\neg i}_{t+1:}),\gamma,t+1}(\rho^i_{M'}(y^i_t,u^i_t,z^i_{t+1})) | y^i_t, u^i_t,a^{\neg i}_t,a_{t+1:} \}.
\end{align*}
One can split the $\max$ operator into two parts, the first one over private actions $\max_{u^i_t\in \mathcal{U}^i}$ and the other one over the future decision rules $\max_{(a^{i,z^i}_{t+1:})_{z^i\in \mathcal{Z}^i}\in \mathcal{A}^i_{t+1:}}$, as follows:
\begin{align*}
\upsilon^i_{M'(a^{\neg i}_{t:}),\gamma, t}(y^i_t)
&= \max_{u^i_t \in \mathcal{U}^i} \mathbb{E}\{ r^i_{X_t,U_t} + \gamma \max_{a^{i,z_{t+1}^i}_{t+1:} \in \mathcal{A}^i_{t+1:}}~ \upsilon^{i,a^{i,z_{t+1}^i}_{t+1:}}_{M'(a^{\neg i}_{t+1:}),\gamma,t+1}(\rho^i_{M'}(y^i_t,u^i_t,z^i_{t+1})) | y^i_t, u^i_t,a^{\neg i}_t \}.
\end{align*}
By Definition \ref{def:state:value:fct:pckmg}, we know that the following holds:
\begin{align*}
\upsilon^{i}_{M'(a^{\neg i}_{t+1:}),\gamma,t+1}(\rho^i_{M'}(y^i_t,u^i_t,z^i_{t+1}))& \doteq \max_{a^{i,z_{t+1}^i}_{t+1:} \in \mathcal{A}^i_{t+1:}}~ \upsilon^{i,a^{i,z_{t+1}^i}_{t+1:}}_{M'(a^{\neg i}_{t+1:}),\gamma,t+1}(\rho^i_{M'}(y^i_t,u^i_t,z^i_{t+1})),
\end{align*}
which leads us to the following expression:
\begin{align*}
\upsilon^i_{M'(a^{\neg i}_{t:}),\gamma, t}(y^i_t) &= \max_{u^i_t \in \mathcal{U}^i} \mathbb{E}\{ r^i_{X_t,U_t} + \gamma \upsilon^i_{M'(a^{\neg i}_{t+1:}),\gamma,t+1}(\rho^i_{M'}(y^i_t,u^i_t,Z^i_{t+1})) | y^i_t, u^i_t,a^{\neg i}_t \}.
\end{align*}
By the application of Definition \ref{def:private:plantime:slave:game}, we know that $r^i_{M'}(y_t^i,u^i_t) \doteq \mathbb{E}\{r^i_{X_t,U_t}|y_t^i, u_t^i,a^{\neg i}_t\} $, which results in the following expression:
\begin{align*}
\upsilon^i_{M'(a^{\neg i}_{t:}),\gamma, t}(y^i_t) &= \max_{u^i_t \in U^i} \left\{ r^i_{M'}(y_t^i,u^i_t)+\gamma\mathbb{E}\{ \upsilon^i_{M'(a^{\neg i}_{t+1:}),\gamma,t+1}(\rho^i_{M'}(y^i_t,u^i_t,Z^i_{t+1})) | y^i_t, u^i_t,a^{\neg i}_t \}\right\}.
\end{align*}
The expansion over the private observations of the expectation yields the following expression:
\begin{align*}
\upsilon^i_{M'(a^{\neg i}_{t:}),\gamma, t}(y^i_t) &= \max_{u^i_t\in \mathcal{U}^i}~q^{i}_{M'(a^{\neg i}_{t:}),\gamma, t}(y_t^i,u^i_t),\\
q^{i}_{M'(a^{\neg i}_{t:}),\gamma, t}(y_t^i,u^i_t) &\doteq r^i_{M'}(y_t^i,u^i_t) + \gamma \sum_{z^i_{t+1}\in \mathcal{Z}^i} \omega^i_{M'}(z^i_{t+1}|y_t^i,u^i_t)\cdot \upsilon^{i}_{M'(a^{\neg i}_{t+1:}),\gamma, t+1}(\rho^i_{M'}(y_t^i,u^i_t,z^i_{t+1})).
\end{align*}
Which ends the proof.
\end{proof}

The following proposition explicitly connects the games $M(a^{\neg i}_{0:})$ and $M'(a^{\neg i}_{0:})$, such that solving either game solves the other.

\begin{restatable}{pro}{propequivalentprivateckmg}
\label{prop:equivalent:private:ckmg}
Let $M(a^{\neg i}_{0:})$ be a slave game and $M'(a^{\neg i}_{0:})$ be the corresponding private plan-time Markov game. If we let $a_{0:}^i$ be the best-response policy of $M'(a^{\neg i}_{0:})$, then it is also the best-response policy of $M(a^{\neg i}_{0:})$.
\end{restatable}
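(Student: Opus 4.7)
The plan is to establish that the optimal value functions of $M(a^{\neg i}_{0:})$ and $M'(a^{\neg i}_{0:})$ coincide along the one-to-one correspondence $\varphi^i_{M'}\colon y^i_t \mapsto o^i_t$, and then to conclude that any best-response policy in one game is also a best-response policy in the other. Both $\upsilon^i_{M(a^{\neg i}_{0:}),\gamma,t}$ and $\upsilon^i_{M'(a^{\neg i}_{t:}),\gamma,t}$ satisfy Bellman optimality equations (the former by Lemma \ref{lem:bellman:history:pomdp}, the latter by the lemma immediately preceding this proposition), and both maximize over the same policy space $\mathcal{A}^i_{t:}$. So it suffices to line up the corresponding $q$-functions term by term.

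The key observation is that the components of $M'(a^{\neg i}_{0:})$ are, by Definition \ref{def:private:plantime:slave:game}, defined as conditional expectations and probabilities in the original slave game given $a^{\neg i}_{0:}$: the reward is $r^i_{M'}(y_t^i,u_t^i) = \mathbb{E}\{r^i_{X_t,U_t}|y_t^i, u_t^i, a^{\neg i}_t\}$, the observation model is $\omega^i_{M'}(z_{t+1}^i|y_t^i,u_t^i) = \Pr\{z^i_{t+1}|y_t^i, u_t^i, a^{\neg i}_t\}$, and the transition $\rho^i_{M'}$ is the deterministic rule appending $(a^{\neg i}_t, u^i_t, z^i_{t+1})$ to $y^i_t$. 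Since $y^i_t = (s_0, o^i_t, a^{\neg i}_{:t-1})$ and $a^{\neg i}_{0:}$ is fixed, conditioning on $y^i_t$ yields the same distribution over $X_t$ (and hence over immediate rewards and next observations) as conditioning on $o^i_t$ together with $a^{\neg i}_{0:}$ in the slave POMDP. This preliminary remark, verified by a direct Bayes-rule expansion akin to that used in Lemma \ref{lem:reward:prediction:soc}, is what makes the two recursions structurally identical.

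Then I would proceed by backward induction on $t$. At the boundary $t=\ell$, both value functions vanish. For the inductive step, assume that $\upsilon^i_{M(a^{\neg i}_{0:}),\gamma,t+1}(o^i_{t+1}) = \upsilon^i_{M'(a^{\neg i}_{t+1:}),\gamma,t+1}(y^i_{t+1})$ whenever $\varphi^i_{M'}(y^i_{t+1}) = o^i_{t+1}$. Comparing the Bellman equations for $q^i_{M(a^{\neg i}_{0:}),\gamma,t}(o^i_t,u^i_t)$ and $q^i_{M'(a^{\neg i}_{t:}),\gamma,t}(y^i_t,u^i_t)$, the immediate-reward terms match by the definition of $r^i_{M'}$, the expectation over $z^i_{t+1}$ matches by the definition of $\omega^i_{M'}$, and the continuation values inside the expectation match by the induction hypothesis once we observe that $\varphi^i_{M'}(\rho^i_{M'}(y^i_t,u^i_t,z^i_{t+1})) = \langle o^i_t,u^i_t,z^i_{t+1}\rangle$. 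Taking the max over $u^i_t\in \mathcal{U}^i$ on both sides then closes the induction.

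Finally, because both $\max$'s are taken over the same action set and the $q$-values are pointwise equal, the sets of maximizing actions coincide at every time step and every $y^i_t$; concatenating these across time, every best-response policy $a^i_{0:}$ of $M'(a^{\neg i}_{0:})$ is also a best-response policy of $M(a^{\neg i}_{0:})$, and $\upsilon^i_{M'(a^{\neg i}_{0:}),\gamma,0}(s_0) = \upsilon^i_{M(a^{\neg i}_{0:}),\gamma,0}(s_0)$ at the initial belief. The only delicate point is the probabilistic equivalence between conditioning on $y^i_t$ and conditioning on $(o^i_t, a^{\neg i}_{0:})$, but this follows from a routine factorization once we note that $y^i_t$ literally packages those two pieces of information together with the known initial belief $s_0$; I do not anticipate any further obstacle.
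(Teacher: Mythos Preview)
Your proposal is correct and follows essentially the same approach as the paper: both establish that the optimal action-value functions satisfy $q^i_{M(a^{\neg i}_{0:}),\gamma,t}(\varphi^i_{M'}(y^i_t),u^i_t)=q^i_{M'(a^{\neg i}_{t:}),\gamma,t}(y^i_t,u^i_t)$ by lining up the Bellman recursions via the definitions of $r^i_{M'}$, $\omega^i_{M'}$, and $\rho^i_{M'}$, then conclude that the greedy selections (hence best responses) coincide. The only cosmetic difference is that you frame the argument as an explicit backward induction on $t$, whereas the paper performs the same identification by direct algebraic manipulation of the expectation defining $q^i_{M(a^{\neg i}_{0:}),\gamma,t}$.
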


\begin{proof}
The proof consists in demonstrating that optimal state-value functions for $M'(a^{\neg i}_{t:})$ and $M(a^{\neg i}_{0:})$ achieve the same performance index on equivalent inputs, \ie for any time step $t$ and private plan-time history $y^i_t$, $\upsilon^{i}_{M'(a^{\neg i}_{t:}),\gamma, t}(y_t^i) = \upsilon^{i}_{M(a^{\neg i}_{0:}),\gamma, t}(\varphi^i_{M'}(y_t^i))$. Consequently, the best-response policy $a_{t:}^i$ solution of \citeauthor{bellman}'s optimality equations for $M'(a^{\neg i}_{t:})$ achieves the best performance index also for $M(a^{\neg i}_{0:})$. Thus, it is also a best-response policy for $M(a^{\neg i}_{0:})$. Starting with the definition of the optimal action-value function for $M(a^{\neg i}_{0:})$ at time step $t$ and private history $\varphi^i_{M'}(y_t^i)$, we have that:
\begin{align*}
q^{i}_{M(a^{\neg i}_{0:}),\gamma, t}(\varphi^i_{M'}(y_t^i),u^i_t) &\doteq \max_{a^i_{t+1:}\in \mathcal{A}^i_{t+1:}} \mathbb{E}\left\{\sum_{\tau=t}^{\ell-1} \gamma^{\tau-t} \cdot r^i_{X_\tau,U_\tau} | \varphi^i_{M'}(y_t^i),u^i_t, a^i_{t+1:},a^{\neg i}_{0:}\right\}.
\end{align*}
There is no loss in replacing $(\varphi^i_{M'}(y_t^i),a^{\neg i}_{:t-1})$ by $y_t^i$, consequently it follows that:
\begin{align*}
q^{i}_{M(a^{\neg i}_{0:}),\gamma, t}(\varphi^i_{M'}(y_t^i),u^i_t) &= \max_{a^i_{t+1:}\in \mathcal{A}^i_{t+1:}} \mathbb{E}\left\{\sum_{\tau=t}^{\ell-1} \gamma^{\tau-t} \cdot r^i_{X_\tau,U_\tau} | y_t^i, u^i_t, a^{\neg i}_t, a_{t+1:}\right\}.
\end{align*}
The following splits the immediate and the future rewards:
\begin{align*}
q^{i}_{M(a^{\neg i}_{0:}),\gamma, t}(\varphi^i_{M'}(y_t^i),u^i_t)&= \max_{a^i_{t+1:}\in \mathcal{A}^i_{t+1:}}\mathbb{E}\left\{r^i_{X_t,U_t}+ \gamma \sum_{\tau=t+1}^{\ell-1} \gamma^{\tau-t-1} \cdot r^i_{X_\tau,U_\tau}| y_t^i, u^i_t, a^{\neg i}_t, a_{t+1:}\right\}.
\end{align*}
Re-arranging terms leads to the following expression:
\begin{align*}
q^{i}_{M(a^{\neg i}_{0:}),\gamma, t}(\varphi^i_{M'}(y_t^i),u^i_t) &= \max_{a^i_{t+1:}\in \mathcal{A}^i_{t+1:}}\mathbb{E}\left\{r^i_{X_t,U_t}|y_t^i,u^i_t,a^{\neg i}_t\right\} + \gamma \mathbb{E}\left\{ \sum_{\tau=t+1}^{\ell-1} \gamma^{\tau-t-1} \cdot r^i_{X_\tau,U_\tau}| y_t^i, u^i_t, a^{\neg i}_t, a_{t+1:}\right\}.
\end{align*}
By the application of Definition \ref{def:private:plantime:slave:game}, we know that $r^i_{M'}(y_t^i,u^i_t) \doteq \mathbb{E}\{r^i_{X_t,U_t}|y_t^i,u^i_t,a^{\neg i}_t\} $, which results in the following expression:
\begin{align*}
q^{i}_{M(a^{\neg i}_{0:}),\gamma, t}(\varphi^i_{M'}(y_t^i),u^i_t) &= r^i_{M'}(y_t^i,u^i_t) + \gamma \max_{a^i_{t+1:}\in \mathcal{A}^i_{t+1:}} \mathbb{E}\left\{\sum_{\tau=t+1}^{\ell-1} \gamma^{\tau-t-1} \cdot r^i_{X_\tau,U_\tau}| y_t^i,(u^i_t, a^i_{t+1:}, a^{\neg i}_{t+1:})\right\}.
\end{align*}
The expansion of all private observations of agent $i$ produces the following expression:
\begin{align*}
q^{i}_{M(a^{\neg i}_{0:}),\gamma, t}(\varphi^i_{M'}(y_t^i),u^i_t) &= r^i_{M'}(y_t^i, u_t^i) + \gamma \mathbb{E}_{z^i_{t+1}\sim\Pr\{\cdot|y_t^i,u^i_t,a^{\neg i}_{0:}\}}\{\upsilon^{i}_{M'(a^{\neg i}_{t+1:}),\gamma, t+1}(\rho^i_{M'}( y_t^i, u_t^i, z_{t+1}^i ))\}
\end{align*}
By the application of Definition \ref{def:private:plantime:slave:game}, we know that $\omega^i_{M'}(z_{t+1}^i|y_t^i, u_t^i) \doteq \Pr\{z^i_{t+1}|y_t^i,u^i_t,a^{\neg i}_t\}$, which results in the following expression:
\begin{align*}
q^{i}_{M(a^{\neg i}_{0:}),\gamma, t}(\varphi^i_{M'}(y_t^i),u^i_t) &=\textstyle r^i_{M'}(y_t^i, u_t^i) + \gamma \sum_{z_{t+1}^i\in \mathcal{Z}^i} \omega^i_{M'}(z_{t+1}^i|y_t^i, u_t^i) \cdot \upsilon^{i}_{M'(a^{\neg i}_{t+1:}),\gamma, t+1}(\rho^i_{M'}( y_t^i, u_t^i, z_{t+1}^i )).
\end{align*}
By the application of the definition of $q^{i}_{M'(a^{\neg i}_{t:}),\gamma, t}(y_t^i, u_t^i)$, it follows that:
\begin{align*}
q^{i}_{M(a^{\neg i}_{0:}),\gamma, t}(\varphi^i_{M'}(y_t^i), u_t^i)&= q^{i}_{M'(a^{\neg i}_{t:}),\gamma, t}(y_t^i, u_t^i).
\end{align*}
The greedy selection of the best-response policy $a^i_{0:}$ \wrt joint policy $a^{\neg i}_{0:}$ of the other agents is given as follows: at time step $t$, and private history $\varphi^i_{M'}(y_t^i)$,
\begin{align*}
a_t^i(\varphi^i_{M'}(y_t^i)) &\in\textstyle \argmax_{u_t^i\in \mathcal{U}^i}~q^{i}_{M(a^{\neg i}_{0:}),\gamma, t}(\varphi^i_{M'}(y_t^i),u^i_t),\\
&\in\textstyle \argmax_{u_t^i\in \mathcal{U}^i}~q^{i}_{M'(a^{\neg i}_{t:}),\gamma, t}(y_t^i, u_t^i).
\end{align*}
Which ends the proof.
\end{proof}

Proposition \ref{prop:equivalent:private:ckmg} demonstrates that a solution for $M'(a^{\neg i}_{0:})$ can also serve as a solution for the original slave game $M(a^{\neg i}_{0:})$. Nonetheless, like other historical data, private plan-time histories grow in size with each time step, making them less feasible to use. Hence, we suggest a substitute statistic, private occupancy states, to tackle this problem. Nevertheless, before we delve deeper into this subject, it is essential to establish the definition of private occupancy states.

\begin{definition}
\label{def:private:occupancy:state}
Each agent $i$ has a private occupancy state at a specific time step $t$, represented as $s^i_t$ (also denoted as $s^{i,o_t^i}_t$ when needed to make the dependence from a given private history $o_t^i$ more explicit). This state is the posterior probability distribution over hidden states and joint histories based on their private plan-time history $y^i_t$, expressed as $\Pr\{X_t,O_t|y^i_t\}$. The initial condition is defined as $s^i_0 = s_0$.
\end{definition}

The concept of hierarchical information-sharing in two-agent games has been studied extensively, with research indicating that the private occupancy state can be expressed in a more concise manner \citep{stochastic,cooperative,onesided,optimally}. If agent 2 has unrestricted access to agent 1's actions and observations at every time step $t$ without incurring any cost, denoted by $(u^1_{t-1},z^1_t) \sqsubseteq z_t^2$, then the private occupancy state at that specific time step $t$ can be represented as a posterior probability distribution over the hidden states, commonly referred to as the belief state. This is denoted as $\Pr\{ x_t|y^2_t\}$, where $x$ belongs to the set $\mathcal{X}$. Furthermore, if agent 2 can observe the underlying state of the game, \ie one-sidedness, then the private occupancy state at a specific time step $t$ is equivalent to that underlying state. The conciseness of private occupancy states is crucial in identifying strong uniform continuity properties for the optimal state-value functions.

The optimal state- and action-value functions of a slave game can be better understood through private occupancy states. Revealing the uniform continuity property of those value functions allows for knowledge transfer between different private occupancy states. While private occupancy states can be more concise for a single slave game, it is important to consider knowledge transfer between multiple slave games. Transfer of knowledge from one slave game to another one is important because to solve a master game, we need to solve multiple slave games. Defining private occupancy states as posterior distributions over hidden states and histories of other agents $\neg i$ or as private histories of agent $i$, as shown by \citet{structure}, is adequate for solving a single slave game optimally. However, using a more concise statistic for a single slave game would make knowledge transfer between slave games more difficult, especially in exhibiting the relationship between the optimal state-value function of the master game and the corresponding slave games. This insight explains why previous attempts to establish convexity properties of the optimal state-value functions of the master game failed. The following section explores how private occupancy states are sufficient for optimally solving any slave game.

\section{Connecting Slave and Master Games}
\label{sec:connections}
This section will discuss an alternative method to solve a slave game by utilizing private occupancy states instead of private plan-time histories. We will analyze if using private occupancy states is enough to solve the reformulated slave games optimally. Furthermore, we will establish a correlation between private occupancy states and occupancy states to merge the state-value functions of master and slave games. We will first define the Markov decision process that private occupancy states induce.

\begin{figure}[!ht]
\centering
\begin{tikzpicture}[->,-latex,auto,node distance=3.75cm,semithick, square/.style={regular polygon,regular polygon sides=4}]
\tikzstyle{every state}=[draw=black,text=black,inner color= white,outer color= white,draw= black,text=black, drop shadow]
\tikzstyle{place}=[thick,draw=sthlmBlue,fill=blue!20,minimum size=12mm, opacity=.5]
\tikzstyle{red place}=[square,place,draw=sthlmRed,fill=sthlmLightRed,minimum size=18mm]
\tikzstyle{green place}=[diamond,place,draw=sthlmGreen,fill=sthlmLightGreen,minimum size=15mm]
\node[fill=white, scale=.75] (T) at (-3.5,-3.75) {};
\node[fill=white, scale=.75] (T0) at (0,-3.75) {$0$};
\node[fill=white, scale=.75] (T1) [right of=T0,node distance=3.75cm, fill=white] {$t$};
\node[fill=white, scale=.75] (T2) [right of=T1,node distance=3.75cm] {$t+1$};
\node[fill=white, scale=.75] (T3) [right of=T2,fill=white, node distance=3.75cm] {$t+2$};
\node[fill=white, scale=.75] (T4) [right of=T3,fill=white, node distance=2.5cm] {$\ldots$};
\draw[->,-latex,dashed,very thin, color=black,anchor=mid] (T3) -- (T4);
\draw[->,-latex, very thin, color=black, anchor=mid] (T2) -- (T3);
\draw[->,-latex, very thin, color=black, anchor=mid] (T1) -- (T2);
\draw[->,-latex,dashed,very thin, color=black,anchor=mid] (T0) -- (T1);
\draw[->,-latex,rounded corners,very thin, color=black,anchor=mid] (T) node[fill=white, scale=.75]{Time} -- (T0);
\node[state,place, scale=.75] (S0) {$s^i_0$};
\node[state,place, scale=.75] (S1) [right of=S0] {$s^i_t$};
\node[state,place, scale=.75] (S2) [ right of=S1] {$s^i_{t+1}$};
\node[state,place, scale=.75] (S3) [ right of=S2] {$s^i_{t+2}$};
\node[, scale=.75] (S4) [ right of=S3,node distance=2.5cm] {$\cdots$};
\path[very thin] (S0) edge[dashed] node[midway,text=black,draw=none,scale=.7, above=-5pt] {} (S1)
(S1) edge node[midway,text=black,draw=none,scale=.7, above=-5pt] {} (S2)
(S2) edge node[midway,text=black,draw=none,scale=.7, above=-5pt] {} (S3)
(S3) edge[dashed] (S4);
\node[, scale=.75] (A_SLACK) [below of=S0,node distance=2.6cm] {};
\node[state,red place, scale=.65] (A0) [below right of=A_SLACK,node distance=2cm] {$u^i_{0}$};
\node[state,red place, scale=.65] (A1) [right of=A0,node distance=4.25cm] {$u^i_t$};
\node[state,red place, scale=.65] (A2) [right of=A1,node distance=4.25cm] {$u^i_{t+1}$};
\node[, scale=.75] (A3) [right of=A2,node distance=5.5cm] {};
\path[very thin] (A0) edge [out=90, in=-155, dashed] node[scale=.75] {$z^i_t$} (S1)
(A1) edge [out=90, in=-155] node[scale=.75] {$z^i_{t+1}$} (S2)
(A2) edge [out=90, in=-155] node[scale=.75] {$z^i_{t+2}$} (S3);
\node[state,green place, scale=.75] (O0) [below of=S0,node distance=2cm] {$r^i_0$};
\node[state,green place, scale=.75] (O1) [below of=S1,node distance=2cm] {$r^i_t$};
\node[state,green place, scale=.75] (O2) [below of=S2,node distance= 2cm] {$r^i_{t+1}$};
\node[state,green place, scale=.75] (O3) [below of=S3,node distance= 2cm] {$r^i_{t+2}$};
\node[, scale=.75] (O4) [ right of=O3,node distance=2.5cm] {$\cdots$};
\node[inner sep=0pt] (prof) at (-3.35,-1.5) {\includegraphics[width=.1\textwidth]{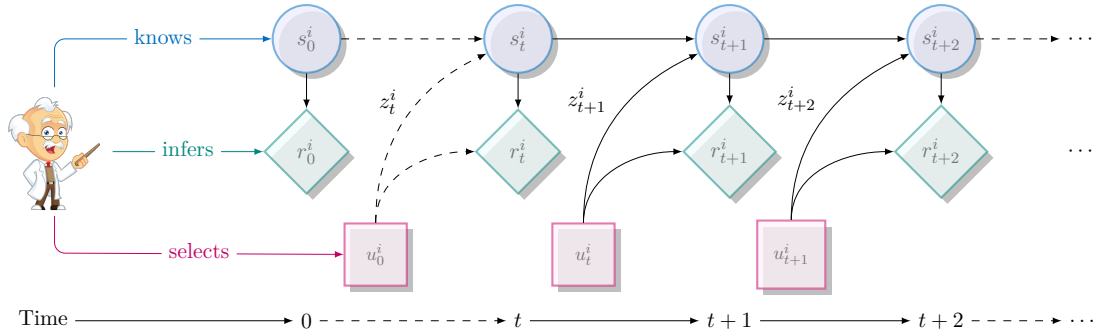}};
\draw[->,-latex,rounded corners,very thin, color=sthlmBlue,anchor=mid] (prof) -- (-3.35,0) --node[draw=none, fill=white, scale=.75] {knows} (S0);
\draw[->,-latex,rounded corners,very thin, color=sthlmGreen,anchor=mid] (prof) --node[draw=none, fill=white, scale=.75] {infers} (O0);
\draw[->,-latex,rounded corners,very thin, color=sthlmRed,anchor=mid] (prof) -- (-3.35,-2.85) --node[draw=none, fill=white, scale=.75] {selects} (A0);
\path[very thin] (S0) edge node {} (O0)
(S1) edge node {} (O1)
(A0) edge[out=90, in=-180, dashed] node {} (O1)
(S2) edge node {} (O2)
(A1) edge[out=90, in=-180] node {} (O2)
(S3) edge node {} (O3)
(A2) edge[out=90, in=-180] node {} (O3);
\end{tikzpicture}
\caption{Influence diagram for a private occupancy-state Markov game.}
\label{fig:ck:game:nf:privateomg}
\end{figure}

\begin{definition}
\label{def:slave:problem:as:mdp}
Let $M$ be a master game, and let $M(a^{\neg i}_{0:})$ (respectively $M'(a^{\neg i}_{0:})$) be its corresponding slave game. A private occupancy-state Markov decision process is given by a tuple $M''(a^{\neg i}_{0:}) \doteq (\mathcal{S}^i,\mathcal{U}^i, \mathcal{Z}^i, \omega^i_{M''}, \rho^i_{M''}, p^i_{M''}, r^i_{M''})$. Here, $\mathcal{S}^i$ is the finite private occupancy-state space, $\mathcal{U}^i$ represents the finite action space, and $\mathcal{Z}^i$ is the finite observation space. Mappings $\omega^i_{M''}$, $\rho^i_{M''}$, $p^i_{M''}$, and $r^i_{M''}$ are formally described as follows.
\begin{itemize}
\item The observation function $\omega^i_{M''}\colon \mathcal{S}^i \times \mathcal{U}^i \times \mathcal{Z}^i \to [0,1]$ gives the probability of private observation $z^i_{t+1}$ upon taking action $u^i_t$ starting in private occupancy state $s^{i,o^i_t}_t$, \ie
\begin{align*}
\omega^i_{M''}(z^i_{t+1}|s^{i,o^i_t}_t,u^i_t) &\doteq\sum_{x_t,x_{t+1}\in \mathcal{X}}\sum_{u_t^{\neg i}\in \mathcal{U}^{\neg i}}\sum_{z_{t+1}^{\neg i}\in \mathcal{Z}^{\neg i}} \sum_{o^{\neg i}_t\in \mathcal{O}^{\neg i}_t} p^{u_tz_{t+1}}_{x_tx_{t+1}}\cdot a^{\neg i}_t(u_t^{\neg i}|o_t^{\neg i} ) \cdot s^{i,o^i_t}_t(x_t, o_t).
\end{align*}
\item The transition rule $\rho^i_{M''}\colon \mathcal{S}^i\times \mathcal{U}^i\times \mathcal{Z}^i \to \mathcal{S}^i$ gives the next private occupancy state upon receiving private observation $z^i_{t+1}$ after taking private action $u^i_t$ starting in private occupancy state $s^i_t$ , \ie for any arbitrary hidden state $x_{t+1}$ and joint history $(o_t,u_t,z^i_{t+1}, z^{\neg i}_{t+1})$,
\begin{align*}
s^i_{t+1}(x_{t+1}, (o_t,u_t,z^i_{t+1}, z^{\neg i}_{t+1})) &~\propto~ \sum_{x_t\in \mathcal{X}} p^{u_t\langle z^i_{t+1}, z^{\neg i}_{t+1} \rangle}_{x_tx_{t+1}} \cdot a^{\neg i}_t( u_t^{\neg i} | o_t^{\neg i} )\cdot s^i_t(x_t, o_t).
\end{align*}
\item The transition probability function $p^i_{M''}\colon \mathcal{S}^i\times \mathcal{U}^i \times \mathcal{S}^i\to [0,1]$ gives the probability of private occupancy state $s^i_{t+1}$ upon taking action $u_t^i$ in private occupancy state $s^i_t$ , \ie
\begin{align*}
p^i_{M''}(s^i_t, u^i_t, s^i_{t+1}) &\doteq \sum_{z_{t+1}^i\in \mathcal{Z}^i} \omega^i_{M''}(z^i_{t+1}|s^i_t,u^i_t) \cdot \delta_{\rho^i_{M''}(s^i_t, u^i_t, z_{t+1}^i)}( s^i_{t+1} ).
\end{align*}
\item The reward model $r^i_{M''}\colon \mathcal{S}^i\times \mathcal{U}^i \to \mathbb{R}$ gives the immediate expected reward upon taking private action $u_t^i$ in private occupancy state $s^i_t$ , \ie
\begin{align*}
r^i_{M''}(s^i_t,u^i_t) &\doteq \sum_{x_t\in \mathcal{X}}\sum_{o_t\in \mathcal{O}_t} s^i_t(x_t,o_t) \sum_{u^{\neg i}_t\in \mathcal{U}^{\neg i}} a^{\neg i}_t(u_t^{\neg i}|o^{\neg i}_t)\cdot r^i_{x_t,u_t}.
\end{align*}
\end{itemize}
\end{definition}

The Markov decision process associated with private occupancy states is called the private occupancy-state Markov decision process, as depicted in Figure \ref{fig:ck:game:nf:privateomg}. Using occupancy states for master game reformulations raises questions similar to those regarding the use of private occupancy states in the context of slave games. \citeauthor{bellman}'s principle of optimality allows us to solve $M''(a^{\neg i}_{0:})$ by solving recursively a sequence of smaller Markov decision processes $\langle M''(a^{\neg i}_{0:}), M''(a^{\neg i}_{1:}), \ldots, M''(a^{\neg i}_{\ell-1:})\rangle$. The reader may wonder if private occupancy states are sufficient statistics for optimally solving slave games through private occupancy-state Markov decision processes. One way to prove this claim is to demonstrate that private occupancy states are sufficient statistics of private plan-time histories for optimally solving $M''(a^{\neg i}_{0:})$, as private plan-time histories are proven to preserve the ability to solve the original slave games optimally, as outlined in Proposition \ref{prop:equivalent:private:ckmg}.

\subsection{Sufficiency of Private Occupancy States}
This subsection illustrates how private occupancy states can act as sufficient statistics for the complete data accessible to the trusted third party who is reasoning on behalf of agent $i$ in any given slave game $M(a^{\neg i}_{0:})$ (or $M'(a^{\neg i}_{0:})$). These statistics are sufficient for optimally solving any slave game. Using this as a basis, we prove that the optimal action- and state-value functions of a slave game $M(a^{\neg i}_{0:})$ (or $M'(a^{\neg i}_{0:})$) depend on private plan-time histories only through private occupancy states.

\begin{restatable}{thm}{thmprivatesufficiencyoccupancystate}
\label{thm:private:sufficiency:occupancy:state}
For each agent $i$, the private occupancy state at time step $t$, defined as $s^i_t \doteq \Pr\{X_t,O_t|y^i_t\}$, is a sufficient statistic of the private plan-time history $y_t^i$ for optimally solving $M'(a^{\neg i}_{0:})$, \ie $$ \textstyle \mathbb{E}\{\sum_{\tau=t}^{\ell-1} \gamma^{\tau-t}\cdot r^i_{M''}(S^i_\tau,U^i_\tau) | s^i_t, a^i_{t:}, a^{\neg i}_{0:} \} = \mathbb{E}\{\sum_{\tau=t}^{\ell-1} \gamma^{\tau-t}\cdot r^i_{M'}(Y^i_\tau,U^i_\tau) | y^i_t, a^i_{t:}, a^{\neg i}_{0:} \} .$$ Also, an optimal solution for $M''(a^{\neg i}_{0:})$ is also a solution for $M(a^{\neg i}_{0:})$ (respectively, $M'(a^{\neg i}_{0:})$).
\end{restatable}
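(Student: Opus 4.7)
The plan is to mirror the three-step argument that was used in the master-game case (Lemmas \ref{lem:reward:prediction:soc}, \ref{lem:observation:prediction:soc}, and \ref{lem:soc:prediction:soc}) to show that the private occupancy state $s^i_t$ is a sufficient statistic of the private plan-time history $y^i_t$ for (i) the immediate expected reward, (ii) the distribution over the next private observation, and (iii) the update to the next private occupancy state. Once these three statistics match those of the private occupancy-state MDP $M''(a^{\neg i}_{0:})$, an induction on the remaining horizon will yield the equality of the two expected cumulative rewards, and Proposition \ref{prop:equivalent:private:ckmg} will bridge from $M''(a^{\neg i}_{0:})$ back to the original slave game $M(a^{\neg i}_{0:})$.

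First I would prove $r^i_{M'}(y^i_t,u^i_t)=r^i_{M''}(s^i_t,u^i_t)$ by expanding $\mathbb{E}\{r^i_{X_t,U_t}\mid y^i_t,u^i_t,a^{\neg i}_t\}$ over hidden states $x_t$, the other agents' histories $o^{\neg i}_t$, and their joint actions $u^{\neg i}_t$, then factorizing the joint probability as $\Pr\{x_t,o^{\neg i}_t\mid y^i_t\}\cdot \Pr\{u^{\neg i}_t\mid a^{\neg i}_t,o^{\neg i}_t\}$. The first factor is $s^i_t(x_t,\langle o^i_t,o^{\neg i}_t\rangle)/1$ by Definition \ref{def:private:occupancy:state} (written in the form consistent with Definition \ref{def:slave:problem:as:mdp}), and the second is $a^{\neg i}_t(u^{\neg i}_t\mid o^{\neg i}_t)$, so the sum collapses exactly to $r^i_{M''}(s^i_t,u^i_t)$. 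The observation-prediction step proceeds in the same vein: expand $\Pr\{z^i_{t+1}\mid y^i_t,u^i_t,a^{\neg i}_t\}$ over $(x_t,x_{t+1},o^{\neg i}_t,u^{\neg i}_t,z^{\neg i}_{t+1})$, identify the game dynamics $p^{u_t,\langle z^i_{t+1},z^{\neg i}_{t+1}\rangle}_{x_t,x_{t+1}}$, and recognize the remaining factors as $a^{\neg i}_t(u^{\neg i}_t\mid o^{\neg i}_t)\cdot s^i_t(x_t,o_t)$, matching $\omega^i_{M''}$ in Definition \ref{def:slave:problem:as:mdp}. The occupancy-update step is an application of Bayes' rule to $\Pr\{x_{t+1},o_{t+1}\mid y^i_t,u^i_t,z^i_{t+1},a^{\neg i}_t\}$, with the denominator already identified above; the numerator factorizes into the same product, yielding exactly $\rho^i_{M''}(s^i_t,u^i_t,z^i_{t+1})$.

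With these three equalities in hand, I would close the proof by a backward induction on $t$, analogous to the one used for Lemma \ref{lem:linearity:joint:policy} and Proposition \ref{pro:sufficiency:occupancy:state}. The base case $t=\ell$ is trivial; for the inductive step, split the cumulative reward into the immediate term and the tail, use step (i) on the immediate term, and use steps (ii)--(iii) together with the induction hypothesis on the tail, so that
\begin{align*}
\mathbb{E}\Bigl\{\textstyle\sum_{\tau=t}^{\ell-1}\gamma^{\tau-t} r^i_{M'}(Y^i_\tau,U^i_\tau)\,\big|\, y^i_t,a^i_{t:},a^{\neg i}_{0:}\Bigr\}
&=\mathbb{E}\Bigl\{\textstyle\sum_{\tau=t}^{\ell-1}\gamma^{\tau-t} r^i_{M''}(S^i_\tau,U^i_\tau)\,\big|\, s^i_t,a^i_{t:},a^{\neg i}_{0:}\Bigr\}.
\end{align*}
Taking the $\max$ over $a^i_{t:}$ on both sides then gives $\upsilon^i_{M'(a^{\neg i}_{t:}),\gamma,t}(y^i_t)=\upsilon^i_{M''(a^{\neg i}_{t:}),\gamma,t}(s^i_t)$, and a greedy policy with respect to the Bellman optimality equations of $M''(a^{\neg i}_{0:})$ is therefore optimal in $M'(a^{\neg i}_{0:})$; composing with Proposition \ref{prop:equivalent:private:ckmg} transfers optimality to $M(a^{\neg i}_{0:})$.

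The main obstacle I anticipate is book-keeping of the conditioning events across the three reformulations: the joint policy $a^{\neg i}_{0:}$ is "baked into" the transition rule of $M''(a^{\neg i}_{0:})$ but appears only implicitly in $s^i_t$, so each expansion must carefully separate what $y^i_t$ encodes (namely $o^i_t$ and $a^{\neg i}_{:t-1}$, but not $a^{\neg i}_{t:}$) from what the current decision rule $a^{\neg i}_t$ contributes. The technical heart is therefore the factorization $\Pr\{x_t,o^{\neg i}_t\mid y^i_t\}=\Pr\{x_t,o^{\neg i}_t\mid s_0,o^i_t,a^{\neg i}_{:t-1}\}$, which is precisely the defining identity of the private occupancy state and justifies absorbing the suffix $a^{\neg i}_{:t-1}$ and the initial belief $s_0$ into the single statistic $s^i_t$.
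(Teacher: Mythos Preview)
Your proposal is correct and follows essentially the same approach as the paper: the paper likewise reduces the theorem to three lemmas establishing that $s^i_t$ suffices for (i) predicting the next private observation, (ii) computing the next private occupancy state via Bayes' rule, and (iii) predicting the immediate expected reward, then concludes by chaining the resulting Bellman-optimality equations and invoking Proposition~\ref{prop:equivalent:private:ckmg}. The only cosmetic differences are the ordering of the three sufficiency lemmas and that the paper closes with an explicit $\argmax$ chain rather than your backward induction on the cumulative reward, but the substance is identical.
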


To prove this theorem, we must establish three properties. Firstly, we need to prove that private occupancy states are enough to accurately predict the probability of an observation being received by agent $i$ after taking action $u^i_t$, as stated in Lemma \ref{lem:sufficiency:occupancy:state:observation}. Secondly, we must show that the private occupancy states describe a Markov process, meaning that the current private occupancy state is enough to estimate the next one, as stated in Lemma \ref{lem:sufficiency:private:occupancy:state:markov}. Finally, we must prove that the private occupancy states are sufficient to accurately estimate the immediate reward when taking action $u^i_t$, as stated in Lemma \ref{lem:sufficiency:occupancy:state:reward}. In the following, we will establish these three fundamental properties to demonstrate the sufficiency of private occupancy states of private plan-time histories for optimally solving slave games.

\begin{restatable}{lem}{lemsufficiencyoccupancystateobservation}
\label{lem:sufficiency:occupancy:state:observation}
For each agent $i$, the private occupancy state at time step $t$, defined as $s^i_t \doteq \Pr\{X_t,O_t|y^i_t\}$, is a sufficient statistic of the private plan-time history $y_t^i$ for prediction of the next private observation $z^i_{t+1}$ following the execution of the private action $u^i_t$---\ie $\omega^i_{M'}(z^i_{t+1} | y^i_t,u^i_t) = \omega^i_{M''}(z^i_{t+1} | s^i_t,u^i_t)$.
\end{restatable}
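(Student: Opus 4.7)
The plan is to mirror the proof of Lemma \ref{lem:observation:prediction:soc}, but from the trusted third party's point of view when reasoning on behalf of a single agent $i$. I would start from the definition provided in Definition \ref{def:private:plantime:slave:game}, namely $\omega^i_{M'}(z^i_{t+1} \mid y^i_t, u^i_t) \doteq \Pr\{z^i_{t+1} \mid y^i_t, u^i_t, a^{\neg i}_t\}$, and marginalise over all quantities that are unobserved from agent $i$'s plan-time viewpoint at time $t+1$: the hidden states $x_t$ and $x_{t+1}$, the other agents' joint history $o^{\neg i}_t$, their joint action $u^{\neg i}_t$, and their joint private observation $z^{\neg i}_{t+1}$. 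Since $o^i_t$ is determined by $y^i_t$, every joint history $o_t$ appearing in the marginalisation is the pair $\langle o^i_t, o^{\neg i}_t \rangle$.

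Next I would factor the joint probability under the sum using the chain rule into four pieces: (i) the one-step environment dynamics $\Pr\{x_{t+1}, z^i_{t+1}, z^{\neg i}_{t+1} \mid x_t, u_t\} = p^{u_t, \langle z^i_{t+1}, z^{\neg i}_{t+1}\rangle}_{x_t, x_{t+1}}$, which depends only on the current hidden state and joint action; (ii) the probability $\Pr\{u^{\neg i}_t \mid a^{\neg i}_t, o^{\neg i}_t\} = a^{\neg i}_t(u^{\neg i}_t \mid o^{\neg i}_t)$ of the other agents' action given their decision rule and private history; (iii) the probability $\Pr\{x_t \mid y^i_t, o^{\neg i}_t\}$; and (iv) the probability $\Pr\{o^{\neg i}_t \mid y^i_t\}$. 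The terms involving $u^i_t$ and $a^{\neg i}_{:t-1}$ drop out of the conditioning in (i)--(iii) either by the Markov property of the dynamics or because $u^i_t$ does not influence events at time $t$ or before. The product of (iii) and (iv) is exactly $\Pr\{x_t, o^{\neg i}_t \mid y^i_t\} = s^i_t(x_t, \langle o^i_t, o^{\neg i}_t\rangle) = s^i_t(x_t, o_t)$ by Definition \ref{def:private:occupancy:state}.

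After this rearrangement the expression reads
\begin{align*}
\omega^i_{M'}(z^i_{t+1} \mid y^i_t, u^i_t) &= \sum_{x_t, x_{t+1} \in \mathcal{X}} \sum_{o^{\neg i}_t \in \mathcal{O}^{\neg i}_t} \sum_{u^{\neg i}_t \in \mathcal{U}^{\neg i}} \sum_{z^{\neg i}_{t+1} \in \mathcal{Z}^{\neg i}} p^{u_t, z_{t+1}}_{x_t, x_{t+1}} \cdot a^{\neg i}_t(u^{\neg i}_t \mid o^{\neg i}_t) \cdot s^i_t(x_t, o_t),
\end{align*}
which matches the right-hand side of the definition of $\omega^i_{M''}(z^i_{t+1} \mid s^i_t, u^i_t)$ in Definition \ref{def:slave:problem:as:mdp} term by term, concluding the proof.

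The main obstacle I anticipate is purely bookkeeping: being consistent about what is conditioned on and what is summed out, and carefully using the fact that $o^i_t$ is a deterministic projection of $y^i_t$ so that $s^i_t(x_t, o_t)$ is well-defined as the posterior mass of $(x_t, o^{\neg i}_t)$ only, with $o^i_t$ fixed. Once the chain-rule decomposition is done in an order compatible with the causal structure of the game (so that each conditional probability simplifies to a dynamics, policy, or occupancy-state term), the identification with $\omega^i_{M''}$ is immediate.
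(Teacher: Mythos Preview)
Your proposal is correct and follows essentially the same approach as the paper: both start from the definition $\omega^i_{M'}(z^i_{t+1}\mid y^i_t,u^i_t)=\Pr\{z^i_{t+1}\mid y^i_t,a^{\neg i}_t,u^i_t\}$, marginalise over $x_t,x_{t+1},o^{\neg i}_t,u^{\neg i}_t,z^{\neg i}_{t+1}$, factor via the chain rule into the dynamics term $p^{u_tz_{t+1}}_{x_tx_{t+1}}$, the policy term $a^{\neg i}_t(u^{\neg i}_t\mid o^{\neg i}_t)$, and the private occupancy state $s^i_t(x_t,o_t)$, and then identify the result with the definition of $\omega^i_{M''}$. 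The only cosmetic difference is that you split $\Pr\{x_t,o^{\neg i}_t\mid y^i_t\}$ into two factors before recombining them, while the paper keeps it as a single conditional and identifies it with $s^i_t$ directly.
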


\begin{proof}
For each agent $i$, the probability of private observation $z^i_{t+1}$ at the end of time step $t+1$ can be rewritten as the sum over all possible hidden states $x_t$, and joint histories $o_t^{\neg i}$, and private actions $u_t^{\neg i}$ at the end of time step $t$, as well as all joint observations $z_{t+1}^{\neg i}$, and hidden states $x_{t+1}$ at the end of time step $t+1$, \ie
\begin{align*}
\omega^i_{M'}( z^i_{t+1}|y^i_t,u^i_t )&\doteq \Pr\{ z^i_{t+1}|y^i_t, a^{\neg i}_t, u^i_t \} \\
&=
\sum_{x_t\in \mathcal{X}}\sum_{o_t^{\neg i}\in \mathcal{O}_t^{\neg i}}\sum_{u_t^{\neg i}\in \mathcal{U}^{\neg i}}\sum_{z_{t+1}^{\neg i}\in \mathcal{Z}^{\neg i}} \sum_{x_{t+1}\in \mathcal{X}}
\Pr\{ x_t, o_t^{\neg i}, u_t^{\neg i},z_{t+1}^{\neg i},x_{t+1} | y^i_t, a^{\neg i}_t, u^i_t \}.
\end{align*}
The application of Bayes' rule and the replacement of $y^i_t$ by $(s_0,o^i_t,a_{:t-1}^{\neg i})$ yield
\begin{align}
\omega^i_{M'}( z^i_{t+1}|y^i_t,u^i_t ) &=
\sum_{x_t\in \mathcal{X}}\sum_{o_t^{\neg i}\in \mathcal{O}_t^{\neg i}}\sum_{u_t^{\neg i}\in \mathcal{U}^{\neg i}}\sum_{z_{t+1}^{\neg i}\in \mathcal{Z}^{\neg i}} \sum_{x_{t+1}\in \mathcal{X}}
\frac{\Pr\{ s_0, x_t, o_t, u_t,z_{t+1},x_{t+1}, a_{:t}^{\neg i} \}}{\Pr\{ y^i_t, a^{\neg i}_t, u^i_t \}}.
\label{eq:thm:sufficiency:occupancy:state:factor:8:1}
\end{align}
The expansion of the joint probability in \eqref{eq:thm:sufficiency:occupancy:state:factor:8:1} as a product of conditional probabilities yields
\begin{align}
\omega^i_{M'}( z^i_{t+1}|y^i_t,u^i_t ) &= \sum_{x_t\in \mathcal{X}}\sum_{o_t^{\neg i}\in \mathcal{O}_t^{\neg i}}\sum_{u_t^{\neg i}\in \mathcal{U}^{\neg i}}\sum_{z_{t+1}^{\neg i}\in \mathcal{Z}^{\neg i}} \sum_{x_{t+1}\in \mathcal{X}}\nonumber\\
&\textstyle\qquad\Pr\{ z_{t+1},x_{t+1}|s_0, x_t, o_t, u_t, a_{:t}^{\neg i} \}\cdot \Pr\{u^{\neg i}_t|s_0, x_t, o_t, a_{:t}^{\neg i},u^i_t \} \nonumber\\
&\textstyle\qquad\cdot \Pr\{x_t, o^{\neg i}_t|y^i_t, a_t^{\neg i},u^i_t \}\cdot \Pr\{y^i_t, a_t^{\neg i},u^i_t \}/\Pr\{ y^i_t, a^{\neg i}_t, u^i_t \}.
\label{eq:thm:sufficiency:occupancy:state:factor:8:2}
\end{align}
The first probability in \eqref{eq:thm:sufficiency:occupancy:state:factor:8:2} is simply the dynamics $(p^{u_tz_{t+1}}_{x_tx_{t+1}})$ of $M$, \ie
\begin{align}
\omega^i_{M'}( z^i_{t+1}|y^i_t,u^i_t ) &=\sum_{x_t\in \mathcal{X}}\sum_{u_t^{\neg i}\in \mathcal{U}^{\neg i}}\sum_{z_{t+1}^{\neg i}\in \mathcal{Z}^{\neg i}} \sum_{x_{t+1}\in \mathcal{X}} p^{u_tz_{t+1}}_{x_tx_{t+1}} \nonumber\\
&\textstyle\qquad \sum_{o_t^{\neg i}\in \mathcal{O}_t^{\neg i}} \Pr\{u^{\neg i}_t|s_0, x_t, o_t, a_{:t}^{\neg i},u^i_t \} \cdot \Pr\{x_t, o^{\neg i}_t|y^i_t, a_t^{\neg i},u^i_t \}.
\label{eq:thm:sufficiency:occupancy:state:factor:8:3}
\end{align}
The second probability in \eqref{eq:thm:sufficiency:occupancy:state:factor:8:3} defines the probability of taking joint action $u^{\neg i}_t$ upon experiencing joint history $o^{\neg i}_t$, \ie
\begin{align}
\omega^i_{M'}( z^i_{t+1}|y^i_t,u^i_t ) &=\sum_{x_t\in \mathcal{X}}\sum_{u_t^{\neg i}\in \mathcal{U}^{\neg i}}\sum_{z_{t+1}^{\neg i}\in \mathcal{Z}^{\neg i}} \sum_{x_{t+1}\in \mathcal{X}} p^{u_tz_{t+1}}_{x_tx_{t+1}} \sum_{o_t^{\neg i}\in \mathcal{O}_t^{\neg i}} \nonumber\\
&\textstyle\qquad a^{\neg i}_t(u_t^{\neg i}|o_t^{\neg i} ) \cdot \Pr\{x_t, o^{\neg i}_t|y^i_t, a_t^{\neg i},u^i_t \}.
\label{eq:thm:sufficiency:occupancy:state:factor:8:4}
\end{align}
The third probability in \eqref{eq:thm:sufficiency:occupancy:state:factor:8:4} defines the probability of the previous private occupancy state at hidden state $x_t$ and joint history $o_t$, \ie
\begin{align*}
\omega^i_{M'}( z^i_{t+1}|y^i_t,u^i_t ) &=\sum_{x_t\in \mathcal{X}}\sum_{u_t^{\neg i}\in \mathcal{U}^{\neg i}}\sum_{z_{t+1}^{\neg i}\in \mathcal{Z}^{\neg i}} \sum_{x_{t+1}\in \mathcal{X}} p^{u_tz_{t+1}}_{x_tx_{t+1}} \sum_{o_t^{\neg i}\in \mathcal{O}_t^{\neg i}} a^{\neg i}_t(u_t^{\neg i}|o_t^{\neg i} ) \cdot s^i_t(x_t, o_t)\\
&\doteq \omega^i_{M''}( z^i_{t+1}|s^i_t,u^i_t ) .
\end{align*}
Which ends the proof.
\end{proof}

Based on Lemma \ref{lem:sufficiency:occupancy:state:observation}, it is possible to make precise predictions about a private observation $z^i_{t+1}$ by knowing private occupancy state $s^i_t$, private decision $a^{\neg i}_t$, and private action $u^i_t$. This prediction can be made without knowing the private plan-time history $y^i_t$.

\begin{restatable}{lem}{lemsufficiencyprivateoccupancystatemarkov}
\label{lem:sufficiency:private:occupancy:state:markov}
For each agent $i$, the private occupancy state at time step $t$, defined as $s^i_t \doteq \Pr\{X_t,O_t|y^i_t\}$, is a sufficient statistic of the private plan-time history $y_t^i$ for predicting the subsequent private occupancy state following the execution the private action $u^i_t$ and the reception of the private observation $z^i_{t+1}$, \ie $ \rho^i_{M'}(y^i_t,u^i_t,z^i_{t+1})=\rho^i_{M''}(s^i_t,u^i_t,z^i_{t+1})$.
\end{restatable}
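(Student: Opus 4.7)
The plan is to mirror the structure of the proof of Lemma \ref{lem:soc:prediction:soc}, adapting it to the agent-centric setting where only agent $i$'s action and observation are used for the update. First I would recall that the updated private plan-time history is $y^i_{t+1} \doteq (y^i_t, a^{\neg i}_t, u^i_t, z^i_{t+1})$ and then apply Definition \ref{def:private:occupancy:state} to express the target quantity as $s^i_{t+1}(x_{t+1}, (o_t, u_t, \langle z^i_{t+1}, z^{\neg i}_{t+1}\rangle)) = \Pr\{x_{t+1}, o_t, u_t, z^{\neg i}_{t+1} \mid y^i_t, a^{\neg i}_t, u^i_t, z^i_{t+1}\}$ for every hidden state $x_{t+1}$, joint history $o_t$, joint action $u_t$, and other-agents' observation $z^{\neg i}_{t+1}$.

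Next I would apply Bayes' rule, pulling the private observation $z^i_{t+1}$ out of the conditioning event and putting it in the numerator; this yields a normalization by $\omega^i_{M'}(z^i_{t+1} \mid y^i_t, u^i_t)$, which by Lemma \ref{lem:sufficiency:occupancy:state:observation} equals $\omega^i_{M''}(z^i_{t+1} \mid s^i_t, u^i_t)$ and hence depends on $y^i_t$ only through $s^i_t$. Then I would marginalize the numerator over the previous hidden state $x_t$ and factor the resulting joint probability as
\begin{align*}
\Pr\{x_{t+1}, z^i_{t+1}, z^{\neg i}_{t+1} \mid x_t, o_t, u_t\} \cdot \Pr\{u^{\neg i}_t \mid o^{\neg i}_t, a^{\neg i}_t\} \cdot \Pr\{x_t, o^{\neg i}_t \mid y^i_t\},
\end{align*}
using (i) the POSG dynamics, (ii) the fact that, conditional on $o^{\neg i}_t$, the other agents' action is drawn from $a^{\neg i}_t$ independently of past events, and (iii) the observation that $o^i_t$ is contained in $y^i_t$ so that the last factor reproduces $s^i_t(x_t, o_t)$ after attaching the deterministic $u^i_t$ component to the joint action.

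After rewriting these three factors as $p^{u_t, \langle z^i_{t+1}, z^{\neg i}_{t+1}\rangle}_{x_t, x_{t+1}}$, $a^{\neg i}_t(u^{\neg i}_t \mid o^{\neg i}_t)$, and $s^i_t(x_t, o_t)$ respectively, the expression obtained coincides (up to the normalization that has been folded into the proportionality) with the right-hand side of the transition rule $\rho^i_{M''}$ given in Definition \ref{def:slave:problem:as:mdp}. Therefore $\rho^i_{M'}(y^i_t, u^i_t, z^i_{t+1}) = \rho^i_{M''}(s^i_t, u^i_t, z^i_{t+1})$.

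The main obstacle is the careful bookkeeping of conditional independencies, in particular making sure that conditioning on $y^i_t$ is correctly replaced by conditioning on $(x_t, o_t)$ for the environment's dynamics and by conditioning on $o^{\neg i}_t$ for the other agents' decision rule; once the joint probability is split along the causal graph of Figure \ref{graphical:model:zs:posg}, the remaining manipulations are purely notational. Reusing Lemma \ref{lem:sufficiency:occupancy:state:observation} to handle the normalizing factor avoids having to redo that computation here.
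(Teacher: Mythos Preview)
Your proposal is correct and follows essentially the same approach as the paper: both start from $y^i_{t+1} = (y^i_t, a^{\neg i}_t, u^i_t, z^i_{t+1})$, apply Bayes' rule to extract the normalizing factor $\omega^i_{M'}(z^i_{t+1}\mid y^i_t,u^i_t)$ (handled via Lemma~\ref{lem:sufficiency:occupancy:state:observation}), marginalize over $x_t$, and factor the remaining joint probability into the dynamics $p^{u_t z_{t+1}}_{x_t x_{t+1}}$, the other agents' decision rule $a^{\neg i}_t(u^{\neg i}_t\mid o^{\neg i}_t)$, and the prior private occupancy state $s^i_t(x_t,o_t)$. The only cosmetic difference is that the paper writes the target probability with the $\neg i$ components $(o_t^{\neg i}, u_t^{\neg i}, z_{t+1}^{\neg i})$ explicitly, whereas you carry the full joint $(o_t, u_t)$ and note that the $i$-components are already fixed by the conditioning---these are equivalent formulations.
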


\begin{proof}
In demonstrating the sufficiency of private occupancy states, we also derive the rule for updating private occupancy states from one time step to the next. For the process defined by $M'(a^{\neg i}_{0:})$, the only information that agent $i$ obtains during time step $t$ is the fact that a particular decision rule $a^{\neg i}_t$ has been followed by agents $\neg i$, private action $u_t^i$ has been executed by agent $i$, and private observation $z^i_{t+1}$ has been received at the end of time step $t+1$ by agent $i$, then we can write
\begin{align*}
y^i_{t+1} &\doteq (y^i_t,a^{\neg i}_t,u^i_t, z^i_{t+1}).
\end{align*}
That is, $y^i_{t+1}$ represents the private plan-time history of agent $i$ before time step $t+1$ plus the additional information that a particular decision rule and private action were recorded and private observation occurred. By Definition \ref{def:private:occupancy:state}, private occupancy states are written as follows: for every hidden state $x_{t+1}$, history $o_t^{\neg i}$, action $u_t^{\neg i}$, and observation $z_{t+1}^{\neg i}$,
\begin{align}
s^i_{t+1}(x_{t+1}, (o_t,u_t,z_{t+1})) &\doteq \Pr\{ x_{t+1}, o_t^{\neg i},u_t^{\neg i},z_{t+1}^{\neg i} | y^i_{t+1} \}.
\label{eq:thm:sufficiency:occupancy:state:factor:9:0}
\end{align}
The application of Bayes' rule and the substitution of $y^i_{t+1}$ by $(y^i_t,a^{\neg i}_t,u^i_t, z^i_{t+1})$ in \eqref{eq:thm:sufficiency:occupancy:state:factor:9:0} yield
\begin{align}
s^i_{t+1}(x_{t+1}, (o_t^{\neg i},u_t^{\neg i},z_{t+1}^{\neg i}, o^i_{t+1})) &= \Pr\{ s_0,o_t,u_t,z_{t+1}, x_{t+1}, a^{\neg i}_{:t} \}/\Pr\{ y^i_{t+1} \}.
\label{eq:thm:sufficiency:occupancy:state:factor:9:1}
\end{align}
The expansion of \eqref{eq:thm:sufficiency:occupancy:state:factor:9:1} over all hidden states $x_t$ produces
\begin{align}
s^i_{t+1}(x_{t+1}, (o_t^{\neg i},u_t^{\neg i},z_{t+1}^{\neg i}, o^i_{t+1})) &=\textstyle \sum_{x_t\in \mathcal{X}} \Pr\{ s_0,x_t, o_t,u_t,z_{t+1},x_{t+1}, a^{\neg i}_{:t} \}/\Pr\{ y^i_{t+1} \}.
\label{eq:thm:sufficiency:occupancy:state:factor:9:2}
\end{align}
The expansion of the joint probability in the numerator of \eqref{eq:thm:sufficiency:occupancy:state:factor:9:2} as a product of conditional probabilities yields
\begin{align}
&s^i_{t+1}(x_{t+1}, (o_t^{\neg i},u_t^{\neg i},z_{t+1}^{\neg i}, o^i_{t+1})) =
\textstyle\sum_{x_t\in \mathcal{X}}
\Pr\{z_{t+1},x_{t+1} | s_0,x_t, o_t,u_t,a^{\neg i}_{:t} \}\nonumber \\
&\qquad \cdot \Pr\{ u^{\neg i}_t | s_0,x_t, o_t,a^{\neg i}_{:t},u^i_t \}
\cdot \Pr\{x_t, o^{\neg i}_t | s_0,o^{\neg i}_t, a^{\neg i}_{:t},u^i_t \}
\cdot \frac{\Pr\{ s_0,o^{\neg i}_t, a^{\neg i}_{:t},u^i_t \}}{ \Pr\{ y^i_{t+1} \} } .
\label{eq:thm:sufficiency:occupancy:state:factor:9:3}
\end{align}
The first probability in the numerator of \eqref{eq:thm:sufficiency:occupancy:state:factor:9:3} is independent of private plan-time history $y^i_t$, decision rule $a^{\neg i}_t$ at the end of time step $t$; it is simply the dynamics of $M$, \ie
\begin{align}
&s^i_{t+1}(x_{t+1}, (o_t^{\neg i},u_t^{\neg i},z_{t+1}^{\neg i}, o^i_{t+1})) =
\textstyle\sum_{x_t\in \mathcal{X}}
p^{u_tz_{t+1}}_{x_tx_{t+1}}\cdot \Pr\{ u^{\neg i}_t | s_0,x_t, o_t,a^{\neg i}_{:t},u^i_t \} \nonumber \\
&\qquad
\cdot \Pr\{x_t, o^{\neg i}_t | s_0,o^{\neg i}_t, a^{\neg i}_{:t},u^i_t \}
\cdot \frac{\Pr\{ y^i_t, a^{\neg i}_t,u^i_t \}}{ \Pr\{ y^i_{t+1} \} } .
\label{eq:thm:sufficiency:occupancy:state:factor:9:4}
\end{align}
The second probability in the numerator of \eqref{eq:thm:sufficiency:occupancy:state:factor:9:4} defines the probability of taking joint action $u_t^{\neg i}$ of agents $\neg i$ upon experiencing joint history $o_t^{\neg i}$, \ie
\begin{align}
&\overset{\eqref{eq:thm:sufficiency:occupancy:state:factor:9:0}}{=}
\textstyle\sum_{x_t\in \mathcal{X}}
p^{u_tz_{t+1}}_{x_tx_{t+1}}\cdot a^{\neg i}_t( u^{\neg i}_t | o^{\neg i}_t )
\cdot \Pr\{x_t, o^{\neg i}_t | s_0,o^{\neg i}_t, a^{\neg i}_{:t+1},u^i_t \}
\cdot \frac{\Pr\{ y^i_t,a^{\neg i}_t,u^i_t \}}{ \Pr\{ y^i_{t+1} \} } .
\label{eq:thm:sufficiency:occupancy:state:factor:9:5}
\end{align}
Moreover, the third probability in the numerator of \eqref{eq:thm:sufficiency:occupancy:state:factor:9:5} will be independent of private action $u^i_t$ as well as decision rule $a^{\neg i}_t$, since current decision rule alternatives do not affect current joint history. Hence, it defines the probability of the previous private occupancy state at hidden state $x_t$ and joint history $o^{\neg i}_t$, \ie
\begin{align}
&\overset{\eqref{eq:thm:sufficiency:occupancy:state:factor:9:0}}{=}
\sum_{x_t\in \mathcal{X}}
p^{u_tz_{t+1}}_{x_tx_{t+1}} \cdot a^{\neg i}_t( u^{\neg i}_t | o^{\neg i}_t )
\cdot s^i_t(x_t, o_t)
\cdot \frac{\Pr\{ y^i_t,a^{\neg i}_t,u^i_t \}}{ \Pr\{ y^i_{t+1} \} } .
\label{eq:thm:sufficiency:occupancy:state:factor:9:6}
\end{align}
The last probability in the numerator of \eqref{eq:thm:sufficiency:occupancy:state:factor:9:6} can be simplified by the denominator in \eqref{eq:thm:sufficiency:occupancy:state:factor:9:6}, \ie
\begin{align}
&\overset{\eqref{eq:thm:sufficiency:occupancy:state:factor:9:0}}{=}
\sum_{x_t\in \mathcal{X}}
p^{u_tz_{t+1}}_{x_tx_{t+1}} \cdot a^{\neg i}_t( u^{\neg i}_t | o^{\neg i}_t )
\cdot s^i_t(x_t, o_t)/\Pr\{z^i_{t+1}|y^i_t,a^{\neg i}_t,u^i_t \}.
\label{eq:thm:sufficiency:occupancy:state:factor:9:7}
\end{align}
From Lemma \ref{lem:sufficiency:occupancy:state:observation}, we know the denominator in \eqref{eq:thm:sufficiency:occupancy:state:factor:9:7} depends on private plan-time history $y^i_t$ only through corresponding private occupancy state $s^i_t$, \ie
\begin{align*}
&\overset{\eqref{eq:thm:sufficiency:occupancy:state:factor:9:0}}{=}
\sum_{x_t\in \mathcal{X}}
p^{u_tz_{t+1}}_{x_tx_{t+1}} \cdot a^{\neg i}_t( u^{\neg i}_t | o^{\neg i}_t )
\cdot s^i_t(x_t, o_t)/\Pr\{z^i_{t+1}|s^i_t,a^{\neg i}_t,u^i_t \}.
\end{align*}
where $\omega^i_{M''}(z^i_{t+1} | s^i_t,u^i_t) \doteq \Pr\{ z^i_{t+1} | s^i_t,a^{\neg i}_t,u^i_t \}$ serves as a normalizing constant. The important feature of the update rule of private occupancy states is that the calculation of the private occupancy states after time step $t+1$ requires only $s^i_t$, the private occupancy state at time step $t$; thus, $s^i_{t+1}$ summarizes all the information gained before time step $t$ and represents a sufficient statistic of the complete data $y^i_{t+1}$ available to agent $i$ for predicting the joint history.
\end{proof}

Lemma \ref{lem:sufficiency:private:occupancy:state:markov} essentially states that private occupancy states describe a Markov process.

\begin{restatable}{lem}{lemsufficiencyoccupancystatereward}
\label{lem:sufficiency:occupancy:state:reward}
For each agent $i$, the private occupancy state at time step $t$, defined as $s^i_t \doteq \Pr\{X_t,O_t|y^i_t\}$, is a sufficient statistic of the private plan-time history $y_t^i$ for an accurate prediction of the expected reward that will be received immediately after executing the private action $u^i_t$, \ie $r^i_{M'}(y_t^i, u_t^i)=r^i_{M''}(s_t^i, u_t^i)$.
\end{restatable}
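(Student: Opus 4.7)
The plan is to follow the template already used in the proof of Lemma~\ref{lem:reward:prediction:soc}, but adapted to a single agent's viewpoint. Starting from Definition~\ref{def:private:plantime:slave:game}, one has $r^i_{M'}(y^i_t, u^i_t) \doteq \mathbb{E}\{ r^i_{X_t, U_t} \mid y^i_t, u^i_t, a^{\neg i}_t\}$. Since $r^i_{X_t, U_t}$ depends only on the hidden state $X_t$, the known private action $u^i_t$, and the random action $U^{\neg i}_t$ of the other agents (which in turn is determined by $a^{\neg i}_t$ conditional on $O^{\neg i}_t$), I would unfold the expectation as a triple sum over $x_t$, $o^{\neg i}_t$, and $u^{\neg i}_t$.

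The key algebraic step is to decompose the joint conditional
\begin{align*}
\Pr\{x_t, o^{\neg i}_t, u^{\neg i}_t \mid y^i_t, u^i_t, a^{\neg i}_t\}
&= \Pr\{u^{\neg i}_t \mid a^{\neg i}_t, o^{\neg i}_t\} \cdot \Pr\{x_t, o^{\neg i}_t \mid y^i_t\}
\end{align*}
after dropping the conditionally irrelevant variables: $u^i_t$ and $a^{\neg i}_t$ do not affect the already-realized pair $(X_t, O^{\neg i}_t)$, and once $o^{\neg i}_t$ is fixed the private plan-time history $y^i_t$ adds no further information about $U^{\neg i}_t$ beyond the decision rule $a^{\neg i}_t$. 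The first factor is exactly $a^{\neg i}_t(u^{\neg i}_t \mid o^{\neg i}_t)$ by definition of a decision rule, and the second factor is, by Definition~\ref{def:private:occupancy:state}, equal to $s^{i,o^i_t}_t(x_t, o_t)$ for $o_t = \langle o^i_t, o^{\neg i}_t\rangle$, since $o^i_t$ is deterministically fixed by $y^i_t$.

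Substituting these two identifications into the expansion yields
\begin{align*}
r^i_{M'}(y^i_t, u^i_t)
&= \sum_{x_t \in \mathcal{X}} \sum_{o_t \in \mathcal{O}_t} s^i_t(x_t, o_t) \sum_{u^{\neg i}_t \in \mathcal{U}^{\neg i}} a^{\neg i}_t(u^{\neg i}_t \mid o^{\neg i}_t) \cdot r^i_{x_t, u_t},
\end{align*}
which is exactly $r^i_{M''}(s^i_t, u^i_t)$ as written in Definition~\ref{def:slave:problem:as:mdp}, concluding the argument. The only genuine obstacle is the bookkeeping when splitting $o_t$ into its deterministic private component $o^i_t$ (encoded in $y^i_t$) and its random complement $o^{\neg i}_t$, so that the marginal sum over $o_t$ collapses correctly onto a sum over $o^{\neg i}_t$ and the support of $s^i_t$ is respected. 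Once this split is handled cleanly, the derivation is purely a rewriting exercise and mirrors, step for step, the reasoning in Lemma~\ref{lem:reward:prediction:soc}.
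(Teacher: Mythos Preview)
Your proposal is correct and follows essentially the same route as the paper's proof: expand the expectation over $x_t$, $o^{\neg i}_t$, $u^{\neg i}_t$, factor the joint conditional into $a^{\neg i}_t(u^{\neg i}_t\mid o^{\neg i}_t)$ and $\Pr\{x_t,o^{\neg i}_t\mid y^i_t\}=s^i_t(x_t,o_t)$ after dropping conditionally irrelevant variables, and recognize the result as $r^i_{M''}(s^i_t,u^i_t)$. Your explicit remark about the $o^i_t$/$o^{\neg i}_t$ split is the only addition beyond the paper's presentation, and it is a helpful clarification rather than a departure.
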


\begin{proof}
The proof starts with the definition of the expected reward conditional on private plan-time history $y^i_t$, the joint decision rule $a^{\neg i}_t$ of agents $\neg i$, and private action $u^i_t$ of agent $i$, \ie
\begin{align}
r^i_{M'}(y_t^i, u_t^i)&\doteq \mathbb{E}\{ r^i_{X_t,U_t} | y^i_t, a^{\neg i}_t, u^i_t\}\nonumber\\
&= \textstyle \sum_{x_t\in \mathcal{X}}\sum_{u^{\neg i}_t\in \mathcal{U}^{\neg i}} r^i_{X_t,U_t} \sum_{o^{\neg i}_t\in \mathcal{O}^{\neg i}_t} \Pr\{ x_t, o^{\neg i}_t, u^{\neg i}_t | y^i_t, a^{\neg i}_t, u^i_t \}.
\label{eq:thm:sufficiency:occupancy:state:factor:1}
\end{align}
The expansion of the joint probability in \eqref{eq:thm:sufficiency:occupancy:state:factor:1} as a product of conditional probabilities produces
\begin{align}
&\overset{\eqref{eq:thm:sufficiency:occupancy:state:factor:1}}{=}
\sum_{x_t\in \mathcal{X}}\sum_{u^{\neg i}_t\in \mathcal{U}^{\neg i}}
r^i_{x_tu_t} \sum_{o^{\neg i}_t\in \mathcal{O}^{\neg i}_t}
\Pr\{ u^{\neg i}_t | o^{\neg i}_t,y^i_t, a^{\neg i}_t, u^i_t \}
\cdot \Pr\{x_t, o^{\neg i}_t | y^i_t, a^{\neg i}_t, u^i_t \}.
\label{eq:thm:sufficiency:occupancy:state:factor:2}
\end{align}
The first probability in \eqref{eq:thm:sufficiency:occupancy:state:factor:2} describes the probability of taking joint action $u^{\neg i}_t$ given joint history $o^{\neg i}_t$ and joint decision rule $a^{\neg i}_t$ of agents $\neg i$, \ie
\begin{align}
&\overset{\eqref{eq:thm:sufficiency:occupancy:state:factor:1}}{=}
\sum_{x_t\in \mathcal{X}}\sum_{u^{\neg i}_t\in \mathcal{U}^{\neg i}}
r^i_{x_tu_t}
\sum_{o^{\neg i}_t\in \mathcal{O}^{\neg i}_t} a^{\neg i}_t(u^{\neg i}_t|o^{\neg i}_t )
\cdot \Pr\{x_t, o^{\neg i}_t | y^i_t, a^{\neg i}_t, u^i_t \}.
\label{eq:thm:sufficiency:occupancy:state:factor:3}
\end{align}
The second probability in \eqref{eq:thm:sufficiency:occupancy:state:factor:3} describes the probability of the hidden state being $x_t$ and the other agents having joint history $o_t^{\not i}$ at the current private occupancy state, \ie
\begin{align*}
&\overset{\eqref{eq:thm:sufficiency:occupancy:state:factor:1}}{=}
\sum_{x_t\in \mathcal{X}}\sum_{u^{\neg i}_t\in \mathcal{U}^{\neg i}}
r^i_{x_tu_t}
\sum_{o^{\neg i}_t\in \mathcal{O}^{\neg i}_t} a^{\neg i}_t(u^{\neg i}_t|o^{\neg i}_t )
\cdot s^i_t(x_t, o_t).
\end{align*}
Which ends the proof.
\end{proof}

The three lemmas, namely Lemma \ref{lem:sufficiency:occupancy:state:observation}, Lemma \ref{lem:sufficiency:private:occupancy:state:markov}, and Lemma \ref{lem:sufficiency:occupancy:state:reward}, establish that private occupancy states are enough to serve as statistics for private plan-time histories. These lemmas also describe a Markov decision process with discrete time and continuous states. The best-response policy $a^i_{0:}$ for the private occupancy-state Markov decision process $M''(a^{\neg i}_{0:})$ can also be used as the best-response policy for the private plan-time Markov game $M'(a^{\neg i}_{0:})$ (and $M(a^{\neg i}_{0:})$):
\begin{align}
&a^i_t(\varphi^i_{M'}(y^i_t)) \\
&\textstyle \in \argmax_{u^i_t\in \mathcal{U}^i} q^i_{M(a^{\neg i}_{0:}),\gamma,t}(o^i_t, u^i_t) \nonumber\\
&\textstyle\in \argmax_{u^i_t\in \mathcal{U}^i} q^i_{M'(a^{\neg i}_{t:}),\gamma,t}(y^i_t, u^i_t) \nonumber\\
&\textstyle\in \argmax_{u^i_t\in \mathcal{U}^i} r^i_{M'}(y^i_t, u^i_t) + \gamma \sum_{z^i_{t+1}\in \mathcal{Z}^i} \omega^i_{M'}(z^i_{t+1}|y^i_t,u^i_t) \cdot \upsilon^i_{M'(a^{\neg i}_{t+1:}),\gamma,t+1}(y^i_t,u^i_t,z^i_{t+1}) \label{eqn:thm:private:sufficiency:occupancy:state:1}\\
&\textstyle\in \argmax_{u^i_t\in \mathcal{U}^i} r^i_{M''}(s^i_t, u^i_t) + \gamma \sum_{z^i_{t+1}\in \mathcal{Z}^i} \omega^i_{M''}(z^i_{t+1}|s^i_t,u^i_t) \cdot \upsilon^i_{M''(a^{\neg i}_{t+1:}),\gamma,t+1}(s^i_t,u^i_t,z^i_{t+1}) \label{eqn:thm:private:sufficiency:occupancy:state:2}\\
&\textstyle\in \argmax_{u^i_t\in \mathcal{U}^i} q^i_{M''(a^{\neg i}_{t:}),\gamma,t}(s^i_t, u^i_t). \nonumber
\end{align}
By applying Lemmas \ref{lem:sufficiency:occupancy:state:observation}, \ref{lem:sufficiency:private:occupancy:state:markov}, and \ref{lem:sufficiency:occupancy:state:reward} to Equation \eqref{eqn:thm:private:sufficiency:occupancy:state:1}, we can arrive at Equation \eqref{eqn:thm:private:sufficiency:occupancy:state:2}. Therefore, it can be concluded that private occupancy states act as sufficient statistics of private histories and private plan-time histories, which enables to optimally solving slave games according to Theorem \ref{thm:private:sufficiency:occupancy:state}.

\subsection{Occupancy States As Mixtures of Private Occupancy States}
The subsection explores the relationship between sufficient statistics for a master game and its corresponding slave games. It demonstrates that occupancy states can be expressed using various basis sets. While the typical representation of occupancy states uses the canonical basis, a finite collection of unit vectors, one can also rewrite occupancy states based on the private occupancy states of a fixed agent. These basis changes are crucial when identifying the convexity properties of state- and action-value functions for the master game As we will illustrate, certain state- and action-value functions only reveal their structure when expressed under a specific basis set.

\begin{restatable}{lem}{lemrelationbetweenoccupancystateandprivateones}
\label{lem:relation:between:occupancy:state:and:private:ones}
Consider a joint policy $a_{0:}$ and initial belief state $s_0$. For every agent $i$ and $t$-step private history $o_t^i$, let $\lambda^{i,o^i_t} \doteq \Pr\{o^i_t | y_t \doteq (s_0, a_{:t-1},w_{1:t}) \} $ be the probability of $o_t^i$ given $y_t$ and $s^{i,o^i_t}_t \doteq \Pr\{X_t,O_t | y^i_t \doteq (s_0, o^i_t, a_{:t-1}) \}$ be the private occupancy state given $y^i_t$. For any $t$-step occupancy state $s_t \doteq \Pr\{X_t,O_t | y_t \doteq (s_0, a_{:t-1},w_{1:t})\}$ there exists $(\lambda^{i,o^i_t}, s^{i,o^i_t}_t )_{o^i_t\in \mathcal{O}^i_t}$ such that $s_t$ can be expressed as a mixture of private occupancy states of any agent $i$, \ie
\begin{align}
s_t &= \textstyle \sum_{o_t^i\in \mathcal{O}^i_t}~ \lambda^{i,o_t^i} \cdot s^{i,o_t^i}_t \quad\text{such that }\quad \sum_{o_t^i\in \mathcal{O}^i_t}~ \lambda^{i,o_t^i} = 1 \text{ and } \lambda^{i,o_t^i}\geq 0.
\label{eq:lem:relation:between:occupancy:state:and:private:ones:0}
\end{align}
\end{restatable}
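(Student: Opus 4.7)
The plan is to unfold the definitions of $s_t$ and $s_t^{i,o_t^i}$ as conditional probabilities, split the joint history $o_t$ along its private/non-private decomposition $o_t=(o_t^i,o_t^{\neg i})$, and use Bayes' chain rule together with the assumption $\mathcal{Z}^i=\mathcal{W}\times\tilde{\mathcal{Z}}^i$ to match the right-hand side term by term. Concretely, I would first show that for any hidden state $x_t$ and joint history $(o_t^i,o_t^{\neg i})$,
\begin{align*}
s_t(x_t,o_t^i,o_t^{\neg i})
&= \Pr\{x_t,o_t^{\neg i}\mid o_t^i, y_t\}\cdot \Pr\{o_t^i\mid y_t\}
= \Pr\{x_t,o_t^{\neg i}\mid o_t^i, y_t\}\cdot \lambda^{i,o_t^i},
\end{align*}
so the coefficient $\lambda^{i,o_t^i}$ arises naturally as the marginal of $s_t$ over the private history of agent $i$.

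Next, I would identify the remaining conditional with a private occupancy state. Since $y_t=(s_0,a_{:t-1},w_{1:t})$ and $y_t^i=(s_0,o_t^i,a_{:t-1})$, and since under the public/private observation split the public stream $w_{1:t}$ is a deterministic function of $o_t^i$, conditioning on $(o_t^i,y_t)$ is equivalent to conditioning on $y_t^i$. Hence $\Pr\{x_t,o_t^{\neg i}\mid o_t^i,y_t\}=\Pr\{x_t,o_t^{\neg i}\mid y_t^i\}$. By Definition \ref{def:private:occupancy:state}, the private occupancy state $s_t^{i,o_t^i}$ is supported on histories whose agent-$i$ component equals $o_t^i$, so for arbitrary $\tilde o_t^i$,
\begin{align*}
s_t^{i,o_t^i}(x_t,\tilde o_t^i,o_t^{\neg i}) \;=\; \delta_{o_t^i}(\tilde o_t^i)\cdot \Pr\{x_t,o_t^{\neg i}\mid y_t^i\}.
\end{align*}
Combining these two observations gives $s_t(x_t,o_t^i,o_t^{\neg i})=\lambda^{i,o_t^i}\cdot s_t^{i,o_t^i}(x_t,o_t^i,o_t^{\neg i})$, and summing over all $\tilde o_t^i\in\mathcal{O}^i_t$ in the mixture collapses to a single surviving term because of the Kronecker delta, yielding exactly \eqref{eq:lem:relation:between:occupancy:state:and:private:ones:0}.

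Finally, I would close with the simplex conditions: $\lambda^{i,o_t^i}\ge 0$ by definition as a probability, and $\sum_{o_t^i\in \mathcal{O}_t^i}\lambda^{i,o_t^i}=\sum_{o_t^i}\Pr\{o_t^i\mid y_t\}=1$. The only genuinely delicate step is the identification $\Pr\{x_t,o_t^{\neg i}\mid o_t^i,y_t\}=\Pr\{x_t,o_t^{\neg i}\mid y_t^i\}$, which is where the public/private observation assumption $\mathcal{Z}^i=\mathcal{W}\times\tilde{\mathcal{Z}}^i$ does all the work; everything else is bookkeeping on the chain rule and the support structure of $s_t^{i,o_t^i}$.
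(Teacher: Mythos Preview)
Your proposal is correct and follows essentially the same route as the paper's proof: both apply the chain rule to factor $s_t(x_t,o_t)=\Pr\{x_t,o_t\mid o_t^i,y_t\}\cdot\Pr\{o_t^i\mid y_t\}$, identify the second factor as $\lambda^{i,o_t^i}$, identify the first with the private occupancy state $s_t^{i,o_t^i}$, and close with the simplex conditions. Your version is actually slightly more careful than the paper's in two places---you make the support structure of $s_t^{i,o_t^i}$ explicit via the Kronecker delta, and you spell out why $w_{1:t}$ can be dropped from the conditioning (it is determined by $o_t^i$ under the assumption $\mathcal{Z}^i=\mathcal{W}\times\tilde{\mathcal{Z}}^i$), whereas the paper simply asserts that the conditional ``depends on $a^i_{:t-1}$ only through $o_t^i$'' without addressing the public-observation component.
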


\begin{proof}
The proof starts with the definition of occupancy states, \ie for any time step $t$ and plan-time history $y_t \doteq (s_0,a_{:t-1}, w_{1:t})$, we have:
\begin{align}
s_t(x_t, o_t ) &\doteq \Pr\{x_t, o_t | y_t\},&\forall x_t\in \mathcal{X}, o_t\in \mathcal{O}_t.
\label{eq:lem:relation:between:occupancy:state:and:private:ones:1}
\end{align}
The expansion of \eqref{eq:lem:relation:between:occupancy:state:and:private:ones:1} with histories $o_t^i\in \mathcal{O}^i_t$ of agent $i$ leads to
\begin{align}
s_t(x_t, o_t )&= {\textstyle \sum_{o_t^i\in \mathcal{O}^i_t}}~\Pr\{x_t, o_t, o_t^i| y_t\}.
\label{eq:lem:relation:between:occupancy:state:and:private:ones:2}
\end{align}
The expansion of \eqref{eq:lem:relation:between:occupancy:state:and:private:ones:2} as the product of conditional probability distributions gives
\begin{align}
s_t(x_t, o_t )&= {\textstyle \sum_{o_t^i\in \mathcal{O}^i_t}}~\Pr\{x_t, o_t, | o_t^i, y_t\} \cdot \Pr\{ o_t^i | y_t\}.
\label{eq:lem:relation:between:occupancy:state:and:private:ones:3}
\end{align}
The first factor in \eqref{eq:lem:relation:between:occupancy:state:and:private:ones:3} depends on $a^i_{:t-1}$ only through $o_t^i$. It simply corresponds to private occupancy state given by $s^{i,o_t^i}_t(x_t, o_t) \doteq \Pr\{x_t, o_t | s_0,a^{\neg i}_{:t-1},o_t^i \}$, \ie
\begin{align}
s_t(x_t, o_t )&= {\textstyle \sum_{o_t^i\in \mathcal{O}^i_t}}~s^{i,o_t^i}_t(x_t, o_t) \cdot \Pr\{ o_t^i| y_t\}.
\label{eq:lem:relation:between:occupancy:state:and:private:ones:4}
\end{align}
The second factor in \eqref{eq:lem:relation:between:occupancy:state:and:private:ones:4} is simply a non-negative coefficient, denoted $\lambda^{i,o_t^i}$, defining the probability of history $o_t^i$ when agents follow joint policy $a_{:t-1}$ starting in an initial belief state $s_0$, \ie
\begin{align*}
s_t(x_t, o_t )&= {\textstyle \sum_{o_t^i\in \mathcal{O}^i_t}}~\lambda^{i,o_t^i} \cdot s^{i,o_t^i}_t(x_t, o_t).
\end{align*}
It is worth noticing that $\sum_{o_t^i\in \mathcal{O}^i_t}~\lambda^{i,o_t^i} = \sum_{o_t^i\in \mathcal{O}^i_t}~ \Pr\{ o_t^i| y_t\} = 1$. Which ends the proof.
\end{proof}

Lemma \ref{lem:relation:between:occupancy:state:and:private:ones} provides an alternative perspective on the nature of occupancy states. An occupancy state is a mathematical entity that captures the state of an occupancy-state Markov game $M'$ from various viewpoints.

\begin{figure}[!htbp]
\centering
\begin{tikzpicture}[scale=1]
\draw[fill=yellow!20, opacity=.5, rounded corners, draw=blue] (-5.5,-4.75) rectangle (1,.5);
\draw[fill=yellow!20, opacity=.5, rounded corners, draw=blue] (1.75,-4.75) rectangle (9.25,.5);
\node[scale=1] at (-2.2,1) {$s^{\mathcal{B}^0}_1$};
\node[scale=1] at (5.5,1) {$s^{\mathcal{B}^1}_1$};
[sibling distance=2cm,-, thick]
\node[scale=.8] at (3.5,0.35) {$.5$};
\node[scale=.85] at (3.5,0) {$\textcolor{green!60!black}{\emptyset}\textcolor{red!60!black}{\emptyset}$}
child {node[scale=.85] {$\textcolor{green!60!black}{\boldsymbol{u_\textsc{l}}},\textcolor{red!60!black}{\boldsymbol{u_\textsc{ol}}}$}
child {node[scale=.75] {$\textcolor{green!60!black}{\boldsymbol{z_\textsc{hl}}},\textcolor{red!60!black}{\boldsymbol{z_\textsc{hl}}}$}
[sibling distance=.5cm,-]
child {node[scale=.65] {.25} edge from parent node[left, scale=.75] {$x_\textsc{tl}$}}
child {node[scale=.65] {.25} edge from parent node[right, scale=.75] {$x_\textsc{tr}$}}}
child {node[scale=.75] { $\textcolor{green!60!black}{\boldsymbol{z_\textsc{hl}}},\textcolor{red!60!black}{\boldsymbol{z_\textsc{hr}}}$}
[sibling distance=.5cm,-]
child {node[scale=.65] {.25} edge from parent node[left, scale=.75] {$x_\textsc{tl}$}}
child {node[scale=.65] {.25} edge from parent node[right, scale=.75] {$x_\textsc{tr}$}}}};
\node[scale=.8] at (7.5,0.35) {$.5$};
\node[scale=.85] at (7.5,0) {$\textcolor{green!60!black}{\emptyset}\textcolor{red!60!black}{\emptyset}$}
child {node[scale=.85] {$\textcolor{green!60!black}{\boldsymbol{u_\textsc{l}}},\textcolor{red!60!black}{\boldsymbol{u_\textsc{ol}}}$}
child {node[scale=.75] {$\textcolor{green!60!black}{\boldsymbol{z_\textsc{hr}}},\textcolor{red!60!black}{\boldsymbol{z_\textsc{hl}}}$}
[sibling distance=.5cm,-]
child {node[scale=.65] {.25} edge from parent node[left, scale=.75] {$x_\textsc{tl}$}}
child {node[scale=.65] {.25} edge from parent node[right, scale=.75] {$x_\textsc{tr}$}}}
child {node[scale=.75] { $\textcolor{green!60!black}{\boldsymbol{z_\textsc{hr}}},\textcolor{red!60!black}{\boldsymbol{z_\textsc{hr}}}$}
[sibling distance=.5cm,-]
child {node[scale=.65] {.25} edge from parent node[left, scale=.75] {$x_\textsc{tl}$}}
child {node[scale=.65] {.25} edge from parent node[right, scale=.75] {$x_\textsc{tr}$}}}};
\node[scale=.85] (n1) at (-2.25,0) {$\textcolor{green!60!black}{\emptyset}\textcolor{red!60!black}{\emptyset}$}
child {node[scale=.85] (n2) {$\textcolor{green!60!black}{\boldsymbol{u_\textsc{l}}},\textcolor{red!60!black}{\boldsymbol{u_\textsc{ol}}}$}
child {node[scale=.75] {$\textcolor{green!60!black}{\boldsymbol{z_\textsc{hl}}},\textcolor{red!60!black}{\boldsymbol{z_\textsc{hl}}}$}
[sibling distance=.5cm,-]
child {node[scale=.65] {.125} edge from parent node[left, scale=.75] {$x_\textsc{tl}$}}
child {node[scale=.65] {.125} edge from parent node[right, scale=.75] {$x_\textsc{tr}$}}}
child {node[scale=.75] { $\textcolor{green!60!black}{\boldsymbol{z_\textsc{hr}}},\textcolor{red!60!black}{\boldsymbol{z_\textsc{hr}}}$}
[sibling distance=.5cm,-]
child {node[scale=.65] {.125} edge from parent node[left, scale=.75] {$x_\textsc{tl}$}}
child {node[scale=.65] {.125} edge from parent node[right, scale=.75] {$x_\textsc{tr}$}}}
child {node[scale=.75] { $\textcolor{green!60!black}{\boldsymbol{z_\textsc{hr}}},\textcolor{red!60!black}{\boldsymbol{z_\textsc{hl}}}$}
[sibling distance=.5cm,-]
child {node[scale=.65] {.125} edge from parent node[left, scale=.75] {$x_\textsc{tl}$}}
child {node[scale=.65] {.125} edge from parent node[right, scale=.75] {$x_\textsc{tr}$}}}
child {node[scale=.75] { $\textcolor{green!60!black}{\boldsymbol{z_\textsc{hl}}},\textcolor{red!60!black}{\boldsymbol{z_\textsc{hr}}}$}
[sibling distance=.5cm,-]
child {node[scale=.65] {.125} edge from parent node[left, scale=.75] {$x_\textsc{tl}$}}
child {node[scale=.65] {.125} edge from parent node[right, scale=.75] {$x_\textsc{tr}$}}}};
\end{tikzpicture}
\caption{Tree representations of occupancy state $s_1$ in both $\mathcal{B}^0$ and $\mathcal{B}^1$ for the case with no public observations. }
\label{fig:occupancy:state:basis:change}
\end{figure}
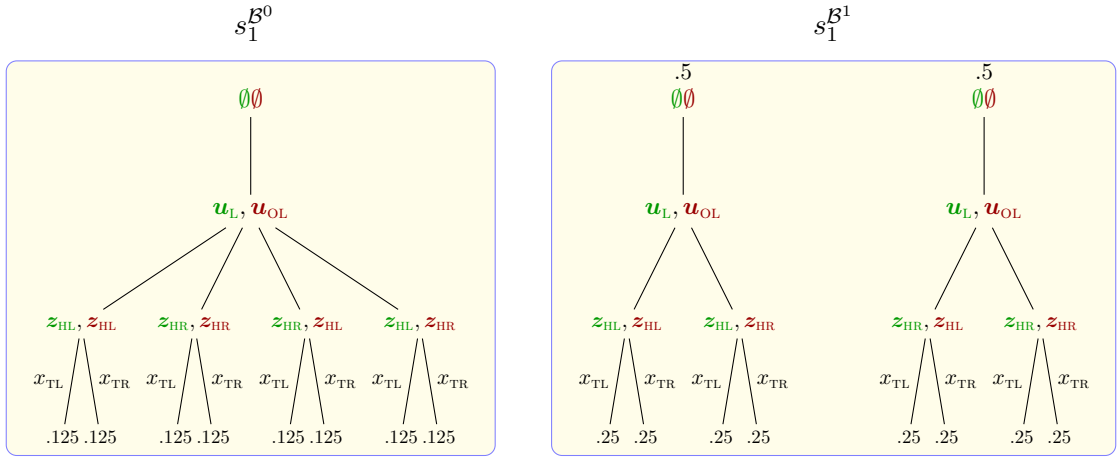

So far, we have described occupancy states from the perspective of the central planner, using the standard coordinate system, \ie
\begin{align}
s_t &= \sum_{(x_t,o_t)\in \mathcal{X}\times \mathcal{O}_t}~\lambda^{(x_t,o_t)} \cdot e^{(x_t,o_t)}\quad\text{such that }\quad \sum_{(x_t,o_t)\in \mathcal{X}\times \mathcal{O}_t}~ \lambda^{(x_t,o_t)} = 1 \text{ and } \lambda^{(x_t,o_t)}\geq 0
\label{eq:lem:relation:between:occupancy:state:and:private:ones:5}
\end{align}
where $e^{(x_t,o_t)}$ is a one-hot vector associated with pair $(x_t,o_t)\in \mathcal{X}\times \mathcal{O}_t$. In the canonical basis $\mathcal{B}^0$, which consists of all elements $e^{(x_t,o_t)}$ where $(x_t,o_t)$ is in $\mathcal{X} \times \mathcal{O}_t$, the occupancy state $s^{\mathcal{B}^0}_t$ is represented as $(\lambda^{(x_t,o_t)})_{(x_t,o_t) \in \mathcal{X} \times \mathcal{O}_t}$. This representation is objective because neither agents nor the central planner can access general information on states and joint histories denoted as $(x_t,o_t)$. This representation is useful in multi-agent systems without self-interested agents, such as \emph{dec}-POMDPs. However, more informative coordinate systems are needed in self-interested games like \emph{zs}-POSGs or \emph{st}-POSGs. Therefore, private occupancy states $s^{i,o_t^i}_t$ from agent $i$'s perspective are used to describe occupancy states in $M''$, as shown in Equation \eqref{eq:lem:relation:between:occupancy:state:and:private:ones:0}. These private occupancy states convey information on the occupancy state $s_t$ in $M''$ given that agent $i$ experienced private history $o^i$. Interestingly, using either the basis of canonical occupancy states in Equation \eqref{eq:lem:relation:between:occupancy:state:and:private:ones:5} or the basis of private occupancy states for agent $i$ in $\mathcal{B}^i = \{s^{i,o_t^i} \colon o_t^i\in \mathcal{O}_t^i\}$ results in the same information on the occupancy state of $M'$, but in different coordinate systems. The vector $s^{\mathcal{B}^i}_t$ expresses the occupancy state using the basis set $\mathcal{B}^i$, specific to agent $i$. Figure \ref{fig:occupancy:state:basis:change} illustrates two different representations of the same occupancy state, where $s^{\mathcal{B}^0}_1= (.125~.125~.125~.125~.125~.125~.125~.125~)$ and $s^{\mathcal{B}^1}_1= (.5\quad .5)$. Despite the significant differences between vectors $s^{\mathcal{B}^0}_t$ and $s^{\mathcal{B}^i}_t$, they convey the same occupancy state using different languages or basis sets.

\section{Convexity of Optimal Value Functions of Master Games}
\label{sec:main:results}
In this section, we are ready to present the main findings of our research. Our focus initially is on the uniform continuity features of state-value functions in slave games. Secondly, we will demonstrate the correlation between solutions for a master game and its corresponding slave games. Lastly, we will provide evidence for the three convexity properties for state-value functions in specific and significant master games.

\subsection{Piecewise Linearity and Convexity}
This subsection demonstrates the piecewise linear and convex nature of optimal state-value functions over private occupancy states in private occupancy-state Markov games. It is worth noting that the sufficiency of private occupancy states concerning private plan-time history for optimally solving private occupancy-state Markov games facilitates the ability to recover the solution of the original slave game.

\begin{restatable}{thm}{corpwlcoccslaveproblem}
\label{cor:pwlc:occ:slave:problem}
Let $M$ be a master game. Let $M''(a^{\neg i}_{0:})$ and $M'(a^{\neg i}_{0:})$ be slave games \wrt $M$ and policy $a^{\neg i}_{0:}\in \mathcal{A}_{0:}^{\neg i}$ for agents $\neg i$. For any arbitrary time step $t$, the slave game $M''(a^{\neg i}_{0:})$ has a piecewise linear and convex (PWLC) optimal state-value function of private occupancy states, \ie $\upsilon^i_{M''(a^{\neg i}_{t:}),\gamma,t}\colon \mathcal{S}^i_t\to \mathbb{R}$. Consequently, at each time step $t$ there exists a finite collection of linear functions of private occupancy states, denoted as $\mathcal{V}_{M''(a^{\neg i}_{t:}), \gamma, t}$, \st for any private occupancy state $s^i_t$, it follows:
\begin{align*}
\upsilon^i_{M''(a^{\neg i}_{t:}),\gamma,t}(s^i_t) &= \textstyle \max_{\upsilon_{M'',\gamma,t}^{i,a_{t:}}\in\mathcal{V}_{M''(a^{\neg i}_{t:}), \gamma, t}}~\upsilon_{M'',\gamma,t}^{i,a_{t:}}(s^i_t).
\end{align*}
\end{restatable}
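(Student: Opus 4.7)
The plan is to adapt \citeauthor{finite}'s classical PWLC argument for finite-horizon POMDPs to our private occupancy-state Markov decision process $M''(a^{\neg i}_{0:})$. The crucial structural ingredients are already in Definition \ref{def:slave:problem:as:mdp}: the reward $r^i_{M''}(\cdot, u^i_t)$ and the observation probability $\omega^i_{M''}(z^i_{t+1}\mid \cdot,u^i_t)$ are linear in the private occupancy state, and, by the definition of $\rho^i_{M''}$, the \emph{unnormalized} next private occupancy state, namely $\omega^i_{M''}(z^i_{t+1}\mid s^i_t,u^i_t)\cdot\rho^i_{M''}(s^i_t,u^i_t,z^i_{t+1})$, is also linear in $s^i_t$. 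Together with the backward recursion from \citeauthor{bellman}'s optimality equations (the slave-game analogue of Lemma~\ref{lem:bellman:history:pomdp}, now stated over private occupancy states thanks to Theorem~\ref{thm:private:sufficiency:occupancy:state}), these facts make a backward induction on $t$ go through cleanly.

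I would proceed by backward induction on $t$, from $t=\ell$ down to $t=0$. The base case is immediate: by the boundary condition, $\upsilon^i_{M''(a^{\neg i}_{\ell:}),\gamma,\ell}\equiv 0$, so the singleton collection $\mathcal{V}_{M''(a^{\neg i}_{\ell:}),\gamma,\ell}=\{0\}$ trivially certifies PWLC. For the inductive step, assume the result holds at $t+1$ with a finite collection $\mathcal{V}_{t+1}\doteq\mathcal{V}_{M''(a^{\neg i}_{t+1:}),\gamma,t+1}$ of linear functions $\alpha\colon\mathcal{S}^i_{t+1}\to\mathbb{R}$. Applying \citeauthor{bellman}'s optimality equations on private occupancy states, I write
\begin{align*}
\upsilon^i_{M''(a^{\neg i}_{t:}),\gamma,t}(s^i_t)
= \max_{u^i_t\in\mathcal{U}^i}\Big\{\, r^i_{M''}(s^i_t,u^i_t) + \gamma\!\!\sum_{z^i_{t+1}\in\mathcal{Z}^i}\!\! \omega^i_{M''}(z^i_{t+1}\mid s^i_t,u^i_t)\cdot \upsilon^i_{M''(a^{\neg i}_{t+1:}),\gamma,t+1}\big(\rho^i_{M''}(s^i_t,u^i_t,z^i_{t+1})\big)\Big\}.
\end{align*}
Substituting the inductive hypothesis turns the inner expression into
$\omega^i_{M''}(z^i_{t+1}\mid s^i_t,u^i_t)\cdot\max_{\alpha\in\mathcal{V}_{t+1}}\langle\alpha,\rho^i_{M''}(s^i_t,u^i_t,z^i_{t+1})\rangle$, which, pushing the nonnegative factor $\omega^i_{M''}(z^i_{t+1}\mid s^i_t,u^i_t)$ inside the max and invoking the linearity of $\omega^i_{M''}(z^i_{t+1}\mid s^i_t,u^i_t)\cdot\rho^i_{M''}(s^i_t,u^i_t,z^i_{t+1})$ in $s^i_t$, becomes a pointwise maximum over $\mathcal{V}_{t+1}$ of linear functions of $s^i_t$.

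Following the standard Sondik construction, I then enumerate all choices of an action $u^i_t$ and a mapping $\pi\colon\mathcal{Z}^i\to\mathcal{V}_{t+1}$, and define for each pair a linear function of $s^i_t$ equal to $r^i_{M''}(s^i_t,u^i_t)$ plus $\gamma$ times the sum over $z^i_{t+1}$ of the appropriate unnormalized $\langle\pi(z^i_{t+1}),\rho^i_{M''}(s^i_t,u^i_t,z^i_{t+1})\rangle$. Collecting these finitely many linear functions into a set $\mathcal{V}_t\doteq\mathcal{V}_{M''(a^{\neg i}_{t:}),\gamma,t}$, a standard interchange of maxima shows that $\upsilon^i_{M''(a^{\neg i}_{t:}),\gamma,t}(s^i_t)=\max_{\alpha\in\mathcal{V}_t}\langle\alpha,s^i_t\rangle$, so the value function is a finite maximum of linear functions of $s^i_t$, hence PWLC. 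The transfer to $M'(a^{\neg i}_{0:})$ follows from Proposition~\ref{prop:equivalent:private:ckmg} and Theorem~\ref{thm:private:sufficiency:occupancy:state}, which guarantee that the same collection $\mathcal{V}_t$ represents the optimal value function of $M'(a^{\neg i}_{t:})$ through the sufficient-statistic identification.

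The main obstacle is the bookkeeping in the linearity claim for $\omega^i_{M''}(z^i_{t+1}\mid s^i_t,u^i_t)\cdot\rho^i_{M''}(s^i_t,u^i_t,z^i_{t+1})$: one must verify from the explicit Bayes update in Definition~\ref{def:slave:problem:as:mdp} that the normalization constant $\omega^i_{M''}$ exactly cancels the denominator that would otherwise appear in $\rho^i_{M''}$, leaving a quantity that is literally linear in the entries of $s^i_t$. A secondary subtlety is verifying that $\mathcal{V}_t$ is finite: this is immediate since $|\mathcal{V}_t|\leq |\mathcal{U}^i|\cdot |\mathcal{V}_{t+1}|^{|\mathcal{Z}^i|}$ and, by induction, $|\mathcal{V}_{t+1}|$ is finite (with $|\mathcal{V}_\ell|=1$).
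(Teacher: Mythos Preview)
Your proof is correct, but it takes a different route from the paper's. You run the classical \citeauthor{finite} backward-induction (Sondik) construction on the private occupancy-state MDP: verify that the reward and the unnormalized next private occupancy state are linear in $s^i_t$, then build $\mathcal{V}_t$ recursively from $\mathcal{V}_{t+1}$ by enumerating pairs $(u^i_t,\pi)$ with $\pi\colon\mathcal{Z}^i\to\mathcal{V}_{t+1}$. The paper instead gives a two-line direct argument: since deterministic best responses suffice, it writes $\upsilon^i_{M''(a^{\neg i}_{t:}),\gamma,t}(s^i_t)=\max_{a^i_{t:}\in\mathcal{A}^{i,\text{pure}}_{t:}}\upsilon^{i,a_{t:}}_{M'',\gamma,t}(s^i_t)$ and invokes Lemma~\ref{lem:linearity:joint:policy} (linearity of fixed-policy value functions over occupancy states) to conclude that this is a finite pointwise maximum of linear functions. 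The two constructions are in fact in bijection---your $(u^i_t,\pi)$ pairs are exactly the recursive encoding of pure policy trees $a^i_{t:}$---so the resulting collections coincide. The paper's route is shorter because it offloads the linearity bookkeeping to Lemma~\ref{lem:linearity:joint:policy}, which was proved once and for all; your route is more self-contained and has the advantage of being constructive in the algorithmic sense (it is the backup one would actually implement, and it makes the cardinality bound $|\mathcal{V}_t|\le|\mathcal{U}^i|\cdot|\mathcal{V}_{t+1}|^{|\mathcal{Z}^i|}$ explicit).
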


\begin{proof}
To prove piecewise linearity and convexity, it is important to note that all best-response policies result in a unique state-value function. This means mixed best-response policies are just as good as pure policies in their support. Therefore, when searching for the optimal value function for a slave game, one can focus solely on pure best-response policies. We define the optimal value for any given time step $t$ and private occupancy state $s^i_t$ to derive the property.
\begin{align*}
\upsilon^i_{M''(a^{\neg i}_{t:}),\gamma,t}(s^i_t) &\doteq\textstyle \max_{a^i_{t:} \in \mathcal{A}^i_{t:}}~\mathbb{E}\{\upsilon^{i,a_{t:}}_{M,\gamma,t}(X_t,O_t)|s^i_t\}\\
&\overset{\text{pure policies}}{=} \textstyle\max_{a^i_{t:} \in \mathcal{A}^{i,\text{pure}}_{t:}}~ \mathbb{E}\{\upsilon^{i,a_{t:}}_{M,\gamma,t}(X_t,O_t)|s^i_t\}\\
&\overset{\text{Lemma \ref{lem:linearity:joint:policy}}}{=} \textstyle\max_{a^i_{t:} \in \mathcal{A}^{i,\text{pure}}_{t:}}~\upsilon^{i,a_{t:}}_{M'',\gamma,t}(s^i_t).
\end{align*}
The point-wise maximum of linear functions over private occupancy states is a piecewise linear and convex function of private occupancy states.
\end{proof}

Theorem \ref{cor:pwlc:occ:slave:problem} establishes that for any slave game, the optimal state-value functions are piecewise linear and convex over private occupancy states. In the case of a two-agent one-sided POSG, where agent 2 perceives both the underlying state of the game as well as the actions and perceptions of agent 1, the optimal state-value functions $\langle \upsilon^1_{M''(s_0,a^2_{0:}),\gamma,0}, \upsilon^1_{M''(a^2_{1:}),\gamma,1},\ldots, \upsilon^1_{M''(a^2_{\ell-1:}),\gamma,\ell-1}\rangle$ of the slave game are tabular functions that map underlying states to real values.

\begin{restatable}{cor}{corpwlcoccslaveactionproblem}
For any arbitrary time step $t$, the slave game $M''(a^{\neg i}_{0:})$ has a piecewise linear and convex (PWLC) optimal action-value function of private occupancy states $q^i_{M''(a^{\neg i}_{t:}),\gamma,t}\colon \mathcal{S}^i_t\times \mathcal{U}^i\to \mathbb{R}$. Consequently, at each time step $t$ there exists a finite collection of linear functions of private occupancy states, denoted as $\mathcal{Q}_{M''(a^{\neg i}_{t:}),\gamma,t}$, \st for any private occupancy state $s^i_t$, it follows:
\begin{align*}
q^i_{M''(a^{\neg i}_{t:}),\gamma,t}(s^i_t, u^i_t) &= \textstyle \max_{q_{M'',\gamma,t}^{i,a_{t:}}\in\mathcal{Q}_{M''(a^{\neg i}_{t:}),\gamma,t} }~q_{M'',\gamma,t}^{i,a_{t:}}(s^i_t, u^i_t).
\end{align*}
\end{restatable}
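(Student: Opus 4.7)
The plan is to reduce PWLC of $q^i_{M''(a^{\neg i}_{t:}),\gamma,t}$ to PWLC of $\upsilon^i_{M''(a^{\neg i}_{t+1:}),\gamma,t+1}$, which is already established by Theorem~\ref{cor:pwlc:occ:slave:problem}. First, I would write \citeauthor{bellman}'s optimality equation for $q$ in the private occupancy-state Markov decision process (which is the $M''$-analog of the relation proved for $M'$ in the lemma preceding Proposition~\ref{prop:equivalent:private:ckmg}, transported to $M''$ via Theorem~\ref{thm:private:sufficiency:occupancy:state}):
\begin{align*}
q^i_{M''(a^{\neg i}_{t:}),\gamma,t}(s^i_t, u^i_t) = r^i_{M''}(s^i_t, u^i_t) + \gamma \sum_{z^i_{t+1}\in \mathcal{Z}^i}\omega^i_{M''}(z^i_{t+1}|s^i_t, u^i_t) \cdot \upsilon^i_{M''(a^{\neg i}_{t+1:}),\gamma,t+1}(\rho^i_{M''}(s^i_t,u^i_t,z^i_{t+1})).
\end{align*}

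Next, mirroring the reduction used in Theorem~\ref{cor:pwlc:occ:slave:problem}, I would restrict attention to pure continuation policies $a^i_{t+1:} = (a^{i,z^i}_{t+1:})_{z^i \in \mathcal{Z}^i}$, because randomization across a support does not improve a best-response evaluation. For each fixed pure continuation policy and each $u^i_t$, I would argue that the induced fixed-policy action-value $q^{i,a_{t:}}_{M'',\gamma,t}(\cdot, u^i_t)$ is linear in $s^i_t$. This combines three linearities: (i) the reward $r^i_{M''}(\cdot,u^i_t)$ is linear in $s^i_t$ by Definition~\ref{def:slave:problem:as:mdp}; (ii) the unnormalized next-step private occupancy $\omega^i_{M''}(z^i_{t+1}|s^i_t,u^i_t)\cdot \rho^i_{M''}(s^i_t,u^i_t,z^i_{t+1})$ is linear in $s^i_t$, since the normalizing constant cancels in the product and leaves the linear Bayesian numerator $\sum_{x_t\in\mathcal{X}}\sum_{z^{\neg i}_{t+1}\in\mathcal{Z}^{\neg i}} p^{u_t \langle z^i_{t+1},z^{\neg i}_{t+1}\rangle}_{x_tx_{t+1}}\cdot a^{\neg i}_t(u^{\neg i}_t|o^{\neg i}_t)\cdot s^i_t(x_t,o_t)$; and (iii) each continuation value $\upsilon^{i,a^{i,z^i}_{t+1:}}_{M'',\gamma,t+1}$ is linear over private occupancy states, by the slave-setting analog of Lemma~\ref{lem:linearity:joint:policy}.

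Finally, taking the pointwise maximum of these linear functions over the finite set $\mathcal{A}^{i,\text{pure}}_{t+1:}$ of pure continuation policies would give the piecewise linear and convex representation, with the collection defined by $\mathcal{Q}_{M''(a^{\neg i}_{t:}),\gamma,t} \doteq \{q^{i,a_{t:}}_{M'',\gamma,t} : a^i_{t+1:} \in \mathcal{A}^{i,\text{pure}}_{t+1:}\}$. The only mildly delicate point is step~(ii), where I have to resist working with the normalized $\rho^i_{M''}$ alone---that map is \emph{not} linear in $s^i_t$---and instead keep $\omega^i_{M''}$ and $\rho^i_{M''}$ multiplied together so that their common denominator cancels. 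Beyond this bookkeeping, I do not anticipate any substantive obstacle, as the argument is essentially the same template already used for $\upsilon$ in Theorem~\ref{cor:pwlc:occ:slave:problem}.
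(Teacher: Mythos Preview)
Your proposal is correct. The paper does not supply an explicit proof of this corollary; it is stated immediately after Theorem~\ref{cor:pwlc:occ:slave:problem} and treated as a direct consequence. Your argument via the \citeauthor{bellman} recursion, together with the observation that the product $\omega^i_{M''}(z^i_{t+1}\mid s^i_t,u^i_t)\cdot \rho^i_{M''}(s^i_t,u^i_t,z^i_{t+1})$ is linear in $s^i_t$ even though $\rho^i_{M''}$ alone is not, is a sound and standard way to fill in the details.

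A slightly shorter route, more closely paralleling the paper's proof of Theorem~\ref{cor:pwlc:occ:slave:problem}, would be to skip the Bellman decomposition entirely: write $q^i_{M''(a^{\neg i}_{t:}),\gamma,t}(s^i_t,u^i_t) = \max_{a^i_{t+1:}\in \mathcal{A}^{i,\text{pure}}_{t+1:}} q^{i,a_{t+1:}}_{M'',\gamma,t}(s^i_t,u^i_t)$ and invoke the action-value half of Lemma~\ref{lem:linearity:joint:policy} directly, which already asserts linearity of $q^{i,a_{t+1:}}_{M'',\gamma,t}$ in the occupancy state for any fixed joint policy. This avoids your steps (ii) and (iii) at the cost of appealing to a lemma that packages them. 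Both approaches are equivalent in content; yours simply unpacks the linearity by hand.
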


Corollary \ref{cor:pwlc:occ:slave:occupancy:problem} shows optimal state-value functions are piecewise linear and convex over occupancy states in the standard basis for any slave game.

\begin{restatable}{cor}{corpwlcoccslaveoccupancyproblem}
\label{cor:pwlc:occ:slave:occupancy:problem}
For any arbitrary time step $t$, the slave game $M''(a^{\neg i}_{0:})$ has a piecewise linear and convex (PWLC) optimal state-value function of occupancy states expressed in the standard basis, \ie $\upsilon^i_{M''(a^{\neg i}_{t:}),\gamma,t}\colon \mathcal{S}_t^{\mathcal{B}^0} \to \mathbb{R}$. Consequently, at each time step $t$ there exists a finite collection of linear functions of occupancy states, denoted as $\mathcal{V}_{M''(a^{\neg i}_{t:}),\gamma,t}$, \st for any occupancy state $s^{\mathcal{B}^0}_t$, it follows:
\begin{align*}
\upsilon^i_{M''(a^{\neg i}_{t:}),\gamma,t}(s^{\mathcal{B}^0}_t) &= \textstyle \max_{\upsilon_{M'',\gamma,t}^{i,a_{t:}}\in\mathcal{V}_{M''(a^{\neg i}_{t:}),\gamma,t}}~\upsilon_{M'',\gamma,t}^{i,a_{t:}}(s^{\mathcal{B}^0}_t).
\end{align*}
\end{restatable}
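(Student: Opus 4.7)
The plan is to mirror the proof of Theorem \ref{cor:pwlc:occ:slave:problem} almost verbatim, with the only substitution being that the linearity of state-value functions under a fixed joint policy is invoked in the standard basis (via Lemma \ref{lem:linearity:joint:policy}) rather than in the private-occupancy-state basis. The key observation is that an occupancy state $s^{\mathcal{B}^0}_t$ expressed in the standard basis encodes strictly more information than any single private occupancy state of agent $i$; as a result, the best-response value that agent $i$ achieves against $a^{\neg i}_{t:}$ starting from $s^{\mathcal{B}^0}_t$ can be written as an outer maximum over pure policies of an inner functional that is linear in $s^{\mathcal{B}^0}_t$.

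More concretely, I first extend Definition \ref{def:best:response:vf:optimal:extended} by abuse of notation from marginal occupancy states to full standard-basis occupancy states,
\begin{align*}
\upsilon^i_{M''(a^{\neg i}_{t:}),\gamma,t}(s^{\mathcal{B}^0}_t) \doteq \max_{a^i_{t:} \in \mathcal{A}^i_{t:}}\mathbb{E}\{ \textstyle\sum_{\tau=t}^{\ell-1} \gamma^{\tau-t}\cdot r^i_{X_\tau,U_\tau} \mid s^{\mathcal{B}^0}_t, a^i_{t:}, a^{\neg i}_{t:}\},
\end{align*}
and restrict the outer maximum to the finite set $\mathcal{A}^{i,\text{pure}}_{t:}$ of pure policies of agent $i$, invoking the same argument used in Theorem \ref{cor:pwlc:occ:slave:problem} that every mixed best-response decomposes into pure best-responses with identical expected value. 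Then, for each pure joint policy $a_{t:} = (a^i_{t:}, a^{\neg i}_{t:})$, Lemma \ref{lem:linearity:joint:policy} yields
\begin{align*}
\upsilon^{i,a_{t:}}_{M'',\gamma,t}(s^{\mathcal{B}^0}_t) = \textstyle\sum_{(x_t,o_t)\in \mathcal{X}\times\mathcal{O}_t} s^{\mathcal{B}^0}_t(x_t,o_t)\cdot \upsilon^{i,a_{t:}}_{M,\gamma,t}(x_t,o_t),
\end{align*}
which is linear in $s^{\mathcal{B}^0}_t$. Collecting one such linear function per pure $a^i_{t:}$ into the finite set $\mathcal{V}_{M''(a^{\neg i}_{t:}),\gamma,t}$, the optimal value at $s^{\mathcal{B}^0}_t$ is by construction the pointwise maximum over this finite family, hence piecewise linear and convex in the standard basis.

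I do not expect a genuine obstacle: all technical ingredients are already available, and the reduction to pure policies---the only mildly subtle step---is a verbatim replay of the one used in Theorem \ref{cor:pwlc:occ:slave:problem}. The one point that warrants explicit mention is a consistency check: using the mixture decomposition $s_t = \sum_{o^i_t} \lambda^{i,o^i_t}\, s^{i,o^i_t}_t$ from Lemma \ref{lem:relation:between:occupancy:state:and:private:ones} together with linearity of the expectation, one verifies that the extension of $\upsilon^i_{M''(a^{\neg i}_{t:}),\gamma,t}$ to $\mathcal{S}^{\mathcal{B}^0}_t$ just introduced is compatible with its existing definition on $\mathcal{S}^i_t$, so no ambiguity between the two senses in which the symbol is used ever arises.
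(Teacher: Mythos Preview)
Your proposal is correct and is precisely the natural argument the paper has in mind: the corollary is stated without its own proof because it follows by replaying the proof of Theorem~\ref{cor:pwlc:occ:slave:problem} with Lemma~\ref{lem:linearity:joint:policy} supplying linearity in $s^{\mathcal{B}^0}_t$ in place of linearity in private occupancy states. Your consistency check via Lemma~\ref{lem:relation:between:occupancy:state:and:private:ones} is a nice addition but not strictly needed for the corollary as stated.
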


\subsection{Optimal Value Functions Are Convex}
This subsection presents our key finding. However, we must first demonstrate that the optimal state-value function of the slave game $M''(a^{\neg i}_{0:})$ (as well as $M'(a^{\neg i}_{0:})$ and $M(a^{\neg i}_{0:})$) can be expressed as a linear function of occupancy states from the perspective of agent $i$, using the basis $\mathcal{B}^i$. This proof is based on the premise that private occupancy states are sufficient for determining the optimal state-value function of slave game $M''(a^{\neg i}_{0:})$ and the representation of occupancy states as a combination of private occupancy states. Before proceeding, we would like to establish some fundamental preliminary results.

\begin{restatable}{lem}{lemrewardrelationbetweenoccupancystateandprivateones}
\label{lem:reward:relation:between:occupancy:state:and:private:ones}
For each agent $i$, the immediate expected reward function over occupancy states, denoted as $r^i_{M''} \colon \mathcal{S}^{\mathcal{B}^i} \times \mathcal{A}^i \to \mathbb{R}$, is linear over occupancy states expressed in $\mathcal{B}^i$ and decision rules of agent $i$---\ie for occupancy state $s^{\mathcal{B}^i}_t \doteq \sum_{o_t^i\in \mathcal{O}^i_t} \lambda^{i,o_t^i} \cdot s_t^{i,o_t^i}~ \text{ such that }\sum_{o_t^i\in \mathcal{O}^i_t} \lambda^{i,o_t^i} =1 ~\text{ and } \lambda^{i,o_t^i} \geq 0$, and decision rule $a^i_t$,
\begin{align*}
r^i_{M''}(s^{\mathcal{B}^i}_t, a^i_t) &=
\sum_{o_t^i\in \mathcal{O}^i_t} \lambda^{i,o_t^i} \sum_{u_t^i\in \mathcal{U}^i} a^i_t(u^i|o^i) \cdot r^i_{M''}(s_t^{i,o_t^i},u_t^i).
\end{align*}
\end{restatable}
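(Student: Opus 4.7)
The strategy is to unwind both sides back to the primitive reward $r^i_{x_t, u_t}$ and match them term by term, combining the definition of the occupancy-state reward from Definition \ref{def:omg} with the mixture decomposition established in Lemma \ref{lem:relation:between:occupancy:state:and:private:ones}.

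First I would expand $r^i_{M''}(s^{\mathcal{B}^i}_t, a^i_t)$ by Definition \ref{def:omg}, reading $a^i_t$ together with the fixed $a^{\neg i}_t$ (inherent to the slave game context) as the joint decision rule, to obtain
\begin{align*}
r^i_{M''}(s^{\mathcal{B}^i}_t, a^i_t) = \sum_{x_t \in \mathcal{X}} \sum_{o_t \in \mathcal{O}_t} s^{\mathcal{B}^i}_t(x_t, o_t) \sum_{u^i_t \in \mathcal{U}^i} \sum_{u^{\neg i}_t \in \mathcal{U}^{\neg i}} a^i_t(u^i_t | o^i_t) \cdot a^{\neg i}_t(u^{\neg i}_t | o^{\neg i}_t) \cdot r^i_{x_t, u_t}.
\end{align*}
Then I would substitute the mixture expansion $s^{\mathcal{B}^i}_t = \sum_{\tilde{o}^i_t \in \mathcal{O}^i_t} \lambda^{i, \tilde{o}^i_t} \cdot s_t^{i, \tilde{o}^i_t}$ from Lemma \ref{lem:relation:between:occupancy:state:and:private:ones} and swap the summations to bring the sum over $\tilde{o}^i_t$ to the outside.

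The key step is the observation that, by Definition \ref{def:private:occupancy:state}, the private occupancy state $s_t^{i, \tilde{o}^i_t}$ is supported only on joint histories whose $i$-th component equals $\tilde{o}^i_t$. Hence, on the support of $s_t^{i, \tilde{o}^i_t}$, we have $a^i_t(u^i_t | o^i_t) = a^i_t(u^i_t | \tilde{o}^i_t)$, which is constant in $(x_t, o_t, u^{\neg i}_t)$ and can be factored outside of those inner sums. The remaining inner expression $\sum_{x_t, o_t} s_t^{i, \tilde{o}^i_t}(x_t, o_t) \sum_{u^{\neg i}_t} a^{\neg i}_t(u^{\neg i}_t | o^{\neg i}_t) \cdot r^i_{x_t, u_t}$ is precisely $r^i_{M''}(s_t^{i, \tilde{o}^i_t}, u^i_t)$ by Definition \ref{def:slave:problem:as:mdp}, which yields the target identity.

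The only real subtlety is notational bookkeeping: the symbol $r^i_{M''}$ denotes two different objects depending on whether the state belongs to the occupancy-state Markov game of $M$ (with signature $\mathcal{S} \times \mathcal{A} \to \mathbb{R}$) or to the private occupancy-state MDP of the slave game (with signature $\mathcal{S}^i \times \mathcal{U}^i \to \mathbb{R}$). Keeping track of which reward is being invoked at each step, and verifying that the joint decision rule on the left side matches the one implicitly embedded in the slave-game reward on the right side, is the main obstacle; once disentangled, the support property of private occupancy states reduces the argument to a linearity calculation.
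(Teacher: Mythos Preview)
Your proposal is correct and follows essentially the same route as the paper's proof: expand $r^i_{M''}(s^{\mathcal{B}^i}_t,a^i_t)$ in terms of the primitive reward, insert the mixture decomposition of Lemma~\ref{lem:relation:between:occupancy:state:and:private:ones}, factor out $a^i_t(u^i_t|o^i_t)$, and recognise the remaining bracket as $r^i_{M''}(s_t^{i,o^i_t},u^i_t)$ from Definition~\ref{def:slave:problem:as:mdp}. The one point you make explicit---that $s_t^{i,\tilde{o}^i_t}$ is supported only on joint histories with $i$-th component $\tilde{o}^i_t$, so $a^i_t(u^i_t|o^i_t)$ collapses to $a^i_t(u^i_t|\tilde{o}^i_t)$---is handled only implicitly in the paper (which simply restricts the inner sum to $o^{\neg i}_t$ while keeping $o^i_t$ fixed by the outer index), so your version is if anything a bit more careful.
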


\begin{proof}
The proof starts with the definition of the immediate expected reward at occupancy state $s^{\mathcal{B}^i}_t$, and decision rule $a^i_t$,
\begin{align*}
r^i_{M''}(s^{\mathcal{B}^i}_t, a^i_t) &\doteq \mathbb{E}\{ r^i_{X_t,U_t} | s^{\mathcal{B}^i}_t, a_t\}.
\end{align*}
The expansion of the expectation over all private occupancy states, hidden states, and private actions gives the following expression
\begin{align}
r^i_{M''}(s^{\mathcal{B}^i}_t, a^i_t) &= \sum_{o_t^i\in \mathcal{O}^i_t}\lambda^{i,o_t^i} \sum_{x_t\in \mathcal{X}}\sum_{o^{\neg i}_t\in \mathcal{O}^{\neg i}_t}s^{i,o_t^i}_t(x_t,o_t)\sum_{u_t\in \mathcal{U}} a^i_t(u^i_t|o^i)\cdot a^{\neg i}_t(u^{\neg i}_t|o^{\neg i}_t)\cdot r^i_{x_t,u_t}.
\label{eqn:lem:reward:relation:between:occupancy:state:and:private:ones:0}
\end{align}
Given that occupancy states are mixtures of private occupancy states, \ie
\begin{align*}
s^{\mathcal{B}^i}_t &= \sum_{o_t^i\in \mathcal{O}^i_t}~ \lambda^{i,o_t^i} \cdot s^{i,o_t^i}_t \quad \text{ such that }\sum_{o_t^i\in \mathcal{O}^i_t}~ \lambda^{i,o_t^i} =1 \quad \lambda^{i,o_t^i} \geq 0.
\end{align*}
Re-arranging terms leads to
\begin{align}
&\overset{\eqref{eqn:lem:reward:relation:between:occupancy:state:and:private:ones:0}}{=}
\sum_{o_t^i\in \mathcal{O}^i_t} \lambda^{i,o_t^i} \sum_{u^i_t\in \mathcal{U}^i} a^i_t(u^i_t|o_t^i)\left(\sum_{x_t\in \mathcal{X}}\sum_{o^{\neg i}_t\in \mathcal{O}^{\neg i}_t}s^{i,o_t^i}_t(x_t,o_t)\sum_{u^{\neg i}_t\in \mathcal{U}^{\neg i}} a^{\neg i}_t(u^{\neg i}_t|o^{\neg i}_t)\cdot r^i_{x_t,u_t}\right).
\label{eqn:lem:reward:relation:between:occupancy:state:and:private:ones:1}
\end{align}
The injection of $r^i_{M''}(s^{i,o_t^i}_t,u^i_t)$ into the expression in the parentheses of \eqref{eqn:lem:reward:relation:between:occupancy:state:and:private:ones:1} yields
\begin{align*}
r^i_{M''}(s^{\mathcal{B}^i}_t, a^i_t) &=
\sum_{o_t^i\in \mathcal{O}^i_t} \lambda^{i,o_t^i} \sum_{u^i_t\in \mathcal{U}^i} a^i_t(u^i_t|o_t^i) \cdot r^i_{M''}(s^{i,o_t^i}_t,u^i_t),
\end{align*}
which ends the proof.
\end{proof}

Lemma \ref{lem:reward:relation:between:occupancy:state:and:private:ones} demonstrates a property that is easily proven for occupancy states expressed in the standard basis. This lemma confirms that the linearity property of the immediate reward remains valid regardless of the agent viewpoint, as long as occupancy states are expressed on the corresponding basis.

\begin{restatable}{lem}{lemtransitionrelationbetweenoccupancystateandprivateones}
\label{lem:transition:relation:between:occupancy:state:and:private:ones}
For each agent $i$, the transition rule over occupancy states expressed in $\mathcal{B}^i$, denoted as $\rho^i_{M'} \colon \mathcal{S}^{\mathcal{B}^i} \times \mathcal{A}^i \to \mathcal{S}^{\mathcal{B}^i}$, is linear over occupancy states expressed in $\mathcal{B}^i$ and decision rules of agent $i$---\ie for occupancy state $s^{\mathcal{B}^i}_t \doteq \sum_{o_t^i\in \mathcal{O}^i_t} \lambda^{i,o_t^i} \cdot s_t^{i,o_t^i}~ \text{ such that }\sum_{o_t^i\in \mathcal{O}^i_t} \lambda^{i,o_t^i} =1 ~\text{ and } \lambda^{i,o_t^i} \geq 0$, and decision rule $a^i_t$,
\begin{align*}
\rho^i_{M''}(s^{\mathcal{B}^i}_t, a^i_t) &= \sum_{o^i_t\in \mathcal{O}_t^i} \lambda^{i,o^i_t} \sum_{u^i_t\in \mathcal{U}^i} a^i_t(u_t^i|o^i_t) \sum_{z^i_{t+1}\in \mathcal{Z}^i} p^i_{M''}(s^{i,o^i_t}_t,u^i_t,s^{i,(o^i_t,u^i_t,z^i_{t+1})}_{t+1})\cdot s^{i,(o^i_t,u^i_t,z^i_{t+1})}_{t+1}.
\end{align*}
\end{restatable}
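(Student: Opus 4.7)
The plan is to mirror, step by step, the structure of the proof of Lemma~\ref{lem:reward:relation:between:occupancy:state:and:private:ones}, exploiting the fact that the update of an occupancy state is Bayes-linear in the joint distribution it starts from, and then pushing the mixture decomposition $s^{\mathcal{B}^i}_t = \sum_{o^i_t} \lambda^{i,o^i_t}\, s^{i,o^i_t}_t$ through each step. Concretely, I would first write out $\rho_{M''}(s^{\mathcal{B}^i}_t, a_t, w_{t+1})(x_{t+1}, \langle o_t, u_t, \tilde z_{t+1}, w_{t+1}\rangle)$ using Definition~\ref{def:omg}, substitute the mixture expression for $s^{\mathcal{B}^i}_t$, and observe that the joint decision rule factors as $a_t(u_t \mid o_t) = a^i_t(u^i_t \mid o^i_t)\, a^{\neg i}_t(u^{\neg i}_t \mid o^{\neg i}_t)$, while the dynamics $p^{u_t,\langle \tilde z_{t+1},w_{t+1}\rangle}_{x_t,x_{t+1}}$ does not depend on the mixture.

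Next I would pull the sum over $o^i_t$ and the factor $\lambda^{i,o^i_t}\,a^i_t(u^i_t \mid o^i_t)$ outside, identifying the inner quantity with the private occupancy-state transition $\rho^i_{M''}(s^{i,o^i_t}_t, u^i_t, z^i_{t+1})$ from Definition~\ref{def:slave:problem:as:mdp}, up to the usual Bayes normalizer; the normalizer itself is precisely $\omega^i_{M''}(z^i_{t+1} \mid s^{i,o^i_t}_t, u^i_t)$, which by the definition of $p^i_{M''}$ coincides with $p^i_{M''}(s^{i,o^i_t}_t, u^i_t, s^{i,(o^i_t,u^i_t,z^i_{t+1})}_{t+1})$ once the Kronecker $\delta$ selects the unique consistent successor. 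Summing over the public observation $w_{t+1}$ (or equivalently, viewing $z^i_{t+1} = \langle \tilde z^i_{t+1}, w_{t+1}\rangle$ as agent $i$'s private observation in the slave sense) collapses the expression into a sum over $z^i_{t+1}\in\mathcal{Z}^i$, giving the claimed mixture representation of $\rho^i_{M''}(s^{\mathcal{B}^i}_t, a^i_t)$.

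The only genuinely delicate step will be the bookkeeping in the normalization: the successor private occupancy states $s^{i,(o^i_t,u^i_t,z^i_{t+1})}_{t+1}$ are, by construction, the Bayes posteriors inside each branch of the mixture, so the raw (unnormalized) weights that fall out of the substitution must be re-interpreted as $\lambda^{i,o^i_t}\,a^i_t(u^i_t\mid o^i_t)\,p^i_{M''}(s^{i,o^i_t}_t,u^i_t,s^{i,(o^i_t,u^i_t,z^i_{t+1})}_{t+1})$ multiplying the already-normalized successor. This uses Lemma~\ref{lem:sufficiency:occupancy:state:observation} (that $\omega^i_{M'} = \omega^i_{M''}$ depends on $y^i_t$ only through $s^{i,o^i_t}_t$) and Lemma~\ref{lem:sufficiency:private:occupancy:state:markov} (that $\rho^i_{M'} = \rho^i_{M''}$). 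Once this identification is in place, the equality in the statement follows by relabelling, and the fact that the right-hand side is a convex combination of next-step private occupancy states means, via Lemma~\ref{lem:relation:between:occupancy:state:and:private:ones} applied at time $t+1$, that it is itself a legitimate next occupancy state expressed in $\mathcal{B}^i$, which closes the argument.
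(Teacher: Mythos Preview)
Your plan is correct and tracks the paper's proof closely. The paper starts directly in the $\mathcal{B}^i$ basis by writing $s^{\mathcal{B}^i}_{t+1}(s^{i,(o^i_t,u^i_t,z^i_{t+1})}_{t+1}) \doteq \Pr\{o^i_t,u^i_t,z^i_{t+1}\mid s^{\mathcal{B}^i}_t,a^i_t\}$ and expanding this probability, whereas you begin from the standard-basis transition of Definition~\ref{def:omg} and push the mixture through; but after the expansion and factorisation both arguments collapse to the same identification of $\omega^i_{M''}$ (and hence $p^i_{M''}$) from Definition~\ref{def:slave:problem:as:mdp}, so the content is the same. Your explicit appeal to Lemmas~\ref{lem:sufficiency:occupancy:state:observation} and~\ref{lem:sufficiency:private:occupancy:state:markov} for the normalisation is a cosmetic difference---the paper re-derives those identities inline rather than citing them.
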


\begin{proof}
The proof starts with the definition of the next-step occupancy state $s^{\mathcal{B}^i}_{t+1} \doteq \rho^i_{M''}(s^{\mathcal{B}^i}_t, a^i_t)$ upon taking decision rule $a^i_t$ in occupancy state $s^{\mathcal{B}^i}_t$, \ie
\begin{align}
s^{\mathcal{B}^i}_{t+1}( s_{t+1}^{i,(o^i_t,u_t^i,z_{t+1}^i)}) &\doteq \Pr\{o^i_t,u^i_t,z^i_{t+1} | s^{\mathcal{B}^i}_t, a^i_t\},
\label{eq:lem:transition:relation:between:occupancy:state:and:private:ones}
\end{align}
The expansion of \eqref{eq:lem:transition:relation:between:occupancy:state:and:private:ones} over all hidden states $x_t,x_{t+1}$, private action $u^{\neg i}_t$, joint history $o^{\neg i}_t$, and joint observation $z^{\neg i}_{t+1}$ yields,
\begin{align*}
s^{\mathcal{B}^i}_{t+1}( s_{t+1}^{i,(o^i_t,u_t^i,z_{t+1}^i)}) &=
\sum_{x_t,x_{t+1}\in \mathcal{X}}
\sum_{u^{\neg i}_t\in \mathcal{U}^{\neg i}}
\sum_{o^{\neg i}_t\in \mathcal{O}^{\neg i}_t}
\sum_{z^{\neg i}_{t+1}\in \mathcal{Z}^{\neg i}}
\Pr\{x_t,u_t,o_t,x_{t+1},z_{t+1} | s^{\mathcal{B}^i}_t, a^i_t\}.
\end{align*}
The expansion of the joint probability into the product of conditional probabilities gives
\begin{align*}
&\overset{\text{\eqref{eq:lem:transition:relation:between:occupancy:state:and:private:ones}}}{=}
\sum_{x_t,x_{t+1}\in \mathcal{X}}
\sum_{u^{\neg i}_t\in \mathcal{U}^{\neg i}}
\sum_{o^{\neg i}_t\in \mathcal{O}^{\neg i}_t}
\sum_{z^{\neg i}_{t+1}\in \mathcal{Z}^{\neg i}}
\Pr\{x_{t+1},z_{t+1}|x_t,u_t,o_t, s^{\mathcal{B}^i}_t, a^i_t\}\nonumber\\
&\qquad\cdot\Pr\{u_t|x_t,o_t, s^{\mathcal{B}^i}_t, a^i_t\}
\cdot \Pr\{x_t,o^{\neg i}_t| o^i_t, s^{\mathcal{B}^i}_t, a^i_t\} \cdot \Pr\{o^i_t| s^{\mathcal{B}^i}_t, a^i_t\}.
\end{align*}
The first factor in the decomposition of the joint probability is the dynamics model, \ie $p_{x_t,x_{t+1}}^{u_t,z_{t+1}}$, the second denotes the probability of taking joint action given decision rules, \ie $ a^i_t(u_t^i|o^i_t) \cdot a^{\neg i}_t(u_t^{\neg i}|o^{\neg i}_t)$, and the following factor is the value at private occupancy state $s_t^{i,o^i_t}(x_t,o_t)$, and the last factor is the value at occupancy state $s^{\mathcal{B}^i}_t(s^{i,o^i_t}_t)$, \ie
\begin{align}
&= a^i_t(u_t^i|o^i_t) \cdot \lambda^{i,o^i_t} \left( \sum_{x_t,x_{t+1}\in \mathcal{X}}\sum_{u^{\neg i}_t\in \mathcal{U}^{\neg i}}\sum_{o^{\neg i}_t\in \mathcal{O}^{\neg i}_t}\sum_{z^{\neg i}_{t+1}\in \mathcal{Z}^{\neg i}} p_{x_t,x_{t+1}}^{u_t,z_{t+1}}\cdot a^{\neg i}_t(u_t^{\neg i}|o^{\neg i}_t) \cdot s_t^{i,o^i_t}(x_t,o_t)\right).
\label{eqn:lem:transition:relation:between:occupancy:state:and:private:ones:0}
\end{align}
The injection of $\omega^i_{M''}(z^i_{t+1}|s^{i,o^i_t}_t,u^i_t)$, \cf Definition \ref{def:slave:problem:as:mdp}, into \eqref{eqn:lem:transition:relation:between:occupancy:state:and:private:ones:0} yields
\begin{align*}
&= a^i_t(u_t^i|o^i_t) \cdot s^{\mathcal{B}^i}_t(s_t^{i,o^i_t}) \cdot \omega^i_{M''}(z^i_{t+1}|s^{i,o^i_t}_t,u^i_t).
\end{align*}
If we group together all $s^{i,(o^i_t,u^i_t,z^i_{t+1})}_{t+1}$ with the same private occupancy state $s^i_{t+1}$ regardless the private history $(o^i_t,u^i_t,z^i_{t+1})$, we have that:
\begin{align}
s^{\mathcal{B}^i}_{t+1}(s^i_{t+1})&= \sum_{o^i_t\in \mathcal{O}_t^i} \lambda^{i,o^i_t} \sum_{u^i_t\in \mathcal{U}^i} a^i_t(u_t^i|o^i_t) \sum_{z^i_{t+1}\in \mathcal{Z}^i} \omega^i_{M''}(z^i_{t+1}|s^{i,o^i_t}_t,u^i_t) \cdot \delta_{\rho^i_{M''}(s_t^{i,o^i_t}, u^i_t,z^i_{t+1})}(s^i_{t+1}).
\label{eqn:lem:transition:relation:between:occupancy:state:and:private:ones:1}
\end{align}
The injection of $p^i_{M''}(s^{i,o^i_t}_t,u^i_t,s^{i,(o^i_t,u^i_t,z^i_{t+1})}_{t+1})$, \cf Definition \ref{def:slave:problem:as:mdp}, into \eqref{eqn:lem:transition:relation:between:occupancy:state:and:private:ones:1} yields:
\begin{align*}
s^{\mathcal{B}^i}_{t+1}&= \sum_{o^i_t\in \mathcal{O}_t^i} \lambda^{i,o^i_t} \sum_{u^i_t\in \mathcal{U}^i} a^i_t(u_t^i|o^i_t) \sum_{z^i_{t+1}\in \mathcal{Z}^i} p^i_{M''}(s^{i,o^i_t}_t,u^i_t,s^{i,(o^i_t,u^i_t,z^i_{t+1})}_{t+1})\cdot s^{i,(o^i_t,u^i_t,z^i_{t+1})}_{t+1}.
\end{align*}
which ends the proof.
\end{proof}

Like the previous lemma, Lemma \ref{lem:transition:relation:between:occupancy:state:and:private:ones} demonstrates that the transition from one occupancy state to another is a linear function, regardless of the basis used or the perspective of any agent. The selection of the decision rules, known as the policy of agent $i$, creates a non-observable and deterministic Markov game, which is another occupancy-state Markov game from the perspective of agent $i$.

\begin{definition}
A $n$-agent, simultaneous-move, occupancy Markov game \wrt~ $M''(a^{\neg i}_{0:})$ is given by a tuple $\langle \mathcal{S}^{\mathcal{B}^i}, \mathcal{A}^i, \rho^{i}_{M''}, r^{i}_{M''}\rangle$, where $\mathcal{S}^{\mathcal{B}^i}$ is the occupancy-state space, occupancy states being conditional probability distributions over private occupancy states given a joint policy followed so far; $\mathcal{A}^i$ is the space of actions describing decision rules of agent $i$; $\rho^i_{M''} \colon \mathcal{S}^{\mathcal{B}^i} \times \mathcal{A}^i \to \mathcal{S}^{\mathcal{B}^i}$ is the deterministic transition rule, and $r^i_{M''} \colon \mathcal{S}^{\mathcal{B}^i} \times \mathcal{A}^i \to \mathbb{R}$ describes the linear reward function over occupancy states.
\end{definition}

It is worth noting that $M''(a^{\neg i}_{0:})$ serves as a restricted version of $M''$, which solely takes into account the occupancy states based on the joint policy $a^{\neg i}_{0:}$ of all agents, except for agent $i$. Private occupancy states suffice to optimally solve $M''(a^{\neg i}_{0:})$ (and $M(a^{\neg i}_{0:})$). Additionally, $M'(a^{\neg i}_{0:})$ (and $M(a^{\neg i}_{0:})$) encompass a finite number of private occupancy states. Therefore, the optimal value function over private occupancy states can be represented as a tabular function, which ultimately becomes a finite vector of values. This leads to a crucial conclusion.

\begin{restatable}{thm}{thmrelationbetweenoccupancystateandprivateones}
\label{thm:relation:between:occupancy:state:and:private:ones}
For each agent $i$, the $t$-step optimal state-value function for slave game $M''(a^{\neg i}_{0:})$ (respectively $M'(a^{\neg i}_{0:})$ and $M(a^{\neg i}_{0:})$), denoted as $\upsilon^i_{M''(a^{\neg i}_{t:}),\gamma, t}\colon \mathcal{S}_t^{\mathcal{B}^i} \to \mathbb{R}$, is linear over occupancy states expressed in $\mathcal{B}^i$, \ie at any occupancy state $s^{\mathcal{B}^i}_t \doteq \sum_{o_t^i\in \mathcal{O}^i_t} \lambda^{i,o_t^i} \cdot s_t^{i,o_t^i}~ \text{ such that }\sum_{o_t^i\in \mathcal{O}^i_t} \lambda^{i,o_t^i} =1 ~\text{ and } \lambda^{i,o_t^i} \geq 0$,
\begin{align*}
\upsilon^i_{M''(a^{\neg i}_{t:}),\gamma, t}(s_t^{\mathcal{B}^i}) &=
{\textstyle
\sum_{o^i_t\in \mathcal{O}^i_t} \lambda^{i,o^i_t} \cdot \upsilon^i_{M''(a^{\neg i}_{t:}),\gamma, t}(s^{i,o^i_t}_t)
}.
\end{align*}
\end{restatable}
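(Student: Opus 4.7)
The plan is to prove the claim by backward induction on the time step $t$, using \citeauthor{bellman}'s optimality equation for $M''(a^{\neg i}_{t:})$ together with the linearity of the reward mapping (Lemma \ref{lem:reward:relation:between:occupancy:state:and:private:ones}) and of the transition mapping (Lemma \ref{lem:transition:relation:between:occupancy:state:and:private:ones}) when occupancy states are expressed in the basis $\mathcal{B}^i$. The base case $t=\ell$ is immediate because the boundary condition gives $\upsilon^i_{M''(a^{\neg i}_{\ell:}),\gamma,\ell}(\cdot) \doteq 0$ on both sides of the identity.

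For the inductive step, I would start from the Bellman-type expansion
\[
\upsilon^i_{M''(a^{\neg i}_{t:}),\gamma,t}(s^{\mathcal{B}^i}_t) = \max_{a^i_t}\bigl[r^i_{M''}(s^{\mathcal{B}^i}_t, a^i_t) + \gamma\,\upsilon^i_{M''(a^{\neg i}_{t+1:}),\gamma,t+1}(\rho^i_{M''}(s^{\mathcal{B}^i}_t, a^i_t))\bigr],
\]
substitute the mixture expansions of $r^i_{M''}$ and $\rho^i_{M''}$ supplied by the two linearity lemmas, and then apply the inductive hypothesis to the next-step value function, which is legitimate because $\rho^i_{M''}(s^{\mathcal{B}^i}_t, a^i_t)$ is itself a mixture of next-step private occupancy states. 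After pulling out the summation over $o^i_t$ and grouping the immediate reward with the discounted continuation into the optimal action-value function at the private occupancy state $s^{i,o^i_t}_t$, one obtains
\[
\upsilon^i_{M''(a^{\neg i}_{t:}),\gamma,t}(s^{\mathcal{B}^i}_t) = \max_{a^i_t} \sum_{o^i_t\in\mathcal{O}^i_t} \lambda^{i,o^i_t}\sum_{u^i_t\in\mathcal{U}^i} a^i_t(u^i_t|o^i_t)\cdot q^i_{M''(a^{\neg i}_{t:}),\gamma,t}(s^{i,o^i_t}_t, u^i_t).
\]

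The crux of the argument, and what I anticipate to be the main obstacle to state carefully, is the interchange of the outer maximum over $a^i_t$ with the summation over private histories $o^i_t$. A decision rule $a^i_t\colon \mathcal{O}^i_t\to \mathcal{P}(\mathcal{U}^i)$ specifies an \emph{independent} distribution over actions for each private history, so the feasibility constraints across different values of $o^i_t$ are fully decoupled; combined with the non-negativity of the weights $\lambda^{i,o^i_t}$, the outer maximization splits into a family of independent inner maximizations, each of which collapses to $\max_{u^i_t} q^i_{M''(a^{\neg i}_{t:}),\gamma,t}(s^{i,o^i_t}_t, u^i_t) = \upsilon^i_{M''(a^{\neg i}_{t:}),\gamma,t}(s^{i,o^i_t}_t)$ via \citeauthor{bellman}'s equation applied at that private occupancy state. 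Re-assembling the outer sum then delivers the target identity and closes the induction. This decoupling ultimately reflects \emph{perfect recall}: because agent $i$ always remembers its own private history, its optimal response can be tuned independently along each branch of its policy tree. An equivalent non-inductive route would write the value under any fixed policy $a^i_{t:}$ as a $\lambda^{i,o^i_t}$-weighted sum of evaluations at the $s^{i,o^i_t}_t$ (via the sufficiency guaranteed by Theorem \ref{thm:private:sufficiency:occupancy:state}) and then view $a^i_{t:}$ as the concatenation of independent subpolicies $(a^{i,o^i_t}_{t:})_{o^i_t}$ rooted at each private history, exploiting the very same decoupling observation.
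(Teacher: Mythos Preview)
Your proposal is correct and follows essentially the same approach as the paper: backward induction on $t$, applying Lemmas \ref{lem:reward:relation:between:occupancy:state:and:private:ones} and \ref{lem:transition:relation:between:occupancy:state:and:private:ones} together with the inductive hypothesis to expand the Bellman recursion into a $\lambda^{i,o^i_t}$-weighted sum of $q$-values at private occupancy states, and then pushing the $\max_{a^i_t}$ inside the sum. Your justification of the max--sum interchange via the decoupled structure of decision rules (perfect recall) is in fact more explicit than the paper's one-line ``one can reason independently for each private occupancy state''; the alternative non-inductive route you sketch is not used in the paper but is a valid shortcut.
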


\begin{proof}
The proof proceeds by induction. The statement trivially holds at time step $t=\ell$, since $\upsilon^{i,\cdot}_{M'',\gamma, \ell}(\cdot) = 0$. Suppose the statement holds from time step $t+1$ onward---\ie
\begin{align*}
\upsilon^i_{M''(a^{\neg i}_{t+1:}),\gamma, t+1}(s_{t+1}^{\mathcal{B}^i}) &={\textstyle \sum_{o^i_{t+1}\in \mathcal{O}^i_{t+1}}}~ \lambda^{i,o^i_{t+1}} \cdot\upsilon^i_{M''(a^{\neg i}_{t+1:}),\gamma, t+1}(s_{t+1}^{i,o^i_{t+1}}).
\end{align*}
We can now prove it also holds for time step $t$. We start with the definition of the optimal state-value function for slave game $M''(a^{\neg i}_{0:})$ given any arbitrary occupancy state $s^{\mathcal{B}^i}_t$ at any time step $t$, \ie
\begin{align}
\upsilon^i_{M''(a^{\neg i}_{t:}),\gamma, t}(s^{\mathcal{B}^i}_t) &\doteq
{\textstyle \max_{a^i_{t:}\in \mathcal{A}^i_{t:}}}~\upsilon^{i,a_{t:}}_{M'',\gamma, t}(s^{\mathcal{B}^i}_t).
\label{eqn:proof:thm:relation:between:occupancy:state:and:private:ones}
\end{align}
Bellman's optimality equations allow us to rewrite \eqref{eqn:proof:thm:relation:between:occupancy:state:and:private:ones} as follows
\begin{align}
& \overset{\eqref{eqn:proof:thm:relation:between:occupancy:state:and:private:ones}}{=}
\textstyle \max_{a^i_t\in \mathcal{A}^i_t}~\left\{r^i_{M''}(s^{\mathcal{B}^i}_t, a^i_t) + \gamma \max_{a^i_{t+1:}\in \mathcal{A}^i_{t+1:}}~\upsilon^{i,a_{t+1:}}_{M'',\gamma, t+1}(\rho^i_{M''}(s^{\mathcal{B}^i}_t, a^i_t) )\right\}.
\label{proof:thm:relation:between:occupancy:state:and:private:ones:0}
\end{align}
Re-arranging terms in \eqref{proof:thm:relation:between:occupancy:state:and:private:ones:0} using \eqref{eqn:proof:thm:relation:between:occupancy:state:and:private:ones} leads us to the following expression
\begin{align}
& \overset{\eqref{eqn:proof:thm:relation:between:occupancy:state:and:private:ones}}{=}
\textstyle \max_{a^i_t\in \mathcal{A}^i_t}~\left\{r^i_{M''}(s^{\mathcal{B}^i}_t, a^i_t) + \gamma \upsilon^i_{M''(a^{\neg i}_{t+1:}),\gamma, t+1}(\rho^i_{M''}(s^{\mathcal{B}^i}_t, a^i_t) )\right\}.
\label{proof:thm:relation:between:occupancy:state:and:private:ones:1}
\end{align}
Using the induction hypothesis, Lemma \ref{lem:transition:relation:between:occupancy:state:and:private:ones}, and Lemma \ref{lem:reward:relation:between:occupancy:state:and:private:ones} into \eqref{proof:thm:relation:between:occupancy:state:and:private:ones:1} yields
\begin{align*}
& \overset{\eqref{eqn:proof:thm:relation:between:occupancy:state:and:private:ones}}{=}
{\textstyle \max_{a^i_t\in \mathcal{A}^i_t}~
\sum_{o^i_t\in \mathcal{O}^i_t}
\lambda^{i,o^i_t}
\sum_{u^i_t\in \mathcal{U}^i}
a^i_t ( u^i_t |o^i_t )
}\nonumber\\
&\quad
\textstyle \left\{
r^i_{M''}(s^{i,o^i_t}_t, u^i_t) +
\gamma \sum_{z^i_{t+1}\in \mathcal{Z}^i}
p^i_{M''}(s^{i,o^i_t}_t, u^i_t, s^{i,(o^i_t,u^i_t,z^i_{t+1})}_{t+1}) \cdot
\upsilon^i_{M''(a^{\neg i}_{t+1:}),\gamma, t+1}(s_{t+1}^{i,(o^i_t,u^i_t,z^i_{t+1})})
\right\}
\end{align*}
Re-arranging terms using definition of action-value function $q^i_{M''(a^{\neg i}_{t:}),\gamma, t}$ gives the following expression
\begin{align*}
& \overset{\eqref{eqn:proof:thm:relation:between:occupancy:state:and:private:ones}}{=}
{\textstyle \max_{a^i_t\in \mathcal{A}^i_t}~
\sum_{o^i_t\in \mathcal{O}^i_t}
\lambda^{i,o^i_t}
\sum_{u^i_t\in \mathcal{U}^i}
a^i_t ( u^i_t |o^i_t )
\cdot
q^i_{M''(a^{\neg i}_{t:}),\gamma, t}(s^{i,o^i_t}_t, u^i_t)
}
\end{align*}
One can reason independently for each private occupancy state $s^{i,o^i_t}_t$, which produces
\begin{align*}
& \overset{\eqref{eqn:proof:thm:relation:between:occupancy:state:and:private:ones}}{=}
{\textstyle
\sum_{o^i_t\in \mathcal{O}^i_t}
\lambda^{i,o^i_t}
\max_{u^i_t\in \mathcal{U}^i}
q^i_{M''(a^{\neg i}_{t:}),\gamma, t}(s^{i,o^i_t}_t, u^i_t)
}.
\end{align*}
Hence, we have
\begin{align*}
& \overset{\eqref{eqn:proof:thm:relation:between:occupancy:state:and:private:ones}}{=}{
\textstyle
\sum_{o^i_t\in \mathcal{O}^i_t} \lambda^{i,o^i_t} \cdot \upsilon^i_{M''(a^{\neg i}_{t:}),\gamma, t}(s^{i,o^i_t}_t)
}.
\end{align*}
Which ends the proof.
\end{proof}

Theorem \ref{thm:relation:between:occupancy:state:and:private:ones} states that the optimal state-value functions for any slave game, denoted as $M''(a^{\neg i}_{0:})$, $M'(a^{\neg i}_{0:})$ and $M(a^{\neg i}_{0:})$, are linear over occupancy states expressed in the right basis. This holds true regardless of whether agent $i$ maximizes or minimizes the $\gamma$-discounted sum of rewards. This theorem is a useful tool for establishing convexity properties in various classes of games.

\begin{restatable}{thm}{corconvexoptimalvaluefctmasterzerosum}
\label{cor:convex:optimal:value:fct:master:zerosum}
Let $M''$ be a two-agent, zero-sum, simultaneous-move, occupancy Markov game \wrt $M$ and $M'$, where occupancy states are expressed in the basis of an agent. The $t$-step optimal state-value function of any agent $i$, denoted as $\upsilon^{i,*}_{M'',\gamma, t}\colon \mathcal{S}_t^{\mathcal{B}^{\neg i}} \to \mathbb{R}$, is a convex function over occupancy states expressed in the basis of agent $\neg i$. For each agent $i$, at each time step $t$ there exists a collection $\mathcal{V}^i_t$ of (possibly uncountable) vectors \st for any occupancy state $s^{\mathcal{B}^{\neg i}}_t \doteq \sum_{o_t^{\neg i}\in \mathcal{O}^{\neg i}_t} \lambda^{{\neg i},o_t^{\neg i}} \cdot s_t^{{\neg i},o_t^{\neg i}}~ \text{ such that }\sum_{o_t^{\neg i}\in \mathcal{O}^{\neg i}_t} \lambda^{{\neg i},o_t^{\neg i}} =1 ~\text{ and } \lambda^{{\neg i},o_t^{\neg i}} \geq 0$,
\begin{align*}
\upsilon^{i,*}_{M'',\gamma, t}(s_t^{\mathcal{B}^{\neg i}}) &= \textstyle \max_{\upsilon^i_{M''(a^i_{t:}),\gamma, t}\in \mathcal{V}^i_t}~ \upsilon^i_{M''(a^i_{t:}),\gamma, t}(s_t^{\mathcal{B}^{\neg i}}),
\end{align*}
where $\upsilon^i_{M''(a^i_{t:}),\gamma, t} \doteq -\upsilon^{\neg i}_{M''(a^i_{t:}),\gamma, t}$, with boundary condition $\upsilon^{\cdot,*}_{M'',\gamma, \ell}(\cdot)=0$. Also, the $t$-step optimal state-value function of any agent $i$, denoted as $\upsilon^{i,*}_{M'',\gamma, t}\colon\mathcal{S}_t^{\mathcal{B}^{0}} \to\mathbb{R}$, are the maximum of concave (CConcave) functions over occupancy states expressed in the standard basis. For each agent $i$, at each time step $t$ there exists a (possibly uncountable) collection $\mathbb{V}^i_t = \{\mathcal{V}^i_{M''(a^i_{t:}),\gamma,t} | a^i_{t:} \in \mathcal{A}^i_{t:}\}$ of finite collections $\mathcal{V}^i_{M''(a^i_{t:}),\gamma,t}=\{\upsilon^{i,a_{t:}}_{M'',\gamma, t}| a^{\neg i}_{t:} \in \mathcal{A}^{\neg i,\text{pure}}_{t:}\}$ of vectors \st for any occupancy state $s^{\mathcal{B}^{0}}_t$,
\begin{align*}
\upsilon^{i,*}_{M'',\gamma, t}(s_t^{\mathcal{B}^{0}}) &= \textstyle \max_{\mathcal{V}^i_{M''(a^i_{t:}),\gamma,t}\in \mathbb{V}^i_t}\min_{\upsilon^{i,a_{t:}}_{M'',\gamma, t}\in \mathcal{V}^i_{M''(a^i_{t:}),\gamma,t}}~ \upsilon^{i,a_{t:}}_{M'',\gamma, t}(s_t^{\mathcal{B}^{0}}),
\end{align*}
where $ \upsilon^{i,a_{t:}}_{M'',\gamma, t} \doteq - \upsilon^{\neg i,a_{t:}}_{M'',\gamma, t}$, with boundary condition $\upsilon^{\cdot,*}_{M'',\gamma, \ell}(\cdot)=0$.
\end{restatable}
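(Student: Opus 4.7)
The plan is to derive both statements as corollaries of the structural results already established for slave games, exploiting the zero-sum identity $r^1 = -r^2$ to turn the master game into an alternation between maximizing over agent $i$'s policies and minimizing over agent $\neg i$'s responses. First I would record the minimax identity $\upsilon^{i,*}_{M'',\gamma,t}(s_t) = \max_{a^i_{t:}\in\mathcal{A}^i_{t:}} \upsilon^{i}_{M''(a^i_{t:}),\gamma,t}(s_t)$, where $\upsilon^{i}_{M''(a^i_{t:}),\gamma,t} \doteq -\upsilon^{\neg i}_{M''(a^i_{t:}),\gamma,t}$ measures the payoff to agent $i$ once agent $\neg i$ optimizes her own (negated) reward against the fixed policy $a^i_{t:}$. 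This is the standard characterization of Nash values in zero-sum games combined with the slave-game definitions for $M''(a^i_{t:})$.

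For the first assertion (convexity in basis $\mathcal{B}^{\neg i}$), I would apply Theorem~\ref{thm:relation:between:occupancy:state:and:private:ones} to the slave game $M''(a^i_{t:})$ (with roles swapped: here agent $\neg i$ is the one planning a best-response). The theorem guarantees that $\upsilon^{\neg i}_{M''(a^i_{t:}),\gamma,t}$ is linear over occupancy states expressed in $\mathcal{B}^{\neg i}$, and hence so is its negation $\upsilon^{i}_{M''(a^i_{t:}),\gamma,t}$. Setting $\mathcal{V}^i_t \doteq \{\,\upsilon^{i}_{M''(a^i_{t:}),\gamma,t} : a^i_{t:}\in \mathcal{A}^i_{t:}\,\}$, the minimax identity immediately expresses $\upsilon^{i,*}_{M'',\gamma,t}$ as a pointwise maximum of linear functions in $\mathcal{B}^{\neg i}$, which is the desired convex form. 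The collection is only ``possibly uncountable'' because $\mathcal{A}^i_{t:}$ contains randomized policies.

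For the second assertion (max of concave functions in the standard basis $\mathcal{B}^0$), I would refine the inner best-response. Given a fixed $a^i_{t:}$, the slave game $M''(a^i_{t:})$ admits a pure best-response for agent $\neg i$ (argued as in the proof of Theorem~\ref{cor:pwlc:occ:slave:problem}: all best-responses have identical state-value, and every pure policy in the support of a mixed best-response is itself a best-response). Therefore
\begin{align*}
\upsilon^{i}_{M''(a^i_{t:}),\gamma,t}(s^{\mathcal{B}^0}_t) &= \min_{a^{\neg i}_{t:}\in \mathcal{A}^{\neg i,\text{pure}}_{t:}} \upsilon^{i,a_{t:}}_{M'',\gamma,t}(s^{\mathcal{B}^0}_t).
\end{align*}
By Lemma~\ref{lem:linearity:joint:policy}, each $\upsilon^{i,a_{t:}}_{M'',\gamma,t}$ is linear over occupancy states in the standard basis, so the minimum over the \emph{finite} set $\mathcal{V}^i_{M''(a^i_{t:}),\gamma,t} \doteq \{\,\upsilon^{i,a_{t:}}_{M'',\gamma,t} : a^{\neg i}_{t:}\in \mathcal{A}^{\neg i,\text{pure}}_{t:}\,\}$ is concave. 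Setting $\mathbb{V}^i_t \doteq \{\,\mathcal{V}^i_{M''(a^i_{t:}),\gamma,t} : a^i_{t:}\in \mathcal{A}^i_{t:}\,\}$ and combining with the minimax identity yields the announced max--min representation, \emph{i.e.}, a pointwise max of concave functions.

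The main obstacle is the justification of the minimax identity and the pure-best-response reduction: one must argue that for a two-agent zero-sum POSG viewed as $M''$ there is no duality gap at optimal Nash values, so that the ``max--min'' rewrite is legitimate, and that the inner minimum can be restricted to the finite set of pure $a^{\neg i}_{t:}$ without changing its value. Both points are classical—perfect recall and the tabular/PWLC structure of slave games (Theorem~\ref{cor:pwlc:occ:slave:problem}) ensure existence of pure best-responses, and every Nash equilibrium of a zero-sum game yields the same performance index, as recalled in the definition of \emph{zs-}POSGs; I would invoke these facts rather than re-prove them. The remaining steps are direct algebraic consequences of Theorem~\ref{thm:relation:between:occupancy:state:and:private:ones} and Lemma~\ref{lem:linearity:joint:policy}.
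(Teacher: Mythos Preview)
Your proposal is correct and follows essentially the same route as the paper: start from the max--min definition of the zero-sum value, use the zero-sum identity $\upsilon^{i,a_{t:}} = -\upsilon^{\neg i,a_{t:}}$ to reduce the inner optimization to the slave game $M''(a^i_{t:})$, then invoke Theorem~\ref{thm:relation:between:occupancy:state:and:private:ones} for linearity in $\mathcal{B}^{\neg i}$ and Lemma~\ref{lem:linearity:joint:policy} plus the pure-best-response restriction (as in Theorem~\ref{cor:pwlc:occ:slave:problem}) for the piecewise-linear-concave structure in $\mathcal{B}^0$. Your concern about a ``duality gap'' is unnecessary here, since the paper \emph{defines} $\upsilon^{i,*}$ via the max--min expression rather than deriving it from an equilibrium condition, so the minimax identity you record is immediate.
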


\begin{proof}
The proof holds at any arbitrary time step $t$, any (maximizing) agent $i$, and any arbitrary occupancy state $s^{\mathcal{B}^{\neg i}}_t$, we start with the definition of the optimal value and then use Lemma \ref{lem:linearity:joint:policy} and Theorem \ref{thm:relation:between:occupancy:state:and:private:ones}, \ie
\begin{align}
\upsilon^{i,*}_{M'',\gamma, t}(s_t^{\mathcal{B}^{\neg i}}) &\doteq
\textstyle \max_{a^i_{t:}\in \mathcal{A}^i_{t:}}\min_{a^{\neg i}_{t:}\in \mathcal{A}^{\neg i}_{t:}}~
\mathbb{E}\{ \upsilon^{i,a_{t:}}_{M,\gamma, t}(X_t,O_t)|s_t^{\mathcal{B}^{\neg i}}\}.
\label{eqn:cor:convex:optimal:value:fct:master:zerosum:1}
\end{align}
Replacing value function $\upsilon^{i,a_{t:}}_{M,\gamma, t}$ by $\upsilon^{\neg i,a_{t:}}_{M,\gamma, t}$ into \eqref{eqn:cor:convex:optimal:value:fct:master:zerosum:1} using the zero-sum property of the game, \ie $\upsilon^{i,a_{t:}}_{M,\gamma, t} + \upsilon^{\neg i,a_{t:}}_{M,\gamma, t}= 0$, yields:
\begin{align}
\upsilon^{i,*}_{M'',\gamma, t}(s_t^{\mathcal{B}^{\neg i}}) &\doteq
\textstyle \max_{a^i_{t:}\in \mathcal{A}^i_{t:}}-\max_{a^{\neg i}_{t:}\in \mathcal{A}^{\neg i}_{t:}}~
\mathbb{E}\{ \upsilon^{\neg i,a_{t:}}_{M,\gamma, t}(X_t,O_t)|s_t^{\mathcal{B}^{\neg i}}\}.
\label{eqn:cor:convex:optimal:value:fct:master:zerosum:2}
\end{align}
The application of Lemma \ref{lem:linearity:joint:policy} into \eqref{eqn:cor:convex:optimal:value:fct:master:zerosum:2} results in the following expression:
\begin{align}
\upsilon^{i,*}_{M'',\gamma, t}(s_t^{\mathcal{B}^{\neg i}}) &\textstyle= \max_{a^i_{t:}\in \mathcal{A}^i_{t:}}-\max_{a^{\neg i}_{t:}\in \mathcal{A}^{\neg i}_{t:}}~
\upsilon^{\neg i,a_{t:}}_{M'',\gamma, t}(s_t^{\mathcal{B}^{\neg i}}).
\label{eqn:cor:convex:optimal:value:fct:master:zerosum:3}
\end{align}
The application of Theorem \ref{thm:relation:between:occupancy:state:and:private:ones} into \eqref{eqn:cor:convex:optimal:value:fct:master:zerosum:3} results in the following expression:
\begin{align*}
\upsilon^{i,*}_{M'',\gamma, t}(s_t^{\mathcal{B}^{\neg i}}) &\textstyle = \max_{a^i_{t:}\in \mathcal{A}^i_{t:}}~ -
\upsilon^{\neg i}_{M''(a^i_{t:}),\gamma, t}(s_t^{\mathcal{B}^{\neg i}}).
\end{align*}
It is worth noticing that $\upsilon^{\neg i}_{M''(a^i_{t:}),\gamma, t}$ is linear in occupancy states expressed in $\mathcal{B}^{\neg i}$ but concave in occupancy states expressed in $\mathcal{B}^0$. If we let $\upsilon^i_{M''(a^i_{t:}),\gamma, t} \doteq - \upsilon^{\neg i}_{M''(a^i_{t:}),\gamma, t}$, then it follows that
\begin{align*}
\upsilon^{i,*}_{M'',\gamma, t}(s_t^{\mathcal{B}^{\neg i}}) &\textstyle = \max_{a^i_{t:}\in \mathcal{A}^i_{t:}}~ 
\upsilon^i_{M''(a^i_{t:}),\gamma, t}(s_t^{\mathcal{B}^{\neg i}}).
\end{align*}
If we let $\mathcal{V}_t^i \doteq \{ \upsilon^i_{M''(a^i_{t:}),\gamma, t} | a^i_{t:} \in \mathcal{A}^i_{t:} \}$ be a (possibly uncountable) collection of vectors, then we have that $\upsilon^{i,*}_{M'',\gamma, t}(s_t^{\mathcal{B}^{\neg i}}) = \max_{\upsilon^i_{M''(a^i_{t:}),\gamma, t}\in \mathcal{V}_t^i}~ \upsilon^i_{M''(a^i_{t:}),\gamma, t}(s_t^{\mathcal{B}^{\neg i}})$. If we let $\mathbb{V}_t^i \doteq \{\mathcal{V}^i_{M''(a^i_{t:}),\gamma,t} | a^i_{t:} \in \mathcal{A}^i_{t:} \}$ and be a (possibly uncountable) collection of finite collections $\mathcal{V}^i_{M''(a^i_{t:}),\gamma,t}=\{\upsilon^{i,a_{t:}}_{M'',\gamma, t}  | a^{\neg i}_{t:} \in \mathcal{A}^{\neg i}_{t:}\}$ of vectors, then we have that
\begin{align*}
\upsilon^{i,*}_{M'',\gamma, t}(s_t^{\mathcal{B}^0}) &= \max_{\mathcal{V}^i_{M''(a^i_{t:}),\gamma,t} \in \mathbb{V}_t^i}\min_{ \upsilon^{i,a_{t:}}_{M'',\gamma, t}\in \mathcal{V}^i_{M''(a^i_{t:}),\gamma,t} }~  \upsilon^{i,a_{t:}}_{M'',\gamma, t}(s_t^{\mathcal{B}^0}),
\end{align*}
which ends the proof.
\end{proof}

Theorem \ref{cor:convex:optimal:value:fct:master:zerosum} shows that the optimal state-value functions for an agent operating in a zero-sum game $M$, either maximizing or minimizing, are either convex or concave functions of the occupancy states expressed in the opponent viewpoint. It is inappropriate to assume that the optimal value functions of the zero-sum game $M$ are linear since the basis differs between agents. Remarkably, the stronger uniform continuity property is that these optimal state-value functions represent the maximum of concave functions over occupancy states expressed in the standard basis. As discussed later, when the former uniform continuity property fails, the latter still applies. In the case of a two-agent one-sided \emph{zs}-POSG, where agent 2 perceives both the underlying state of the game as well as the actions and perceptions of agent 1, the optimal state-value functions $\upsilon^{1,*}_{M'',\gamma,0:}$ of the master game are convex functions of the belief states. We shall now focus on another essential subclass of POSGs: \emph{st}-POSGs.

\begin{restatable}{thm}{corconvexoptimalvaluefctmasterstackelberg}
\label{cor:convex:optimal:value:fct:master:stackelberg}
Let $M''$ be a two-agent, Stackelberg, simultaneous-move, occupancy Markov game \wrt $M$ and $M'$, where occupancy states are expressed on the basis of the follower. The $t$-step optimal leader state-value function, denoted as $\upsilon^{1,*}_{M'',\gamma, t}\colon \mathcal{S}^{\mathcal{B}^2}_t \to\mathbb{R}$, is a convex function over occupancy states expressed in basis set $\mathcal{B}^2$. Consequently, at each time step $t$ there exists a collection $\mathcal{V}_t$ of (possibly uncountable) vectors \st for any occupancy state $s_t^{\mathcal{B}^2}$, $s^{\mathcal{B}^2}_t \doteq \sum_{o_t^2\in \mathcal{O}^2_t} \lambda^{2,o_t^2} \cdot s_t^{2,o_t^2}~ \text{ such that }\sum_{o_t^2\in \mathcal{O}^2_t} \lambda^{2,o_t^2} =1 ~\text{ and } \lambda^{2,o_t^2} \geq 0$
\begin{align*}
\upsilon^{1,*}_{M'',\gamma, t}(s_t^{\mathcal{B}^2}) &= \max_{ \upsilon^1_{M''(a^1_{t:}),\gamma, t}\in \mathcal{V}_t} \upsilon^1_{M''(a^1_{t:}),\gamma, t}(s_t^{\mathcal{B}^2})
\end{align*}
where $\upsilon^{1}_{M''(a^1_{t:}),\gamma, t}(s_t^2) \doteq \max_{a^2_{t:}\in \mathcal{A}^2_{t:}(s_t^2, a^1_{t:})}~\upsilon^{1,a_{t:}}_{M'',\gamma, t}(s_t^2)$ for every private occupancy state $s^2_t$, with boundary condition $\upsilon^{1,*}_{M'',\gamma, \ell}(\cdot)=\upsilon^{2,\cdot}_{M'',\gamma, \ell}(\cdot)\doteq 0$.
\end{restatable}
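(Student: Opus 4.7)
The plan is to mirror the zero-sum argument used for Theorem \ref{cor:convex:optimal:value:fct:master:zerosum}, adapted to the asymmetric Stackelberg structure. I would start from the strong Stackelberg definition
\[
\upsilon^{1,*}_{M'',\gamma,t}(s_t^{\mathcal{B}^2}) = \max_{a^1_{t:} \in \mathcal{A}^1_{t:}}~\max_{a^2_{t:} \in \mathcal{A}^2_{t:}(s_t^{\mathcal{B}^2},\, a^1_{t:})}~\upsilon^{1,a_{t:}}_{M'',\gamma,t}(s_t^{\mathcal{B}^2}),
\]
and package the inner maximization as $\upsilon^{1}_{M''(a^1_{t:}),\gamma,t}(s_t^{\mathcal{B}^2}) \doteq \max_{a^2_{t:} \in \mathcal{A}^2_{t:}(s_t^{\mathcal{B}^2},\,a^1_{t:})}~\upsilon^{1,a_{t:}}_{M'',\gamma,t}(s_t^{\mathcal{B}^2})$, so that $\upsilon^{1,*}_{M'',\gamma,t}(s_t^{\mathcal{B}^2}) = \max_{a^1_{t:} \in \mathcal{A}^1_{t:}} \upsilon^{1}_{M''(a^1_{t:}),\gamma,t}(s_t^{\mathcal{B}^2})$. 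Once I establish that each $\upsilon^{1}_{M''(a^1_{t:}),\gamma,t}$ is linear in occupancy states expressed on $\mathcal{B}^2$, convexity follows immediately as the pointwise maximum of a family of linear functions, and the collection $\mathcal{V}_t \doteq \{\upsilon^{1}_{M''(a^1_{t:}),\gamma,t} \colon a^1_{t:} \in \mathcal{A}^1_{t:}\}$ supplies the required (possibly uncountable) set of vectors.

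The core technical step is to show the linearity of $\upsilon^{1}_{M''(a^1_{t:}),\gamma,t}$ in $\mathcal{B}^2$, which I would prove by backward induction on $t$ in the spirit of Theorem \ref{thm:relation:between:occupancy:state:and:private:ones}. The base case $t=\ell$ is trivial. For the inductive step, with $a^1_{t:}$ fixed, the follower faces the slave game $M(a^1_{t:})$; by Theorem \ref{thm:private:sufficiency:occupancy:state} the follower's private occupancy states are sufficient statistics, and the Bellman optimality recursion (Lemma \ref{lem:bellman:history:pomdp}) makes the follower's best-response decompose independently across private histories $o^2_t$. Combining Lemma \ref{lem:reward:relation:between:occupancy:state:and:private:ones} for the immediate reward, Lemma \ref{lem:transition:relation:between:occupancy:state:and:private:ones} for the transition, and the induction hypothesis (that $\upsilon^{1}_{M''(a^1_{t+1:}),\gamma,t+1}$ is linear in $\mathcal{B}^2$), any mixture $s_t^{\mathcal{B}^2} = \sum_{o^2_t} \lambda^{2,o^2_t} \cdot s^{2,o^2_t}_t$ propagates component-wise and yields
\[
\upsilon^{1}_{M''(a^1_{t:}),\gamma,t}(s_t^{\mathcal{B}^2}) = \sum_{o^2_t \in \mathcal{O}^2_t} \lambda^{2,o^2_t} \cdot \upsilon^{1}_{M''(a^1_{t:}),\gamma,t}(s^{2,o^2_t}_t),
\]
which is the linearity sought.

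The hard part will be handling the strong Stackelberg tie-breaking rigorously across mixtures. When $\mathcal{A}^2_{t:}(s^{2,o^2_t}_t, a^1_{t:})$ contains several elements, the strong-Stackelberg rule selects the follower response that maximizes the leader's value at that component; I must verify that choosing these leader-preferring tie-breakers independently at each private occupancy state is globally consistent and attains the leader's value at the mixture $s_t^{\mathcal{B}^2}$. This should rest on two observations: (i) perfect recall makes the follower's policy space factor across private histories, so component-wise selections recombine into a well-defined follower policy over the mixture; and (ii) since both the follower's value and (by the induction hypothesis) the leader-under-best-response value are linear on $\mathcal{B}^2$, a per-component argmax over the finite pure best-response set aggregates without loss into an argmax at the mixture. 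With consistency in hand, $\upsilon^{1,*}_{M'',\gamma,t} = \max_{\upsilon \in \mathcal{V}_t} \upsilon$ is convex over $\mathcal{S}^{\mathcal{B}^2}_t$, the boundary condition at $t=\ell$ holds by construction, and the proof is complete.
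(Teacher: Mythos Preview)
Your proposal is correct and follows essentially the same route as the paper: express the Stackelberg value as $\max_{a^1_{t:}} \upsilon^{1}_{M''(a^1_{t:}),\gamma,t}(s_t^{\mathcal{B}^2})$, show this inner function is linear on $\mathcal{B}^2$ by decomposing over the follower's private occupancy states, and conclude convexity as a pointwise maximum of linear functions. The paper is terser---it directly decomposes the best-response constraint per private occupancy state via Lemma~\ref{lem:relation:between:occupancy:state:and:private:ones} and then invokes Theorem~\ref{thm:relation:between:occupancy:state:and:private:ones} for linearity---whereas you propose to redo the backward induction explicitly and are more careful about the strong-Stackelberg tie-breaking consistency across the mixture, a point the paper handles only implicitly.
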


\begin{proof}
The proof holds for arbitrary time step $t$, leader agent $1$, and arbitrary occupancy state $s_t^{\mathcal{B}^0}$. We start with the value of strong Stackelberg equilibria at $s_t^{\mathcal{B}^2}$, \ie
\begin{align}
\upsilon^{1,*}_{M'',\gamma, t}(s_t^{\mathcal{B}^2}) & \doteq \max_{a^1_{t:} \in \mathcal{A}^1_{t:}} \max_{a^2_{t:}\in \mathcal{A}^2_{t:}}~
\mathbb{E}\{ \upsilon^{1,a_{t:}}_{M,\gamma, t}(X_t,O_t) | s_t^{\mathcal{B}^2} \}
\quad\text{\st}\quad
\upsilon^{2}_{M''(a^1_{t:}),\gamma, t}(s_t^{\mathcal{B}^2}) = \upsilon^{2,a_{t:}}_{M'',\gamma, t}(s_t^{\mathcal{B}^2}).
\label{eq:cor:convex:optimal:value:fct:master:stackelberg:1}
\end{align}
The application of Lemma \ref{lem:bellman:equation:m} into \eqref{eq:cor:convex:optimal:value:fct:master:stackelberg:1} results in the following expression:
\begin{align*}
&= \max_{a^1_{t:} \in A^1_{t:}} \max_{a^2_{t:}\in \mathcal{A}^2_{t:}}~
\upsilon^{1,a_{t:}}_{M'',\gamma, t}(s_t^{\mathcal{B}^2})
\quad\text{\st}\quad
\upsilon^{2}_{M''(a^1_{t:}),\gamma, t}(s_t^{\mathcal{B}^2}) = \upsilon^{2,a_{t:}}_{M'',\gamma, t}(s_t^{\mathcal{B}^2}).
\end{align*}
If we let $\mathcal{A}^2_{t:}(s_t^2, a^1_{t:}) \doteq \{a^2_{t:} \in \mathcal{A}^2_{t:} | \upsilon^{2,a_{t:}}_{M'',\gamma, t}(s_t^2) = \upsilon^{2}_{M''(a^1_{t:}),\gamma, t}(s_t^2) \}$ be the collection of best-response policies of the follower given the leader policy $a^1_{t:}$ alongside private occupancy state $s_t^2$, then the following expansion holds from Lemma \ref{lem:relation:between:occupancy:state:and:private:ones}:
\begin{align*}
\upsilon^{1,*}_{M'',\gamma, t}(s_t^{\mathcal{B}^2}) &\textstyle = \max_{a^1_{t:} \in \mathcal{A}^1_{t:}}
\sum_{s^2_t\in \mathtt{supp}(s_t^{\mathcal{B}^2})} \lambda^{2,s^2_t}
\cdot \left\{ \max_{a^2_{t:}\in \mathcal{A}^2_{t:}(s_t^2, a^1_{t:})}~\upsilon^{1,a_{t:}}_{M'',\gamma, t}(s_t^2) \right\}
\end{align*}
If we let $\upsilon^{1}_{M''(a^1_{t:}),\gamma, t}(s_t^2) \doteq \max_{a^2_{t:}\in \mathcal{A}^2_{t:}(s_t^2, a^1_{t:})}~\upsilon^{1,a_{t:}}_{M'',\gamma, t}(s_t^2)$, then re-arranging terms yields the following expression:
\begin{align}
\upsilon^{1,*}_{M'',\gamma, t}(s_t^{\mathcal{B}^2}) &\textstyle = \max_{a^1_{t:} \in \mathcal{A}^1_{t:}} \sum_{s^2_t\in \texttt{supp}(s_t^{\mathcal{B}^2})}
\lambda^{2, s^2_t} \cdot \upsilon^{1}_{M''(a^1_{t:}),\gamma, t}(s_t^2).
\label{eq:cor:convex:optimal:value:fct:master:stackelberg:2}
\end{align}
The application of Theorem \ref{thm:relation:between:occupancy:state:and:private:ones} into \eqref{eq:cor:convex:optimal:value:fct:master:stackelberg:2} results in the following expression:
\begin{align*}
\upsilon^{1,*}_{M'',\gamma, t}(s_t^{\mathcal{B}^2}) &\textstyle = \max_{a^1_{t:}\in \mathcal{A}^1_{t:}}~ \upsilon^{1}_{M''(a^1_{t:}),\gamma, t}(s_t^{\mathcal{B}^2}).
\end{align*}
If we let $\mathcal{V}_t^1 \doteq \{ \upsilon^{1}_{M''(a^1_{t:}),\gamma, t} | a^1_{t:} \in \mathcal{A}^1_{t:} \}$ be a (possibly uncountable) collection of vectors, then we have that
\begin{align*}
\upsilon^{1,*}_{M'',\gamma, t}(s_t^{\mathcal{B}^2}) &= \max_{\upsilon^{1}_{M''(a^1_{t:}),\gamma, t}\in \mathcal{V}_t^1}~\upsilon^{1}_{M''(a^1_{t:}),\gamma, t}(s_t^{\mathcal{B}^2}).
\end{align*}
Which ends the proof.
\end{proof}

According to Theorem \ref{cor:convex:optimal:value:fct:master:stackelberg}, the optimal state-value functions for strong Stackelberg equilibria from the leader's perspective are convex when expressed in terms of occupancy states from the follower's viewpoint. In the case of a two-agent one-sided \emph{st}-POSG, where agent 2 perceives both the underlying state of the game as well as the actions and perceptions of agent 1, the optimal state-value functions $\upsilon^{1,*}_{M'',\gamma,0:}$ of the master game are convex functions of the belief states. Moving forward, we will investigate the structure of the optimal state-value function in common-reward partially observable stochastic games.

\begin{restatable}{thm}{corconvexoptimalvaluefctmasterdecpomdp}
\label{cor:convex:optimal:value:fct:master:decpomdp}
Let $M''$ be a $n$-agent, common-reward, simultaneous-move, occupancy Markov game \wrt $M'$ and $M$. The $t$-step optimal state-value function, denoted as $\upsilon^*_{M'',\gamma, t}\colon \mathcal{S}^{\mathcal{B}^0}_t \to\mathbb{R}$, is a piecewise linear and convex function over occupancy states expressed in the basis set $\mathcal{B}^0$. Consequently, at each time step $t$ there exists a finite collection of vectors $\mathcal{V}_t$ \st for any occupancy state $s_t^{\mathcal{B}^0}$,
\begin{align}
\upsilon^*_{M'',\gamma, t}(s_t^{\mathcal{B}^0}) &= \max_{\upsilon^{a_{t:}}_{M'',\gamma, t}\in \mathcal{V}_t} \upsilon^{a_{t:}}_{M'',\gamma, t}(s_t^{\mathcal{B}^0})
\label{eq:cor:convex:optimal:value:fct:master:decpomdp}
\end{align}
with boundary condition $\upsilon^*_{M'',\gamma, \ell}(\cdot)\doteq 0$.
\end{restatable}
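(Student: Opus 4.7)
The plan is to mirror the argument used for Theorem \ref{cor:pwlc:occ:slave:problem} (the slave-game PWLC result), but now applied to the common-reward master game viewed as a centralized occupancy-state Markov game. The key observation is that in a \emph{dec}-POMDP there is a single reward shared by all agents, so the central planner simply maximizes a scalar performance index over the space of joint policies $a_{t:}\in \mathcal{A}_{t:}$, without any equilibrium constraint. This collapses the analysis to exactly the same structure one exploits in POMDPs to establish piecewise linearity and convexity of the optimal value function in belief space, lifted to the occupancy-state simplex.

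First I would write the definition of the optimal value function in the occupancy Markov game as $\upsilon^*_{M'',\gamma,t}(s_t^{\mathcal{B}^0}) \doteq \max_{a_{t:}\in \mathcal{A}_{t:}} \upsilon^{a_{t:}}_{M'',\gamma,t}(s_t^{\mathcal{B}^0})$. Next, I would restrict the maximization to the finite set $\mathcal{A}^{\text{pure}}_{t:}$ of pure joint policies. This restriction is justified because, for any mixed joint policy expressed as a convex combination of pure ones, the induced value is the corresponding convex combination of the pure-policy values (linearity of the expectation of rewards over the mixing distribution of the central planner's selected joint policy). Hence the maximum over mixed policies equals the maximum over pure policies, giving
\begin{align*}
\upsilon^*_{M'',\gamma,t}(s_t^{\mathcal{B}^0}) &= \max_{a_{t:}\in \mathcal{A}^{\text{pure}}_{t:}} \upsilon^{a_{t:}}_{M'',\gamma,t}(s_t^{\mathcal{B}^0}).
\end{align*}

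Then I would invoke Lemma \ref{lem:linearity:joint:policy}, which states that for any fixed joint policy $a_{t:}$ the state-value function $\upsilon^{a_{t:}}_{M'',\gamma,t}$ is linear in the occupancy state when expressed in the standard basis $\mathcal{B}^0$, with coefficients given by $\upsilon^{a_{t:}}_{M,\gamma,t}(x_t,o_t)$. Taking $\mathcal{V}_t \doteq \{\upsilon^{a_{t:}}_{M'',\gamma,t}\mid a_{t:}\in \mathcal{A}^{\text{pure}}_{t:}\}$ yields a \emph{finite} collection of linear functions over $\mathcal{S}^{\mathcal{B}^0}_t$, and the claimed identity \eqref{eq:cor:convex:optimal:value:fct:master:decpomdp} is the pointwise maximum over this collection. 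The pointwise maximum of finitely many linear functions is piecewise linear and convex, which gives the result. An induction on $t$ (base case $\upsilon^*_{M'',\gamma,\ell}(\cdot)\doteq 0$, trivially PWLC) can also be used to make the recursion through Bellman's optimality equations explicit, if one prefers to construct $\mathcal{V}_t$ from $\mathcal{V}_{t+1}$ by backup.

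The main subtlety---rather than a genuine obstacle---is justifying the reduction to pure joint policies in a truly decentralized execution setting: one must be careful that mixing is done centrally at planning time so that the per-policy value is the standard Bellman value of a pure joint policy, not a product of independently randomized local decision rules (which would introduce non-linearity in the local mixing coefficients). Since our framework explicitly places the central planner as the decision maker at planning time (Propositions \ref{pro:sufficiency:ckh} and \ref{pro:sufficiency:occupancy:state}), this is immediate. Finiteness of $\mathcal{V}_t$ then follows because $|\mathcal{A}^{\text{pure}}_{t:}|$ is finite (horizon $\ell$ is finite and every $\mathcal{U}^i$, $\mathcal{Z}^i$ are finite), which is exactly what distinguishes this PWLC result from the merely convex results proved for \emph{zs}-POSGs and \emph{st}-POSGs, where the inner optimization involves a possibly uncountable family of best-response or minimizing policies.
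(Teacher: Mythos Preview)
Your proposal is correct and follows essentially the same approach as the paper: write the optimal value as a maximum over joint policies, restrict to the finite set of pure joint policies (justified by the common-reward criterion), invoke Lemma \ref{lem:linearity:joint:policy} for linearity of each $\upsilon^{a_{t:}}_{M'',\gamma,t}$ in the standard basis, and conclude PWLC as the pointwise maximum of finitely many linear functions. The paper's proof is slightly terser (it does not spell out the central-planner mixing subtlety or the optional Bellman-backup construction you mention), but the logical skeleton is identical.
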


\begin{proof}
The proof holds for arbitrary time step $t$, arbitrary agent $i$, and arbitrary occupancy state $s_t^{\mathcal{B}^0}$. We start with the definition of the optimal value at $s_t^{\mathcal{B}^0}$, \ie
\begin{align}
\upsilon^*_{M'',\gamma, t}(s_t^{\mathcal{B}^0}) & \doteq \textstyle \max_{a^i_{t:} \in \mathcal{A}^i_{t:}}\max_{a^{\neg i}_{t:}\in \mathcal{A}^{\neg i}_{t:}}~\mathbb{E}\{ \upsilon^{a_{t:}}_{M,\gamma, t}(X_t,O_t)|s_t^{\mathcal{B}^0}\}.
\label{eq:cor:convex:optimal:value:fct:master:decpomdp:1}
\end{align}
The application of Lemma \ref{lem:bellman:equation:m} into \eqref{eq:cor:convex:optimal:value:fct:master:decpomdp:1} results in the following expression:
\begin{align*}
\upsilon^*_{M'',\gamma, t}(s_t^{\mathcal{B}^0}) &= \textstyle \max_{a^i_{t:} \in \mathcal{A}^i_{t:}}\max_{a^{\neg i}_{t:}\in \mathcal{A}^{\neg i}_{t:}}~\upsilon^{a_{t:}}_{M'',\gamma, t}(s_t^{\mathcal{B}^0}).
\end{align*}
With all agents having the same reward model, one can show that restricting attention to pure policies does not hurt optimality because any pure policy that is a best-response policy achieves the same performance index.
\begin{align*}
\upsilon^*_{M'',\gamma, t}(s_t^{\mathcal{B}^0}) &= \textstyle \max_{a_{t:}\in \mathcal{A}^{i,\text{pure}}_{t:}\otimes \mathcal{A}^{\neg i,\text{pure}}_{t:}}~\upsilon^{a_{t:}}_{M'',\gamma, t}(s_t^{\mathcal{B}^0}).
\end{align*}
Consequently, defining the finite collection $\mathcal{V}_t \doteq \{\upsilon^{a_{t:}}_{M'',\gamma, t} | a_{t:}\in \mathcal{A}^{i,\text{pure}}_{t:}\otimes \mathcal{A}^{\neg i,\text{pure}}_{t:}\}$, then \eqref{eq:cor:convex:optimal:value:fct:master:decpomdp} holds. Which ends the proof
\end{proof}

\subsection{Discussion}
The previous subsection elucidates the optimal state-value functions for \emph{zs}-POSGs, \emph{st}-POSGs, and \emph{dec}-POMDPs, demonstrating that they are convex functions of occupancy states expressed on the appropriate basis. A crucial element in determining the appropriate basis for occupancy states is its ability to reveal the convexity properties of the optimal state-value functions. The standard basis for occupancy states, which possess stronger generalization capabilities, is sufficient to disclose the convexity properties of POSGs with a single criterion. However, it often fails to reveal the underlying convexity properties of the optimal state-value functions for POSGs with two criteria, one per agent.

For instance, in \emph{dec}-POMDPs, the objective is to find a joint policy that maximizes the expected discounted cumulative reward.
\citet{continuous} established that the optimal state-value function for partially observable stochastic games with common reward is piecewise linear and convex in the occupancy state expressed in the standard basis. Theorem \ref{cor:convex:optimal:value:fct:master:decpomdp} explicitly connects the optimal state-value functions for the best-response problems.
Conversely, in \emph{zs}-POSGs and \emph{st}-POSGs, the aim is to discover joint policies that maximize the expected discounted cumulative reward of one agent while simultaneously constraining on the expected discounted cumulative reward of the other. In the latter, the basis of interest is determined by the private occupancy states of the other agents. Occupancy states expressed on this basis can reveal the convexity properties of the optimal state-value functions for \emph{zs}-POSGs and \emph{st}-POSGs. Notice, however, that not all bases have the same generalization power. The standard basis, a collection of one-hot vectors, exhibits a stronger generalization power than the basis from the opponent's viewpoint. Unfortunately, the optimal state-value functions for \emph{zs}-POSGs and \emph{st}-POSGs are not convex functions of occupancy states expressed on the standard basis as illustrated in the following example.

\begin{example}
\begin{figure}
\centering
\begin{tikzpicture}
\colorlet{color min rgb}[rgb]{sthlmRed}
\colorlet{color max rgb}[rgb]{sthlmGreen}
\colorlet{color min hsb}[hsb]{sthlmRed}
\colorlet{color max hsb}[hsb]{sthlmGreen}
\def\min{-20}
\def\max{101}
\node (agentCentral) at (-2,.5) {};
\node (agentA) at (-.25,2.25) {};
\node (agentD) at (-.25,-1) {};
\path[->, -latex, color=sthlmRed]
(agentCentral) edge [bend right] node[fill=white, scale=.65] {$r^1$} (agentA)
edge [bend left] node[fill=white, scale=.65] {$r^2$} (agentD);
\draw[white] (-7.5,1.5) -- (-3.5,1.5);
\draw[white] (-7.5,.5) -- (-3.5,.5);
\draw[white] (-7.5,-.5) -- (-3.5,-.5);
\draw[white] (-7.5,-1.5) -- (-3.5,-1.5);
\draw[white] (-6.5,2.5) -- (-6.5,-1.5);
\draw[white] (-5.5,2.5) -- (-5.5,-1.5);
\draw[white] (-4.5,2.5) -- (-4.5,-1.5);
\draw[white] (-3.5,2.5) -- (-3.5,-1.5);
\pgfmathtruncatemacro\lambda{(2-\min)/(\max-\min)*100}
\colorlet{my color rgb}[rgb]{color min rgb!\lambda!color max rgb}
\colorlet{my color hsb}[rgb]{color min hsb!\lambda!color max hsb}
\fill[fill=my color rgb,rounded corners] (-6.5,.5) rectangle +(1,1) node [draw=none, text centered, scale=.75, pos=.5, white] {+1};
\pgfmathtruncatemacro\lambda{(-9-\min)/(\max-\min)*100}
\colorlet{my color rgb}[rgb]{color min rgb!\lambda!color max rgb}
\colorlet{my color hsb}[rgb]{color min hsb!\lambda!color max hsb}
\fill[fill=my color rgb,rounded corners] (-5.5,.5) rectangle +(1,1) node [draw=none, text centered, scale=.75, pos=.5, white] {0};
\pgfmathtruncatemacro\lambda{(101-\min)/(\max-\min)*100}
\colorlet{my color rgb}[rgb]{color min rgb!\lambda!color max rgb}
\colorlet{my color hsb}[rgb]{color min hsb!\lambda!color max hsb}
\fill[fill=my color rgb,rounded corners] (-4.5,.5) rectangle +(1,1) node [draw=none, text centered, scale=.75, pos=.5, white] {0};
\pgfmathtruncatemacro\lambda{(-9-\min)/(\max-\min)*100}
\colorlet{my color rgb}[rgb]{color min rgb!\lambda!color max rgb}
\colorlet{my color hsb}[rgb]{color min hsb!\lambda!color max hsb}
\fill[fill=my color rgb,rounded corners] (-6.5,-.5) rectangle +(1,1) node [draw=none, text centered, scale=.75, pos=.5, white] {0};
\pgfmathtruncatemacro\lambda{(-20-\min)/(\max-\min)*100}
\colorlet{my color rgb}[rgb]{color min rgb!\lambda!color max rgb}
\colorlet{my color hsb}[rgb]{color min hsb!\lambda!color max hsb}
\fill[fill=my color rgb,rounded corners] (-5.5,-.5) rectangle +(1,1) node [draw=none, text centered, scale=.75, pos=.5, white] {+2};
\pgfmathtruncatemacro\lambda{(100-\min)/(\max-\min)*100}
\colorlet{my color rgb}[rgb]{color min rgb!\lambda!color max rgb}
\colorlet{my color hsb}[rgb]{color min hsb!\lambda!color max hsb}
\fill[fill=my color rgb,rounded corners] (-4.5,-.5) rectangle +(1,1) node [draw=none, text centered, scale=.75, pos=.5, white] {0};
\pgfmathtruncatemacro\lambda{(101-\min)/(\max-\min)*100}
\colorlet{my color rgb}[rgb]{color min rgb!\lambda!color max rgb}
\colorlet{my color hsb}[rgb]{color min hsb!\lambda!color max hsb}
\fill[fill=my color rgb,rounded corners] (-6.5,-1.5) rectangle +(1,1) node [draw=none, text centered, scale=.75, pos=.5, white] {0};
\pgfmathtruncatemacro\lambda{(100-\min)/(\max-\min)*100}
\colorlet{my color rgb}[rgb]{color min rgb!\lambda!color max rgb}
\colorlet{my color hsb}[rgb]{color min hsb!\lambda!color max hsb}
\fill[fill=my color rgb,rounded corners] (-5.5,-1.5) rectangle +(1,1) node [draw=none, text centered, scale=.75, pos=.5, white] {0};
\pgfmathtruncatemacro\lambda{(50-\min)/(\max-\min)*100}
\colorlet{my color rgb}[rgb]{color min rgb!\lambda!color max rgb}
\colorlet{my color hsb}[rgb]{color min hsb!\lambda!color max hsb}
\fill[fill=my color rgb,rounded corners] (-4.5,-1.5) rectangle +(1,1) node [draw=none, text centered, scale=.75, pos=.5, white] {-2};
\node[inner sep=0pt] (hobbes) at (-7,-1) {\includegraphics[width=.04\textwidth]{figures/Hobbes.png}};
\node[inner sep=0pt] (treasure) at (-7,0) {\includegraphics[width=.05\textwidth]{figures/Treasure.jpg}};
\node[inner sep=0pt] (treasure) at (-7,1) {\includegraphics[width=.03\textwidth]{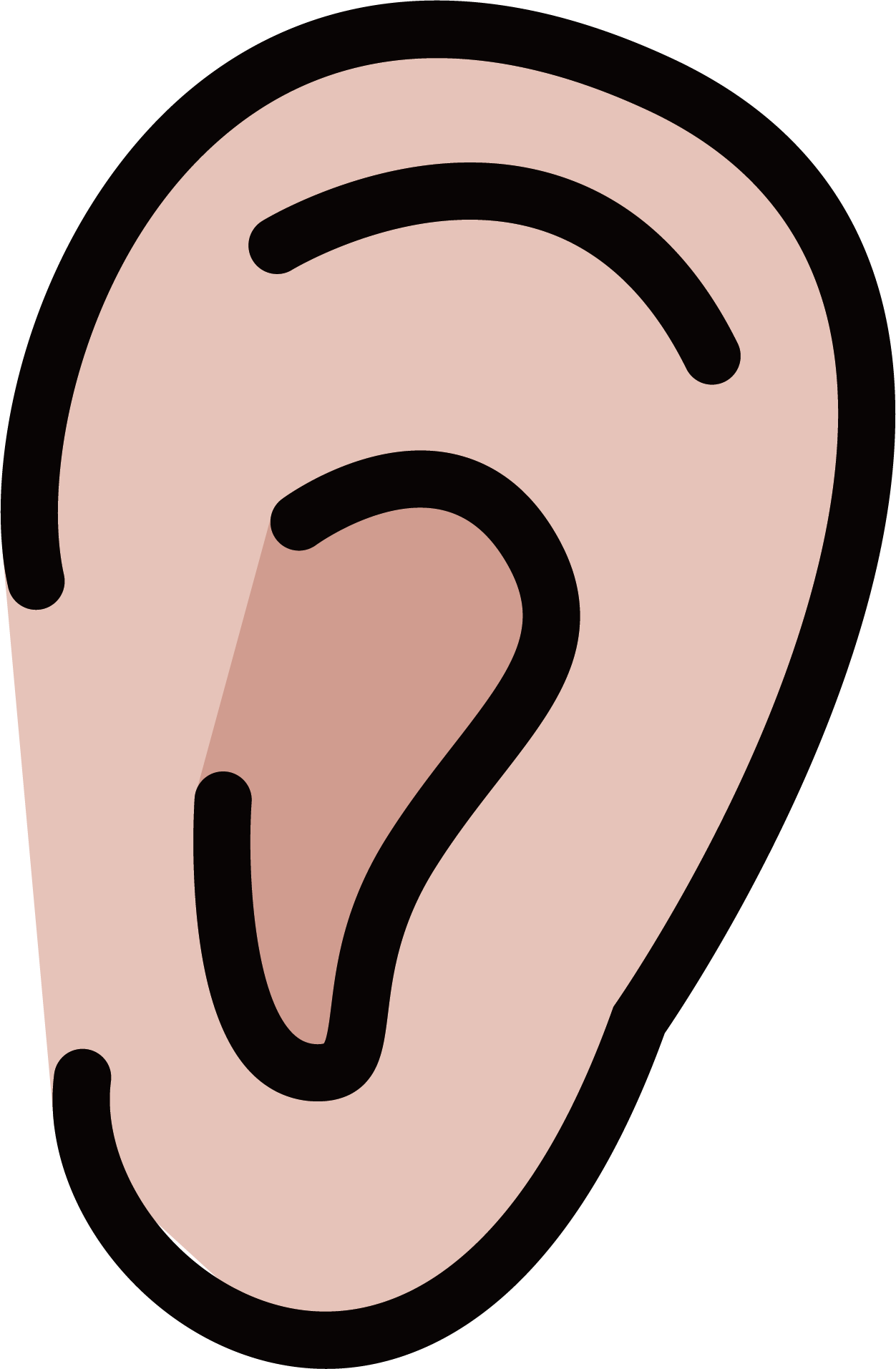}};
\node[inner sep=0pt] (hobbes) at (-4,2) {\includegraphics[width=.04\textwidth]{figures/Hobbes.png}};
\node[inner sep=0pt] (treasure) at (-5,2) {\includegraphics[width=.05\textwidth]{figures/Treasure.jpg}};
\node[inner sep=0pt] (treasure) at (-6,2) {\includegraphics[width=.03\textwidth]{figures/Ear.png}};
\draw[rounded corners, color=white] (-6.5,-1.5) rectangle +(3,3);
\begin{scope}[scale=.5]
\def \payoff{}
\def \agentA{(-.5,4.5) ellipse (1cm and 0.9cm);
\node[inner sep=0pt] (calvin) at (-.5,4.25) {\includegraphics[width=.075\textwidth]{figures/Calvin_Color.png}};}
\def \agentD{(-.5,-2) ellipse (1cm and 0.9cm);
\node[inner sep=0pt] (susie) at (-.5,-2.25) {\includegraphics[width=.075\textwidth]{figures/Susie.png}};}
\draw[draw=white] \agentA;
\draw[draw=white] \agentD;
\node[inner sep=0pt] (hobbes) at (9.4,3) {\includegraphics[width=.075\textwidth]{figures/Hobbes.png}};
\node[inner sep=0pt] (door1) at (7,3) {\includegraphics[width=.075\textwidth]{figures/Door.png}};
\node[inner sep=0pt] (door2) at (7,-1) {\includegraphics[width=.075\textwidth]{figures/Door.png}};
\node[inner sep=0pt] (treasure) at (9.4,-2) {\includegraphics[width=.075\textwidth]{figures/Treasure.jpg}};
\end{scope}
\node[scale=.75] at (-5,-1.75) {\textcolor{sthlmRed}{\sc payoff matrix}};
\draw[->,-latex, color=sthlmRed] (-3.5,.5) -- node[draw=none, fill=white] {$r$} (-2,.5);
\draw[->,-latex, color=sthlmGreen] (.5,-.5) -- node[draw=none, fill=white,,scale=.5] {$u^2$} (2,.25);
\draw[->,-latex, color=sthlmBlue] (2.75,.25) -- node[draw=none, fill=white,,scale=.5] {$z^2$} (1.25,-.5);
\draw[->,-latex, color=sthlmGreen] (.5,2) -- node[draw=none, fill=white,scale=.5] {$u^1$} (2,1.25);
\draw[->,-latex, color=sthlmBlue] (2.75,1.25) -- node[draw=none, fill=white,,scale=.5] {$z^1$} (1.25,2);
\end{tikzpicture}
\caption{Illustration of the payoff matrix for the two-agent tiger problem.}
\label{fig:counterexample}
\end{figure}

The following scenario involves two individuals, named \textcolor{green!60!black}{Calvin} and \textcolor{red!60!black}{Suzy}, standing before two closed doors as illustrated in Figure \ref{fig:counterexample}. The identity of each door and the contents behind them are unknown to \textcolor{green!60!black}{Calvin} and \textcolor{red!60!black}{Suzy}. However, \textcolor{green!60!black}{Calvin} and \textcolor{red!60!black}{Suzy} have the option to take independent actions, either to listen ($\boldsymbol{u_\textsc{l}}$) or to open ($\boldsymbol{u_\textsc{o}}$) one of the doors. When both individuals choose to listen, they receive a small reward of $+1$ ($r(-,\textcolor{green!60!black}{\boldsymbol{u_\textsc{l}}},\textcolor{red!60!black}{\boldsymbol{u_\textsc{l}}})$). If both individuals decide to open the door containing the valuable treasure $\includegraphics[width=.03\textwidth]{figures/Treasure.jpg}$, a reward of $+2$ ($r(\includegraphics[width=.03\textwidth]{figures/Treasure.jpg},\textcolor{green!60!black}{\boldsymbol{u_\textsc{o}}},\textcolor{red!60!black}{\boldsymbol{u_\textsc{o}}})$) is received. However, if both individuals choose to open the door containing the tiger, they will receive a penalty of $-2$ ($r(\includegraphics[width=.02\textwidth]{figures/Hobbes.png},\textcolor{green!60!black}{\boldsymbol{u_\textsc{o}}},\textcolor{red!60!black}{\boldsymbol{u_\textsc{o}}})$). In all other cases, no reward is given ($r(-,-,-) = 0$), regardless of the actions taken by \textcolor{green!60!black}{Calvin} and \textcolor{red!60!black}{Suzy}. After the individuals take their action, the game is over.

If \textcolor{green!60!black}{Calvin} and \textcolor{red!60!black}{Suzy} cooperate to maximize their common reward, \ie $r^1(\cdot) = r^2(\cdot) = r(\cdot)$, the optimal state-value function is a convex function of the initial occupancy states expressed in the standard basis, \ie for any initial occupancy state $s_0$,
\begin{align*}
\upsilon^{*,\emph{\text{\emph{dec}-POMDP}}}_0(s_0) &= \max_{\textcolor{green!60!black}{a}\in \mathcal{P}(\{\boldsymbol{u_\textsc{o}},\boldsymbol{u_\textsc{l}}\})}\max_{\textcolor{red!60!black}{a}\in \mathcal{P}(\{\boldsymbol{u_\textsc{o}},\boldsymbol{u_\textsc{l}}\})} ~ \upsilon^{\textcolor{green!60!black}{a}\textcolor{red!60!black}{a}}_0(\includegraphics[width=.02\textwidth]{figures/Hobbes.png}) \cdot s_0(\includegraphics[width=.02\textwidth]{figures/Hobbes.png}) + \upsilon^{\textcolor{green!60!black}{a}\textcolor{red!60!black}{a}}_0(\includegraphics[width=.03\textwidth]{figures/Treasure.jpg})\cdot s_0(\includegraphics[width=.03\textwidth]{figures/Treasure.jpg}),
\end{align*}
where
$\upsilon^{\textcolor{green!60!black}{a}\textcolor{red!60!black}{a}}_0(\includegraphics[width=.02\textwidth]{figures/Hobbes.png}) = \textcolor{green!60!black}{a(\boldsymbol{u_\textsc{l}})} \cdot \textcolor{red!60!black}{a(\boldsymbol{u_\textsc{l}})}-2 \cdot \textcolor{green!60!black}{a(\boldsymbol{u_\textsc{o}})}\cdot \textcolor{red!60!black}{a(\boldsymbol{u_\textsc{o}})}$ and $\upsilon^{\textcolor{green!60!black}{a}\textcolor{red!60!black}{a}}_0(\includegraphics[width=.03\textwidth]{figures/Treasure.jpg}) = \textcolor{green!60!black}{a(\boldsymbol{u_\textsc{l}})} \cdot \textcolor{red!60!black}{a(\boldsymbol{u_\textsc{l}})}+2 \cdot \textcolor{green!60!black}{a(\boldsymbol{u_\textsc{o}})}\cdot \textcolor{red!60!black}{a(\boldsymbol{u_\textsc{o}})}$.

The point-wise maximum of linear functions is a convex function. This insight is a direct application of Theorem \ref{cor:convex:optimal:value:fct:master:decpomdp}. It is worth noticing that the optimal state-value function generalizes over all occupancy states expressed in the standard basis.

If instead \textcolor{green!60!black}{Calvin} and \textcolor{red!60!black}{Suzy} compete to maximize their own interest, \ie $r^1(\cdot) = - r^2(\cdot) = r(\cdot)$, the optimal state-value function is \textbf{not} a convex function of the initial occupancy states regardless the chosen basis, \ie for any initial occupancy state $s_0$,
\begin{align*}
\upsilon^{*,\emph{\text{\emph{zs}-POSG}}}_0(s_0) &= \max_{\textcolor{green!60!black}{a}\in \mathcal{P}(\{\boldsymbol{u_\textsc{o}},\boldsymbol{u_\textsc{l}}\})}\left[ \min_{\textcolor{red!60!black}{a}\in \mathcal{P}(\{\boldsymbol{u_\textsc{o}},\boldsymbol{u_\textsc{l}}\})} ~ \upsilon^{\textcolor{green!60!black}{a}\textcolor{red!60!black}{a}}_0(\includegraphics[width=.02\textwidth]{figures/Hobbes.png}) \cdot s_0(\includegraphics[width=.02\textwidth]{figures/Hobbes.png}) + \upsilon^{\textcolor{green!60!black}{a}\textcolor{red!60!black}{a}}_0(\includegraphics[width=.03\textwidth]{figures/Treasure.jpg})\cdot s_0(\includegraphics[width=.03\textwidth]{figures/Treasure.jpg})\right].
\end{align*}

\begin{figure}
\centering
\includegraphics[width=.65\textwidth]{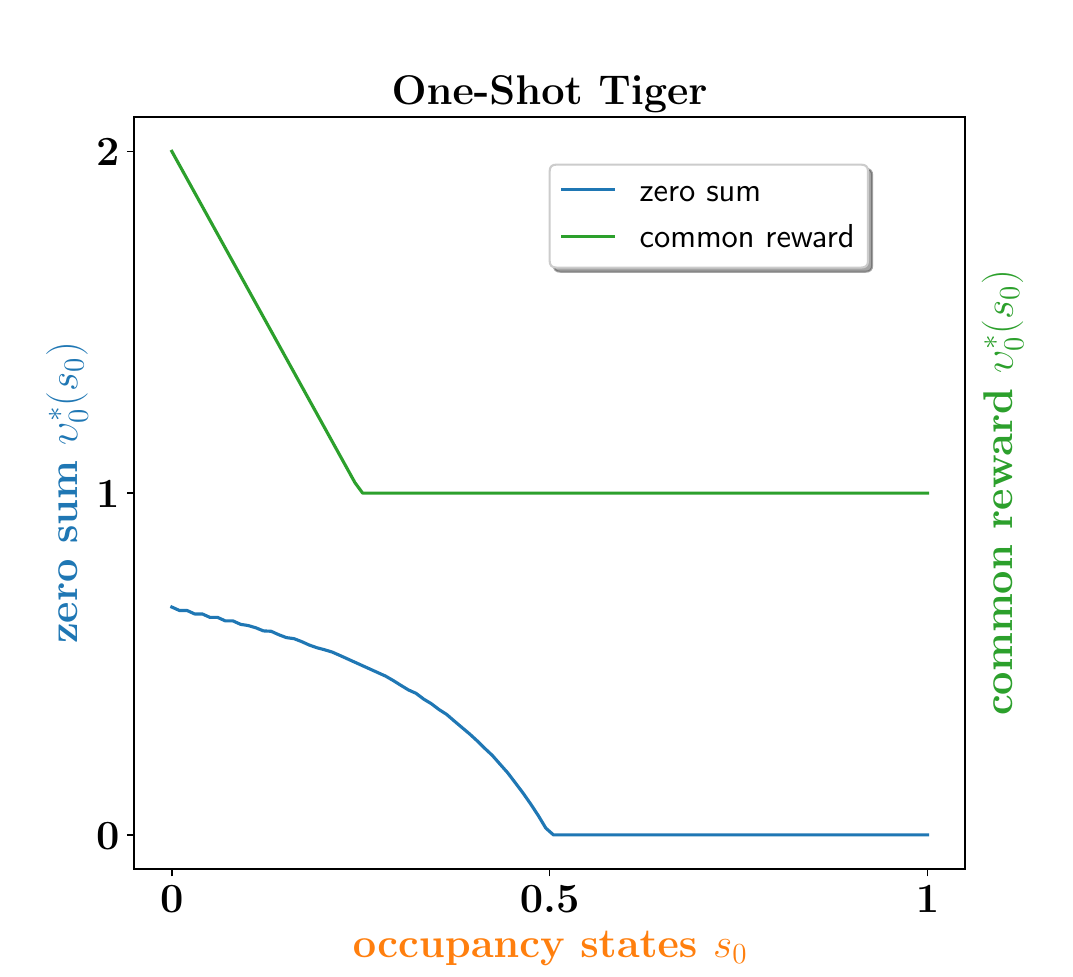}
\caption{The figure showcases the optimal state-value functions for the one-stage tiger game. \textbf{Best viewed in color}.}
\label{fig:one:shot:tiger:plot}
\end{figure}

\begin{figure}
\centering
\includegraphics[width=.65\textwidth]{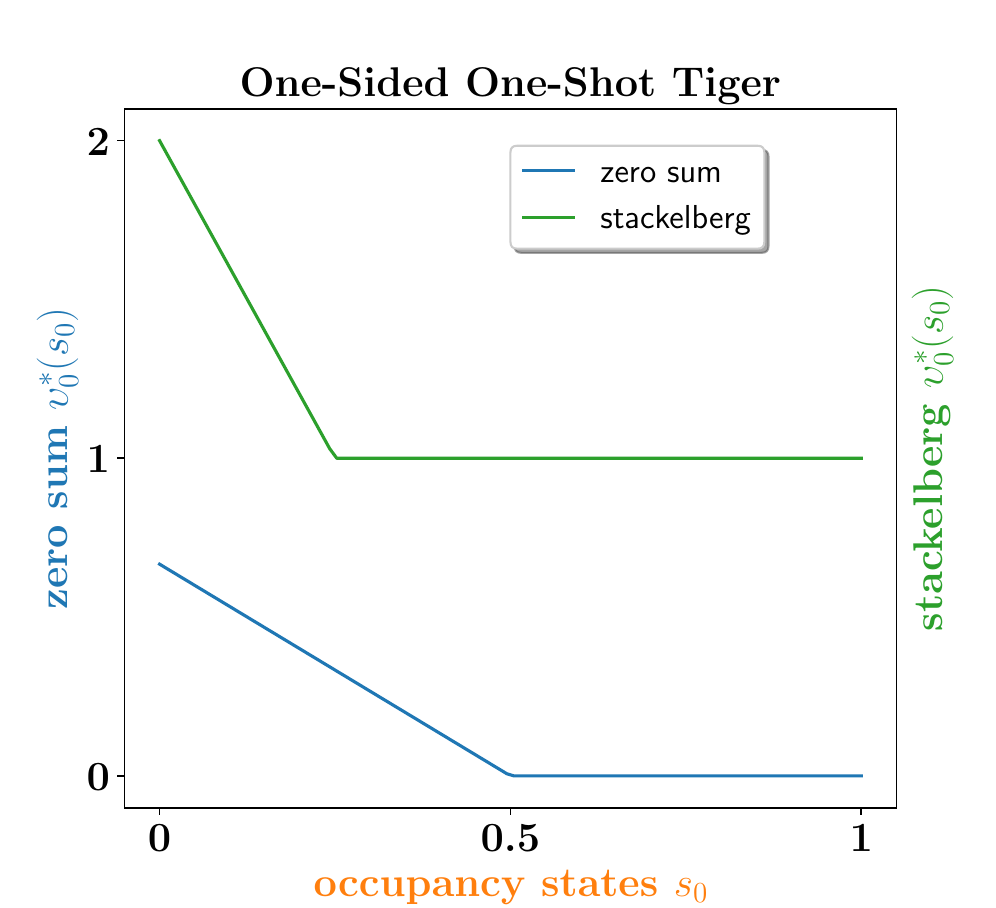}
\caption{The figure showcases the optimal state-value functions for the one-stage, one-sided tiger game. \textbf{Best viewed in color}.}
\label{fig:one:shot:onesided:tiger:plot}
\end{figure}

The function within the bracket represents the optimal value of the best-response at the initial occupancy state. It has been demonstrated that this function is a concave function of the initial occupancy states expressed in the standard basis. However, no more concise statistics can generally be found as the initial occupancy state is also the private occupancy state. Unfortunately, the point-wise maximum of concave functions is not convex. Therefore, the optimal state-value function over the initial occupancy states expressed in the standard basis cannot generally be convex. This negative property holds at the initial stage of the game and throughout its progression.

Theorem \ref{cor:convex:optimal:value:fct:master:zerosum} establishes the property of convexity for zero-sum partially observable stochastic games (zs-POSGs) over occupancy states expressed as distributions over private occupancy states. However, it is noteworthy that when the occupancy states match the private occupancy states, the convexity property vanishes. Convex functions of occupancy states expressed in the standard basis permit generalization from one private occupancy state to another, as seen in Figure \ref{fig:one:shot:tiger:plot}. Conversely, convex functions of distributions over private occupancy states do not allow for generalization from one private occupancy state to another, with the only known exception being when private occupancy states simplify into the underlying states, such as in one-sided two-agent POSGs, as illustrated in Figure \ref{fig:one:shot:onesided:tiger:plot}. In such games, the optimal state-value functions are convex functions of the belief states.

It is worth noting that the convexity property outlined in Theorem \ref{cor:convex:optimal:value:fct:master:zerosum} is not applicable during the initial stage of \emph{zs}-POSGs, as all initial occupancy states are also private occupancy states. Therefore, it is not a useful tool in this particular scenario. This same limitation applies to \emph{st}-POSGs. However, in \emph{zs}-POSGs, it is possible to leverage the maximum of concave functions and generalize from one occupancy state to another, even during the first stage of the game, as demonstrated in Figure \ref{fig:one:shot:tiger:plot}. Unfortunately, \emph{st}-POSGs do not offer this fallback solution.
\end{example}

\section{Conclusion}
Partially Observable Stochastic Games are extensively used to formalize various problems in multi-agent sequential decision-making. However, analyzing the underlying structure of optimal state-value functions is a difficult task, leading to inefficient planning and reinforcement learning algorithms. This paper proposes a three-step methodology to unveil the underlying structure of optimal state-value functions for subclasses of POSGs, including but not restricted to \emph{zs}-POSGs, \emph{st}-POSGs, and \emph{dec}-POMDPs.

The first step involves recasting the original game as a master game and best-response problems as slave games viewed from the perspective of the central planner. The master and slave games are then reformulated as Markov games, where states are the plan-time total data available to the central planner. The second step of the methodology summarizes the plan-time total data available to the central planner using the concept of occupancy states. Occupancy states refer to a posterior probability distribution over hidden states of the game and the stream of actions and observations that agents may encounter at execution time. The final step of the approach blends together the optimal state-value functions from both slave and master games. In particular, it shows that changing the basis on which occupancy states are expressed from the standard basis to agent viewpoints reveals the underlying structures. It enabled us to prove that the optimal state-value functions for \emph{zs}-POSGs, \emph{st}-POSGs, and \emph{dec}-POMDPs are convex functions of occupancy states expressed on the appropriate basis. \citet{structure} and \citet{zerosum} previously exhibited underlying uniform continuity properties in the optimal state-value functions of \emph{zs}-POSGs. Unfortunately, these properties are either hard to use in practice or provide poor generalization capabilities. Our findings provide uniform continuity properties of the optimal state-value functions that are either equal (\eg \emph{dec}-POMDPs) or stronger than pre-existed ones (\eg \emph{zs}-POSGs and \emph{st}-POSGs), \cf Table \ref{tab:uniform:continuity}.

\begin{table}[!ht]
\scriptsize
\tymin=0pt
\tymax=90pt
\begin{tabulary}{\linewidth}{p{2.4cm}LLLLL}
\toprule
& \phantom{Slave Games} & \multicolumn{3}{c}{\textbf{Master Games}} \\
\cmidrule{3-5}
& \textbf{Slave Games} & \emph{zs}-POSG & \emph{st}-POSG & \emph{dec}-POMDP \\
\citeauthor{continuous} & N.A. & N.A. & N.A. & $\mathcal{S}^{\mathcal{B}^0}$ PWLC \\
\citeauthor{structure} & $\mathcal{P}(\mathcal{O}^i)$ Linear & $\mathcal{P}(\mathcal{O}^i)$ Convex & N.A. & N.A.\\
\citeauthor{zerosum} & N.A. & $\mathcal{S}^{\mathcal{B}^0}$ Lipschitz & N.A. & N.A. \\
\cmidrule{2-5}
\textbf{Ours} & $\mathcal{S}^{\mathcal{B}^0}$ PWLC (Th. \ref{cor:pwlc:occ:slave:problem}) \par $\mathcal{S}^{\mathcal{B}^i}$ Linear (Th. \ref{thm:relation:between:occupancy:state:and:private:ones}) & $\mathcal{S}^{\mathcal{B}^2}$ Convex (Th. \ref{cor:convex:optimal:value:fct:master:zerosum}) \par $\mathcal{S}^{\mathcal{B}^0}$ CConcave & $\mathcal{S}^{\mathcal{B}^2}$ Convex (Th. \ref{cor:convex:optimal:value:fct:master:stackelberg}) & $\mathcal{S}^{\mathcal{B}^0}$ PWLC (Th. \ref{cor:convex:optimal:value:fct:master:decpomdp})\\
\bottomrule
\end{tabulary}
\caption{Uniform continuity properties of optimal value functions of POSG $M$. We use notations such as ``$\mathcal{S}^{\mathcal{B}}$ Convex'' to indicate the convexity of the optimal value functions over occupancy states when expressed in basis $\mathcal{B}$.}
\label{tab:uniform:continuity}
\end{table}

We have considered three perception assumptions to evaluate the accuracy and relevance of our main findings. Firstly, it must be acknowledged that in fully observable Markov games, optimal state-value functions are tabular functions of occupancy states for all two-agent and zero-sum, Stackelberg, and common reward criteria. These occupancy states are the underlying states of the game within this context, as stated by \citet{discrete,competitive}. Secondly, we have examined the optimal state-value functions of POSGs with public actions and observations. In this case, the application of our analysis has shown that the optimal state-value functions are convex functions of the belief states for common reward criteria, as previously established by \citet{finite,infinite}. Lastly, we have considered two-agent, one-sided POSGs. A direct application of our findings demonstrates that the optimal state-value functions of the one-sided POSGs are convex functions of the belief states for all two-agent and zero-sum, Stackelberg, as well as common reward criteria, as previously established by \citet{onesided,optimally}. These results are consistent with previously established properties, thus providing additional support for the accuracy of our findings, even when applied to more restrictive settings.

To summarize, the proposed methodology provides a comprehensive approach to analyzing the underlying structure of optimal state-value functions in POSGs. This approach is particularly useful for specific subclasses, including but not restricted to zero-sum, Stackelberg, and common reward partially observable stochastic games, where it is crucial to uncover the underlying structure to develop efficient planning and reinforcement learning algorithms. The authors anticipate that these findings will serve as a foundation to reveal the fundamental structures of optimal state-value functions in other games and to develop more efficient planning and reinforcement learning algorithms that leverage these structural properties. Future work in that direction includes describing \citeauthor{bellman}'s optimality equations leveraging upon the exhibited structures of the optimal state-value functions along the design of appropriate algorithms capable of exploiting these structures. This approach has proven particularly successful in many narrow settings including but not restricted to one-sided partially observable stochastic games \citep{cooperative,onesided,optimally}, zero-sum partially observable stochastic games \citep{zerosum}, and common-reward partially observable stochastic games \citep{heuristic,continuous}. The current theory focuses only on games with one or two criteria and unique optimal state-value functions. The authors are working on extending the present analysis to address POSGs with multiple criteria and no unique optimal state-value functions, \eg general-sum POSGs.

\section{Acknowledgements}
Jilles S. Dibangoye thanks God, who inspired this piece of work. This work was supported by ANR project \href{http://perso.citi-lab.fr/jdibangoy/#/plasma}{Planning and Learning to Act in Systems of Multiple Agents} under Grant ANR-19-CE23-0018, and ANR project \href{https://projet.liris.cnrs.fr/delicio/}{Data and Prior, Machine Learning and Control} under Grant ANR-19-CE23-0006, and ANR project \href{https://}{Multi-Agent Trust Decision Process for the Internet of Things} under Grant ANR-21-CE23-0016, all funded by French Agency ANR.

\bibliographystyle{unsrtnat}
\bibliography{Bibliography}
\end{document}